\documentclass[11pt]{article}
\usepackage[utf8]{inputenc}
\usepackage[toc,page]{appendix}
\usepackage{setspace}
\usepackage{comment}
\usepackage{subcaption}  

\usepackage{csquotes}
\usepackage{url}
\usepackage{amsmath,amsfonts}
\usepackage{tikz}
\usetikzlibrary{fit}
\usepackage{multirow}

 \usepackage{amsthm}
 {

\newtheorem{assumption}{Assumption}
 
\newtheorem{theorem}{Theorem}[section]
\newtheorem{lemma}[theorem]{Lemma}

\newtheorem{proposition}[theorem]{Proposition}
\newtheorem{corollary}[theorem]{Corollary}

 }
 \usepackage{amsmath,bm}

\usepackage{hyperref,nccmath}
\hypersetup{
    colorlinks=true,
    citecolor = blue,
    linkcolor=blue,
    filecolor=magenta,      
    urlcolor=cyan, 
    pdfpagemode=FullScreen,
    }

\usepackage{xcolor}

\usepackage{soul}

\usepackage{algorithm}
\usepackage{algorithmic}

\usepackage{marginnote}
\usepackage[normalem]{ulem}
\usepackage[shortlabels]{enumitem}
\usepackage{graphicx}
\usepackage{caption}
\usepackage{subcaption}

 \usepackage{cleveref}
\usepackage{lineno}

 \usepackage{blkarray}%

\input{macros.sty}
\numberwithin{equation}{section}

\author{
\begin{tabular}{ccccc}
Garud Iyengar\thanks{IEOR Department, Columbia University. Email: \texttt{garud@ieor.columbia.edu}}
&
Yuanzhe Ma\thanks{IEOR Department, Columbia University. Email: \texttt{ym2865@columbia.edu}} 
&
Jay Sethuraman\thanks{IEOR Department, Columbia University. Email: \texttt{jay@ieor.columbia.edu}}
\end{tabular}
}
\usepackage{cancel}

 \title{Assortment and Procurement Design in Dual-Mode Content Platforms}

\usepackage{booktabs}
\usepackage{fancyhdr}

\begin{document}

\singlespacing

\clearpage\maketitle
\thispagestyle{empty}

\begin{abstract}  
We study assortment and procurement design for a digital content platform offering both ad-supported and subscription access. 
Users are heterogeneous in content preferences and ad tolerance and self-select between the two modes or an outside option.
For a fixed common subscription price and ad load, 
the platform chooses assortment distributions specific to each user type and access mode, together with content-family-level buy-versus-rent decisions
to maximize profit.
Rental costs scale with realized consumption, whereas buying provides a reusable pool of titles whose cost depends on the largest induced requirement across user types and modes. 
We show that the resulting problem is NP-hard. 
We then develop a scalable approximation framework based on a candidate buy set, a relaxation of the procurement coupling, and a decomposition into linear programs with a single equality constraint. 
These subproblems are solved by dual bisection with cardinality-constrained assortment optimization, followed by restricted-master postprocessing to recover primal feasibility. 
The method yields computable optimality-gap bounds, an interpretable threshold-based procurement heuristic, and asymptotic optimality under proportional market scaling as market size and grid resolution increase. 
Numerical experiments show strong performance at moderate market scales and grid sizes.
\end{abstract}
\tableofcontents

\newpage
\setcounter{equation}{0}
\setcounter{footnote}{0}
\setcounter{page}{1}

\section{Introduction}
\label{sec:intro}

Digital content platforms  face two interdependent decisions: which assortments
to offer to each user segment, and how to procure content 
through fixed licensing fees and/or usage-based payments.
These  choices
are especially important   for 
platforms  that offer both an ad-supported mode and an ad-free subscription mode. Users self-select 
 based on the content they expect to  see.  Platform revenue and content costs are  jointly
 determined by  platform decisions and user responses.

 Our procurement model mirrors practice at many leading platforms, which
  report  content costs as a   combination
  of fixed commitments (e.g., licensed or produced content amortized over time) and usage-linked payouts (e.g., revenue sharing or royalties).
For example, the video platform \emph{Bilibili} reports ``revenue-sharing costs'' separately from ``content costs,'' which include amortized licensing costs and production spending~\citep{Bilibili25}.
Similarly, the music platform \emph{Spotify} explains that royalty costs depend on the product tier
and the mix of ad-supported versus premium usage, and that licensing arrangements may include fixed fees or minimum guarantees in addition to usage-based payments~\citep{Spotify25}.
 Many platforms also
 operate in a 
 \emph{dual monetization mode}, offering an ad-supported experience alongside an ad-free subscription option.
This ``pay with attention'' versus ``pay with money'' menu  segments heterogeneous consumers, improving engagement and increasing
platform revenue~\citep{FanKuWh07,Tag09,LambrechtMi17,DeValvePe22}.

Personalized assortments shape demand across content families and therefore change the platform's procurement incentives. 
Procurement and assortment design are therefore inseparable:
the value of including a content family in many tailored assortments depends on whether it is bought or rented, and procurement choices depend on the induced demand.
More broadly,   multisided media models
emphasize that content strategy interacts with both advertising and subscription revenues~\citep{AmaldossDuSh21,CarroniPa20}.

Motivated by these observations, we study a unified profit-maximization problem for a dual-mode content platform.
Given a  
subscription price $p$ and an 
ad load $\sigma$ (the number of ads per unit time),
we study
how the platform 
should
choose personalized stochastic assortments in each mode for each type of user, together with 
family-level buy-versus-rent procurement decisions.
Users  differ in 
content preferences and ad tolerance. 
They self-select among the ad-supported mode, the subscription mode, and an outside option based on expected utility under each mode's assortment; conditional on a mode, they choose from the displayed assortment according to a multinomial logit (MNL) model~\citep{TalluriRy04, RusmevichientongShSh10}.  
The platform can display at most $\kappa$ content families and may randomize over feasible assortments.
Even for fixed $(\sigma, p)$, the  problem combines discrete procurement choices,  
assortment randomization, and endogenous mode choice,
and is NP-hard
  for $\kappa = 1$. 
Given this complexity and the industry norm that pricing is slow-moving while assortment is fast-moving, we focus on the assortment and procurement problem
parameterized by $(\sigma, p)$.

Our goal is  
an approximation framework that is both interpretable and scalable. 
We start from a candidate buy set and relax the max-type coupling induced by purchased content families.
This relaxation decomposes across user types.
For each type, we show that the relaxed problem can be parameterized by an ad-side ratio, which reduces the search to two 
 linear programs 
with one equality constraint
over assortment distributions.
Dualizing this constraint yields
cardinality-constrained MNL assortment problems.
Building on the  classic 
 algorithm in \citep{RusmevichientongShSh10}, 
we develop a   dual-bisection
search procedure for these subproblems and aggregate   across a ratio grid.
Because the subproblems can be solved independently across user types and sampled ratios, the method  parallelizes naturally.
The resulting  guarantee is fully computable and decomposes
into a buy-set term, a dual-relaxation term, and a grid-discretization term.
In a market-scaling regime, the approximation becomes asymptotically
optimal, and the implementation 
runs in polynomial time in the number of user types and content families.
Numerically, we
 find strong performance at moderate grid sizes.

\subsection{Contributions}
\label{sec:contributions}

\begin{enumerate}

 \item \textbf{Modeling contributions.}
 Our \paper\   introduces a new assortment-optimization model class and an 
 approximation
 framework. 
 Where prior work typically optimizes a single assortment objective, we study a setting in which assortment choices interact with endogenous mode choice, ad/subscription monetization, and buy-versus-rent procurement.

\item \textbf{Hardness of joint assortment and procurement design (Theorem~\ref{thm:our-prob-np-hard}).} 
We  show
that the prize-collecting fractional set cover problem~\citep{DinitzGu13}
is NP-hard. 
We then use a reduction from this problem to prove that our joint assortment-and-procurement
problem is NP-hard when $\kappa = 1$.
\item \textbf{An efficient approximation framework with computable certificates (Section~\ref{sec:model-sol}).}
We develop an approximation framework
(with details of Algorithm~\ref{alg:mnl-solver-all} presented in Section~\ref{sec:aggregation})
that combines a buy-set relaxation, a ratio reduction, and a bisection-based dual search built around the 
cardinality-constrained
MNL algorithm of \citet{RusmevichientongShSh10}. 
We include an explicit primal-recovery step: we store assortments encountered during bisection and solve a restricted mixing problem over these actions, recovering primal feasibility together with a computable approximation certificate.
This yields a scalable algorithm with a finite-instance guarantee that decomposes into three interpretable terms: a buy-set gap, a dual-relaxation gap, and a grid-discretization error.
Let $J$ be the number of user types and $L$ 
the number of   content families.
Under market scaling, where the consumer measure scales proportionally with $N$, the returned solution is asymptotically nearly optimal. 
 For fixed   model data,
using $K_T$ points in the outer ratio grid yields total runtime $\widetilde{O}(JL^2K_T)$ (where $\widetilde{O}$ hides polylogarithmic factors),
and total relative error   $O(K_T^{-1}+N^{-1})$. 
The constants in this relative-error bound are instance-dependent and 
include the Lipschitz constants used to certify the ratio-grid discretization error.
\item \textbf{Managerial insights and the value of hybrid procurement.} 
Our solution
yields a simple and interpretable threshold buying rule: 
a content family is bought when its fixed licensing cost is small relative to the rental exposure it would generate across user types and modes, and rented otherwise.
Let \(\lambda\jj\) denote the baseline measure of type-\(j\) consumers.
When the  market size (consumer mass) is scaled
by a factor \(N\),   the threshold rule buys content family \(\ell\) whenever
\[
\frac{\eta_\ell}{\gamma_\ell} \le \frac{N}{2}\sum_{j=1}^J \lambda\jj .
\]
Here, \(\eta_\ell/\gamma_\ell\) is the buy-to-rent
cost ratio for content family \(\ell\). 
This characterization
clarifies how platform scale, cross-type reuse, and per-use rental costs jointly shape procurement decisions. 
Our numerical study shows that this hybrid policy can outperform both pure renting and pure buying by combining buying   high-reuse content with renting   niche content. 
\end{enumerate}

\subsection{Structure of the \paper.}
The remainder of the \paper\ is organized as follows.
Section~\ref{sec:related-work} discusses related work.
Section~\ref{sec:model} introduces the assortment-based platform model. 
Section~\ref{sec:model-sol}
develops the approximation framework for the procurement problem.
Section~\ref{sec:numerics}
reports numerical experiments on algorithmic performance and hybrid procurement insights.  
Section~\ref{sec:conclusion} concludes with key takeaways and directions for future work.

\section{Related work}
\label{sec:related-work}

Our work relates to
several streams of research, including two-sided pricing and content provision, hybrid monetization design, 
personalized content offerings and engagement,
procurement flexibility, and assortment optimization.  We review the literature most relevant to our contributions.

\paragraph{Two-sided pricing and content provision.}
\citet{AmaldossDuSh21} model interactions among consumers, advertisers, and content suppliers, highlighting how procurement costs shape user pricing and ad levels.
\citet{AmaldossDuSh24} extend this framework to competitive settings with heterogeneous ad tolerance and multi-homing.
\citet{Lin20} examines two-sided price discrimination, emphasizing the interplay between users' willingness to pay and advertiser demand.
These studies focus on equilibrium pricing but do not model content-level assortment decisions or the buy-versus-rent procurement trade-off that is  central to our work.

\paragraph{Balancing subscription and advertising revenues.}
A large literature studies how platforms segment users through differentiated access modes~\citep{KumarSe09,HalbheerStLe14,CarroniPa20}.
\citet{CaiSp23} and \citet{Sato19} analyze menus combining a basic ad-supported mode and a premium subscription, showing that self-selection   enables platforms to serve heterogeneous users while extracting surplus from premium segments.
\citet{DeValvePe22} study optimal price-and-advertising menus in two-sided markets, demonstrating that dual-mode access
aligned with consumer preferences
enhances profitability.
\citet{LiBaChPa24} analyze internal coordination between advertising and subscription divisions, uncovering a positive relationship between subscription price and ad intensity: higher subscription fees push more users to the ad mode, increasing advertising reach.
\citet{WeiNa14} study versioning of information goods under 
preference heterogeneity.
We extend this literature by jointly optimizing dual-mode access, personalized (possibly stochastic) assortments, and buy-versus-rent procurement, so that mode choice and within-mode content consumption are simultaneously endogenized.

\paragraph{Personalized content offerings and engagement.}
Recent work explores how platforms  use content offerings   to match users with preferred content and manage long-term engagement~\citep{DeanMo22, IyengarMaSe25}.
\citet{IyengarMaSe25} propose a dynamic framework in which platforms strategically introduce users to new content types, balancing exploration with churn risk.
Similar to their work, our model recognizes  that optimal content offerings need not maximize immediate user utility.
We differ by focusing on the interaction between personalized assortments and procurement mode:
assortments are decision variables that interact with endogenous mode choice of consumers and buy-versus-rent procurement choices.

\paragraph{Buy-versus-rent and procurement flexibility.}
A growing literature studies procurement contracts that trade off upfront commitment against ex-post flexibility. For video content,
\citet{Jiangetal25} analyze contract selection between usage-based royalties and upfront  fixed fees,
highlighting how contract form interacts with piracy surveillance incentives. Their  single-title model  centers on production cost and anti-piracy effort, abstracting from downstream revenue-management.
In operations management, related commitment-versus-flexibility trade-offs appear in studies of contracts combined with spot purchases \citep{PeiSiTu11}, 
option contracts with upfront premia and future exercise rights  \citep{BarnesSchusterBaAn02}, 
and quantity-flexibility contracts that couple baseline commitments with ex-post adjustment \citep{Tsay99}.  
Our content procurement setting shares this canonical trade-off.
First, the commitment level of a bought 
content family is not exogenously chosen but endogenously induced by the platform's personalized content offerings across user types and access modes, entering procurement costs through a maximum-exposure term. 
Second, demand is endogenous. The platform jointly designs content offerings and the ad-supported versus subscription menu, so usage flows (driving rental payments) and peak exposure (driving buying costs) are both decision-dependent. Together, these features turn the classical commitment--flexibility trade-off into a joint design problem in which procurement, mode design, and personalized assortments must be optimized simultaneously.

\paragraph{Assortment optimization.}
Our work  
relates to the assortment optimization literature under discrete-choice models.  
Classical work studies how sellers choose product subsets to maximize revenue when customer choice follows the multinomial logit (MNL) model, with extensions to dynamic settings and capacity or cardinality constraints~\citep{TalluriRy04, RusmevichientongShSh10}.
Recent work incorporates richer objectives and operational constraints, including revenue-utility trade-offs and structured feasible sets~\citep{SumidaGaRuToDa21}.
In contrast,  
we do not optimize a single assortment against an exogenous choice model; instead, assortment distributions are personalized by user type and monetization mode. Moreover, assortments jointly affect within-mode choice, users’ endogenous selection between ad-supported and subscription access, and 
 procurement decisions.
 This coupling across user types and modes, through both endogenous mode choice and buy-versus-rent costs, falls outside standard assortment models and motivates our decomposition-based approximation approach. We build directly on the cardinality-constrained MNL assortment algorithm of \citet{RusmevichientongShSh10} as a subroutine inside our dual-bisection and primal-recovery framework.

\paragraph{General optimization methodology.}

Our method is related to the scalar dual-bisection framework of~\citet{ManieriFaPr25}, but differs in how primal feasibility is recovered. In their setting, feasibility can be preserved by retaining a single feasible primal iterate along the dual search. In our setting, however, an individual action typically has nonzero residual and is therefore not primal-feasible on its own.
 This is because feasibility requires matching certain coupling constraints (e.g., ratio constraints) in expectation over a distribution of assortments, not pointwise for a single assortment. 
We therefore explicitly introduce a postprocessing step that mixes the generated primal actions by solving a restricted primal problem over the actions seen so far.
The proposed postprocessing step is similar to
column generation: the
dual-search   generates candidate primal actions one by one, and a restricted master problem is then solved over the generated set. 
Unlike standard column generation, however, the generation mechanism here is driven by scalar dual bisection rather than by repeated pricing from a master linear program.

\section{Model}
\label{sec:model}

We study a digital content platform that serves $J$  types of users. Let $\lambda\jj > 0$, $j \in [J] := \{1,\ldots,J\}$, denote the mass of type-$j$ users. The platform offers two access modes: an ad-supported mode and a subscription mode. It sets a uniform subscription fee $p \ge 0$ and a uniform ad load $\sigma \ge 0$. 
The uniform-$(p,\sigma)$ restriction reflects practical and regulatory considerations:
personalized prices or ad loads are often harder to justify and implement than personalized content offerings, while uniform monetary and ad-time charges are more transparent to users and more consistent with common industry practice~\citep{GoldfarbTu11, Weyl10}.

\paragraph{User choice.}
Users face two nested decisions.
In stage 1, a type-$j$ user chooses between the ad-supported mode, 
the subscription mode, or
leaving the platform,
based on the expected utility of these options
to be specified below. 
We refer to this stage-1 exit as the platform outside option. 
If a user chooses one of the two modes,
she will repeatedly visit the platform over $T \gg 1$ periods, which we denote as stage 2.
In this stage,  at every visit,
the platform draws an
assortment $A$, and the user makes 
a 
choice from $A \cup \set{0}$, where alternative $0$ denotes that no displayed content family  is selected on that visit.
To avoid ambiguity, we call $0$ the no-item alternative.
Separately, $A = \emptyset$ denotes the empty assortment, which is a platform action rather than a user choice; if $A = \emptyset$, the no-item alternative is chosen with probability one.

\paragraph{Content offerings via assortment distributions.}
 At each visit $\tau \in \set{1,\dots,T}$, the platform's content offering is modeled as a (possibly randomized) assortment of titles, where the distribution of the assortment 
 depends on the user type and
 the mode chosen by the user.
We model the user's within-visit choice from the displayed assortment using the standard Multinomial Logit   (MNL) choice model. 
Under this model, the user either selects one of the displayed content families or   chooses the
no-item alternative. 
There are $L$ content families
indexed by $\ell \in [L] := \{1,\ldots,L\}$. 
For each type $j$ and content family $\ell$, let $\alpha_\ell^{(j)} \ge 0$ denote the MNL attraction parameter and let $u_\ell^{(j)} \ge 0$ denote the utility delivered if content family $\ell$ is chosen.
We interpret each modeled   object
$\ell \in [L]$ as a \emph{content family}
containing a large pool of exchangeable titles, rather than a single fixed title.
At each visit, conditional on the chosen mode, the platform displays an assortment
of content families. If the user selects family $\ell$, one fresh title from that
family is consumed; if the user does not select $\ell$, no title from that family
is consumed. 
Accordingly,
$\alpha_{\ell}^{(j)}$ and $u_{\ell}^{(j)}$ should be interpreted as family-level
attraction and expected-utility parameters for a type-$j$ user.
Within the MNL model, the 
no-item alternative
is indexed by $0$ and has an
attraction parameter $\alpha_0^{(j)} = 1$. 
We emphasize that $\alpha_{\ell}^{(j)}$ and $u_{\ell}^{(j)}$ play different roles. The parameter
$\alpha_{\ell}^{(j)}$ governs the user's within-assortment selection probability under the MNL
model and captures immediate attraction or click propensity. By contrast, $u_{\ell}^{(j)}$ denotes
the user's perceived expected utility conditional on consuming content family $\ell$. 
This distinction is natural in digital content settings: for example, long-form content may deliver high value once consumed, but may be selected
less often than shorter content.
Empirically, contextual variables can move attention and utility differently~\citep{TeruiBaAl11, KawaguchiUeWa21}.
Thus, a content family may have high expected value conditional on consumption
even if it has relatively low selection propensity, or vice versa.
Further, we assume users can
form such expectations from prior experience, metadata, creator or genre familiarity, and learned
platform behavior. 
We do not require $\alpha_\ell^{(j)}$ and $u_\ell^{(j)}$ to vary independently across content families: for example,
the model also accommodates structured cases in which $u_\ell^{(j)} = g\jj(\alpha_\ell^{(j)})$ for some increasing function $g\jj$  (e.g., $u_\ell^{(j)} = e^{\alpha_\ell^{(j)}}$). Even then, the   users' mode choice still depends on a nontrivial trade-off between click propensity and delivered utility.

Our solution approach and insights extend to the more general model
in which the utility parameters depend on the mode; however, we choose to let them be independent of the mode for notational simplicity.
We impose a cardinality cap $\kappa \in \{1,\ldots,L\}$ and define the family of feasible assortments by
\begin{align*}
\mc{A}_\kappa := \{A \subseteq [L] : |A| \le \kappa\}.
\end{align*}
 The platform can also choose not to offer any content in a given interaction, corresponding to the choice $A = \emptyset \in \mc{A}_\kappa$.
 Our approach naturally adapts to the model without a   cardinality constraint, i.e., \(\kappa=L\), 
 by replacing Algorithm~\ref{alg:mnl-oracle} with the unconstrained MNL assortment algorithm
 of \citep{TalluriRy04}.  
 We adopt the  MNL
 model as our maintained within-mode choice model because it is standard, parsimonious, and interpretable, 
 and because it permits efficient exact solution of the cardinality-constrained weighted-assortment subproblems arising throughout 
 our dual-bisection procedure.
 In our analysis, the MNL structure also enables 
 an
 exact solution of the subscription boundary case
  (Algorithm~\ref{alg:mnl-sub-boundary})
 and the derivation of explicit, computable 
 Lipschitz constants   
 used in the finite-instance performance guarantees (Theorem~\ref{thm:mnl-inner-separated-gap}). 
 At the same time, the main economic and algorithmic ideas, including the buy-set relaxation, the ratio reduction, and the restricted-master recovery step, 
 are not inherently tied to MNL. 
 These components can potentially be generalized to other regular discrete-choice models that admit an
 efficient solution of the corresponding weighted cardinality-constrained assortment problem; relevant examples include the generalized attraction model and the nested logit model under a global cardinality constraint~\citep{Wang13, FeldmanTo15, GallegoRaSh15}. 
 A complete end-to-end extension to a particular choice model may, however, require model-specific treatment of boundary cases and computable Lipschitz constants. 
 We therefore focus on MNL to provide a complete, transparent, and computationally implementable framework while highlighting the broader applicability of the underlying methodology.

For a deterministic assortment $A \in \mc{A}_\kappa$, the MNL choice probabilities for a type-$j$ user are
\begin{align*}
 p_\ell^{(j)}(A)
 :=
 \frac{\alpha_\ell^{(j)}}{1 + \sum_{k \in A} \alpha_k^{(j)}}  \I{\ell \in A},
 \qquad
 p_0^{(j)}(A)
 :=
 \frac{1}{1 + \sum_{k \in A} \alpha_k^{(j)}}.
\end{align*}
We define the expected click probability 
 and the expected utility generated by $A$ as
\begin{align*}
 x^{(j)}(A)
 &:=
 \sum_{\ell \in A} p_\ell^{(j)}(A)
 =
 \frac{\sum_{\ell \in A} \alpha_\ell^{(j)}}{1 + \sum_{k \in A} \alpha_k^{(j)}},
 \\
 u^{(j)}(A)
 &:=
 \sum_{\ell \in A} u_\ell^{(j)} p_\ell^{(j)}(A)
 =
 \frac{\sum_{\ell \in A} \alpha_\ell^{(j)} u_\ell^{(j)}}{1 + \sum_{k \in A} \alpha_k^{(j)}}.
\end{align*}

We also define the family-level extreme utilities and the largest attainable assortment utility by
\begin{align}
 \umaxj &:= \max_{\ell \in [L]: \alpha_\ell^{(j)} > 0} u_\ell^{(j)},
 &
\uminj &:= \min_{\ell \in [L]: \alpha_\ell^{(j)} > 0} u_\ell^{(j)},
 &
 U_{\max}^{(j)} &:= \max_{A \in \mc{A}_\kappa} u^{(j)}(A).
 \label{eqn:u-a-max-min-def}
\end{align}
Note that the maximum and minimum are
defined only over content families
$\ell$
such that  $\alphajl > 0$.
In addition,
\begin{align*}
\max_{\ell} \frac{\alphajl \ujl}{1 + \alphajl} \le \uumaxj \le \umaxj.
\end{align*}

A mode-$m \in \{a,s\}$ offering policy for type $j$ is a distribution
\begin{align*}
 \bm{y}^{(m,j)} = \bigl(y_A^{(m,j)}\bigr)_{A \in \mc{A}_\kappa} \in \Delta(\mc{A}_\kappa)
\end{align*}
over feasible assortments,
where \(\Delta(\mc{A}_\kappa)\) denotes the probability simplex over the feasible assortment family \(\mc{A}_\kappa\).
The induced   click probability and 
expected utility, conditioned on 
the mode $m$, are:
\begin{align}
 x^{(m,j)} &:= \sum_{A \in \mc{A}_\kappa} y_A^{(m,j)} x^{(j)}(A),
 &
 u^{(m,j)} &:= \sum_{A \in \mc{A}_\kappa} y_A^{(m,j)} u^{(j)}(A).
 \label{eqn:x-u-def}
\end{align}
For later use, define the  consumption flow into   $\ell$ by
\begin{align}
f_\ell^{(m,j)}(\bm{y})
&:=
\sum_{A \in \mc{A}_\kappa : \ell \in A} y_A^{(m,j)} p_\ell^{(j)}(A).
\label{eqn:y-flow-def} 
\end{align}
By construction,
\begin{align}
\sum_{\ell \in [L]} f_\ell^{(m,j)}(\bm{y})
= x^{(m,j)}
\le 1.
\label{eqn:y-sum-less-than-kappa}
\end{align}
In what follows, $\bm{y}$ denotes the collection $\{\yadj, \ysubj\}_{j \in [J]}$.

\paragraph{Mode choice.}
We assume users can preview or infer the mode-dependent content offerings,
for example, through browsing interfaces, trial experiences, or learned platform behavior,
before choosing a mode. 
We model post-selection ads; pre-roll/display ads independent of selection are outside the scope of this model.
Let $\chij>0$ be a random ad-tolerance parameter with cumulative distribution function $F\jj$.
Conditional on ad tolerance $\chij > 0$, the per-selection ad disutility is $\sigma/\chij$.
Thus, under the ad-supported assortment distribution $y^{(a,j)}$, the expected ad disutility is
$(\sigma/\chij) \sum_{A \in \mc{A}_\kappa} y_A^{(a,j)} x^{(j)}(A)$.
This stochastic specification captures heterogeneity in ad aversion and is consistent with  the literature~\citep{Sato19,DeValvePe22}. 
We write $\bar F\jj := 1 - F\jj$.
We adopt the tie-breaking convention that subscription wins ties against the outside option. 
This convention makes the subscription-active region \(\{\usubj\ge p\}\) closed ($\usubj$ is defined in~\eqref{eqn:u-sub-def}); without it, the subscription-active region would be \(\{\usubj>p\}\), and an optimal subscription assortment distribution may fail to be attained at the threshold.
For a type-$j$ user, the expected net utility from the ad-supported mode under the assortment distribution $\yadj$ is
\begin{align}
 \underbrace{\sum_{A \in \mc{A}_\kappa} y_A^{(a,j)} u^{(j)}(A)}_{:= u^{(a,j)}}
 -
 \frac{\sigma}{\chij}
 \underbrace{\sum_{A \in \mc{A}_\kappa} y_A^{(a,j)} x^{(j)}(A)}_{:= \xadj}.
 \label{eqn:u-ads-def}
\end{align}
The net utility from the subscription mode under $\ysubj$ is
\begin{align}
 \underbrace{\sum_{A \in \mc{A}_\kappa} y_A^{(s,j)} u^{(j)}(A)}_{:= \usubj} - p.
 \label{eqn:u-sub-def}
\end{align}
We assume the outside option yields zero utility. 
We further adopt the convention that the ad-supported mode loses ties against both the subscription mode and the outside option. Under 
$\sigma > 0$
and $\xadj > 0$, such ties occur with probability zero by continuity of  
$F\jj$; the convention also covers degenerate cases such as $\sigma = 0$.
Thus,
a type-$j$ user chooses the ad mode if
\begin{align*}
u^{(a,j)} - \frac{\sigma \xadj}{\chij}
>
\max\{\usubj - p, 0\}
= (\usubj - p)_+.
\end{align*}
Under the stated tie-breaking convention, equality does not result in selection of the ad-supported mode.
Therefore, the probability that a type-$j$ user chooses the ad mode is 
\begin{align}
\padj(\bm{y}) = \bfj{h\jj(\bm{y})}, \quad
h\jj(\bm{y})
:=
\begin{cases}
\frac{\sigma \xadj}{u^{(a,j)} - (\usubj - p)_+}
  & \text{ if } u^{(a,j)} - (\usubj - p)_+ > 0, \\
+ \infty & \text{otherwise},
\end{cases}
\label{eqn:padj}
\end{align} 
Although \(\padj\) in~\eqref{eqn:padj} depends only on \(\bm{y}^{(j)} := (\yadj,\ysubj)\), we suppress the type superscript for notational simplicity; the same convention applies below.
Similarly, the probability that a type-$j$ user chooses the subscription mode is
\begin{align}
\psubj(\bm{y})
:=
\I{\usubj \ge p}
F\jj \Paran{h\jj(\bm{y})}.
\label{eqn:psubj}
\end{align}
We define $\ynullS$
as the assortment vector that assigns 
probability one to the empty assortment.

\paragraph{Revenue.}
In the ad-supported mode, the platform earns advertising revenue \(\rads\sigma\) per visit whenever a type-$j$ user chooses a
content family $\ell \in [L]$,
where \(\rads \ge 0 \) is the revenue rate per unit of ad load.
For a given assortment distribution $\bm{y}$
and quantities
  $\xadj,\padj,\psubj$ 
in~\eqref{eqn:x-u-def},~\eqref{eqn:padj},~\eqref{eqn:psubj}, respectively,
the per-type expected revenue contribution
from type-$j$ users is given by
\begin{align}
 \mathsf{Rev}\jj(\bm{y})
 :=
 \rads\sigma \xadj \padj(\bm{y})
 +
 p \psubj(\bm{y}).
 \label{eqn:R-j-def}
\end{align}

Figure~\ref{fig:user_flow} summarizes the platform-user interaction. Users first choose between the ad-supported mode, the subscription mode, and the outside option. Conditional on the chosen mode, the platform draws an assortment from the corresponding assortment distribution, the user selects a content family
from that assortment according to the MNL model, and the platform earns either advertising revenue or subscription revenue.

\begin{figure}[h]
\centering
\begin{tikzpicture}[every node/.style={font=\footnotesize}]
\tikzstyle{block} = [rectangle, draw, rounded corners, align=center, minimum height=1.0cm, inner sep=4pt, text width=3.0cm]
\tikzstyle{arrow} = [thick, ->, >=Stealth]

\node (start)    [block, text width=6.3cm] {User of type $j$ arrives, with ad tolerance $\chij$, where $\chij \sim F\jj$};
\node (decision) [block, below=1.5cm of start, text width=3.4cm] {User chooses   mode:\ ad or subscription};
\node (outside)  [block, below=1.6cm of decision, text width=3.4cm] {Outside option with zero utility};
\node (adroute)  [block, left=2.2cm of decision, text width=3.0cm]  
{Platform draws an assortment  via $\yadj$};
\node (subroute) [block, right=2.2cm of decision, text width=3.0cm] 
{Platform draws an assortment  via $\ysubj$};
\node (adaccept) [block, below=1.6cm of adroute, text width=2.8cm]  {User chooses content family $\ell \in A$\\ w.p. $p_\ell^{(j)}(A)$};
\node (subaccept)[block, below=1.6cm of subroute, text width=2.8cm]{User chooses content family $\ell \in A$\\ w.p. $p_\ell^{(j)}(A)$};
\node (adrevenue)[block, below=1.6cm of adaccept, text width=3.1cm]{Platform earns\\ ad revenue $\rads\sigma$ per visit\\ if a content family is chosen};
\node (subrevenue)[block, below=1.6cm of subaccept, text width=3.1cm]{Platform earns\\ subscription fee $p$ per period};

\draw [arrow] (start) -- (decision);
\draw [arrow] (decision) -- node[midway, above] {Ad Mode} (adroute);
\draw [arrow] (decision) -- node[midway, above] {\shortstack{Subscription\\Mode}} (subroute);
\draw [arrow, dashed] (decision) -- node[midway, right] {Exit} (outside);
\draw [arrow] (adroute)   -- (adaccept);
\draw [arrow] (subroute)  -- (subaccept);
\draw [arrow] (adaccept)  -- (adrevenue);
\draw [arrow] (subaccept) -- (subrevenue);
\end{tikzpicture}
\caption{User and platform
decision flow under assortment distributions: users first choose a mode, then face a mode-dependent assortment, and the platform earns revenue from either advertising or subscription.}
\label{fig:user_flow}
\end{figure}

\paragraph{Procurement and profit.}
Each content family $\ell$ can either be rented on a usage/click
basis at rate $\gamma_\ell \ge 
0$ or 
  bought at unit cost $\eta_\ell  
  \ge
  0$ per required copy
that is shareable across different users.
We define
\begin{align*}
\eta_{\max} = \max_{\ell \in [L]} \eta_\ell,  \quad
\gamma_{\max} = \max_{\ell \in [L]}\gamma_\ell.
\end{align*}

The procurement cost induced by $\bm{y}$ is
\begin{align}
 C(\bm{y})
 :=
 \sum_{\ell \in [L]}
 \min\Bigg\{
 \eta_\ell \max_{j \in [J], m \in \{a,s\}} f_\ell^{(m,j)}(\bm{y}),
 \gamma_\ell \sum_{j \in [J]} \lambda\jj
 \Bigl(
 \padj(\bm{y}) f_\ell^{(a,j)}(\bm{y})
 +
 \psubj(\bm{y}) f_\ell^{(s,j)}(\bm{y})
 \Bigr)
 \Bigg\}.
 \label{eqn:cost-content}
\end{align}

The first term of~\eqref{eqn:cost-content}
corresponds to buying: if content family $\ell$ is licensed, the platform needs enough distinct copies to support the largest 
expected
  requirement 
across user types and modes. 
This term intentionally does not scale with $\lambda\jj$ 
or with endogenous mode-choice probabilities, because a   content family that is owned is  reusable across users.
The second term of~\eqref{eqn:cost-content}
corresponds to renting: the platform pays   for the total usage across user types.
We consider profit in per-visit units; equivalently, the common factor $T$ from the repeated-visit horizon has been divided out.  
The platform's    problem is
\begin{align}
R\opt  =  \max_{\bm{y} \in \simplexprod}
 \set{
 \sum_{j \in [J]} \lambda\jj \mathsf{Rev}\jj(\bm{y}) - C(\bm{y})}.
 \label{eqn:problem-buy-or-rent}
\end{align}

For the algorithmic development in Section~\ref{sec:model-sol},
it is convenient to fix a buy set $S \subseteq [L]$ and treat the remaining content families as rented. The resulting inner objective is
\begin{align}
 \begin{split}
 R(S, \bm{y})
 :=
 \sum_{j \in [J]}
 &\Bigg[
 \lambda\jj\Bigl(\rads\sigma \xadj \padj(\bm{y}) + p \psubj(\bm{y})\Bigr)
 \\
 &\qquad
 - \lambda\jj \padj(\bm{y})
 \sum_{\ell \in [L] \setminus S} \gamma_\ell f_\ell^{(a,j)}(\bm{y})
 \\
 &\qquad
- \lambda\jj \psubj(\bm{y})
\sum_{\ell \in [L] \setminus S} \gamma_\ell f_\ell^{(s,j)}(\bm{y})
\Bigg]
\\
&
- \sum_{\ell \in S} \eta_\ell \max_{j \in [J], m \in \{a,s\}} f_\ell^{(m,j)}(\bm{y}).
\end{split}
\label{eqn:R-S-def}
\end{align}
For fixed $(S,\sigma,p)$, the platform chooses assortment distributions  $\bm{y}$
to solve
\begin{align}
R(S)
 =
\max_{\bm{y}  \in \simplexprod} R(S, \bm{y}).
 \label{eqn:R-S-problem-def}
\end{align}
The original procurement problem 
$R\opt$~\eqref{eqn:problem-buy-or-rent}
can equivalently be written as
\begin{align*}
 R\opt := \max_{S \subseteq [L]} R(S).
\end{align*}
Since we have 
$ R\opt  = \max_{S \subseteq [L]}
\max_{y \in \simplexprod} R(S,\bm{y})
= \max_{y \in \simplexprod} \max_{S \subseteq [L]}
 R(S,\bm{y})$, 
we let $R(\bm{y}) = \max _{S \subseteq [L]} R(S, \bm{y})$ 
so that $R\opt = \max_{\bm{y} \in \simplexprod} R(\bm{y})$.

We impose the following regularity conditions throughout the \paper.

\begin{assumption}
\label{assump:MNL}
\begin{enumerate}[label=\alph*), ref=\alph*] 
\item
\label{assm:alpha-implies-u-support}
For every type $j$, $\max_{\ell \in [L]} \alphajl > 0$ and
\begin{align}
 \alpha_\ell^{(j)} > 0 \Rightarrow  u_\ell^{(j)} > 0.
 \label{eqn:alpha-implies-u-support}
\end{align}
\item  \label{item:opt-profit-positive}
The optimal value  $R\opt$ defined  
in~\eqref{eqn:problem-buy-or-rent} 
satisfies
$R\opt > 0$.
\item
\label{item:assump-Lip}
For each type $j$, the cdf $F\jj$ is absolutely
continuous and admits a density $\phi_F\jj$ with $|\phi_F\jj(\cdot)| \le W_{F\jj}$. 
Thus,
$F\jj$ is Lipschitz: 
\begin{align*}
 |F\jj(z) - F\jj(z')| \le W_{F\jj}|z-z'| \qquad \text{for all } z,z'.
\end{align*}
\end{enumerate}
\end{assumption} 

\paragraph{Interpretation of Assumption~\ref{assump:MNL}.} 
Assumption~\ref{assump:MNL}(\ref{assm:alpha-implies-u-support}) requires that, for each type $j$, the content families that are choice-relevant under the MNL model   deliver a strictly positive utility.  
Assumption~\ref{assump:MNL}(\ref{item:opt-profit-positive}) 
is a 
mild nondegeneracy condition requiring that the platform’s optimal profit is strictly positive
for the given user-mass vector  $\bm \lambda$.
Assumption~\ref{assump:MNL}(\ref{item:assump-Lip}) is a smoothness condition on ad-tolerance heterogeneity. 
It is mainly used to obtain a clean Lipschitz error bound for the
outer ratio grid. 
Our approach in Section~\ref{sec:model-sol}
can also handle discrete ad-tolerance distributions if $\sigma > 0$:
one can add to the ratio grid in~\eqref{eqn:outer-grid-def} the breakpoints 
induced
by the jump points of the cdf $F\jj$: 
$\{\tau/\sigma \pm \epsilon:\tau \text{ is a jump point of }F\jj\}$ for some small enough $\epsilon > 0$.
When  $\sigma = 0$, the mode-choice probabilities do not vary with $t$, so no jump-induced ratio points are required.
We impose the continuous-cdf condition only to
keep the notation simple.

The platform's joint  assortment  and procurement problem~\eqref{eqn:problem-buy-or-rent} 
combines continuous optimization over $\bm{y} \in \simplexprod$ with a combinatorial decision over which 
content families to buy versus rent,
and even when the buying set is fixed, the max operator also introduces computational challenges.
This combination makes the problem computationally intractable in general.
We establish NP-hardness (Theorem~\ref{thm:our-prob-np-hard}) 
of the problem  in  
the special case with assortment
capacity $\kappa = 1$.
See Appendix~\ref{sec:proof-hardneess}
for the
  proof.

\begin{theorem}
The decision version of   problem~\eqref{eqn:problem-buy-or-rent} is NP-hard.
Specifically, 
under Assumption~\ref{assump:MNL},
given rational input data, 
$\kappa = 1$,
fixed rational
$(\sigma,p)$, and
a rational threshold   $B \in \mathbb{Q}$, deciding whether there exists
$\bm{y} \in \simplexprod$ such that $R(\bm{y}) \ge B$
is NP-hard.
\label{thm:our-prob-np-hard}
\end{theorem}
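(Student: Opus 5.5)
The plan has two stages, matching the contributions list. First, show that the prize-collecting fractional set cover problem is NP-hard. Second, reduce it to the decision version of~\eqref{eqn:problem-buy-or-rent} with $\kappa=1$.

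For the first stage, I would take prize-collecting fractional set cover in the following form. We are given a universe $[J]$ of elements, sets $\ell\in[L]$ with costs $\eta_\ell$, and penalties for uncovered elements. A fractional cover assigns each element $j$ a distribution $y^{(j)}$ over the sets containing it. The cost charged to set $\ell$ is $\eta_\ell\max_j y^{(j)}_\ell$, because the fractional opening of a set must dominate every element's use of it. To make this combinatorial, I would reduce from (vertex) set cover and exploit the max-coupling. With unit costs and the additional option of paying a large penalty instead, I would choose gadget weights so that any solution with value $\le B$ must round to an integral cover. Concretely, I would pick the penalty structure so that fractional splitting is never profitable: spreading an element's mass over two sets costs the max on each set, hence at least as much as concentrating. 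The max-type coupling, unlike a sum, is what prevents the usual LP relaxation and makes the problem hard.

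For the second stage, I would set $\kappa=1$ and $\sigma=0$. With $\sigma=0$ the ad mode gives ad revenue $0$, so ads are irrelevant. I would choose $p$ and utilities so that subscription is chosen whenever $u^{(s,j)}\ge p$; the tie rule makes this region closed. Note that with $\sigma=0$ we get $h^{(j)}=0$ whenever the ad mode is positive, so I would instead set $\rads=0$ and let the ad mode simply compete. The cleaner route is to make the ad mode useless: take $u^{(a,j)}$ small relative to $u^{(s,j)}-p$, or force $y^{(a,j)}$ to the empty assortment with no loss.

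Each user type $j$ is an element. Family $\ell$ is a set, with $\alpha^{(j)}_\ell=\alpha>0$ and $u^{(j)}_\ell=1$ if $j\in\ell$, and $\alpha^{(j)}_\ell=0$ otherwise. Choosing $\kappa=1$ means $y^{(s,j)}$ is a distribution over singletons plus $\emptyset$. Then $u^{(s,j)}=\frac{\alpha}{1+\alpha}(1-y_\emptyset)$ and $f^{(s,j)}_\ell=\frac{\alpha}{1+\alpha}y_{\{\ell\}}$. I would set $p=\frac{\alpha}{1+\alpha}$, so type $j$ subscribes, earning $\lambda^{(j)}p$, iff it is fully served by sets containing $j$; otherwise it contributes $0$, which plays the role of the penalty or prize. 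Taking $\gamma_\ell$ huge forces buying, so the cost becomes $\sum_\ell\eta_\ell\max_j f_\ell^{(s,j)}$, which is exactly $\frac{\alpha}{1+\alpha}$ times the fractional set cover cost. Hence $R(\bm y)\ge B$ iff the prize-collecting instance has value at least the matching threshold. I would also check Assumption~\ref{assump:MNL}: part (a) holds by construction, part (c) holds by taking $F^{(j)}$ uniform, and part (b) holds by adding a dummy type with a free set, which makes $R\opt>0$. All numbers are rational and of polynomial size.

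I expect the main obstacle to be the first stage. The difficulty is proving that the fractional prize-collecting version with max-coupling remains NP-hard, i.e.\ ruling out that fractional mixing beats integral solutions. I would first try a reduction from exact cover by 3-sets or set cover with all $\eta_\ell=1$ and prizes chosen so that the optimum is attained integrally. To do this I would argue by an averaging argument: the per-element split can be replaced by a vertex of the induced polytope without increasing the max-cost. Failing that, I would cite the known hardness from \citet{DinitzGu13}-style constructions.
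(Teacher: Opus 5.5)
The gap is in your first stage, which is where all of the hardness lives, and it rests on a false premise. Your second stage is essentially the paper's reduction: one type per element, one family per set, $\kappa=1$, a type subscribes exactly when it is fully served by families containing it, buy cost equals the fractional cover cost, and a dummy type with a free set handles Assumption~\ref{assump:MNL}(b).

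\textbf{Why the first stage fails.} Under the max-coupling, fractional splitting is \emph{profitable}, not unprofitable, because one fractional opening of a set serves all of its elements at once. Take vertex cover on a triangle: the elements are the three edges, the sets are the three vertices, and costs are unit. Let each edge put mass $1/2$ on each endpoint. Then every per-set maximum is $1/2$, for total cost $3/2$, whereas every integral solution costs at least $2$, with or without unit penalties.

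More fundamentally, when all elements must be covered, minimizing $\sum_\ell \eta_\ell \max_j y^{(j)}_\ell$ over per-element distributions is exactly the fractional set cover LP. To see this, take $x_\ell=\max_j y^{(j)}_\ell$; conversely, set $y^{(j)}_\ell=x_\ell/\sum_{k\ni j}x_k$. So this problem is polynomial-time solvable, and three consequences follow:
\begin{itemize}
\item the max-coupling is not the source of hardness;
\item no choice of prizes will force optimal covers to be integral;
\item your averaging step fails, because collapsing a split onto one set can raise the cost (the triangle again), and the covering polytope has fractional vertices such as $(1/2,1/2,1/2)$.
\end{itemize}
Your fallback citation is also unavailable: the paper notes that NP-hardness of this prize-collecting fractional problem had not been established before.

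\textbf{The missing idea.} Hardness must come solely from the binary cover-or-pay-penalty decisions while the covering variables stay continuous. So you need a source problem of exactly this ``covering LP with violations'' form. The paper uses the strongly NP-complete system from Lemma~1 of \cite{QiuAhDeWo14}: $x\in\mathbb{R}^{V}_{\ge 0}$, binary $y_e,z_v$, with $x_i+x_j\ge y_{ij}$, $x_i\ge z_i$, and $2\sum_v x_v-\sum_v z_v-\sum_e y_e\le q-|E|$. It builds a PCFSC instance as follows: universe $V\cup E$; sets $S_v=\{v\}\cup\{e\in E: v\in e\}$ of cost $2$; unit penalties; bound $q+|V|$. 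Complementing the binaries ($w_v=1-z_v$, $w_e=1-y_e$) then gives a bijection between solutions of the system and PCFSC solutions of cost at most the bound.

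\textbf{Loose ends in your second stage.}
\begin{itemize}
\item The ad mode is not yet neutralized. $u^{(a,j)}$ is a decision variable, not instance data, and ``set $y^{(a,j)}$ to $\emptyset$ with no loss'' is what must be proved. The naive comparison fails: with $\gamma_\ell$ large, restoring diverted subscribers raises rental cost. Since your utilities are $0/1$, you have $u^{(j)}(A)=x^{(j)}(A)$. So taking $\sigma$ at least the top of the support of $\chi^{(j)}$ forces $h^{(j)}\ge\sigma$ and $P^{(a,j)}=0$ for every ad assortment. The paper does the analogous thing with $\sigma=3$ and $\chi^{(j)}$ uniform on $[1/2,1]$.
\item The claim that buying beats renting for every used family holds only after you route non-subscribing types to $\emptyset$. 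Their flows enter the buy maximum but not the rental term.
\end{itemize}
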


\section{Our approach} %
\label{sec:model-sol}

In this section, we develop a framework for 
approximately
computing
$R(S)$ in~\eqref{eqn:R-S-problem-def}. 
The main challenge is that the problem contains several layers of coupling that prevent a direct application of standard assortment-optimization methods. 
First, the content families 
that are purchased
create a max-type procurement coupling: the cost of a purchased
content family
depends on the largest induced consumption flow across all user types and both modes, through the term
$\max_{j\in[J],m\in\{a,s\}} f_\ell^{(m,j)}(\bm{y})$.
Second, the mode-choice probabilities $P^{(a,j)}(\bm{y})$ and $P^{(s,j)}(\bm{y})$ introduce 
coupling between the
assortments for the
ad-supported and subscription modes, since both probabilities depend on the endogenous screening threshold $h^{(j)}(\bm{y})$. 
Third, the buy set $S$ is itself a discrete procurement decision, determining which content families are treated as reusable fixed-cost assets and which remain usage-based rentals. 
Our approach separates these difficulties in layers. We fix a candidate buy set $S$, relax the max-type purchase cost using   weights $\bm{\beta}$, and thereby obtain an upper-bounding objective that decomposes across user types.  
For each user type, we handle 
probability coupling by introducing another parameter $t=x^{(a,j)}/u^{(a,j)}$. 
Once $t$ is fixed, the endogenous mode-choice probabilities no longer couple the
assortment decisions for the
ad-supported and subscription modes: 
for each user type, the
problem decomposes
into an ads 
linear program
(LP)
and a subscription LP, each with one scalar equality constraint.
Each LP is solved by dual bisection using a
cardinality-constrained
MNL assortment algorithm, and the generated assortments are mixed in a postprocessing LP to recover primal feasibility.

Define
\begin{equation}
  \mc{B}_S
  :=
  \left\{
    \bm{\beta} = (\beta^{(m,j)}_\ell)_{\ell \in S, m \in \set{a,s}, j \in[J]} \in \R_+^{2 J |S|} :
    \sum_{j \in [J]}\sum_{m \in \{a,s\}} \beta^{(m,j)}_\ell = 1
    \text{ for every } \ell \in S
  \right\}. \label{eqn:B-S-def}
\end{equation}
For $\bm{\beta} \in \mc{B}_S$, define the $\bm{\beta}$-relaxed objective as follows:
\begin{align}
  G(S,\bm{\beta}, \bm{y})
  :=
  \sum_{j \in [J]}
  &\Bigg[
    \lambda\jj\bigl(\rads\sigma x\adj \padj(\bm{y}) + p \psubj(\bm{y})\bigr)
      \nonumber
    \\
    &\qquad
    - \lambda\jj \padj(\bm{y})
      \sum_{\ell \in [L] \setminus S} \gamma_\ell f_\ell\adj(\bm{y})
    - \lambda\jj \psubj(\bm{y})
      \sum_{\ell \in [L] \setminus S} \gamma_\ell f_\ell\subj(\bm{y})
  \Bigg] 
  \nonumber
  \\
  &
  - \sum_{\ell \in S} \eta_\ell
    \sum_{j \in [J]}\sum_{m \in \{a,s\}} \beta_\ell^{(m,j)} f_\ell^{(m,j)}(\bm{y}).
    \label{eqn:G-S-beta-def}
\end{align}
It is easy to check that
\begin{align}
  R(S, \bm{y}) 
\le G(S,\bm{\beta}, \bm{y})
\qquad \forall \bm{y}, \forall \bm{\beta} \in \mc{B}_S. \label{eqn:beta-relax-ineq}
\end{align}
After fixing $(S,\bm{\beta})$,
we often suppress 
the fixed dependence on
$(S,\bm{\beta})$,
and simply write $G(\bm{y})$.
By collecting terms belonging to each user type $j$, the relaxed objective decomposes across user types:
\begin{equation*}
  G(\bm{y})
  =
  \sum_{j \in [J]} G^{(j)}\bigl(\yadj, \ysubj\bigr).
\end{equation*}
We define 
\begin{align}
G\jj(S, \bm{\beta}) 
:= \max_{\yadj,\ysubj \in \simplex} \set{ G^{(j)}\bigl(\yadj, \ysubj\bigr)},
\label{eqn:G-j-def}
\end{align}
and $G(S, \bm{\beta}) = \sum_{j \in [J]}G\jj(S, \bm{\beta}) $.
Next, we discuss how to approximately compute 
$G\jj(S, \bm{\beta})$.

\paragraph{Ratio reduction.}
We first show in Lemma~\ref{lem:sub-zero-p}
that the subscription side admits a boundary reduction: 
If
the utility level $p$ is attainable (i.e., there exists $\ysubj$ such that $\usubj(\ysubj) \ge p$, or equivalently,
$\uumaxj = \max_{\bm{y} \in \simplex} \usubj(\bm{y}) \ge p$), 
then there exists an optimal solution in which the subscription utility is either $0$ or exactly $p$.
If $p$ is unattainable, the optimal subscription utility is $0$.

\begin{lemma} 
\label{lem:sub-zero-p}
Consider the problem
$ \max_{\yadj,\ysubj \in \simplex} \set{ G^{(j)}\bigl(\yadj, \ysubj\bigr)}.$
Suppose  $\uumaxj \ge p$. Then there exists an optimal solution with
\begin{equation*}
u^{(s,j)}_\star \in \{0,p\}.
\end{equation*}
If  $\uumaxj  < p$, then there exists an optimal solution with
\begin{equation*}
u^{(s,j)}_\star = 0.
\end{equation*}
\end{lemma}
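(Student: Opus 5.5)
The plan is to take an arbitrary optimal pair $(\yadj,\ysubj)$ for $\max G^{(j)}$, which exists because $\simplex$ is compact and the only discontinuity is the closed threshold $\{\usubj \ge p\}$. I then modify $\ysubj$, and possibly $\yadj$, by mixing with the empty-assortment policy $\ynullS$ so that the objective does not decrease and the new subscription utility lies in $\{0,p\}$. The tool throughout is that $x^{(j)}(\cdot)$, $u^{(j)}(\cdot)$ and every flow $f^{(m,j)}_\ell$ are linear in the assortment distribution. Consequently $\theta\,\ysubj + (1-\theta)\ynullS$ has utility $\theta\,\usubj$ and flows $\theta f^{(s,j)}_\ell$.

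\emph{Case $\usubj < p$.} Here $(\usubj-p)_+=0$ and $\I{\usubj\ge p}=0$, so $\psubj=0$. The ad-side quantities $h\jj$ and $\padj$ do not depend on $\ysubj$ at all. The only remaining appearance of $\ysubj$ in $G^{(j)}$ is the $\bm{\beta}$-weighted purchase term $-\sum_{\ell\in S}\eta_\ell\beta^{(s,j)}_\ell f^{(s,j)}_\ell$, which is nonpositive. Replacing $\ysubj$ by $\ynullS$ makes this term vanish and leaves everything else unchanged. We obtain an optimal solution with $u^{(s,j)}_\star=0$. When $\uumaxj<p$ every feasible $\ysubj$ falls into this case, which proves the second statement of Lemma~\ref{lem:sub-zero-p}.

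\emph{Case $\usubj = p$ or $\usubj=0$.} Nothing needs to be done.

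\emph{Case $\usubj > p$, with $d:=\usubj-p>0$.} Set $\theta=p/\usubj$ and replace $\ysubj$ by $\theta\ysubj+(1-\theta)\ynullS$. Then the subscription utility becomes exactly $p$, the indicator stays $1$, and all subscription flows shrink by the factor $\theta$. Hence the rental and $\bm\beta$-purchase costs on the subscription side weakly decrease.

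The difficulty is that $(\usubj-p)_+$ drops from $d$ to $0$. This lowers $h\jj$ and shifts mass from subscription to ads, which need not be profitable. To neutralize it, I would keep $h\jj$ fixed by also modifying the ad policy so that $\sigma x^{(a,j)}/u^{(a,j)}$ equals the old value $\sigma \xadj/(u^{(a,j)}-d)$. The mode probabilities $\padj,\psubj$ are then unchanged. What remains is to check that the ad-side revenue and cost terms do not get worse.

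The cleanest device I would try first is to write the ad side in terms of the ratio $t=x^{(a,j)}/u^{(a,j)}$, anticipating the ratio reduction. For fixed $h$, the objective as a function of the ad policy and the subscription policy separates. The subscription side, given $\psubj>0$, is then maximized by minimizing its costs subject to $\usubj\ge p$, and scaling down to $\usubj=p$ is optimal. I expect this to be the main obstacle: showing that for every optimum with $d>0$ there is an ad policy attaining the required higher ratio with weakly better ad payoff. If direct construction fails, the fallback is to argue via the threshold $h$ as the decision variable. Given any target $h$, the set of ad policies consistent with it when $\usubj=p$ should contain one dominating the original, since the original's ad payoff per selection is unchanged and only its normalization shifts. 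Alternatively, an exchange argument can compare the objective along the segment from $d$ to $0$ and show that it is monotone or quasi-convex, so that an endpoint, i.e. utility $p$ or the empty policy, is optimal.
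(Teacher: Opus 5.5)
You handle $u^{(s,j)}<p$ (switch to $\ynullS$) and the scaling $\theta=p/u^{(s,j)}$ correctly. However, the only nontrivial case, $u^{(s,j)}>p$ with the ad side active, is left open, and the route you propose for it fails in general. Freezing the cutoff requires a new ad policy whose ratio equals $x^{(a,j)}/(u^{(a,j)}-d)$, which is strictly larger than $x^{(a,j)}/u^{(a,j)}$. By Lemma~\ref{lem:ratio-interval}, every ad policy with positive utility has its ratio in $I^{(j)}=[t^{(j)}_{\min},t^{(j)}_{\max}]$. So if the original policy already sits at $t^{(j)}_{\max}$ (e.g.\ it offers only minimum-utility families), or if $u^{(a,j)}-d$ is small, no such policy exists. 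Even when the ratio is attainable, reaching it forces a change in the composition of the ad assortments. Nothing then controls the resulting ad revenue, rental cost and $\bm\beta$-weighted purchase cost. Your fallbacks (``per-selection payoff unchanged, only normalization shifts'', quasi-convexity along a segment) are not arguments.

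The missing idea is to stop protecting the cutoff. Keep $\yadj$, scale the subscription policy to $\tilde{\bm y}^{(s,j)}$ with utility exactly $p$, and let $h^{(j)}$ fall to $\sigma x^{(a,j)}/u^{(a,j)}$, so that $\tilde P_a\ge P_a$. Whenever $u^{(s,j)}\ge p$ you have $P_s=1-P_a$. Write $R_a,C_a$ (resp.\ $R_s,C_s$) for the ad (resp.\ subscription) rental and $\bm\beta$-purchase terms, with $\tilde R_s=\theta R_s$ and $\tilde C_s=\theta C_s$. Then the objective is affine in $P_a$ with slope $\Delta:=\lambda^{(j)}\bigl(\rads\sigma x^{(a,j)}-R_a-p+R_s\bigr)$, and
\[
G^{(j)}(\yadj,\tilde{\bm y}^{(s,j)})-G^{(j)}(\yadj,\ysubj)
=\Delta(\tilde P_a-P_a)+\lambda^{(j)}\tilde P_s(R_s-\tilde R_s)+(C_s-\tilde C_s).
\]
If $\Delta\ge0$, this is nonnegative. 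If $\Delta<0$, the ad mode is a per-user loss relative to subscription, so you switch the ad side off instead: $G^{(j)}(\yadj,\ysubj)=-C_a+\lambda^{(j)}(p-R_s)-C_s+\Delta P_a\le\lambda^{(j)}(p-\tilde R_s)-\tilde C_s=G^{(j)}(\ynullS,\tilde{\bm y}^{(s,j)})$. This is the paper's argument; the paper treats $p=0$ by the same device with $\theta=0$. It also covers $0<u^{(a,j)}\le d$, where $P_a=0$, and the case $u^{(a,j)}=0$ is immediate. The right comparison point is thus ``ads off, everyone subscribes,'' not an ad policy rebuilt to reproduce the old cutoff.
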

By Lemma~\ref{lem:sub-zero-p}, when evaluating  $G\jj(S,\bm{\beta})$, we
may restrict attention to solutions where
the subscription-side utility
is either zero or   equal to $p$. 
Hence, on the subscription-active
branch, the term 
 $(\usubj - p)_+$ 
 vanishes. In this case, whenever the ad-supported assortment is active,
 i.e.,  $\uadj > 0$, the endogenous cutoff in the mode-choice probability
 reduces to 
\begin{align*}
    h\jj(\bm{y}) = \frac{\sigma \xadj(\yadj)}{\uadj(\yadj)}.
\end{align*}
Thus, %
Lemma~\ref{lem:sub-zero-p}\ implies that  the interaction between the
ad-supported assortment and the subscription assortment is captured by the
ad-side ratio 
\begin{align*}
t = \frac{\xadj(\yadj)}{\uadj(\yadj)}.
\end{align*}
This observation motivates a parametric programming approach
indexed by $t$. 
Once $t$ is fixed, the cutoff  $h\jj$
and 
 $\padj = \bfj{\sigma t}$ are fixed; 
 $\psubj$ is then fixed conditional on whether the subscription branch is active.
Conditional on this cutoff, the ad-supported and subscription
assortment decisions can then be optimized separately. 
  We next identify the range of ratios that can arise from an ad-active solution. 
  If  $\uadj = 0$ then Assumption~\ref{assump:MNL}(\ref{assm:alpha-implies-u-support})
  implies $\xadj = 0$, so the ad-supported mode is inactive. We represent
  this branch by  $t = + \infty$. If instead  $\uadj > 0$, then  
  it is easy to show that
  \begin{align*}
1/t = \frac{\uadj}{\xadj} \in [\uminj,\umaxj],
\end{align*}
or equivalently,
\begin{align}
t \in 
I^{(j)} := [t_{\min}^{(j)}, t_{\max}^{(j)}],
~t_{\min}^{(j)} := \frac{1}{\umaxj},
~t_{\max}^{(j)} := \frac{1}{\uminj}.
\label{eqn:t-I-def}
\end{align}
Based on the previous discussion,
we consider the following
parametric program: 
\begin{align} 
\max_{t \in I\jj \cup \set{+\infty}}
\set{    \max_{\yadS,\ysubS \in 
\Delta(\mc{A}_\kappa)} G\jj(\yadS,\ysubS) \mathrm{~s.t.~ }
  \frac{\xadj(\yadS)}{\uadj(\yadS)} = t, \usubj(\ysubS) \in \set{0,p}}, 
\label{eqn:ratio-reduction-step-1}
\end{align}
where 
in~\eqref{eqn:ratio-reduction-step-1}
we interpret the %
ratio
$\frac{\xadj(\yadS)}{\uadj(\yadS)} = + \infty$
when $\uadj(\yadS) = 0$.
Collecting terms, we see that
\begin{align*}
G\jj(\yadS,\ysubS) =
& \underbrace{\lambda\jj \padj(t)  \bigl(\rads\sigma x\adj(\yadS)
-  \sum_{\ell \in [L] \setminus S} \gamma_\ell
f_\ell\adj(\yadS)\bigr)
- \sum_{\ell \in S} \eta_\ell\beta_\ell^{(a,j)}
f_\ell\adj(\yadS)}_{:=\
G\adj(t,\yadS)}\\
& \mbox{} +  \underbrace{\lambda\jj \psubj(t, \ysubS)  \bigl(p
-  \sum_{\ell \in [L] \setminus S} \gamma_\ell
f_\ell\subj(\ysubS)\bigr)
- \sum_{\ell \in S} 
\eta_\ell\beta_\ell^{(s,j)}
f_\ell\subj(\ysubS)}_{:=\
G\subj(t,\ysubS)},
\end{align*}
where $\padj(t) = \bfj{\sigma t}$
and $\psubj(t, \ysubS) = F\jj(\sigma t) \I{\usubj(\ysubS) = p}$.
Accordingly, we consider
\begin{equation}
\label{eq:ratio-reduction-step-2}
\max_{t \in I\jj \cup \set{+\infty}}
G\adj(t) + G\subj(t),
\end{equation}
where for a finite $t$,
\begin{equation}
G\adj(t) :=
\begin{array}[t]{rl}
\max_{\bm{y} \in 
\Delta(\mc{A}_\kappa)}  &  G\adj(t,\bm{y}) \\
\mbox{ s.t. } & d\adj(t,\bm{y}) := \xadj(\bm{y}) - t \uadj(\bm{y}) = 0,
\end{array}
\label{eqn:ads-LP-def}
\end{equation}
and
\begin{align}
G\subj(t)
:=
\max\set{
\max_{\bm{z} \in \simplex} 
\set{ G\subj(t,\bm{z}): d\subj(p, \bm{z}) := u\subj(\bm{z}) - p= 0},0}.
\label{eqn:sub-LP-def}
\end{align}
The outer max in~\eqref{eqn:sub-LP-def}
follows from the fact that  the subscription profit is
zero when $u\subj = 0$. We follow the usual convention of setting the
value of the inner
linear program in~\eqref{eqn:sub-LP-def} to $-\infty$ if it 
is infeasible.
Since $\padj(\infty) = 0$, there is no ad revenue when $t = \infty$.
Therefore, it is optimal to set $\yadj  = \bm{y}_{\emptyset}$  yielding
$G\adj(\infty) = 0$.
We define   $G\subj(\infty)$  by~\eqref{eqn:sub-LP-def},
where we set $\psubj(\infty, \ysubS)= \I{\usubj(\ysubS) =  p}$.
Next, we show that the optimization problem
\eqref{eq:ratio-reduction-step-2} recovers the value of $G\jj(S, \bm{\beta})$. %
\begin{lemma}
\label{lemma:ratio-reduction-in-G}
Let
\begin{align}
\gej(t) =     \geadj(t) +    \gesubj(t). 
\label{eqn:G-total-e-j-def}
\end{align}
 Then
\begin{align}
G\jj(S, \bm{\beta})
= 
\max_{t\in I^{(j)} \cup \set{\infty}} \gej(t).
\label{eqn:G-can-be-solved-in-ratio}
\end{align}
\end{lemma}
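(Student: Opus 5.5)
We prove the identity by establishing both inequalities. Throughout, $(S,\bm{\beta})$ is fixed.

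\emph{Upper bound} ($\le$). Take an optimal pair $(\yadj,\ysubj)$ for $G\jj(S,\bm{\beta})$. By Lemma~\ref{lem:sub-zero-p}, we may assume $\usubj\in\{0,p\}$ (only $\usubj=0$ if $\uumaxj<p$). In particular $(\usubj-p)_+=0$.

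We next define $t$ from the ad side.
\begin{itemize}
\item If $\uadj>0$, set $t=\xadj/\uadj$. As argued before~\eqref{eqn:t-I-def}, $1/t$ is a flow-weighted average of the $u_\ell^{(j)}$ over families with $\alpha_\ell^{(j)}>0$, so $t\in I\jj$. Moreover $h\jj=\sigma t$ and $\padj=\bar F\jj(\sigma t)$.
\item If $\uadj=0$, Assumption~\ref{assump:MNL}(\ref{assm:alpha-implies-u-support}) gives $\xadj=0$ and $h\jj=+\infty$. Set $t=\infty$, so $\padj=0$.
\end{itemize}

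In either case $\psubj=\I{\usubj\ge p}F\jj(\sigma t)$, which equals $F\jj(\sigma t)\I{\usubj=p}$ because $\usubj\in\{0,p\}$. Hence $G^{(j)}(\yadj,\ysubj)=G\adj(t,\yadj)+G\subj(t,\ysubj)$.

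Each piece is bounded by the corresponding value:
\begin{itemize}
\item When $t$ is finite, $\yadj$ is feasible for~\eqref{eqn:ads-LP-def}, so $G\adj(t,\yadj)\le G\adj(t)$.
\item When $t=\infty$, the ad piece reduces to $-\sum_{\ell\in S}\eta_\ell\beta^{(a,j)}_\ell f_\ell\adj\le 0=G\adj(\infty)$.
\item If $\usubj=p$, then $\ysubj$ is feasible for the inner program in~\eqref{eqn:sub-LP-def}.
\item If $\usubj=0$, then $\psubj=0$ and the subscription piece is $\le 0$ because $\eta,\beta,f\ge0$.
\end{itemize}
In every case $G\subj(t,\ysubj)\le G\subj(t)$, which gives $G\jj(S,\bm{\beta})\le\max_t \gej(t)$.

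\emph{Lower bound} ($\ge$). Fix $t$ and take maximizers of $G\adj(t)$ and $G\subj(t)$; both are LPs over a compact simplex, so maximizers exist when feasible. If the inner subscription LP is infeasible or has negative value, replace $\ysubj$ by $\ynullS$, which attains $0$. If the ad LP at finite $t$ is infeasible, $G\adj(t)=-\infty$ and there is nothing to show.

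Now evaluate the true objective $G^{(j)}$ at the resulting pair.
\begin{itemize}
\item Since $\usubj\in\{0,p\}$, we have $(\usubj-p)_+=0$.
\item If $t$ is finite, the constraint $\xadj=t\uadj$ with $t\in I\jj$ requires $\uadj>0$. This holds because $\uadj=0$ forces $\xadj=0$, and then only $\ynullS$-type solutions remain, for which the ratio is $\infty$ rather than $t$. Hence $h\jj=\sigma t$ exactly.
\item The true $\psubj$ equals $F\jj(\sigma t)\I{\usubj=p}$.
\end{itemize}
Thus the true objective equals $G\adj(t,\cdot)+G\subj(t,\cdot)$ at these choices, and the $t=\infty$ case is the same with $\yadj=\ynullS$.

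\emph{Main obstacle.} The delicate point is the degenerate-boundary bookkeeping. Three cases need care.
\begin{itemize}
\item At finite $t$, any $\bm{y}$ feasible for~\eqref{eqn:ads-LP-def} must actually have $\uadj>0$. Under the ratio convention, an empty assortment gives $0=t\cdot 0$, so it satisfies the equality constraint $\xadj-t\uadj=0$ vacuously. I would handle this either by noting that $\ynullS$ contributes value $0$ under $\padj(t)$ while its true value is also $0$ (no flows), so including it never overstates the maximum, or by reading feasibility through~\eqref{eqn:ratio-reduction-step-1}.
\item With $\sigma=0$, the convention that the ad mode loses ties must be checked against $\padj(t)=\bar F\jj(0)$.
\item In the subscription-inactive branch, the outer $\max\{\cdot,0\}$ in~\eqref{eqn:sub-LP-def} must be justified by $\ynullS$ achieving exactly $0$.
\end{itemize}
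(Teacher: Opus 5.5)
Your upper bound is correct and is essentially the paper's argument: reduce to $u^{(s,j)}\in\{0,p\}$ via Lemma~\ref{lem:sub-zero-p}, split on $u^{(a,j)}>0$ versus $u^{(a,j)}=0$, and bound each piece.

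The lower bound has a genuine gap. You claim that at finite $t$ the constraint $x^{(a,j)}-t\,u^{(a,j)}=0$ forces $u^{(a,j)}>0$. This is false. $\bm y_\emptyset$, and indeed any distribution supported on zero-attraction assortments, satisfies the constraint for every $t$. It can also be the only maximizer of~\eqref{eqn:ads-LP-def}, for example whenever every positive-utility mixture with ratio $t$ has negative ad reward. For such an ad solution, $u^{(a,j)}-(u^{(s,j)}-p)_+=0$, so the true cutoff is $h^{(j)}=+\infty$, not $\sigma t$. The true subscription probability is then $\mathbf{1}\{u^{(s,j)}\ge p\}$ rather than $F^{(j)}(\sigma t)\,\mathbf{1}\{u^{(s,j)}=p\}$. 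So your identity ``true objective $=G^{(a,j)}(t,\cdot)+G^{(s,j)}(t,\cdot)$'' fails exactly in this case.

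Your first remedy only checks that the \emph{ad} term is $0$ in both accountings. It misses that switching off the ad mode changes the \emph{subscription} probability. Your second remedy excludes $\bm y_\emptyset$ by reading feasibility through~\eqref{eqn:ratio-reduction-step-1}. That changes the definition of $G^{(a,j)}(t)$ (it can drop from $0$ to a negative value), so it proves a different identity from the one stated.

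The missing idea is a case split on the ad-LP value, together with a comparison to the $t=\infty$ branch.
\begin{itemize}
\item If $G^{(a,j)}(t)>0$, every maximizer has $u^{(a,j)}>0$, because a zero-utility ad distribution has zero flows and zero reward. It therefore genuinely induces ratio $t$, and your computation is valid.
\item If $G^{(a,j)}(t)=0$ and $G^{(s,j)}(t)>0$, let $\bm z_t$ be the subscription maximizer. Set $M_t:=p-\sum_{\ell\notin S}\gamma_\ell f^{(s,j)}_\ell(\bm z_t)$ and $B_t:=\sum_{\ell\in S}\eta_\ell\beta^{(s,j)}_\ell f^{(s,j)}_\ell(\bm z_t)\ge 0$. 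Since $\lambda^{(j)}F^{(j)}(\sigma t)M_t-B_t>0$, you get $M_t>0$. Hence $\lambda^{(j)}M_t-B_t\ge\lambda^{(j)}F^{(j)}(\sigma t)M_t-B_t$. The left side is the true value of $(\bm y_\emptyset,\bm z_t)$ and is at most $G^{(s,j)}(\infty)=G^{(j)}(\infty)$. So $G^{(j)}(t)\le G^{(j)}(\infty)$, and this $t$ is dominated by the ad-inactive branch.
\item If both values are $0$, then $G^{(j)}(t)=0\le G^{(j)}(\infty)$.
\end{itemize}
This is exactly how the paper closes the lower bound. Equivalently, in your direct-evaluation scheme you must prove ``true value $\ge$ LP sum'' rather than equality, and it is the sign $M_t>0$ that makes this hold.
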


Although~\eqref{eqn:G-can-be-solved-in-ratio} resembles a direct
reparameterization, showing $G\jj(S, \bm{\beta}) \ge \max_{t\in I^{(j)} \cup \set{\infty}} \gej(t)$
is not immediate because the
ad and subscription problems at a fixed $t$ are optimized separately.
In particular, the empty ad assortment satisfies the balance constraint
$x^{(a,j)}-t u^{(a,j)}=0$ for every finite $t$, even though it corresponds
economically to the ad-inactive branch $t=\infty$. The proof shows that
whenever the ad-side value is positive, an optimal ad solution must have
positive utility and therefore genuinely induces the ratio
$t=x^{(a,j)}/u^{(a,j)}$, so it can be combined with an optimal subscription
solution at the same cutoff. When the ad-side value is zero, any positive
subscription value obtained at a finite $t$ is weakly dominated by the
subscription-only $t=\infty$ branch. Thus,
\eqref{eqn:G-can-be-solved-in-ratio} gives an exact decomposition of the
original user type-level problem rather than merely an upper bound obtained by
optimizing the two branches separately.

The proof of Lemma~\ref{lemma:ratio-reduction-in-G}
is deferred to Appendix~\ref{sec:proof-main-lemma-thm}.
The resulting value function $\gej(\cdot)$ 
in~\eqref{eqn:G-total-e-j-def}
is generally nonconvex, nonsmooth, and lacks exploitable shape structure.  
Therefore, without imposing additional assumptions on the ad-tolerance distribution or content-family primitives,
we approximate the outer maximization using a finite
grid and control the resulting discretization loss through a Lipschitz bound.

The ratio $t$ has a direct economic meaning: on the active ads branch, the ad-choice cutoff becomes $h^{(j)}=\sigma t$, so $t=x^{(a,j)}/u^{(a,j)}$ is the value-adjusted ad burden induced by the ad assortment. 
Equivalently, $1/t=u^{(a,j)}/x^{(a,j)}$ is the  
expected utility per click under the ads mode,
so a smaller $t$ means that the ad mode delivers more value for a given ad burden and is therefore acceptable to more users. 
This is the key difference from a classical MNL assortment model, where the assortment only affects within-assortment choice through a single objective; here it also affects cross-mode self-selection through the endogenous screening value $t$. Moreover, for any deterministic assortment $A$ with $u^{(j)}(A)>0$, the ratio $x^{(j)}(A)/u^{(j)}(A)$ is a convex combination of singleton ratios $1/u^{(j)}_\ell$, so the extreme feasible ratios are attained by singleton assortments, even though the optimal fixed-$t$ assortment itself need not be a singleton.

We   describe how  Algorithm~\ref{alg:mnl-solver-all},
with details
presented in Section~\ref{sec:aggregation},
approximately computes $G(S,\bm{\beta})$.
Algorithm~\ref{alg:mnl-solver-all}
is motivated by~\eqref{eqn:G-can-be-solved-in-ratio} and
uses  two   subroutines.
The first subroutine described in
Section~\ref{sec:abstract-LP}
is
Algorithm~\ref{alg:lp-postprocessing}: given a finite candidate action set, it computes the optimal feasible mixture
over that set by solving the corresponding restricted primal LP.
The second subroutine, 
Algorithm~\ref{alg:mnl-oracle} in Section~\ref{sec:MNL-oracle}, 
computes an optimal cardinality-constrained assortment
using the algorithm in~\citep{RusmevichientongShSh10}.
We then combine the MNL algorithm with a 
bisection procedure to generate
candidate assortments for the ads branch~\eqref{eqn:ads-LP-def}
and the subscription branch~\eqref{eqn:sub-LP-def},
and apply Algorithm~\ref{alg:lp-postprocessing}
to 
generate   near-optimal solutions using the candidate assortments.
This yields the fixed-$(j,t)$ solvers in
Algorithm~\ref{alg:mnl-fixed-jt-ads}
and Algorithm~\ref{alg:mnl-fixed-jt-sub}.
Finally, Algorithm~\ref{alg:mnl-solver-all}
aggregates these fixed-$(j,t)$ solutions over the ratio grid and across user types to produce an approximation to
$G(S,\bm{\beta})$. We next discuss the performance guarantees.

Before presenting 
Theorem~\ref{thm:mnl-inner-separated-gap}, we 
define some useful constants.
For a %
vector $\bm{v}=(v_\ell)_{\ell \in [L]}$ and an integer $m \ge 1$, let
\begin{equation}
\TOP_m \set{\bm{v}}
:=
\sum_{q=1}^{\min\{m,L\}} v_{(q)}, 
\label{eqn:top-def}
\end{equation}
where $v_{(1)} \ge \cdots \ge v_{(L)}$ are the entries of $\bm{v}$ sorted in nonincreasing order.
Define $\bm{q}^{S,\bm{\beta}}$ as
\begin{align}
    \begin{split}
         q^{S,\bm{\beta}}_\ell
  :=
  \begin{cases}
    \eta_\ell \left(1-\min_{j \in [J],m \in \set{a,s}}\beta_\ell^{(m,j)}\right), & \ell \in S,\\
    0, & \ell \notin S,
  \end{cases} 
    \end{split} \label{eqn:q-S-def}
\end{align} 
and
the per-family   buy-versus-rent penalty vector $\bm{w}^S$ with entries
\begin{equation}
\label{eq:wS-mnl-def}
w_\ell^{S}
:=
\begin{cases}
\eta_\ell, &
\ell \in S,\\[0.4em]
\Bigl(
\gamma_\ell \sum_{j=1}^J \lambda\jj   - \eta_\ell
\Bigr)_+, &
\ell \notin S.
\end{cases}
\end{equation}

\begin{theorem} 
\label{thm:mnl-inner-separated-gap}
Fix $(S,\bm{\beta},\sigma,p)$ and  $K_T \ge 2$.  
Assume 
for every type $j$ and every
$t,t'\in I^{(j)}$,
\begin{align}
    \ABS{G\jj(S,\bm{\beta}, t)-G\jj(S,\bm{\beta}, t')}
& \le
\lipsig |t-t'|, 
\label{eqn:lip-cond-holds}
\end{align}
where $\lipsig$  is nondecreasing   in the vector $\bm{\eta}$, and does
not depend on $(S, \bm{\beta})$.
In~\eqref{eqn:lip-cond-holds},
we write $G\jj(S,\bm{\beta}, t)$
for
$\gej(t)$ in~\eqref{eqn:G-total-e-j-def}
to emphasize its dependence on $S$ and $\bm{\beta}$. \\
Let $(\rhat(S,\bm{\beta},K_T), \ghat(S,\bm{\beta},K_T), \yhat(S,\bm{\beta},K_T) )\gets$~\ref{alg:mnl-solver-all}($S,\bm{\beta},K_T$), where
$\rhat(S,\bm{\beta},K_T)$ is the objective value of the solution
returned by~\ref{alg:mnl-solver-all}.
Recall that $R\opt$ in~\eqref{eqn:problem-buy-or-rent}
is the platform's optimal profit.
Then,  
\begin{align}
\begin{split}
   R\opt-\rhat(S,\bm{\beta},K_T)  
& \leq\ \TOP_{2 J}
\set{\bm{w}^S}
+
\TOP_{2 J}\set{\bm{q}^{S,\bm{\beta}}} \\
& \hspace*{2em} 
+
\frac{1}{2(K_T-1)}
\sum_{j=1}^J
\lipsig
\bigl(t_{\max}^{(j)}-t_{\min}^{(j)}\bigr) 
+
\frac{1}{K_T - 1}
\sum_{j=1}^J (\Gamma\adj +\Gamma\subj ),
\end{split}
\label{eqn:thm-original-gap}
\end{align}
where $\Gamma\adj, \Gamma\subj$ are
defined in~\eqref{eqn:Gamma-j-a-def} when
we introduce Algorithm~\ref{alg:mnl-solver-all} 
in Section~\ref{sec:aggregation}.
\end{theorem}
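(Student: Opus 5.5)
We plan to prove \eqref{eqn:thm-original-gap} by a telescoping chain of four comparisons, each contributing one term:
\begin{align*}
R\opt - \rhat
= \bigl(R\opt - R(S)\bigr) + \bigl(R(S) - G(S,\bm{\beta})\bigr)^{\!*} + \bigl(G(S,\bm{\beta}) - \ghat\bigr) + \bigl(\ghat - \rhat\bigr).
\end{align*}
The starred middle term is handled jointly with the last one: the true profit of the returned $\yhat$ is compared with its relaxed value $G(S,\bm{\beta},\yhat)$, not with $R(S)$. The order of the steps is as follows.

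\textbf{Buy-set term.} We take an optimal $\bm{y}\opt$ for $R\opt$ and evaluate $R(S,\bm{y}\opt)$. For each $\ell$, the cost under $S$ exceeds the optimal $\min\{\cdot,\cdot\}$ cost in \eqref{eqn:cost-content} by at most $w^S_\ell$ times a flow quantity:
\begin{itemize}
\item if $\ell\in S$, the excess is at most $\eta_\ell \max_{j,m} f^{(m,j)}_\ell$;
\item if $\ell\notin S$, the excess is at most $(\gamma_\ell\sum_j\lambda\jj - \eta_\ell)_+ \max_{j,m} f_\ell^{(m,j)}$, using $\padj+\psubj\le 1$.
\end{itemize}
The key observation is that the family-level maxima satisfy $\sum_\ell \max_{j,m} f^{(m,j)}_\ell \le \sum_{j,m}\sum_\ell f^{(m,j)}_\ell \le 2J$ by \eqref{eqn:y-sum-less-than-kappa}. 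Each maximum is also at most $1$. A linear function over $\{z\in[0,1]^L:\sum z\le 2J\}$ is maximized by a top-$2J$ selection, so $R\opt - R(S)\le \TOP_{2J}\{\bm{w}^S\}$.

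\textbf{$\bm{\beta}$-relaxation term.} By \eqref{eqn:beta-relax-ineq}, $R(S)\le G(S,\bm{\beta})$. In the other direction, for the returned $\yhat$,
\begin{align*}
G(S,\bm{\beta},\yhat) - R(S,\yhat) = \sum_{\ell\in S}\eta_\ell\Bigl(\max_{j,m}f_\ell^{(m,j)} - \sum_{j,m}\beta_\ell^{(m,j)}f_\ell^{(m,j)}\Bigr).
\end{align*}
Each bracket is at most $(1-\min\beta_\ell)\max_{j,m} f_\ell^{(m,j)}$, because $\sum\beta=1$ and the $\beta$-weighted average is at least $\min\beta\cdot\max f$. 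The same top-$2J$ argument then bounds the total by $\TOP_{2J}\{\bm{q}^{S,\bm{\beta}}\}$. We also need $R(\yhat)\ge R(S,\yhat)$, and that $\rhat$ is at least this value; this is the definition of $\rhat$ as the true objective of the returned solution.

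\textbf{Grid and oracle terms.} By Lemma~\ref{lemma:ratio-reduction-in-G}, $G(S,\bm{\beta})=\sum_j\max_{t\in I^{(j)}\cup\{\infty\}}\gej(t)$. The point $t=\infty$ is evaluated exactly. For finite $t$, the $K_T$ equispaced grid points have spacing $(t^{(j)}_{\max}-t^{(j)}_{\min})/(K_T-1)$, so the nearest grid point lies within half a spacing of the maximizer. The Lipschitz condition \eqref{eqn:lip-cond-holds} then gives a loss of at most $\lipsig(t_{\max}^{(j)}-t_{\min}^{(j)})/(2(K_T-1))$.

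At each grid point, the fixed-$(j,t)$ solvers (dual bisection plus restricted-master postprocessing) return feasible mixtures. Their values are within $(\Gamma\adj+\Gamma\subj)/(K_T-1)$ of $\geadj(t)+\gesubj(t)$. This is the per-LP guarantee we expect Section~\ref{sec:aggregation} to supply, with the bisection tolerance tied to $K_T$ in the definition \eqref{eqn:Gamma-j-a-def}.

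Finally, the assembled $\yhat$ must actually achieve the claimed $G$ value. This is the main obstacle. At a grid $t$, the ad-side mixture must genuinely induce the ratio $t$ (nonzero utility), and the subscription mixture must hit $u=p$ exactly, so that the true probabilities $\padj,\psubj$ coincide with $\bar F\jj(\sigma t)$ and $F\jj(\sigma t)$. We would handle this as in the proof of Lemma~\ref{lemma:ratio-reduction-in-G}:
\begin{itemize}
\item if the ad value is positive, the mixture has positive utility and induces $t$;
\item otherwise, we switch to $\yadj=\ynullS$ with the $t=\infty$ subscription branch, which weakly dominates.
\end{itemize}
The exact feasibility of the equality constraints is exactly what the restricted-master LP guarantees. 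Summing all terms yields \eqref{eqn:thm-original-gap}.
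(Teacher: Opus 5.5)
Your proposal follows the paper's proof essentially step for step. It uses the same four-term split, the same top-$2J$ arguments for the buy-set term (Lemma~\ref{lemma:buyset-gap-mnl}) and for the $\bm\beta$-relaxation term evaluated at the returned $\hat{\bm y}$ (Lemma~\ref{lemma:relaxation-gap}), the half-spacing Lipschitz grid bound, and the bisection budgets that make each branch error at most $(\Gamma^{(a,j)}+\Gamma^{(s,j)})/(K_T-1)$. Like the paper, it then appeals to the two-case comparison of Lemma~\ref{lemma:ratio-reduction-in-G} to show that $\hat{\bm y}$ actually realizes the selected branch value.

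The one step to tighten is that last one. The algorithm never ``switches'' to the $t=\infty$ branch, and doing so would only realize $\widehat G_\infty\le\widehat G_{t^\star}$, losing the argmax bound. Split instead on whether the returned ad mixture at $t^\star$ has positive utility. If it does, its zero residual forces the ratio to be exactly $t^\star$, so the computed value is realized. If it has zero utility, the ad mode is inactive. The returned subscription mixture has $u^{(s,j)}=p$ whenever its clipped value is positive, which forces $p-R_s>0$, where $R_s$ is its rental cost and $C_s$ its buy cost. It therefore earns $\lambda^{(j)}(p-R_s)-C_s\ge\lambda^{(j)}F^{(j)}(\sigma t^\star)(p-R_s)-C_s$.
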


Note that in~\eqref{eqn:lip-cond-holds}, we assume the existence of a Lipschitz constant~$\lipsig$ for
the map $t \mapsto G\jj(S,\bm{\beta}, t)$.
In Section~\ref{sec:lip-consant}, we discuss how to construct such
an explicit   constant
in~\eqref{eqn:L-ratio-def}.
Algorithm~\ref{alg:mnl-solver-all} solves the fixed-$(j,t)$ subproblems independently across user types and sampled ratio points, 
and only then performs a final maximization over $t$ and summation over $j$. 
Consequently, the outer grid-search layer is naturally parallelizable and therefore well suited to large-scale implementations.

Theorem~\ref{thm:mnl-inner-separated-gap} decomposes the error into three interpretable terms in~\eqref{eqn:thm-original-gap}.
The first term is the buy-set gap, which captures the loss from committing to a candidate buy set \(S\). 
The second term is the \(\bm{\beta}\)-relaxation gap, which captures the loss from relaxing the shared procurement coupling and then recovering a feasible procurement solution. 
The third term is the algorithmic approximation error, which arises from the ratio-grid discretization and the inner bisection routines. 
The first two terms are instance-dependent and do not vanish by refining the ratio grid, whereas,
for fixed primitive model data,
the algorithmic term is \(O(1/K_T)\) under the grid and bisection 
parameters used in the algorithm.

By Lemma~\ref{lem:complexity-thm-callcount-order}, Algorithm~\ref{alg:mnl-solver-all}
makes 
$\noracle(K_T)=O(JK_T\log K_T)$ 
calls to 
Algorithms~\ref{alg:mnl-oracle} and~\ref{alg:exact-card-mnl}, 
including calls made within the boundary solvers.
As
discussed in 
Sections~\ref{sec:MNL-oracle},~\ref{subsec:sub-boundary}, and~\ref{sec:complexity}, each such call has worst-case
complexity $O(L^2\log L)$, yielding total runtime
$\widetilde O(JK_TL^2)$.
For simplicity of notation, we use $\toracle = O(L^2\log L)$ to represent the complexity of those boundary solvers as well.
In addition, define the heuristic buy set $S\opt$ by
\begin{align}
\I{\ell \in S\opt}
=
\I{
2\eta_\ell
\le
\gamma_\ell \sum_{j=1}^J \lambda\jj
}.
\label{eqn:S-opt-def}
\end{align}
By the optimality-gap bound in~\eqref{eqn:thm-original-gap} and the definition of
$\bm w^S$ in~\eqref{eq:wS-mnl-def}, this choice satisfies
\begin{align*}
S\opt
\in
\argmin_{S\subseteq [L]}
\left\{
\TOP_{2J}\set{\bm w^S}
\right\}.
\end{align*}
Thus, $S\opt$ minimizes the component of the gap bound that depends on
$\bm w^S$.
In addition, it is clear that the uniform $\bm{\beta}$, i.e.,
\begin{align}
    (\bm{\beta}\opt)_{\ell}\adj
=
(\bm{\beta}\opt)_{\ell}\subj
=
\frac{1}{2J},
\qquad
j\in[J], \ell\in S\opt,
\label{eqn:uniform-beta-general}
\end{align} satisfies $\bm{\beta}\opt \in \argmin_{\bm{\beta} \in \mc{B}_{S\opt}} \set{\TOP_{2 J}\set{\bm{q}^{S\opt,\bm{\beta}}}}$,
where $\mc{B}_S$ is defined
in~\eqref{eqn:B-S-def}.

We now study a regime in which 
 consumer masses scale
proportionally
while other parameters %
remain fixed.
This helps us understand the performance of our approach in the typical regimes of modern platforms.
Starting from a base mass vector $\bm{\lambda}$, the size-$N$
market has masses $\bm{\lambda}^{(N)} := N\bm{\lambda}.$ 
Let $\RL$   
denote the optimal value of problem~\eqref{eqn:problem-buy-or-rent} under the base mass vector $\bm{\lambda}$
so 
that $\RL > 0$ by Assumption~\ref{assump:MNL}(\ref{item:opt-profit-positive}).
Let  
$\lipsig$ in~\eqref{eqn:lip-cond-holds},
$\Gamma\adj$ and $\Gamma\subj$ in~\eqref{eqn:Gamma-j-a-def} denote the constants evaluated at  the
 base-mass vector $\bm{\lambda}$,
and define
\begin{align}
    A_{\sigma,p}
& :=
\frac{1}{\RL}
\Paran{
\frac12 \sum_{j=1}^J
\lipsig
\bigl(t_{\max}^{(j)}-t_{\min}^{(j)}\bigr)
+
\sum_{j=1}^J
\bigl(
\Gamma\adj
+
\Gamma\subj
\bigr)}, \label{eqn:A-sigma-p-def} \\ 
B_{\sigma,p}
& :=
\frac{2\TOP_{2J}\{(\eta_\ell)_{\ell\in[L]}\}}{\RL}.  \label{eqn:B-sigma-p-def}
\end{align}

For each $N \ge 1$, 
let $S_{N}\opt$ be the heuristic buy set~\eqref{eqn:S-opt-def}
of this 
instance with consumer measure $\bm{\lambda}^{(N)}$ as
\begin{align}
   \I{ \ell \in S_{N}\opt}  = \I{ 
   2 \eta_\ell \le
\gamma_\ell \sum_{j=1}^J \lambda\jj   N
}.  \label{eqn:S-N-opt-def}
\end{align} 
Following~\eqref{eqn:uniform-beta-general},
we choose the uniform relaxation weights
\begin{align}
    (\bm{\beta}_N\opt)_{\ell}\adj
=
(\bm{\beta}_N\opt)_{\ell}\subj
=
\frac{1}{2J},
\qquad
j\in[J],\ \ell\in S_{N}\opt. \label{eqn:uniform-beta}
\end{align}

\begin{theorem}
\label{thm:asymptotic-opt}
Fix a base consumer-mass vector $\bm{\lambda} \in \mathbb{R}_{++}^J$, i.e., $\lambda\jj >
0$ for all $j$.
For each $N \ge 1$, let $\bm{\lambda}^{(N)} := N\bm{\lambda},$
and let $R_N$ denote the optimal value of the
problem~\eqref{eqn:problem-buy-or-rent}
under $\bm{\lambda}^{(N)}$.
Assume~\eqref{eqn:lip-cond-holds}
holds with constants  $W\jj$
that are coordinatewise nondecreasing in 
$\bm\eta$
and independent of
 $(S,\bm \beta)$, as required in Theorem~\ref{thm:mnl-inner-separated-gap}.
 Recall
$S_{N}\opt$  defined in~\eqref{eqn:S-N-opt-def}
and $\bm{\beta}_N\opt$  defined in~\eqref{eqn:uniform-beta}.
Let
$(\rhat_N(K_T), 
\ghat_N(K_T),
\widehat{\bm{y}}_N(K_T)) \gets$~\ref{alg:mnl-solver-all}($S_{N}\opt,\bm{\beta}_N\opt,K_T$).
Then, for every $N \ge 1$ and every $K_T \ge 2$,
\[
\frac{\rhat_N(K_T)}
{R_N}
\ge
1
-
\frac{A_{\sigma,p}}
{K_T-1}
-
\frac{B_{\sigma,p}}{N},
\]
where $A_{\sigma,p}$ and $B_{\sigma,p}$
are defined in~\eqref{eqn:A-sigma-p-def} and~\eqref{eqn:B-sigma-p-def}.
\end{theorem}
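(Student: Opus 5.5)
Apply Theorem~\ref{thm:mnl-inner-separated-gap} to the size-$N$ instance with $S=S_N\opt$ and $\bm\beta=\bm\beta_N\opt$, and divide by $R_N$. Two preliminary facts are needed. First, optimal profit scales linearly: $R_N\ge N\RL$. Given an optimal $\bm y$ for the base instance, use the same $\bm y$ in the size-$N$ instance. The mode-choice probabilities and flows $f_\ell^{(m,j)}$ do not depend on $\bm\lambda$. Revenue and rental costs scale by $N$, while the buy term $\eta_\ell\max f$ does not scale. For each $\ell$, the summand $\min\{\eta_\ell\max f,\,N\gamma_\ell(\cdots)\}$ is at most $\min\{N\eta_\ell\max f,\,N\gamma_\ell(\cdots)\}$, so $C^{(N)}(\bm y)\le N C(\bm y)$. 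Hence $R_N\ge N\RL>0$ by Assumption~\ref{assump:MNL}(\ref{item:opt-profit-positive}).

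Second, the algorithmic constants scale linearly. Every term in $G\jj(S,\bm\beta,t)$ except the $\eta$-buy term is proportional to $\lambda\jj$. The Lipschitz constant $\lipsig$ at mass $N\bm\lambda$ should therefore be at most $N$ times the base constant. The plan is to take the explicit construction~\eqref{eqn:L-ratio-def} and check that it is linear in $\bm\lambda$ plus an $\eta$-dependent part; the $\eta$-part is bounded by $N$ times itself since $N\ge1$ and the constant is nondecreasing in $\eta$. The same check applies to $\Gamma\adj,\Gamma\subj$ from~\eqref{eqn:Gamma-j-a-def}. This is the step I expect to be the main obstacle: it requires inspecting those definitions and confirming that each constant is a positively homogeneous degree-one bound in $(\bm\lambda,\bm\eta)$, or at least bounded by $N$ times the base value. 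If some constant depends on $\bm\lambda$ nonlinearly, I would bound it using $N\ge1$ and the monotonicity in $\bm\eta$ stipulated in the theorem. The interval widths $t_{\max}^{(j)}-t_{\min}^{(j)}$ are independent of $N$. With these, the last two terms of~\eqref{eqn:thm-original-gap} are at most $N\RL A_{\sigma,p}/(K_T-1)\le R_N A_{\sigma,p}/(K_T-1)$.

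Then bound the procurement terms. For $\ell\notin S_N\opt$ we have $2\eta_\ell>N\gamma_\ell\sum_j\lambda\jj$, so $w^S_\ell=(N\gamma_\ell\sum_j\lambda\jj-\eta_\ell)_+\le\eta_\ell$. For $\ell\in S_N\opt$, $w_\ell^S=\eta_\ell$. Thus $\TOP_{2J}\{\bm w^S\}\le\TOP_{2J}\{\bm\eta\}$. With uniform $\bm\beta$, $q_\ell\le\eta_\ell(1-1/(2J))\le\eta_\ell$, so $\TOP_{2J}\{\bm q\}\le\TOP_{2J}\{\bm\eta\}$. The sum is at most $2\TOP_{2J}\{\bm\eta\}=B_{\sigma,p}\RL\le B_{\sigma,p}R_N/N$. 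Substituting into~\eqref{eqn:thm-original-gap} with $R\opt=R_N$ and dividing by $R_N>0$ gives the claimed ratio bound.
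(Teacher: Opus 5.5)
Your proposal is correct, but it is organized differently from the paper's proof.

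\textbf{The paper's route.} The paper first divides by $N$. It passes to a normalized instance with masses $\bm\lambda$ and buy costs $\bm\eta/N$. It then argues by a coupling (common tie-breaking, dual intervals and midpoints scaled by $N$, identical candidate sets, postprocessing weights and branch choices) that Algorithm~\ref{alg:mnl-solver-all} returns the same $\widehat{\bm y}$ on the normalized and size-$N$ instances. Finally it applies Theorem~\ref{thm:mnl-inner-separated-gap} to the normalized instance. There the two procurement terms are each directly at most $\TOP_{2J}\{(\eta_\ell)_{\ell\in[L]}\}/N$, and the constants are compared with the base ones using only the stated monotonicity in $\bm\eta$.

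\textbf{Your route.} You apply Theorem~\ref{thm:mnl-inner-separated-gap} to the size-$N$ run itself, which is exactly the run named in the statement, so you need no coupling argument. Your bounds of $\TOP_{2J}\{\bm w^{S}\}$ and $\TOP_{2J}\{\bm q^{S,\bm\beta}\}$ by $\TOP_{2J}\{(\eta_\ell)_{\ell\in[L]}\}$, combined with $R_N\ge N$ times the base value, are the paper's estimates in unnormalized units.

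\textbf{The cost of your route.} You need the constants at mass $N\bm\lambda$ to be at most $N$ times the base constants. This is a homogeneity property in $\bm\lambda$ that the theorem does not assume.
\begin{itemize}
\item For the explicit constants \eqref{eqn:L-ratio-def} and \eqref{eqn:Gamma-j-a-def}, your inspection works: each is affine in $\lambda^{(j)}$ plus an $\eta_{\max}$ term, and $N\ge 1$.
\item For an abstract $W^{(j)}$ assumed only nondecreasing in $\bm\eta$, your fallback sentence does not close the step. Monotonicity in $\bm\eta$ says nothing about how $W^{(j)}$ depends on $\bm\lambda$.
\end{itemize}

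\textbf{The fix.} Use the value-function identity $G^{(j)}_{N\bm\lambda,\bm\eta}(t)=N\,G^{(j)}_{\bm\lambda,\bm\eta/N}(t)$ for every $t$. It holds because every term of \eqref{eqn:G-S-beta-def} other than the buy term carries $\lambda^{(j)}$, and the two branch LPs and the clipping at $0$ commute with positive scaling. It shows that $N\,W^{(j)}(\bm\lambda,\bm\eta/N)\le N\,W^{(j)}(\bm\lambda,\bm\eta)$ is a valid Lipschitz constant for the size-$N$ instance. This borrows the paper's normalization only at the level of $G^{(j)}$, not of the algorithm, so your route remains the lighter of the two.
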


Our procedure yields a closed-form buying set $S_N^\star$~\eqref{eqn:S-N-opt-def}
with an intuitive
managerial interpretation: content family $\ell$ is bought once market scale is large enough relative to its buy-versus-rent cost ratio, namely when
\[
2\eta_\ell \le \gamma_\ell \sum_{j=1}^J (\lambda^{(N)})\jj = N\gamma_\ell \sum_{j=1}^J \lambda\jj.
\]
Thus, content families
with low fixed licensing fees or high per-use rental costs are bought first, and as the market-scaling factor $N$ increases, additional content families cross the buy threshold and should be bought rather than rented.
Theorem~\ref{thm:asymptotic-opt} shows that
the choice $S_N^\star$~\eqref{eqn:S-N-opt-def}
and our heuristic lead
to an asymptotically optimal solution.
Moreover,   the   total runtime is still $\widetilde{O}(J K_T L^2)$.
\section{Numerical Experiments}
\label{sec:numerics}

To complement the asymptotic guarantee in Theorem~\ref{thm:asymptotic-opt},
Section~\ref{sec:exp-convergence} evaluates
Algorithm~\ref{alg:mnl-solver-all} on practical
instances and shows that it is
already competitive under moderate market-scaling factors \(N\) and moderate
outer ratio-grid resolutions \(K_T\).
The experiments show that the returned 
solutions stabilize as \(K_T\) increases
and that the threshold buy set~\eqref{eqn:S-N-opt-def}
provides a useful hybrid procurement structure relative
to pure rental $(S = \emptyset)$
and pure buying $(S = [L])$ of all content. 
We then design a preference-concentration case study
in Section~\ref{sec:preference-concentration}, where demand shifts toward some shared
content, and show how our
heuristic adapts its planned buy set to this change.

\subsection{Performance of Algorithm~\ref{alg:mnl-solver-all}}
\label{sec:exp-convergence}

We first evaluate the performance of Algorithm~\ref{alg:mnl-solver-all} in a structured synthetic setting. We also develop a mixed-integer programming (MIP) formulation of the problem in Section~\ref{sec:MIP}, which serves as a benchmark. To ensure computational feasibility, we focus on relatively small instances.
We use \textsf{Gurobi}
with a solver time limit of 300 seconds.
We generated multiple independent baseline instances according to the specification below and observed qualitatively similar results across them. For brevity, we report one baseline instance in detail; Section~\ref{sec:preference-concentration}
uses this same instance for the preference-concentration experiment.

Each generated instance has \(J=5\) user types, \(L=10\) content families, and
assortment capacity \(\kappa=3\). Content families \(1,\ldots,5\) are niche content
families, one primarily targeted to each user type. Content families \(6,\ldots,10\) are
reusable content families with moderate attraction across user types. This baseline
catalog is intentionally simple.
The design isolates the main procurement trade-off in the model: niche content
is valuable for personalization but has limited reuse, whereas reusable content
can serve multiple user groups and monetization modes and may therefore justify
buying. We visualize the setting in Figure~\ref{fig:synthetic-catalog}.

 \begin{figure}[t]
\centering
\begin{tikzpicture}[
    >=Latex,
    user/.style={circle, draw, minimum size=7mm, inner sep=1pt},
    item/.style={rectangle, draw, rounded corners, minimum height=6mm, inner sep=2pt},
    high/.style={->, thick},
    med/.style={->, dashed},
    every node/.style={font=\small}
]
\node[user] (u1) at (0,2.2) {$1$};
\node[user] (u2) at (1.4,2.2) {$2$};
\node[user] (u3) at (2.8,2.2) {$3$};
\node[user] (u4) at (4.2,2.2) {$4$};
\node[user] (u5) at (5.6,2.2) {$5$};

\node[item] (p1) at (0,0.5) {$1$};
\node[item] (p2) at (1.4,0.5) {$2$};
\node[item] (p3) at (2.8,0.5) {$3$};
\node[item] (p4) at (4.2,0.5) {$4$};
\node[item] (p5) at (5.6,0.5) {$5$};

\node[item, minimum width=18mm] (r) at (2.8,-1.2) {$6,\ldots,10$};

\draw[high] (u1) -- (p1);
\draw[high] (u2) -- (p2);
\draw[high] (u3) -- (p3);
\draw[high] (u4) -- (p4);
\draw[high] (u5) -- (p5);

\draw[med] (u1) -- (r);
\draw[med] (u3) -- (r);
\draw[med] (u5) -- (r);
\draw[med] (u2) -- (r);
\draw[med] (u4) -- (r);

\node[align=left, font=\footnotesize, anchor=west] at (6.3,0.5)
{solid: own niche/high attraction\\ dashed: reusable/medium attraction};
\end{tikzpicture}
\caption{Content families \(1,\ldots,5\) are niche
content families, one primarily targeted to each user type. Content families \(6,\ldots,10\)
are reusable content families with moderate attraction across user types. Solid edges
mark each user type's own high-attraction niche family. Dashed edges indicate positive
attraction for reusable content. For visual simplicity, the figure omits many
positive but lower-attraction edges, including each user type's positive attraction for other
user types' niche families and additional positive links to reusable content. The
realized baseline attraction values are reported in
Table~\ref{tab:appendix-baseline-alpha} in
Appendix~\ref{app:additional-experimental-details}; under the numerical
specification $u_\ell^{(j)}=\alpha_\ell^{(j)}$, these are also the utility values.}
\label{fig:synthetic-catalog}
\end{figure}

For each instance, all user masses are set to one: $\lambda\jj = 1$ for all $j$.
User ad tolerance is drawn
from the common distribution
$
\chi\jj\sim \mathrm{Unif}[0.5,3.0], j=1,\ldots,5,
$
so heterogeneity in this experiment comes from content preferences rather than from
differences in ad-tolerance distributions. We set the advertising revenue parameter to
\(\rads =3.0\) for all $j$. 
Throughout the experiments, we set
$u_\ell^{(j)}=\alpha_\ell^{(j)}$ for all $(j,\ell)$.
This restriction is used only in the numerical study: a content family with higher
MNL attraction also delivers higher utility when consumed. The analytical results do
not rely on this restriction and continue to allow $\alpha_\ell^{(j)}$ and
$u_\ell^{(j)}$ to differ.
We generate preferences  
from three role-based distributions:
$
D_{\mathrm{low}}=\mathrm{Unif}[0.60,0.80], 
D_{\mathrm{med}}=\mathrm{Unif}[1.15,1.45], 
D_{\mathrm{high}}=\mathrm{Unif}[2.80,3.40].
$
For each user type \(j\), its own niche family receives an independent draw from
\(D_{\mathrm{high}}\), while other users' niche families receive independent draws from
\(D_{\mathrm{low}}\). 
All reusable families, i.e., content families \(6,\ldots,10\), receive independent draws
from \(D_{\mathrm{med}}\).
All generated primitives are
rounded to two decimal places. 
Procurement costs are generated from the attraction matrix. For each content family
\(\ell\), the rental cost is
$
\gamma_\ell
=
0.7\cdot \frac{1}{J}\sum_{j=1}^J \alpha_\ell^{(j)}.
$
Thus, content families with higher average attractiveness are more expensive to rent.
Buy costs are role-specific multiples of rental costs:
$
\eta_\ell=\mathrm{ratio}_\ell\gamma_\ell,
$
where
\begin{align}
    \mathrm{ratio}_\ell=
\begin{cases}
5.50, & \ell\in\{1,\ldots,5\},\\
3.00, & \ell\in\{6,7,8,9\},\\
2.30, & \ell=10.
\end{cases} \label{eqn:buy-cost-ratio}
\end{align}
These ratios place the generated instances near the buy-versus-rent threshold of our heuristic. 
At
\(N=1\), the total user mass is $\sum_{j=1}^5 \lambda\jj = 5$, so the threshold condition
\(2\eta_\ell\le \gamma_\ell \sum_{j=1}^5 \lambda\jj N\)   reduces to \(\mathrm{ratio}_\ell\le 2.5\).
Thus,  content family \(10\) is  naturally attractive for the platform to buy
at \(N=1\) because it has the
lowest buy-cost multiplier, content families \(6,\ldots,9\) are close to the threshold, and
niche families require larger market scale before buying becomes attractive.
We fix \((\sigma,p)=(1.50,1.80)\). 
For each generated
instance and each market scale
$
N\in\{1,2,5,10,100\},
$
we scale user masses according to \(\bm \lambda^{(N)}=N \bm \lambda\), holding all other primitives
\(\alpha,u,\gamma,\eta\) fixed. 
For every
\(N\), the planned buy set is selected by the threshold rule~\eqref{eqn:S-N-opt-def}:
$
\ell\in S_N\opt
\quad\Longleftrightarrow\quad
2\eta_\ell\le \gamma_\ell \sum_{j=1}^J \lambda\jj N.
$
Given \(S_N\opt\), we use the uniform relaxation weights in~\eqref{eqn:uniform-beta}:
$
(\bm{\beta}_N\opt)_{\ell}^{(a,j)}
=
(\bm{\beta}_N\opt)_{\ell}^{(s,j)}
=
\frac{1}{2J},
 j\in[J],\ \ell\in S_N\opt.
$
For each   \(N\),  we run
Algorithm~\ref{alg:mnl-solver-all} over
$
K_T\in\{5,9,17,33,65,129\}.
$
These values satisfy \(K_T-1\in\{4,8,16,32,64,128\}\), so each refinement approximately
halves the outer ratio-grid mesh.
We summarize our results in Table~\ref{tab:sec51-main}.

\begin{table}[t]
\centering
\scriptsize
\resizebox{\textwidth}{!}{
\begin{tabular}{c c c c c c c c}
\hline
\(N\) & \(K_T\) & \(S_N\opt\) & \(R(S_N\opt, \yhat)\) (gap to MIP full)
& MIP full & MIP\((S_N\opt)\)
& MIP\(([L])\) & MIP\((\emptyset)\) \\
\hline
1 & 5   & \(\{10\}\) & 13.89 (11.37\%) & 15.67 & 15.37 & 11.19 & 15.01 \\
1 & 9   & \(\{10\}\) & 14.73 (5.98\%) & 15.67 & 15.37 & 11.19 & 15.01 \\
1 & 17  & \(\{10\}\) & 15.17 (3.19\%) & 15.67 & 15.37 & 11.19 & 15.01 \\
1 & 33  & \(\{10\}\) & 15.29 (2.43\%) & 15.67 & 15.37 & 11.19 & 15.01 \\
1 & 65  & \(\{10\}\) & 15.31 (2.26\%) & 15.67 & 15.37 & 11.19 & 15.01 \\
1 & 129 & \(\{10\}\) & 15.32 (2.20\%) & 15.67 & 15.37 & 11.19 & 15.01 \\
\hline
2 & 5   & \(\{6,7,8,9,10\}\) & 27.75 (14.03\%) & 32.28 & 32.28 & 26.31 & 30.01 \\
2 & 9   & \(\{6,7,8,9,10\}\) & 30.30 (6.13\%) & 32.28 & 32.28 & 26.31 & 30.01 \\
2 & 17  & \(\{6,7,8,9,10\}\) & 30.31 (6.12\%) & 32.28 & 32.28 & 26.31 & 30.01 \\
2 & 33  & \(\{6,7,8,9,10\}\) & 30.69 (4.92\%) & 32.28 & 32.28 & 26.31 & 30.01 \\
2 & 65  & \(\{6,7,8,9,10\}\) & 30.69 (4.95\%) & 32.28 & 32.28 & 26.31 & 30.01 \\
2 & 129 & \(\{6,7,8,9,10\}\) & 30.71 (4.86\%) & 32.28 & 32.28 & 26.31 & 30.01 \\
\hline
5 & 5   & \(\{1,\ldots,10\}\) & 67.82 (17.48\%) & 82.19 & 81.12 & 81.12 & 75.03 \\
5 & 9   & \(\{1,\ldots,10\}\) & 74.73 (9.08\%) & 82.19 & 81.12 & 81.12 & 75.03 \\
5 & 17  & \(\{1,\ldots,10\}\) & 75.73 (7.86\%) & 82.19 & 81.12 & 81.12 & 75.03 \\
5 & 33  & \(\{1,\ldots,10\}\) & 77.13 (6.16\%) & 82.19 & 81.12 & 81.12 & 75.03 \\
5 & 65  & \(\{1,\ldots,10\}\) & 77.22 (6.04\%) & 82.19 & 81.12 & 81.12 & 75.03 \\
5 & 129 & \(\{1,\ldots,10\}\) & 77.41 (5.81\%) & 82.19 & 81.12 & 81.12 & 75.03 \\
\hline
10 & 5   & \(\{1,\ldots,10\}\) & 153.38 (11.92\%) & 174.14 & 174.14 & 174.14 & 150.06 \\
10 & 9   & \(\{1,\ldots,10\}\) & 165.47 (4.98\%) & 174.14 & 174.14 & 174.14 & 150.06 \\
10 & 17  & \(\{1,\ldots,10\}\) & 167.71 (3.69\%) & 174.14 & 174.14 & 174.14 & 150.06 \\
10 & 33  & \(\{1,\ldots,10\}\) & 170.06 (2.34\%) & 174.14 & 174.14 & 174.14 & 150.06 \\
10 & 65  & \(\{1,\ldots,10\}\) & 170.26 (2.23\%) & 174.14 & 174.14 & 174.14 & 150.06 \\
10 & 129 & \(\{1,\ldots,10\}\) & 170.55 (2.06\%) & 174.14 & 174.14 & 174.14 & 150.06 \\
\hline
100 & 5   & \(\{1,\ldots,10\}\) & 1693.49 (8.60\%) & 1852.74 & 1852.74 & 1852.74 & 1500.62 \\
100 & 9   & \(\{1,\ldots,10\}\) & 1798.81 (2.91\%) & 1852.74 & 1852.74 & 1852.74 & 1500.62 \\
100 & 17  & \(\{1,\ldots,10\}\) & 1816.88 (1.94\%) & 1852.74 & 1852.74 & 1852.74 & 1500.62 \\
100 & 33  & \(\{1,\ldots,10\}\) & 1843.02 (0.52\%) & 1852.74 & 1852.74 & 1852.74 & 1500.62 \\
100 & 65  & \(\{1,\ldots,10\}\) & 1845.02 (0.42\%) & 1852.74 & 1852.74 & 1852.74 & 1500.62 \\
100 & 129 & \(\{1,\ldots,10\}\) & 1847.14 (0.30\%) & 1852.74 & 1852.74 & 1852.74 & 1500.62 \\
\hline
\end{tabular}}
\caption{Performance of Algorithm~\ref{alg:mnl-solver-all}
and relative gap to the full-model MIP benchmark under market scaling for one
generated instance. For each market scale \(N\) and ratio-grid size \(K_T\),
\(R(S_N\opt,\yhat)\) is the value returned by
Algorithm~\ref{alg:mnl-solver-all}. The MIP columns report benchmarks from the
explicit formulation: the full endogenous buy-versus-rent problem, the problem with
the buy set fixed to \(S_N\opt\), the pure-buying benchmark \(S=[L]\), and the
pure-rental benchmark \(S=\emptyset\). Percentages in parentheses are relative gaps
to the reported full-model MIP incumbent. The full-model MIP reached the
300-second limit for \(N=1\) and \(N=5\), with final solver gaps of \(3.06\%\) and
\(0.99\%\), respectively; it was solved to certified optimality for
\(N\in\{2,10,100\}\). All fixed-buy-set and pure-procurement benchmarks were solved
to certified optimality. Under \(u_\ell^{(j)}=\alpha_\ell^{(j)}\), the unnormalized
grid-and-inner-search numerator in~\eqref{eqn:A-sigma-p-def} is
\(3.1312\times10^7\).
Using the feasible lower bound
$\underline{\RL}=15.6687\le \RL$,
we obtain the conservative bounds
$A_{\sigma,p}\le 1.9984\times 10^6$ and
$B_{\sigma,p}\le 4.6156$. 
Although this worst-case certificate is
conservative, the empirical results show strong performance at moderate values of $K_T$.
}
\label{tab:sec51-main}
\end{table}

Table~\ref{tab:sec51-main} highlights two main findings. First, our algorithm exhibits strong
empirical performance
as the outer ratio grid is refined, especially for $N=100$:
for $K_T \ge 33$, we observe that
the relative optimality gap of our approach is less than $1\%$;
the gap then stabilizes for
\(K_T \ge 65\).
The results
are consistent with the convergence logic of
Theorem~\ref{thm:asymptotic-opt}, which predicts that the grid-search component of the error
decays with \(1/(K_T-1)\). 
The theorem certificate is conservative in magnitude because it is based
on worst-case constants, but the empirical values show that moderate grid sizes already deliver
high-quality primal solutions.
Small nonmonotonicities can occur even though the ratio grids~\eqref{eqn:outer-grid-def}
used here are nested. The reason is that Table~\ref{tab:sec51-main} reports the original profit 
$R(S_N\opt, \yhat)$ 
of the   solution returned by Algorithm~\ref{alg:mnl-solver-all} at each value of  $K_T$, whereas the algorithm 
attempts to compute the relaxed objective $G(S,\bm{\beta})$  and then evaluates the returned solution
under the original objective.

Second, the MIP benchmarks demonstrate the importance of procurement design. The columns
MIP\((S_N\opt)\), MIP\(([L])\), and MIP\((\emptyset)\) evaluate the same assortment-design
problem under different procurement restrictions: the threshold buy set $S_N\opt$, pure buying, and pure
rental. 
These comparisons show that the procurement choice is  meaningful:
in all cases, we observe that $\mathrm{{MIP}}(S_N\opt) \ge \max\set{\mathrm{{MIP}}([L]), \mathrm{{MIP}}(\emptyset)}$.
At \(N=1\), the threshold rule buys only content family \(10\), which has the lowest
buy-cost multiplier~\eqref{eqn:buy-cost-ratio}, and the resulting fixed-set benchmark improves on both
pure rental and pure buying. At \(N=2\), the threshold set expands to include
all reusable families \(6,\ldots,10\), again outperforming both pure procurement
extremes.
For larger market scales, all content families cross the threshold, so MIP\((S_N\opt)\) coincides
with the pure-buying benchmark and remains above pure rental. Thus, the threshold rule adapts to
market scale in the expected way: it preserves rental flexibility at small scale, selectively buys
high-reuse content at intermediate scale, and moves toward buying all content when aggregate
rental exposure becomes sufficiently large.

The returned assortment distributions are reported in
Table~\ref{tab:appendix-market-scaling-distributions} of
Appendix~\ref{app:additional-experimental-details}. 
They provide 
 additional insight into
the   structure of the solution. 
On the ad side, the returned
assortments typically use multiple content families, often combining a user
type's
own niche family with reusable families from content families \(6,\ldots,10\). This is
intuitive because ad revenue is click-based: adding attractive reusable families
can increase the inside-choice probability and therefore advertising revenue.
On the subscription side, the returned assortments generally have smaller sizes.
This reflects the different economics of the subscription branch: once expected
utility reaches the subscription threshold \(p\), additional utility does not
increase subscription revenue but can increase consumption and procurement
costs. Thus, the recovered subscription assortment tends to use a more targeted
assortment to attain the subscription boundary, while the ad assortment more
often uses larger assortments to generate clicks.

In addition to profit and the procurement decision, we report two postprocessing metrics that describe how the returned solution serves users and content demand.
For an assortment solution $\bm y$, define the admitted fraction as
\begin{equation}
\operatorname{Adm}(\bm y)
:=
\frac{
\sum_{j=1}^{J}\lambda^{(j)}
\left(
P^{(a,j)}(\bm y)+P^{(s,j)}(\bm y)
\right)}
{\sum_{j=1}^{J}\lambda^{(j)}}.
\label{eqn:admission-fraction}
\end{equation}
For a fixed buy set $S$,
define the realized rental flow into content family $\ell$ as
\begin{equation}
Q_{\ell}^{\mathrm{rent}}(\bm y;S)
:=
\I{\ell\notin S}
\sum_{j=1}^{J}\lambda^{(j)}
\left[
P^{(a,j)}(\bm y)f_{\ell}^{(a,j)}(\bm y)
+
P^{(s,j)}(\bm y)f_{\ell}^{(s,j)}(\bm y)
\right],
\label{eqn:realized-rental-flow}
\end{equation}
and let
\begin{equation}
Q^{\mathrm{rent}}(\bm y;S)
:=
\sum_{\ell=1}^{L}Q_{\ell}^{\mathrm{rent}}(\bm y;S),
\qquad
\bar Q^{\mathrm{rent}}(\bm y;S)
:=
\frac{Q^{\mathrm{rent}}(\bm y;S)}
{\sum_{j=1}^{J}\lambda^{(j)}} \in [0,1],
\label{eqn:total-rental-flow}
\end{equation}
denote total rental flow and rental flow per arrival, respectively.
We compute these quantities,
\eqref{eqn:admission-fraction}--\eqref{eqn:total-rental-flow}, for the returned policies at $K_T=129$. 
Detailed family-level rental flows are reported in 
Table~\ref{tab:appendix-market-scaling-rental-flow}.
In all reported market-scaling instances, each user type's subscription offering 
reaches the utility threshold $p$, so every user chooses either the 
ad-supported or subscription mode, and the admitted fraction is   one. 
Rental usage, however, changes substantially with market scale.
At $N=1$, total rental flow is approximately $3.27$, of which about $74\%$
is directed to the five niche families. At $N=2$, absolute rental flow rises to
approximately $4.59$ as the arriving user mass increases, while rental flow per
arrival falls from $0.653$ to $0.459$ because all reusable families
$6,\ldots,10$ enter the buy set. At $N\geq 5$, the threshold rule purchases the
entire catalog, so realized rental flow is zero.

\subsection{Preference Concentration and Procurement Adaptation}
\label{sec:preference-concentration}

We next study how our
threshold buy set adapts when preferences become more concentrated
on shared content.
This experiment simulates an exogenous popularity shock in which demand becomes
more concentrated on a subset of shared content families.
This is consistent
with evidence   that social influence and popularity
signals can increase concentration in product success \citep{SalganikDoWa06}.
For the base setting, we use the same generated instance and the same fixed  
parameters as in Section~\ref{sec:exp-convergence}. 
We fix \(N=1\) and use the high-resolution grid \(K_T=129\). Thus, the case \(\rho=0\)
coincides exactly with the instance in Section~\ref{sec:exp-convergence}.
The experiment varies a concentration parameter
$
\rho\in\{0,0.1,0.2,0.3,0.4,0.5\}.
$

For this experiment, we introduce a clustered preference-concentration structure
on top of the baseline catalog in Figure~\ref{fig:synthetic-catalog}. We partition
the five user types into two taste clusters,
\(C_1=\{1,2,3\}\) and \(C_2=\{4,5\}\), and partition four of the reusable content
families into two shared-content portfolios,
\[
    \mathcal H_1=\{6,7\}, \qquad \mathcal H_2=\{8,9\}.
\]
Along the concentration path, user
types in cluster \(C_c\) become increasingly
attracted to the corresponding portfolio \(\mathcal H_c\), while content family \(10\) and
the niche families remain fixed. 
For each user type \(j\), let \(c(j)\in\{1,2\}\) denote the cluster
such that \(j\in C_{c(j)}\). Only the shared families in its own cluster
portfolio \(\mathcal H_{c(j)}\) move with \(\rho\).
The effective
concentration level is
\[
    \rho_{\mathrm{eff}}^{(j)}(\rho)
    =\min\{1,s_{c(j)}\rho\},
\]
where \(s_1=1.60\) and \(s_2=1.00\).
Accordingly, the first cluster moves faster along
the concentration path than the second cluster.

\begin{figure}[t]
\centering
\begin{tikzpicture}[
    >=Latex,
    user/.style={circle, draw, minimum size=7mm, inner sep=1pt},
    item/.style={rectangle, draw, rounded corners, minimum height=7mm, minimum width=16mm, inner sep=2pt},
    group/.style={rectangle, draw, dashed, rounded corners, inner sep=4pt},
    increase/.style={->, thick},
    fixed/.style={->, dashed},
    every node/.style={font=\small}
]

\node[user] (u1) at (0.0,2.4) {$1$};
\node[user] (u2) at (1.2,2.4) {$2$};
\node[user] (u3) at (2.4,2.4) {$3$};
\node[user] (u4) at (4.4,2.4) {$4$};
\node[user] (u5) at (5.6,2.4) {$5$};

\node[group, fit=(u1)(u2)(u3), label=above:\(C_1\)] {};
\node[group, fit=(u4)(u5), label=above:\(C_2\)] {};

\node[item] (h1) at (1.2,0.4) {$\mathcal{H}_1=\{6,7\}$};
\node[item] (h2) at (4.8,0.4) {$\mathcal{H}_2=\{8,9\}$};

\draw[increase] (u1) -- (h1);
\draw[increase] (u2) -- (h1);
\draw[increase] (u3) -- (h1);

\draw[increase] (u4) -- (h2);
\draw[increase] (u5) -- (h2);

\node[align=left, font=\footnotesize, anchor=west] at (6.7,1.0)
{solid: attraction increases with \(\rho\)
};

\end{tikzpicture}
\caption{Clustered preference-concentration structure used in
Section~\ref{sec:preference-concentration}. 
The base case $(\rho = 0)$ is the same as Figure~\ref{fig:synthetic-catalog}.
For clarity, we draw an edge \((j,\ell)\) only if \(\alpha_\ell^{(j)}\) depends on \(\rho\).
Starting from the baseline catalog in
Figure~\ref{fig:synthetic-catalog}, we partition the user types into two taste
clusters, \(C_1=\{1,2,3\}\) and \(C_2=\{4,5\}\), and partition four reusable
families into two shared-content portfolios, \(\mathcal{H}_1=\{6,7\}\) and
\(\mathcal{H}_2=\{8,9\}\). As \(\rho\) increases, user types in cluster \(C_c\)
become increasingly attracted to the corresponding portfolio \(\mathcal{H}_c\).
Content family
\(10\) remains fixed throughout the concentration path.}
\label{fig:preference-concentration-structure}
\end{figure}

For \(\ell\in\mathcal H_c\), we draw one family-specific high-attraction target
$h_\ell\sim D_{\mathrm{high}}=\mathrm{Unif}[2.80,3.40]$ and keep this realized
value fixed throughout the concentration experiment. Next, we update attraction by
linear interpolation and impose the same numerical relation $u=\alpha$ used in
Section~\ref{sec:exp-convergence}:
\[
\alpha_\ell^{(j)}(\rho)
=
\left(1-\rho_{\mathrm{eff}}^{(j)}(\rho)\right)\alpha_\ell^{(j)}(0)
+
\rho_{\mathrm{eff}}^{(j)}(\rho)h_\ell,
\qquad
u_\ell^{(j)}(\rho)=\alpha_\ell^{(j)}(\rho).
\]
All other attraction and utility entries remain fixed. In particular, niche families,
reusable families outside the user type's own cluster portfolio, and content family \(10\)
do not move along the concentration path.
We visualize this in Figure~\ref{fig:preference-concentration-structure}.
Along this path, rental costs are recomputed from the updated average attractiveness,
while buy costs remain fixed at their \(\rho=0\) values:
$
\gamma_\ell(\rho)
=
0.7\cdot \frac{1}{J}\sum_{j=1}^J \alpha_\ell^{(j)}(\rho),
\eta_\ell(\rho)=\eta_\ell(0).
$
This captures a setting in which usage-based royalties respond to content popularity,
while upfront licensing costs are fixed in advance. As preferences concentrate toward
shared families, those families become more expensive to rent, whereas their fixed
buying costs do not change. Therefore, shared families can cross the threshold 
and
enter the planned buy set:
$
\ell\in S\opt(\rho)
\quad\Longleftrightarrow\quad
2\eta_\ell(0)\le \gamma_\ell(\rho)\sum_{j=1}^J \lambda\jj N .
$

For each value of \(\rho\), we compute \(S\opt(\rho)\), run
Algorithm~\ref{alg:mnl-solver-all} with the uniform relaxation weights from~\eqref{eqn:uniform-beta}, and compare the returned value with the same procurement
benchmarks used in Section~\ref{sec:exp-convergence}. 
Table~\ref{tab:pref-concentration}
reports the value \(R(S\opt(\rho), \yhat)\) returned by the heuristic, the full
MIP benchmark, the MIP benchmark with the buy set fixed to \(S\opt(\rho)\), the
pure-buying benchmark \(S=[L]\), and the pure-rental benchmark \(S=\emptyset\).
The corresponding returned assortment distributions are reported in
Appendix~\ref{app:additional-experimental-details},
Table~\ref{tab:appendix-preference-concentration-distributions}. The table shows that,
as demand concentrates on reusable shared content, the ad-side assortments shift
toward assortments containing the cluster-shared families: content families \(6,7\) for
cluster \(C_1\) and content families \(8,9\) for cluster \(C_2\).

\begin{table}[t]
\centering
\caption{Preference concentration and procurement adaptation. The experiment uses
the same generated instance and fixed design parameters
\((\sigma,p)=(1.50,1.80)\) as Section~\ref{sec:exp-convergence}, with \(N=1\) and
\(K_T=129\). The row \(\rho=0\) coincides with the corresponding base-case row in
Table~\ref{tab:sec51-main}. Percentages in parentheses are relative gaps to the
reported full-model MIP incumbent. The full-model MIP reached the 300-second limit in
all rows; its final solver gaps for \(\rho=0,0.1,\ldots,0.5\) were
\(3.06\%\), \(2.57\%\), \(2.93\%\), \(2.26\%\), \(3.55\%\), and \(2.08\%\),
respectively. For \(\rho=0\), we reuse the base-case MIP run reported in
Table~\ref{tab:sec51-main}. The fixed-\(S\opt(\rho)\) and pure-buying benchmarks were solved to
certified optimality in every row. The pure-rental benchmark was also certified except
at \(\rho=0.4\), where the final solver gap was below \(0.04\%\).}
\label{tab:pref-concentration}
\scriptsize
\resizebox{\textwidth}{!}{
\begin{tabular}{c c c c c c c}
\hline
\(\rho\) & \(S\opt(\rho)\) & \(R(S\opt(\rho), \yhat)\) (gap to MIP full)
& MIP full & MIP\((S\opt(\rho))\)
& MIP\(([L])\) & MIP\((\emptyset)\) \\
\hline
0.0 & \(\{10\}\) & 15.32 (2.20\%) & 15.67 & 15.37 & 11.19 & 15.01 \\
0.1 & \(\{10\}\) & 15.31 (2.84\%) & 15.76 & 15.36 & 12.05 & 14.99 \\
0.2 & \(\{6,7,10\}\) & 15.41 (3.05\%) & 15.90 & 15.90 & 12.86 & 14.98 \\
0.3 & \(\{6,7,10\}\) & 15.58 (3.04\%) & 16.07 & 16.07 & 13.78 & 14.95 \\
0.4 & \(\{6,7,8,9,10\}\) & 15.15 (6.69\%) & 16.24 & 16.24 & 14.62 & 14.92 \\
0.5 & \(\{6,7,8,9,10\}\) & 15.14 (7.61\%) & 16.39 & 16.39 & 15.01 & 14.92 \\
\hline
\end{tabular}}

\end{table}

Table~\ref{tab:pref-concentration} illustrates how the threshold procurement rule responds to
a change in demand composition. At \(\rho=0\), the instance is exactly the base case from
Section~\ref{sec:exp-convergence}. As \(\rho\) increases, demand shifts toward reusable shared
families. 
Since 
rental costs are recomputed from average attractiveness while fixed buying
costs remain unchanged, shared content becomes increasingly attractive 
 for the platform
 to   buy.
Consequently, the planned buy set expands from content family \(10\) to the reusable
shared portfolios whose attractiveness increases along the concentration path.
The results 
in Table~\ref{tab:pref-concentration} also highlight the role of procurement
design. The fixed-\(S\opt(\rho)\) benchmark evaluates the procurement structure selected
by the threshold rule, while the pure-buying and pure-rental benchmarks represent the two
procurement extremes. When 
\(S\opt(\rho)\) is   hybrid, comparing these columns
shows whether the platform benefits from selectively licensing reusable shared content while
retaining rental flexibility for the rest of the catalog. 
The main message is that the planned procurement structure changes in an
economically interpretable way as demand concentrates on reusable content.
In addition, 
Table~\ref{tab:pref-concentration}
shows that 
our heuristic \(S\opt(\rho)\) is competitive: we see 
that 
MIP full and MIP\((S\opt(\rho))\) are close  for each value
of $\rho$.

As in the market-scaling experiment, the admission fraction~\eqref{eqn:admission-fraction}
remains one throughout
the preference-concentration path, while
Table~\ref{tab:appendix-preference-concentration-rental-flow} shows that realized
rental usage changes substantially. Total rental flow declines from approximately
$3.27$ at $\rho=0$ to $1.07$ at $\rho=0.5$. Before the buy set adjusts, the shift
toward still-rented shared content causes a slight increase from $3.27$ at $\rho=0$
to $3.32$ at $\rho=0.1$. Content families $6$ and $7$ enter the buy set at $\rho=0.2$,
after which realized rental flow to reusable content is zero; content families $8$ and $9$
enter at $\rho=0.4$. From $\rho=0.2$ onward, all realized rental usage is
concentrated on the niche families and declines as consumption shifts toward the
purchased shared-content portfolios.

\section{Conclusion}

\label{sec:conclusion}

This \paper\ 
studies the fixed-$(\sigma,p)$ 
assortment-and-procurement problem faced by dual-mode content platforms: how to tailor mode-specific content offerings while deciding which content families to license upfront and which to procure on a usage basis. 
The central difficulty is that assortment design, user self-selection, and procurement are jointly endogenous. Rental costs depend on realized usage flows, whereas buying induces a max-exposure coupling across user types and monetization modes. 
We show that this joint problem is NP-hard   in a   special case, which motivates an approximation-based solution approach.

Our framework combines a buy-set relaxation, a ratio reduction, scalar-dual bisection with an MNL assortment algorithm, and a restricted-master recovery step that restores primal feasibility. 
The result is a scalable and naturally parallelizable algorithm with computable finite-instance performance guarantees. 
The analysis also yields a transparent procurement insight: 
the threshold heuristic minimizes the buy-set component of the performance bound and favors content families with sufficiently
low buy-to-rent cost ratios at the prevailing market scale.
Under the market-scaling regime studied in the \paper, the relative error of the method vanishes as consumer scale and grid resolution increase. An important next step is to endogenize the common ad load and subscription price, which are treated as fixed here. 
Other natural extensions include dynamic learning of user preferences and ad tolerance, richer temporal variation in content value, and competitive platform settings.

\newpage
\begin{appendices}

The appendix is organized as follows. Section~\ref{sec:our_algo}   discusses the main algorithms and all ancillary subroutines.
Section~\ref{sec:proof-hardneess} proves the NP-hardness result. Section~\ref{sec:proof-of-thm} proves the main approximation and asymptotic guarantees. Sections~\ref{sec:generic-LP}--\ref{subsec:andrew-monotone-chain} provide the algorithmic details, including the LP primitive, ads and subscription subproblems, boundary solver, and postprocessing step. The complexity analysis follows next in Section~\ref{sec:complexity}, while Section~\ref{sec:MIP} gives the exact MIP benchmark formulation. Finally, Section~\ref{app:additional-experimental-details} reports additional experimental details and returned assortment distributions.

Table~\ref{tab:notation} summarizes the key notation used throughout the \paper.

\begin{table}[h!]
\centering
\small
\caption{Notation Summary}
\begin{tabular}{p{3.4cm} p{9.2cm}}
\toprule
\textbf{Symbol} & \textbf{Description} \\
\midrule
$j \in [J]$ & User type index \\
$\ell \in [L]$ & Content family index \\
$\lambda\jj$ & Mass of type-$j$ users \\
$F\jj$ & Ad-tolerance distribution of type-$j$ users \\
$\kappa$ & Maximum assortment size \\
$\mc{A}_\kappa$ & Family of feasible assortments $\{A \subseteq [L] : |A| \le \kappa\}$ \\
$\alpha_\ell^{(j)}$ & MNL attraction parameter of content family $\ell$ for type $j$ \\
$u_\ell^{(j)}$ & Utility delivered if type $j$ chooses content family $\ell$ \\
$p_\ell^{(j)}(A)$ & MNL probability that type $j$ chooses content family $\ell$ from assortment $A$ \\
$x^{(j)}(A)$ & Inside-choice probability under deterministic assortment $A$ \\
$u^{(j)}(A)$ & Expected utility under deterministic assortment $A$ \\
$y_A^{(m,j)}$ & Probability that assortment $A$ is offered to type $j$ in mode $m \in \{a,s\}$ \\
$x^{(m,j)}, u^{(m,j)}$ & Induced inside-choice probability and expected utility under mode $m$ \\
$f_\ell^{(m,j)}(\bm{y})$ & Expected choice flow into content family $\ell$ for type $j$ in mode $m$ \\
 $\padj(\bm{y})$ & Probability type $j$ chooses the ad-supported mode \\
$\psubj(\bm{y})$ & Probability type $j$ chooses the subscription mode \\
$p$ & Subscription price \\
$\sigma$ & Ad load \\
$\rads$ & Ad revenue per unit ad time \\
$\gamma_\ell$ & Rental cost of content family $\ell$ \\
$\eta_\ell$ & Purchase cost of content family $\ell$ \\
$S \subseteq [L]$ & Buy set \\
$R(S, \bm{y})$ & Fixed-$(S,\sigma,p)$ objective \\
$R(S)$ & Optimal fixed-buy-set value \\
\bottomrule
\end{tabular}
\label{tab:notation}
\end{table}

\section{Main solution algorithm and other helper algorithms}
\label{sec:our_algo} 

\subsection{Algorithm~\ref{alg:lp-postprocessing}} 
\label{sec:abstract-LP}
Consider  
an LP of the form
\begin{align}
    F(\theta)
:=
\max_{\bm{y} \in \Delta(\mathcal A)}
\left\{
\sum_{a \in \mathcal A} y_a \phi_\theta(a)
:
\sum_{a \in \mathcal A} y_a d_\theta(a)=0
\right\}. \label{eqn:F-theta-def}
\end{align}
Assuming the primal problem is feasible, its dual is
\begin{align}
    g_\theta(\nu)
=
\max_{a \in \mathcal A}
\bigl\{
\phi_\theta(a)+\nu d_\theta(a)
\bigr\},
\qquad
F(\theta) = \min_{\nu \in \R} g_\theta(\nu).
\label{eqn:generic-LP-dual}
\end{align}

Here we interpret $\mc{A}$ as the action set, which can be large.
To approximately solve $F(\theta)$, we proceed in two steps.
First, we construct a finite candidate set $C(\theta) \subseteq \mc{A}$ using the problem-specific procedure introduced later.
Second, we apply Algorithm~\ref{alg:lp-postprocessing}
 in Appendix~\ref{subsec:andrew-monotone-chain}
to compute the optimal feasible mixture over $C(\theta)$ for the restricted one-equality LP.
We assume $C(\theta)$ contains either an action with 
$d_\theta(a) = 0$
or two actions 
$a^-,a^+$
with $d_\theta(a^-) <  0 <  d_\theta(a^+)$, so that the restricted LP is feasible.
We interpret $\phi_\theta(a) \in \R$ as the reward of action $a$, and
$d_\theta(a) \in \R$ as the residual of the single linear balance constraint.
Given the candidate set $C(\theta)$, we restrict the original problem to actions in $C(\theta)$.

\subsection{Optimal Cardinality-Constrained Assortment}
 \label{sec:MNL-oracle}

For any attraction vector $\bm\alpha$, reduced-revenue vector $\bm R$,  and capacity
$\kappa$,  
we use 
Algorithm~\ref{alg:mnl-oracle} due to \citep{RusmevichientongShSh10}
to   solve
the problem
$\max_{A \in \mc{A}_\kappa} 
\frac{\sum_{\ell \in A} \alpha_\ell R_\ell}
{1+\sum_{k \in A} \alpha_k}$.
Let $\toracle(L)$ denote the cost of one call to Algorithm~\ref{alg:mnl-oracle}
on $L$ content families.
Theorem 2.2  of 
  \citep{RusmevichientongShSh10}   shows that
$\toracle(L)=O(L^2\log L)$.

\begin{algorithm}[H]
\renewcommand{\thealgorithm}{\texttt{StaticMNL}}
\caption{Cardinality-Constrained Assortment}
\label{alg:mnl-oracle} 
\begin{algorithmic}[1]
\STATE \textbf{Input:} Attraction vector $\bm\alpha$;  revenue vector $\bm R$;
 cardinality cap $\kappa$ 
\STATE Define the objective
$
H(A,\bm\alpha,\bm R)
=
\frac{\sum_{\ell \in A} \alpha_\ell R_\ell}
{1+\sum_{k \in A} \alpha_k}$,
for $A \in \mc{A}_\kappa
$
\STATE $A\opt \leftarrow$ %
Static MNL Algorithm of \citet{RusmevichientongShSh10}
\STATE %
$H\opt \gets H(A\opt,\bm\alpha,\bm R)$
\STATE \textbf{Return:} $H\opt,  A\opt$
\end{algorithmic}
\end{algorithm}

\subsection{The fixed-$(j,t)$ solvers for~\eqref{eqn:ads-LP-def} and~\eqref{eqn:sub-LP-def}}
 \label{sec:fixed-t-solver}

We now specialize 
the approach
to the two one-equality LPs that arise at a fixed
user type $j$ and fixed ratio $t$.

\subsubsection{The fixed-$(j,t)$ solver for     the ad problem~\eqref{eqn:ads-LP-def}}
We introduce
the dual variable $\mu$ for the constraint 
$d\adj(t, \bm{y}) = 0$, so that
\begin{align}
    \max_{\bm{y} \in \simplex}
\set{G\adj(t,\bm{y}): 
d\adj(t, \bm{y}) = 0
}
= \min_{\mu \in \R}
 \max_{A \in \mc{A}_\kappa}
 \set{ G\adj(t,A) + \mu d\adj(t, A)},
\label{eqn:duality-ads}
\end{align}
where $ G\adj(t,A) :=  G\adj(t,e_A)$.

Each fixed-$\mu$ evaluation $G\adj(t, \mu)$
is a cardinality-constrained MNL assortment problem:
\begin{align}
\label{eq:ads-fixed-mu}
G\adj(t, \mu)
=
\max_{A \in \mc{A}_\kappa}
\left\{
\frac{\sum_{\ell \in A} \alpha_\ell^{(j)} R_\ell\adj(t,\mu)}
 {1 + \sum_{k \in A} \alpha_k^{(j)}} 
\right\},
\end{align} with
\begin{align}
R_\ell\adj(t,\mu)
:=
\lambda\jj \bar F\jj(\sigma t)\bigl(\rads\sigma-\gamma_\ell \I{\ell \notin S}
\bigr)
 -\eta_\ell \beta\adjl \I{\ell \in S}
+
\mu\bigl(1-t u_\ell^{(j)}\bigr).
\label{eqn:R-C-def-ads}
\end{align}

Thus, 
\begin{align}
G\adj(t) = 
\min_{\mu \in \R} G\adj(t, \mu).
\label{eqn:ads-duality}
\end{align}

Recall $t \in   [t_{\min}^{(j)}, t_{\max}^{(j)}]$ in~\eqref{eqn:t-I-def}.
For the ads branch, we use Algorithm~\ref{alg:mnl-fixed-jt-ads}
to   compute $G\adj(t)$.
The procedure involves choosing two content families defined below:
\begin{align}
\lplus\jj \in\argmax_{\ell: \alpha_\ell^{(j)}>0} u_\ell^{(j)},
\qquad
\lminus\jj \in
\argmin_{\ell: \alpha_\ell^{(j)}>0} u_\ell^{(j)}.
\label{eqn:two-prod-pair-def}
\end{align}

Within   Algorithm~\ref{alg:mnl-fixed-jt-ads},
if $t \in \{t_{\min}^{(j)}, t_{\max}^{(j)}\}$, we use 
Algorithm~\ref{alg:mnl-ads-boundary}, which considers all assortments  formed from the 
content families in
$\mc{M}\jj(t)   = \set{\ell \in [L] : u_\ell^{(j)} = 1/t, \alpha_\ell^{(j)} > 0}
$ (note that 
$ \alpha_\ell^{(j)} > 0 \Rightarrow  u_\ell^{(j)} > 0$ by~\eqref{eqn:alpha-implies-u-support})
and
applies   Algorithm~\ref{alg:mnl-oracle} to the
content families in $\mc{M}\jj(t)$.
For $t\in (t_{\min}^{(j)},t_{\max}^{(j)})$,
we apply
a  bisection   procedure, which we denote by 
Algorithm~\ref{alg:dual-bisection}. In this case, we
start with an initial dual interval (to be introduced in~\eqref{eqn:ads-dual-interval-algo})
that contains an optimal dual solution, together with two certificate assortments whose residuals have opposite signs. At each iteration, we evaluate the midpoint $m_n=(L_n+U_n)/2$ by solving the corresponding MNL subproblem using  Algorithm~\ref{alg:mnl-oracle}. 
The sign of the residual of the returned assortment tells us which half of the interval can be discarded: if the residual is negative,
no dual minimizer lies strictly to the left of $m_n$, while a positive residual means the opposite. Hence, the interval shrinks monotonically while still bracketing the optimal dual value, and all assortments visited during the search are kept as candidate actions for the final restricted primal LP.
Since the residuals are uniformly bounded,  the bisection procedure yields an optimality-gap guarantee;
see Lemma~\ref{lem:ads-bisection-guarantee} in the Appendix.
In addition, in line~\ref{line:init-ads-solver} of  Algorithm~\ref{alg:mnl-fixed-jt-ads}, we initialize 
$\mathcal C_0  \gets \{\emptyset, A_t^-,A_t^+\}$ since the empty assortment is also a feasible action, i.e., $d\adj(t, \ynullS) = 0$.

\begin{algorithm}[h]
\renewcommand{\thealgorithm}{\texttt{DualBisection}}
\caption{Dual bisection for candidate assortment generation}
\label{alg:dual-bisection}
\begin{algorithmic}[1]
\STATE \textbf{Input:} $\bm\alpha$, $\bm R(\cdot)$, residual function $d(\cdot)$,
dual interval $\bar{\mc J}$, initial candidate set $\mc C_0$,
cardinality cap $\kappa$, iteration budget $K$

\STATE $l \gets \inf \bar{\mc J}$, $r \gets \sup \bar{\mc J}$,
and $\mc C \gets \mc C_0$

\FOR{$n=0,\ldots,K-1$}
    \STATE $m \gets \frac{l+r}{2}$
    \STATE $(R,A) \gets
    \text{~\ref{alg:mnl-oracle} }
    \Paran{\bm\alpha,\bm R(m),\kappa}$
    \STATE $\mc C \gets \mc C\cup\set{A}$
    \IF{$d(A)=0$}
        \STATE \textbf{break}
    \ELSIF{$d(A)<0$}
        \STATE $l\gets m$
    \ELSE
        \STATE $r\gets m$
    \ENDIF
\ENDFOR

\STATE \textbf{Return:} $\mc C$
\end{algorithmic}
\end{algorithm}

\begin{algorithm}[h]
\renewcommand{\thealgorithm}{\texttt{AdsBoundarySolver}}
\caption{Exact ads boundary
problem~\eqref{eqn:ads-LP-def}
solver at 
$t \in \{t_{\min}^{(j)}, t_{\max}^{(j)}\}$}
\label{alg:mnl-ads-boundary}
\begin{algorithmic}[1]
\STATE \textbf{Input:} $j,t,S,\bm{\beta}$
\STATE $
\mathcal M\jj  \gets \set{\ell \in [L] : u_\ell^{(j)} = 1/t,  \alpha_\ell^{(j)} > 0}$.
\STATE  $
    (\ghat, \hat A)  \gets \text{~\ref{alg:mnl-oracle} }\Paran{(\alpha_\ell^{(j)})_{\ell\in \mc M\jj}, 
\Paran{\bm R\adj_\ell(t,0)}_{\ell\in \mc M\jj}, \kappa} $, $\bm R\adj(t,0)$ defined in~\eqref{eqn:R-C-def-ads}.

\STATE  $\hat{\bm y} \gets \bm e_{\hat A}$,
where $\bm e_{\hat A}$ denotes the unit vector in $\mc{A}_\kappa$ corresponding to $\hat A$.
\STATE \textbf{Return:} $\ghat, \hat{\bm y}$
\end{algorithmic}
\end{algorithm}

\begin{algorithm}[h]
\renewcommand{\thealgorithm}{\texttt{\texttt{AdsSolver}}}
\caption{Fixed-$(j,t)$ ads
problem~\eqref{eqn:ads-LP-def}
solver}
\label{alg:mnl-fixed-jt-ads} 
\begin{algorithmic}[1]
\STATE \textbf{Input:} $j,t,S,\bm{\beta}, \bar{\mc{J}}, K$
\IF{$t \in \{t_{\min}^{(j)}, t_{\max}^{(j)}\}$}
    \STATE 
    $(\ghat, \hat{\bm y}) \gets$~\ref{alg:mnl-ads-boundary}$(j,t,S,\bm{\beta})$
\ELSIF{$t \in (t_{\min}^{(j)}, t_{\max}^{(j)})$}
    \STATE  
    $A^- \gets \{\lplus \}, A^+ \gets \{\lminus\}$,
    where $\lplus,\lminus$
    are defined in~\eqref{eqn:two-prod-pair-def}
\STATE$\phi(\cdot) \gets 
G\adj(t,\cdot)$ and
$d(\cdot) \gets x^{(j)}(\cdot)-t u^{(j)}(\cdot)$ for all $A$.
\STATE Set
$\mathcal C_0 \gets \{\emptyset, A^-,A^+\}$ \label{line:init-ads-solver}
\STATE
$\mathcal C \gets$~\ref{alg:dual-bisection}
$\Paran{
\bm\alpha^{(j)},
\bm R\adj(t,\cdot),
d,
\bar{\mc J},
\mathcal C_0,
\kappa,
K
}$, $\bm R\adj(t,\cdot)$ defined in~\eqref{eqn:R-C-def-ads}
\STATE $(\ghat, \hat{\bm y})
\gets$~\ref{alg:lp-postprocessing}$(\mc{C},\phi,d)$
\label{line:ads-solver-bisection-terminate} 
\ELSIF{$t  = \infty$}
\STATE $(\ghat, \hat{\bm y}) \gets (0,\ynullS)$
\ENDIF

\STATE \textbf{Return:} $\ghat, \hat{\bm y}$
\end{algorithmic}
\end{algorithm}

\subsubsection{The fixed-$(j,t)$ solver for   the subscription problem~\eqref{eqn:sub-LP-def}}

We introduce
the dual variable $\xi$ for the constraint 
$d\subj(p, \bm{z}) = 0$, so that
\begin{align}
\max_{\bm{z} \in \simplex}
\set{G\subj(t, \bm{z}): 
d\subj(p, \bm{z})  = 0
}
= \min_{\xi \in \R}
\max_{B \in \mc{A}_\kappa}
\set{ G\subj(t,B) + \xi  d\subj(p, B)},
\label{eqn:duality-sub}
\end{align}
where $  G\subj(t,B)  :=   G\subj(t,e_B)$.

Expanding the terms yields the decomposition
\begin{align}
\label{eq:subs-dual-split}
G\subj(t, \xi)
:=
\max_{B \in \mc{A}_\kappa}
\set{ G\subj(t,B) + \xi  d\subj(p, B)}
=
c\subj(t,\xi) + H\subj(t, \xi),
\end{align}
where the affine term is
\begin{equation*}
  c\subj(t,\xi) := \lambda\jj p F\jj(\sigma t) - \xi p,
\end{equation*}
and the  $H\subj(t, \xi)$ term is
\begin{align}\label{eq:subs-reduced-H}
H\subj(t, \xi)
=
\max_{B \in \mc{A}_\kappa}
\left\{
\frac{\sum_{\ell \in B} \alpha_\ell^{(j)} R_\ell\subj(t,\xi)}
{1 + \sum_{k \in B} \alpha_k^{(j)}} 
\right\},
\end{align}
with
\begin{align}
    R_\ell\subj(t,\xi)
:=
\xi u_\ell^{(j)} 
-  \eta_\ell  \beta\subjl \I{\ell \in S}
- \lambda\jj F\jj(\sigma t)\gamma_\ell \I{\ell \notin S}.
\label{eqn:R-C-def-sub}
\end{align}
In addition, 
for $p \in (0,\uumaxj)$,
we have
\begin{align}
G\subj(t) = 
\max\set{\min_{\xi \in \R} G\subj(t, \xi),0}.
\label{eqn:sub-duality}
\end{align}

For the subscription problem~\eqref{eqn:sub-LP-def}, we use
Algorithm~\ref{alg:mnl-fixed-jt-sub} to approximately
compute $G\subj(t)$.
For interior values $p\in(0,U_{\max}^{(j)})$, we apply a bisection
procedure similar to that in
Algorithm~\ref{alg:mnl-fixed-jt-ads}. The boundary case
$p=U_{\max}^{(j)}$  can be solved
exactly rather than through the dual search. 
Indeed, since every
deterministic assortment $B$ satisfies $u^{(j)}(B)\le U_{\max}^{(j)}$,
any feasible solution of the boundary LP must be supported on
assortments attaining $U_{\max}^{(j)}$. Unlike the ads boundary case,
however, this does not immediately reduce to a direct restriction of the
content family space: we must still characterize the family of
utility-maximizing assortments. We do so by rewriting the condition
$u^{(j)}(B)=U_{\max}^{(j)}$ as a subset-maximization problem, which lets
us search this family exactly using Algorithm~\ref{alg:mnl-oracle} 
or 
Algorithm~\ref{alg:exact-card-mnl},
depending on the resulting structure. We
defer the details to
 Algorithm~\ref{alg:mnl-sub-boundary} in
Section~\ref{subsec:sub-boundary}, with its
correctness  provided in  Lemma~\ref{lem:sub-boundary-poly-simple}.

  \begin{algorithm}[h]
\renewcommand{\thealgorithm}{\texttt{SubscriptionSolver}}
\caption{Fixed-$(j,t)$  subscription problem~\eqref{eqn:sub-LP-def} solver}
\label{alg:mnl-fixed-jt-sub}
\begin{algorithmic}[1]
\STATE \textbf{Input:} 
$j,t,S,\bm{\beta},\bar{\mc{J}}, K, B^{+},\uumaxS$, where
$t \in I^{(j)} \cup \{\infty\}$ \label{line:t-infy-sub-algo}
 \IF{$p > U_{\max}$ or $p = 0$}
\STATE $(\gtilde, \ytilde) \gets (0,\ynullS)$ 
\ELSIF{$p = U_{\max}$}
\STATE
 $(\gtilde, \ytilde) \gets$~\ref{alg:mnl-sub-boundary}$(j,t,S,\sigma,\bm\beta,U_{\max})$
\ELSE
\STATE Set $B^{-} \gets \emptyset$ and set 
$B^{+}$ as given 
\STATE   
$\phi(\cdot) \gets 
G\subj(t,\cdot)$ and
$d(\cdot) \gets u^{(j)}(\cdot)-p$  
\label{line:bpplus}
\STATE Set
    $\mathcal C_0 \gets \{B^{-},B^{+}\}$ \label{line:sub-C-t-init}
\STATE
$\mc C \gets$~\ref{alg:dual-bisection}
$(\bm\alpha^{(j)},\bm R\subj(t,\cdot),d,\bar{\mc J},
\mc C_0,\kappa,K)$,
  $\bm R\subj(t,\cdot)$   defined in~\eqref{eqn:R-C-def-sub}
\STATE    $(\gtilde, \ytilde)
\gets~\ref{alg:lp-postprocessing}(\mathcal  C,\phi,d)$
\label{line:sub-solver-bisection-terminate} 
\label{line:V-hat-before-max}
\ENDIF 
\STATE   $(\ghat, \yhat) \gets (\gtilde, \ytilde)$ if $\gtilde > 0$,
and   $(\ghat, \yhat) \gets (0,\ynullS)$ otherwise \label{line:V-hat-after-max}
\STATE \textbf{Return:} $(\ghat, \yhat)$
\end{algorithmic}
\end{algorithm}

\subsection{Algorithm~\ref{alg:mnl-solver-all}}
 \label{sec:aggregation}

Before presenting 
details  of
Algorithm~\ref{alg:mnl-solver-all} introduced in Section~\ref{sec:model-sol},  we discuss some necessary ingredients.

\paragraph{Degenerate ratio case.}
If \(t_{\min}^{(j)}=t_{\max}^{(j)}\), then the feasible ratio interval
\(I^{(j)} = \set{t_{\min}^{(j)}}\) in~\eqref{eqn:t-I-def}
is a singleton. In this case, there are no interior ratio
values for the ads-side dual search.
We therefore use the single ratio
\(t_{\min}^{(j)}\), solve the ads branch using
Algorithm~\ref{alg:mnl-ads-boundary} in Algorithm~\ref{alg:mnl-fixed-jt-ads}, set the ads-side ratio-discretization
error for type \(j\) to zero, and skip the definitions of the ads
dual-search interval discussed below.
The following discussion related to the ad
problem assumes
that \(t_{\min}^{(j)} < t_{\max}^{(j)}\).

\paragraph{Ratio-discretization constants.} 

Define the outer ratio grid size by $K_T\ge 2$. 
For each type $j\in[J]$, let
\begin{align}
T\jj_{K_T}= \set{t_{\min}^{(j)}+\frac{r-1}{K_T-1}\bigl(t_{\max}^{(j)}-t_{\min}^{(j)}\bigr):r=1,\dots,K_T}, 
\label{eqn:outer-grid-def}
\end{align} be the uniform grid points.
Also define
\begin{align}  
\Delta_{t,K_T}^{(j)} 
=\frac{t_{\max}^{(j)}-t_{\min}^{(j)}}{2(K_T-1)}. 
\label{eqn:Delta-t-K-def}
\end{align}  
We use $\Delta_{t,K_T}^{(j)} $ to denote the maximum distance from any  $t \in I\jj$ to its nearest grid point.

\paragraph{Inner dual-search constants.}

We now define the initial dual-search intervals
$\bar{\mc{J}}\adj(t)$~\eqref{eqn:ads-dual-interval-algo}
and $\bar{\mc{J}}\subj$~\eqref{eqn:sub-dual-interval-algo},
together with the bisection budgets used by
Algorithms~\ref{alg:mnl-fixed-jt-ads}
and~\ref{alg:mnl-fixed-jt-sub}.
The definitions below all come from the same one-equality LP principle:
if a valid initial dual interval is $\bar{\mc{J}}$, the residual magnitude is bounded
by $D$, and $K$ bisection iterations are run, then the   
optimality gap
is bounded by
\[
    \frac{D}{2}\,|\bar{\mc{J}}|\,2^{-K}.
\]
Thus, the constants below serve two purposes. First, they construct initial
intervals that contain an optimal dual solution. Second, they choose $K$ so
that each inner-search error is of order $(K_T-1)^{-1}$, matching the order
of the outer ratio-grid error.

We begin with branchwise bounds on reward spread and residual magnitude:
\begin{align}
\bar\Phi\adj& :=
\lambda\jj(\rads\sigma+\gamma_{\max})+\eta_{\max},
\qquad
\bar\Phi\subj:=\lambda\jj\gamma_{\max}+\eta_{\max},
\label{eqn:Phi-ads-sub-def} \\
D\adj &:=1+\tmaxj\uumaxj,
\qquad
D\subj:=\uumaxj.
\label{eqn:D-j-a-sdef}
\end{align}
Here $\bar\Phi\adj$ and $\bar\Phi\subj$ upper bound the spread of the ads
and subscription rewards over deterministic assortments, while $D\adj$ and
$D\subj$ upper bound the corresponding equality residuals on the branches
where bisection is used. We also define the branchwise error scales
\begin{equation}
\begin{aligned}
\Gamma\adj
&:=D\adj\bar\Phi\adj
=
\Bigl(\lambda\jj(\rads\sigma+\gamma_{\max})+\eta_{\max}\Bigr)
\Bigl(1+\tmaxj\uumaxj\Bigr), \\
\Gamma\subj
&:=D\subj\bar\Phi\subj \I{p\in(0,\uumaxj)}
=
\Bigl(\lambda\jj\gamma_{\max}+\eta_{\max}\Bigr)
\uumaxj \I{p\in(0,\uumaxj)} .
\end{aligned}
\label{eqn:Gamma-j-a-def}
\end{equation}

For the ads branch, the equality residual
$d\adj(t, \cdot)$ depends on the 
ratio $t$.
The two singleton initial
assortments are
$\{\lplus^{(j)}\}$ and $\{\lminus^{(j)}\}$, which correspond to the
endpoints $\tminj$ and $\tmaxj$. Define
\begin{align}
\Theta_-\adj
&:=
\frac{1+\alpha_{\lplus^{(j)}}^{(j)}}
{\alpha_{\lplus^{(j)}}^{(j)}u_{\lplus^{(j)}}^{(j)}},
\qquad
\Theta_+\adj
:=
\frac{1+\alpha_{\lminus^{(j)}}^{(j)}}
{\alpha_{\lminus^{(j)}}^{(j)}u_{\lminus^{(j)}}^{(j)}}.
\label{eqn:Theta-ads-def}
\end{align}
These constants are inverse residual slopes. Indeed, for any interior
$t\in(\tminj,\tmaxj)$,
\[
-d\adj(t, \{\lplus^{(j)}\})
=
\frac{t-\tminj}{\Theta_-\adj},
\qquad
d\adj(t, \{\lminus^{(j)}\})
=
\frac{\tmaxj-t}{\Theta_+\adj}.
\]
Therefore, the initial ads dual interval can be written compactly as
\begin{align}
\bar{\mc{J}}\adj(t)
:=
\left[
-\bar\Phi\adj\frac{\Theta_-\adj}{t-\tminj},
\;
\bar\Phi\adj\frac{\Theta_+\adj}{\tmaxj-t}
\right],
\qquad
t\in\set{t_2^{(j)},\ldots,t_{K_T-1}^{(j)}}.
\label{eqn:ads-dual-interval-algo}
\end{align}
The interval widens near the endpoints because one of the  
residuals approaches zero. For an interior grid point $t_r^{(j)}$ in~\eqref{eqn:outer-grid-def},
$r=2,\ldots,K_T-1$, its width is
\begin{align}
    \bigl|\bar{\mc{J}}\adj(t_r^{(j)})\bigr|
=
\bar\Phi\adj
\frac{K_T-1}{\tmaxj-\tminj}
\left(
\frac{\Theta_-\adj}{r-1}
+
\frac{\Theta_+\adj}{K_T-r}
\right).
\label{eqn:ads-dual-interval-width}
\end{align}

At the two ratio endpoints, the ads branch is solved by the boundary routine,
so no ads bisection is needed:
\[
K\adj(t\jj_1) = 
K\adj(t\jj_{K_T}):=0.
\]
For each interior index $r=2,\ldots,K_T-1$, we choose $K\adj(t\jj_r)$ so that
\[
\frac{D\adj}{2}
\bigl|\bar{\mc{J}}\adj(t_r^{(j)})\bigr|2^{-K\adj(t_r^{(j)})}
\le
\frac{\Gamma\adj}{K_T-1}.
\]
Using the displayed width of $\bar{\mc{J}}\adj(t_r^{(j)})$, this gives the explicit
budget
\begin{align}
K\adj(t_r^{(j)})
:=
\max\left\{
0,
\left\lceil
\log_2\!\left(
\frac{(K_T-1)^2}{2(\tmaxj-\tminj)}
\left(
\frac{\Theta_-\adj}{r-1}
+
\frac{\Theta_+\adj}{K_T-r}
\right)
\right)
\right\rceil
\right\}.
\label{eqn:inner-grid-size-ads}
\end{align}

For the subscription branch, the equality residual is
$d\subj(p, B)=u^{(j)}(B)-p$, which does not depend on $t$.
Thus, one dual interval suffices for all 
$t$ values. When $p\in(0,\uumaxj)$, the  residuals are
$-p$ from the empty assortment and $\uumaxj-p$ from a utility-maximizing
assortment. Define
\begin{align}
\Theta\subj
:=
\begin{cases}
\displaystyle
\frac{1}{p}+\frac{1}{\uumaxj-p},
& p\in(0,\uumaxj), \\[1.5ex]
0,
& p\notin(0,\uumaxj).
\end{cases}
\label{eqn:Theta-sub-def}
\end{align}
The subscription dual interval is
\begin{align}
\bar{\mc{J}}\subj
:=
\begin{cases}
\displaystyle
\left[
-\frac{\bar\Phi\subj}{p},
\frac{\bar\Phi\subj}{\uumaxj-p}
\right],
& p\in(0,\uumaxj), \\[2.0ex]
[0,0],
& p\notin(0,\uumaxj),
\end{cases}
\label{eqn:sub-dual-interval-algo}
\end{align}
and, therefore,
\begin{align}
\bigl|\bar{\mc{J}}\subj\bigr|
=
\bar\Phi\subj\Theta\subj. 
\label{eqn:sub-dual-interval-width}
\end{align}
When $p\in(0,\uumaxj)$, we choose $K\subj$ so that
\[
\frac{D\subj}{2}
\bigl|\bar{\mc{J}}\subj\bigr|2^{-K\subj}
\le
\frac{\Gamma\subj}{K_T-1}.
\]
Equivalently,
\begin{align}
K\subj
:=
\begin{cases}
\displaystyle
\max\left\{
0,
\left\lceil
\log_2\!\left(
\frac{(K_T-1)\Theta\subj}{2}
\right)
\right\rceil
\right\},
& p\in(0,\uumaxj), \\[2.0ex]
0,
& p\notin(0,\uumaxj).
\end{cases}
\label{eqn:inner-grid-size-sub}
\end{align}
With these choices, the ads and subscription inner-search errors are bounded
by $\Gamma\adj/(K_T-1)$ and $\Gamma\subj/(K_T-1)$, respectively. Therefore,
the inner dual-search error is balanced with the outer ratio-discretization
error at the same $O((K_T-1)^{-1})$ scale.

We now present
Algorithm~\ref{alg:mnl-solver-all}, which
  approximately solves $G\jj(S,\bm{\beta})$ in~\eqref{eqn:G-can-be-solved-in-ratio} for all $j$.
For each type $j$, it evaluates the fixed-$(j,t)$ solver on the ratio grid~\eqref{eqn:outer-grid-def},
selects the best ratio, and then aggregates over user types. 
To streamline the presentation of Algorithm~\ref{alg:mnl-solver-all}, 
we first collect the construction of the ratio grid, dual-search intervals, and bisection budgets into a separate setup subroutine
in Algorithm~\ref{alg:generate-grid-budgets}.
Here, we recall 
the discussion of the degenerate case \(t_{\min}^{(j)}=t_{\max}^{(j)}\)
at the beginning of 
Section~\ref{sec:aggregation}.

\begin{algorithm}[H]
\renewcommand{\thealgorithm}{\texttt{GenerateGridAndBudgets}}
\caption{Grid and dual-search setup}
\label{alg:generate-grid-budgets}
\begin{algorithmic}[1]
\STATE \textbf{Input:} $j,K_T,\uumaxS$
\IF{$t_{\min}^{(j)}=t_{\max}^{(j)}$}
    \STATE $\mc T^{(j)} \gets \set{t_{\min}^{(j)}}$
    \STATE $\bar{\mc J}\adS(t_{\min}^{(j)}) \gets [0,0]$
    \STATE $\bar{\mc J}\adS(\infty) \gets [0,0]$
    \STATE $K\adS(t_{\min}^{(j)}) \gets 0$
    \STATE Set $\bar{\mc J}\subS$ by~\eqref{eqn:sub-dual-interval-algo}
    and $K\subS$ by~\eqref{eqn:inner-grid-size-sub}.
    \STATE \textbf{Return:}
    $\mc T^{(j)},
    \bar{\mc J}\adS(\cdot),
    \bar{\mc J}\subS,
    K\adS(\cdot),
    K\subS$
\ENDIF
\STATE Set $\set{t_r}_{r=1}^{K_T}$ by~\eqref{eqn:outer-grid-def}.
\STATE Set
$\bar{\mc J}\adS(t_1),
\bar{\mc J}\adS(t_{K_T}),
\bar{\mc J}\adS(\infty)
\gets [0,0]$,
and set $\bar{\mc J}\adS(t_r)$
for $r=2,\ldots,K_T-1$
by~\eqref{eqn:ads-dual-interval-algo};
set $\bar{\mc J}\subS$
by~\eqref{eqn:sub-dual-interval-algo}.

\STATE Set $K\adS(t_r)$ for $r=2,\ldots,K_T-1$
by~\eqref{eqn:inner-grid-size-ads},
with $K\adS(t_1)=K\adS(t_{K_T})=0$;
set $K\subS$ by~\eqref{eqn:inner-grid-size-sub}.

\STATE \textbf{Return:}
$\set{t_r}_{r=1}^{K_T},
\bar{\mc J}\adS(\cdot),
\bar{\mc J}\subS,
K\adS(\cdot),
K\subS$
\end{algorithmic}
\end{algorithm}

\begin{algorithm}[H]
\renewcommand{\thealgorithm}{\texttt{OverallSolver}}
\caption{ }
\label{alg:mnl-solver-all}
\begin{algorithmic}[1]
\STATE \textbf{Input:} $S,\bm{\beta}$, outer-grid size $K_T \ge 2$

\FOR{each $j \in [J]$} 
\STATE   $\uumaxS, B^{+}
\gets$~\ref{alg:mnl-oracle}   $(\bm\alpha\jj, \bm{u}\jj, \kappa)$ 
\label{line:one-time-U-max-compute}
\STATE
$\left(
\mc{T}\jj,
\bar{\mc J}\adS(\cdot),
\bar{\mc J}\subS,
K\adS(\cdot),
K\subS
\right)
\gets~\ref{alg:generate-grid-budgets}
(j,K_T,\uumaxS)$
\STATE $(\ghat_\infty,\bm y_\infty\subS) \gets$~\ref{alg:mnl-fixed-jt-sub}$(j,\infty,S,\bm{\beta},\bar{\mc{J}}\subS, K\subS,B^{+},\uumaxS)$
\label{line:G-off-computation}
\FOR{$t \in \mc{T}\jj$}
   
\STATE $(\ghat_{t}\adS, \bm y_t\adS) \gets$~\ref{alg:mnl-fixed-jt-ads}$(j,t,S,\bm{\beta},\bar{\mc{J}}\adS(t),K\adS(t))$

\STATE $(\ghat_{t}\subS, \bm y_t\subS) \gets$~\ref{alg:mnl-fixed-jt-sub}$(j,t,S,\bm{\beta},\bar{\mc{J}}\subS, K\subS,B^{+},\uumaxS)$

\STATE $\ghat_t \gets \ghat_{t}\adS + \ghat_{t}\subS$
\ENDFOR

\STATE Let $t\opt \in \argmax_{t \in \mc{T}\jj \cup \set{\infty}} \ghat_t$
\IF{$t\opt = \infty$}
\STATE
$ (\yadopt,\ysubopt)
\gets
(\bm y_\emptyset,\bm y_\infty\subS)$ \COMMENT{$t=\infty$ is the exact ad-inactive branch}
\ELSE
\STATE
$ (\yadopt,\ysubopt)
\gets
(\bm y_{t\opt}\adS,\bm y_{t\opt}\subS)$
\ENDIF
\ENDFOR
\STATE
$\yhat
\gets
\{(\bm y_\star\adj, \bm y_\star\subj)\}_{j\in[J]}, \rhat
\gets
R(S, \yhat), \ghat
\gets
G(S,\bm{\beta} ,\yhat)$
\STATE \textbf{Return:}
$(\rhat, \ghat, \yhat)$
\end{algorithmic}
\end{algorithm}

\section{Proof of Theorem~\ref{thm:our-prob-np-hard}}
\label{sec:proof-hardneess}

In this section, we present the proof of
the hardness result, Theorem~\ref{thm:our-prob-np-hard}.
We use 
a reduction from the \emph{prize-collecting fractional set cover} problem~\eqref{eqn:prize-collection-fractional-set-cover}, a variant of set cover introduced by~\citet{DinitzGu13}.
The reduction proceeds in two steps: we first show that any instance of~\eqref{eqn:prize-collection-fractional-set-cover} can be embedded as a special case of our problem (Lemma~\ref{lemma-our-prob-harder-than-fractional-cover}), and then establish that the decision version of~\eqref{eqn:prize-collection-fractional-set-cover} is itself NP-hard (Lemma~\ref{lemma:pcfsc-nphard}), even under uniform penalties and uniform set costs.
To the best of our  knowledge, the NP-hardness of the prize-collecting fractional set cover problem has not been previously established in the literature; 
therefore,  Lemma~\ref{lemma:pcfsc-nphard}
is of independent interest.

The \emph{prize-collecting fractional set cover} problem with uniform penalty~\citep{DinitzGu13} is defined as follows.
Given a collection of sets $\mc{S}$ over a universe of elements $U$, a nonnegative cost function
$c : \mc{S} \to \mathbb{R}_{+}$, and penalties $p(e)$ for $e \in U$, each element must either be covered or incur its
penalty, and the goal is to minimize the total cost. We are allowed to cover an element
fractionally, i.e., by fractionally buying sets
that  cover the
element; however,
the decision of whether to cover an element or pay its penalty is integral.
Assuming $p(e) \equiv 1$, 
this is formalized as follows:

\begin{align}
\optpcfsc
:=
\begin{split}
\min_{\bm{x}, \bm{z}} \quad & \sum_{S \in \mathcal{S}} c(S)x_S + \sum_{e \in U}  z_e, \\
\text{s.t.} \quad & \sum_{S \ni e} x_S + z_e \geq 1 ~~ \forall e \in U, \\
& x_S \geq 0 ~~ \forall S \in \mathcal{S}, \\
& z_e \in \{0,1\} ~~ \forall e \in U.
\end{split}
\label{eqn:prize-collection-fractional-set-cover}
\end{align}

In Lemma~\ref{lemma-our-prob-harder-than-fractional-cover}  in Section~\ref{sec:proof-our-problem-harder-than-fractional},
we  show our problem~\eqref{eqn:problem-buy-or-rent}
 generalizes problem~\eqref{eqn:prize-collection-fractional-set-cover}.
 Next, in Lemma \ref{lemma:pcfsc-nphard}, 
we show the decision version of \eqref{eqn:prize-collection-fractional-set-cover} 
is NP-hard, with the proof in Section \ref{subsec:pcfsc-nphard-proof}.
We present the proof of Theorem~\ref{thm:our-prob-np-hard} in
Section~\ref{sec:proof-np-hard-using-lemma}.

\subsection{Lemma~\ref{lemma-our-prob-harder-than-fractional-cover} and its proof}
\label{sec:proof-our-problem-harder-than-fractional}

\paragraph{Preprocessing.}
First, we   assume without loss of 
generality that there is one element $\Tilde{e}_1$
and   a singleton set $\Tilde{S}_1 = \set{\Tilde{e}_1}$
with $c(\Tilde{S}_1) = c(\set{\Tilde{e}_1}) = 0$.
Indeed, if such an element-set pair does not exist, add a new element and a new
zero-cost singleton set, and relabel them as \(\Tilde{e}_1\) and \(\Tilde{S}_1\).
This
does not change \(\optpcfsc\), because
\(\Tilde{e}_1\) can be covered at zero cost.
Delete every element $e\in U$ with
$e \not\in \cup_{S \in \mathcal S} S$.
Any such element is necessarily assigned $z_e=1$; thus deleting $k$ such
elements decreases both $|U|$ and $\optpcfsc$ by exactly $k$ (and, for the
decision version, decreases the bound by $k$), thereby preserving both the
decision problem and the quantity $|U|-\optpcfsc$.
After this preprocessing, every element
$e\in U$ belongs to at least one set $S \in \mathcal S$.
Throughout Section~\ref{sec:proof-our-problem-harder-than-fractional}, we consider the PCFSC 
instance after 
preprocessing.

\begin{lemma} 
\label{lemma-our-prob-harder-than-fractional-cover} 
From any   instance of~\eqref{eqn:prize-collection-fractional-set-cover},
one can construct, in polynomial time,
an instance of the
problem~\eqref{eqn:problem-buy-or-rent}
with $\kappa=1$ that  
satisfies Assumption~\ref{assump:MNL}
and such that
\[
\optp = |U| - \optpcfsc.
\] 
\end{lemma}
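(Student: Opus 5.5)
The plan is a direct embedding. Elements become user types, sets become content families, and the fractional cover variables are encoded by the maximum subscription flows into purchased families. Under the assortment cap $\kappa=1$ the platform only offers singletons or $\emptyset$, so a policy for type $j$ in mode $m$ is a vector $(y^{(m,j)}_{\{\ell\}})_\ell$ with $\sum_\ell y^{(m,j)}_{\{\ell\}}\le 1$.

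\textbf{Construction.} Take the preprocessed instance and set $M:=|\mathcal S|$. Create one type $j_e$ for each $e\in U$ with $\lambda^{(j_e)}=1$, and one family $\ell_S$ for each $S\in\mathcal S$. Set $\alpha^{(j_e)}_{\ell_S}=1$ if $e\in S$ and $0$ otherwise; preprocessing guarantees $\max_\ell\alpha^{(j)}_\ell>0$ for every type. Take $p=1$ and $u^{(j_e)}_{\ell_S}=2M$ when $e\in S$, so a singleton gives utility $u^{(j)}(\{\ell\})=M$. Take the buy cost $\eta_{\ell_S}=2M\,c(S)$ and the rent cost $\gamma_{\ell_S}=\eta_{\ell_S}$. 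Finally take $r_{\mathrm{ads}}=0$, $\sigma=1$, and $\chi^{(j)}\sim\mathrm{Unif}[1,2]$ (or any bounded density). All data are rational and polynomial in size, and Assumption~\ref{assump:MNL}(a),(c) hold; part (b) is checked at the end.

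\textbf{Reduction of the objective.} Ad revenue is zero, and an active ad branch can only divert users away from subscription. Hence setting $\bm y^{(a,j)}=\bm y_\emptyset$ is weakly optimal: it gives $u^{(a,j)}=0$, $h^{(j)}=+\infty$, $P^{(a,j)}=0$, and $P^{(s,j)}=\I{u^{(s,j)}\ge 1}$. I will state this dominance carefully; removing ad assortments also lowers every flow and hence every cost. Write $y_S:=y^{(s,j_e)}_{\{\ell_S\}}$ for a fixed type. Then $u^{(s,j_e)}=M\sum_{S\ni e}y_S$ and $f^{(s,j_e)}_{\ell_S}=y_S/2$. Define $x_S:=M\max_e y^{(s,j_e)}_{\{\ell_S\}}$. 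The buying cost of family $\ell_S$ is then $\eta_{\ell_S}\max_{j,m}f=c(S)x_S$. Renting is never strictly cheaper: the rent term is $\gamma_{\ell_S}\sum_j P^{(s,j)}f\ge \eta_{\ell_S}\max_j f$ restricted to subscribing types, and flows to non-subscribing types can be set to $0$ without loss. So the profit equals
\[
\sum_{e}\I{\textstyle M\sum_{S\ni e} y^{(s,j_e)}_{\{\ell_S\}}\ge 1}-\sum_S c(S)x_S .
\]

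\textbf{Two directions.} For ``$\le$'', set $z_e=0$ exactly for the subscribing types. Coverage $\sum_{S\ni e}x_S\ge M\sum_{S\ni e}y^{(s,j_e)}_{\{\ell_S\}}\ge 1$ holds, so $(x,z)$ is PCFSC-feasible and $R(\bm y)\le |U|-\optpcfsc$. For ``$\ge$'', start from an optimal $(x,z)$. Without loss of generality $x_S\le 1$, since truncating preserves coverage and lowers cost. For each $e$ with $z_e=0$, set $y^{(s,j_e)}_{\{\ell_S\}}=x_S/M$ for $S\ni e$ and put the remaining mass on $\emptyset$; this is valid because $\sum_{S\ni e}x_S/M\le 1$. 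Every other type gets $\bm y_\emptyset$ in both modes. This policy attains profit $|U|-\optpcfsc$; the tie-breaking rule (subscription wins at $u^{(s,j)}=p$) is needed here. Assumption~\ref{assump:MNL}(b) follows from the zero-cost pair $(\tilde e_1,\tilde S_1)$, which yields profit at least $1>0$. The main obstacle is making rigorous the claim that the ad branch and flows to non-subscribing types can be zeroed without loss, and that the $\min$ in $C(\bm y)$ always selects the buy term, including the degenerate case $c(S)=0$.
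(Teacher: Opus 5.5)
The overall architecture matches the paper's: elements become types, sets become families, $\kappa=1$, $p=1$, zero ad revenue, $\eta=\gamma\propto c$, reduction to a subscription-only form, then two directions. Your $M$-scaling with $y^{(s,j_e)}_{\{\ell_S\}}=x_S/M$ is a valid substitute for the paper's normalization $x_S/C_e$.

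\textbf{The gap is the ad-side dominance step, and it breaks your ``$\le$'' direction.} With your parameters the ad mode is not inert. Every covering singleton has $x/u=1/(2M)$, so an active ad branch with $(u^{(s,j)}-p)_+=0$ gives $h^{(j)}=1/(2M)<1$, hence $P^{(a,j)}=1$. Worse, your scaling lets $u^{(s,j)}$ exceed $p$. The cutoff $h^{(j)}=\sigma x^{(a,j)}/\bigl(u^{(a,j)}-(u^{(s,j)}-p)_+\bigr)$ can then be tuned into $(1,2)$, producing a fractional $P^{(s,j)}\in(0,1)$. Emptying the ad side raises $P^{(s,j)}$ to $1$, and with it the rental term $\gamma_\ell P^{(s,j)}f^{(s,j)}_\ell$. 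So ``removing ad assortments lowers every flow and hence every cost'' is false: rental cost depends on $P\cdot f$, not on $f$ alone.

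Here is a concrete counterexample. Take one element $e$ and one set $S=\{e\}$ with $c(S)=10$, plus the preprocessing pair, so $M=2$ and $\eta_{\ell_S}=\gamma_{\ell_S}=40$.
\begin{itemize}
\item Offer $\{\ell_S\}$ with probability $1$ in subscription, so $u^{(s,e)}=2$ and $f^{(s,e)}_{\ell_S}=1/2$.
\item Offer $\{\ell_S\}$ with probability $0.6$ in ads, so $x^{(a,e)}=0.3$ and $u^{(a,e)}=1.2$.
\item Then $h=0.3/0.2=1.5$ and $P^{(s,e)}=P^{(a,e)}=1/2$. Revenue from $e$ is $1/2$, and the cost of $\ell_S$ is $\min\{20,\,40(0.15+0.25)\}=16$.
\item After emptying the ad side, revenue is $1$ but the cost is $\min\{20,20\}=20$, a strict loss of $3.5$.
\end{itemize}
So weak optimality of $\bm y^{(a,j)}=\bm y_\emptyset$ fails for a fixed subscription policy.

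\textbf{Two ways to repair it.}

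\emph{The paper's route.} Choose $(\sigma,F^{(j)})$ so that the ad mode is never selected. Since $(u^{(s,j)}-p)_+\ge 0$ only shrinks the denominator, $h^{(j)}\ge\sigma x^{(a,j)}/u^{(a,j)}$, which equals $\sigma/(2M)$ here whenever $u^{(a,j)}>0$. Taking, e.g., $\sigma=6M$ with $\chi^{(j)}\sim\mathrm{Unif}[1,2]$ forces $P^{(a,j)}\equiv 0$ and $P^{(s,j)}=\mathbf{1}\{u^{(s,j)}\ge p\}$. (The paper uses $u\le 2x$, $\sigma=3$, and $F$ supported on $[1/2,1]$.) Emptying the ad side then changes no probability and only lowers the buy term.

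\emph{Keep your parameters and argue type by type.} If $u^{(s,j)}<p$, empty both sides of type $j$. Otherwise, hold flows fixed and view profit as a function of $q=P^{(s,j)}=1-P^{(a,j)}$. It is convex in $q$: revenue is linear, and each $-\min\{B_\ell,R_\ell(q)\}$ is convex because $R_\ell$ is affine in $q$ and $B_\ell$ does not depend on $q$. So the better of two endpoints weakly dominates:
\begin{itemize}
\item $q=1$, realized by emptying the ad side, which also weakly lowers $B_\ell$;
\item $q=0$, realized by emptying both sides of type $j$.
\end{itemize}
Other types' probabilities are unaffected, so you can process types sequentially.

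Two minor points. Define $x_S:=M\max_{e\in S}y^{(s,j_e)}_{\{\ell_S\}}$, or first move mass on non-covering singletons to $\emptyset$, because $f^{(s,j_e)}_{\ell_S}=\tfrac12 y_S$ only when $e\in S$. The case $c(S)=0$ is harmless, since both procurement terms then vanish.
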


\paragraph{Construction of the platform instance in Figure~\ref{fig:np-hard-instance-construction}.}
Create one   user type for each $e\in U$. %
Create one content family  
$\ell_S$ for each set $S\in\mathcal S$.
Set
\[
\kappa = 1,
\qquad
p = 1, \qquad
\sigma=3, 
\qquad
\radse =  0,
\qquad
\lambda^{(e)} = 1 \ \forall e\in U.
\]
Here $p$ is the   subscription price, not the PCFSC penalty function $p(e)$.
For each user type $e\in U$ and each
$S\in\mathcal S$, set
\[
\alpha_{\ell_S}^{(e)} = \I{e\in S},
\qquad
u_{\ell_S}^{(e)} = 2\,\I{e\in S},
\qquad
\eta_{\ell_S} = 2c(S),
\qquad
\gamma_{\ell_S} = 2c(S).
\]

For every other pair $(j,\ell)$ not specified above, set
\[
\alpha_\ell^{(j)} = u_\ell^{(j)} = 0.
\]
Finally, for every user type $j\in U$, define the ad-tolerance cdf
\[
F\jj(\chi)
=
\begin{cases}
0, & \chi \le \frac{1}{2},\\[2mm]
2\chi - 1, & \chi \in \left[\frac{1}{2},1\right],\\[2mm]
1, & \chi \ge 1.
\end{cases}
\]
Because $\kappa=1$, every nonempty assortment is a singleton.

For later use, for each user type $e\in U$ and each $S\in\mathcal S$, write
\begin{align}
    y_S^{(e)} := y_{\{\ell_S\}}^{(s,e)},
\qquad
y_{\emptyset}^{(e)} := y_{\emptyset}^{(s,e)}. \label{eqn:y-S-def}
\end{align}
Let $\Pi(\bm{y})$ denote the objective value of the constructed platform instance under assortment distribution $\bm{y}$.
We provide a visualization of the construction in 
Figure~\ref{fig:np-hard-instance-construction}.

\usetikzlibrary{arrows.meta,positioning,calc,fit,backgrounds}

\begin{figure}[t]
\centering
\resizebox{\textwidth}{!}{%
\begin{tikzpicture}[
    >=Latex,
    font=\small,
    line join=round,
    line cap=round,
    every node/.style={inner sep=2pt},
    panel/.style={draw, rounded corners=3pt, fill=gray!4, inner sep=8pt},
    title/.style={font=\bfseries\large},
    note/.style={font=\scriptsize, align=left},
    elem/.style={circle, draw=blue!70!black, fill=blue!8, minimum size=8mm},
    set/.style={rectangle, rounded corners=2pt, draw=blue!70!black, fill=blue!8,
        minimum width=11mm, minimum height=7mm},
    user/.style={circle, draw=teal!70!black, fill=teal!10, minimum size=8mm},
    item/.style={rectangle, rounded corners=2pt, draw=teal!70!black, fill=teal!10,
        minimum width=13mm, minimum height=7mm},
    hl/.style={draw=red!80!black, very thick},
    incidence/.style={draw=black!65, line width=0.45pt},
    lab/.style={font=\scriptsize, fill=white, inner sep=1pt, text=black},
    redlab/.style={font=\scriptsize, fill=white, inner sep=1pt, text=red!75!black},
    map/.style={-{Latex[length=2.2mm]}, thick, blue!75!black}
]

\begin{scope}[shift={(0,0)}]
  \node[title, anchor=west] at (0.10,4.85)
    {A. PCFSC instance~\eqref{eqn:prize-collection-fractional-set-cover}};

  \node[elem] (e1) at (0.95,3.85) {${e}_1$};
  \node[elem] (e2) at (0.95,2.80) {$e_2$};
  \node[note] at (0.95,1.95) {$\vdots$};
  \node[elem,hl] (ee) at (0.95,1.00) {$e$};
  \node[note] at (0.95,0.15) {$\vdots$};

  \node[set] (S1) at (4.05,4.05) {${S}_1$};
  \node[set] (S2) at (4.05,2.80) {$S_2$};
  \node[note] at (4.05,1.95) {$\vdots$};
  \node[set,hl] (SS) at (4.05,1.00) {$S$};
  \node[note] at (4.05,0.15) {$\vdots$};

  \draw[incidence] (e1) -- (S1);
  \draw[incidence] (e1) -- (S2);
  \draw[incidence] (e2) -- (S2);
  \draw[incidence] (ee) -- (S1);
  \draw[incidence, very thick, red!75!black] (ee) -- (SS)
      node[midway, above, lab] {$e\in S$};

  \node[note, anchor=east] at ($(e1.west)+(-0.10,0)$) {$1$};
  \node[note, anchor=east] at ($(e2.west)+(-0.10,0)$) {$1$};
  \node[note, anchor=east, text=red!75!black] at ($(ee.west)+(-0.10,0)$) {$1$};

  \node[note, anchor=west] at ($(S1.east)+(0.12,0)$) {$c({S}_1)$};
  \node[note, anchor=west] at ($(S2.east)+(0.12,0)$) {$c(S_2)$};
  \node[note, anchor=west, text=red!75!black] at ($(SS.east)+(0.12,0)$) {$c(S)$};

  \node[note, anchor=west] at (0.20,-0.43)
    {penalty $=1$ on each element; edge $\Leftrightarrow$ set membership};

  \begin{scope}[on background layer]
    \node[panel, fit={(-0.25,5.15) (5.55,-0.78)}] {};
  \end{scope}
\end{scope}

\draw[map] (5.80,2.25) -- 
  node[midway, above=2pt, note, align=center, fill=white, inner sep=1pt,
        text=blue!75!black]
    {$e\mapsto$ type $e$\\[-1pt]$S\mapsto$ content family $\ell_S$}
  (7.35,2.25);

\begin{scope}[shift={(7.75,0)}]
  \node[title, anchor=west] at (0.10,4.85)
    {B. Constructed platform instance~\eqref{eqn:problem-buy-or-rent}};

  \node[note, anchor=west] at (0.25,4.42)
    {$\kappa=1, \quad p=1, \quad \sigma=3, \quad \radse =  0,  \quad \lambda^{(e)}=1$};

  \node[user] (u1) at (0.90,3.75) {${e}_1$};
  \node[user] (u2) at (2.10,3.75) {$e_2$};
  \node[note] at (3.10,3.75) {$\cdots$};
  \node[user,hl] (ue) at (4.70,3.75) {$e$};
  \node[note] at (5.85,3.75) {$\cdots$};

  \node[item] (l1) at (0.90,2.00) {$\ell_{{S}_1}$};
  \node[item] (l2) at (2.10,2.00) {$\ell_{S_2}$};
  \node[note] at (3.10,2.00) {$\vdots$};
  \node[item,hl] (lS) at (4.70,2.00) {$\ell_S$};
  \node[note] at (5.85,2.00) {$\vdots$};

  \draw[incidence] (u1) -- (l1);
  \draw[incidence] (u1) -- (l2);
  \draw[incidence] (u2) -- (l2);

  \draw[incidence, very thick, red!75!black] (ue) -- (lS)
      node[pos=.53, right=4pt, redlab, align=left]
      {$\alpha^{(e)}_{\ell_S}=1$\\[-1pt]$u^{(e)}_{\ell_S}=2$};

  \node[note, anchor=west] at (0.35,1.15)
    {edge $e$--$\ell_S$ is present iff $e\in S$};

  \node[note, anchor=west] at (0.35,0.73)
    {for every content family $\ell_S$: $\eta_{\ell_S}=\gamma_{\ell_S}=2c(S)$};

  \node[note, anchor=west] at (0.35,0.31)
    {all other user–-content-family pairs have $\alpha^{(j)}_{\ell}=u^{(j)}_{\ell}=0$};

\begin{scope}[on background layer]
  \node[panel, fit={(-0.25,5.15) (8.15,-0.38)}] {};
\end{scope}
\end{scope}

\end{tikzpicture}%
}
\caption{Instance construction}
\label{fig:np-hard-instance-construction}
\end{figure}

\begin{proof}[Proof outline of Lemma~\ref{lemma-our-prob-harder-than-fractional-cover}]
The proof proceeds as follows.
\begin{enumerate}
    \item Lemma~\ref{lemma:pcfsc-reduction-assumption1} establishes that the constructed instance
    satisfies Assumption~\ref{assump:MNL}.

    \item Lemma~\ref{lemma:pcfsc-reduction-full-equals-sub} shows that the ad side is inactive in
    the constructed instance. Thus, the full problem has the same optimal value as its
    subscription-only restriction.

    \item Lemma~\ref{lemma:pcfsc-reduction-reduced-form} then simplifies the subscription-only
    problem: in an optimal assortment distribution, 
    $P^{(s,e)} = 0$ implies that type $e$ is routed to the empty assortment and
    each user type $e$ only needs to be offered singleton set-items $\{\ell_S\}$
    with $e\in S$.

    \item Lemma~\ref{lemma:pcfsc-reduction-identities-self-contained} computes the induced utility
    and flow under this reduced form. In particular, type $e$ subscribes if and only if
    $\sum_{S\ni e} y_S^{(e)} = 1$, and the buy cost of content family $\ell_S$ becomes
    $c(S)\max_{e\in S} y_S^{(e)}$.

    \item Lemma~\ref{lemma:pcfsc-reduction-objective-self-contained} shows that every used
    set-item is weakly cheaper to buy than to rent. Therefore,   the platform objective can be
    rewritten as
    \[
    \Pi(\bm{y})=|U|-\mathsf{cost}(\bm{y}),
    \]
    where
    \[
    \mathsf{cost}(\bm{y})
    :=
    \sum_{S\in\mathcal S} c(S)\max_{e\in S} y_S^{(e)}
    +
    \sum_{e\in U} \I{u^{(s,e)}<1}.
    \]

    \item Lemma~\ref{lemma:pcfsc-reduction-platform-to-pcfsc-self-contained} maps any reduced-form
    platform assortment distribution $\bm{y}$ to a feasible PCFSC solution by setting
    \[
    x_S := \max_{e\in S} y_S^{(e)},
    \qquad
    \bar z_e := \I{u^{(s,e)}<1},
    \]
    and shows that the PCFSC objective is exactly $\mathsf{cost}(\bm{y})$.

    \item Lemma~\ref{lemma:pcfsc-reduction-pcfsc-to-platform-self-contained} proves the converse:
    from any feasible PCFSC solution $(\bm{x}, \bm{z})$ one can construct a reduced-form subscription assortment distribution
    whose platform value is at least
    \[
    |U| - \left(\sum_{S\in\mathcal S} c(S)x_S + \sum_{e\in U} z_e\right).
    \]

    \item Finally, Lemma~\ref{lemma:pcfsc-reduction-value-identity-self-contained} combines the
    two directions to obtain the exact value identity above, which yields the desired
    polynomial-time reduction from PCFSC to the decision version of problem~\eqref{eqn:problem-buy-or-rent}.
\end{enumerate}
\end{proof}

\begin{lemma}\label{lemma:pcfsc-reduction-assumption1}
The constructed instance satisfies Assumption~\ref{assump:MNL}.
\end{lemma}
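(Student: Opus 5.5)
The plan is to check the three parts of Assumption~\ref{assump:MNL} one at a time for the constructed instance. Each check uses only the construction and the preprocessing of the PCFSC instance.

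\emph{Part (\ref{assm:alpha-implies-u-support}).} Fix a user type $e\in U$. After preprocessing, $e$ belongs to at least one set $S\in\mathcal S$. Hence $\alpha^{(e)}_{\ell_S}=\I{e\in S}=1>0$, and so $\max_\ell \alpha^{(e)}_\ell>0$. For the support condition~\eqref{eqn:alpha-implies-u-support}, note that $\alpha^{(e)}_{\ell_S}>0$ holds exactly when $e\in S$. In that case $u^{(e)}_{\ell_S}=2>0$. All other pairs have $\alpha=0$, so the implication holds trivially for them.

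\emph{Part (\ref{item:assump-Lip}).} The cdf $F^{(e)}$ is piecewise linear and continuous, with matching values $0$ at $1/2$ and $1$ at $1$. It is therefore absolutely continuous, with density $2\,\I{\chi\in[1/2,1]}$. So we may take $W_{F^{(e)}}=2$. The ad-tolerance variable is supported on $[1/2,1]$, which also guarantees $\chi^{(e)}>0$, as the model requires.

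\emph{Part (\ref{item:opt-profit-positive}).} This is the only step that needs a construction. We must show $R\opt>0$ by exhibiting one feasible $\bm{y}$ with positive profit. We use the preprocessed element $\tilde e_1$ and its set $\tilde S_1=\{\tilde e_1\}$, which has $c(\tilde S_1)=0$, so $\eta_{\ell_{\tilde S_1}}=\gamma_{\ell_{\tilde S_1}}=0$.
\begin{itemize}
\item Offer type $\tilde e_1$ the singleton $\{\ell_{\tilde S_1}\}$ with probability one in the subscription mode.
\item Offer every other mode and type the empty assortment.
\end{itemize}
Under this policy, $u^{(s,\tilde e_1)}=\alpha u/(1+\alpha)=2\cdot 1/2=1=p$, so the subscription branch is active by the tie-breaking convention.

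The ad side is empty for $\tilde e_1$, so $u^{(a,\tilde e_1)}=0$. Then $u^{(a,\tilde e_1)}-(u^{(s,\tilde e_1)}-p)_+=0$, which gives $h=+\infty$. Hence $P^{(a,\tilde e_1)}=0$ and $P^{(s,\tilde e_1)}=F(+\infty)=1$. Revenue from type $\tilde e_1$ is therefore $p\cdot 1=1$. The only family carrying positive flow is $\ell_{\tilde S_1}$, and it has zero buy and rent cost, so the total cost $C(\bm{y})$ is $0$. All other types have zero flow and contribute zero revenue. Thus the profit equals $1$, and $R\opt\ge 1>0$.

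The main obstacle is this last step. It requires getting the mode-choice conventions right: the value of $h$ when the ad utility is zero, and the rule that subscription wins the tie at $u^{(s)}=p$. Both are stated in the model, so the argument goes through.
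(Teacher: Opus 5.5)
Your proposal is correct and follows essentially the same route as the paper. You read part (a) off the construction, take $W_{F^{(e)}}=2$ from the piecewise-linear cdf, and get $R^\star\ge 1$ from the subscription-only policy on the zero-cost singleton $\tilde S_1=\{\tilde e_1\}$. You are in fact slightly more explicit than the paper: you invoke the preprocessing step to guarantee $\max_\ell \alpha^{(e)}_\ell>0$, and you spell out the $h=+\infty$ convention that gives $P^{(s,\tilde e_1)}=1$.
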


\begin{proof}[Proof of Lemma~\ref{lemma:pcfsc-reduction-assumption1}]

Assumption~\ref{assump:MNL}(\ref{assm:alpha-implies-u-support}) is immediate from the construction: every content family with positive attraction for a type has positive utility for that type.
To verify Assumption~\ref{assump:MNL}(\ref{item:opt-profit-positive}), use the zero-cost singleton set guaranteed by the
preprocessing step. Let \(\Tilde{S}_1=\{\Tilde{e}_1\}\) with \(c(\Tilde{S}_1)=0\). Consider the feasible
solution that offers the singleton assortment \(\{\ell_{\Tilde{S}_1}\}\) to type \(\Tilde{e}_1\)
in the subscription mode with probability one, 
and routes every other type to the empty assortment and sets every
ad-side assortment distribution to
the empty assortment.
By construction,
\[
\alpha_{\ell_{\Tilde{S}_1}}^{(\Tilde{e}_1)}=1,
\qquad
u_{\ell_{\Tilde{S}_1}}^{(\Tilde{e}_1)}=2.
\]
Therefore,
\[
x^{(s,\Tilde{e}_1)}(\{\ell_{\Tilde{S}_1}\})
=
\frac{1}{1+1}
=
\frac12,
\]
and
\[
u^{(s,\Tilde{e}_1)}(\{\ell_{\Tilde{S}_1}\})
=
2\cdot \frac12
=
1
=
p.
\]
Since the ad-side assortment is empty, the ad-side utility is zero,  so $P^{(s,\Tilde{e}_1)}=1.$
Moreover, since \(c(\Tilde{S}_1)=0\) and \(\eta_{\ell_S}=\gamma_{\ell_S}=2c(S)\) for
all \(S\in\mathcal S\), we have
\[
\eta_{\ell_{\Tilde{S}_1}}=\gamma_{\ell_{\Tilde{S}_1}}=0.
\]
The platform therefore earns subscription revenue
\[
\lambda^{(\Tilde{e}_1)} pP^{(s,\Tilde{e}_1)}=1
\]
and incurs zero procurement cost. 
Thus, its profit is equal to \(1\), which
implies
\[
R^\star\ge 1>0.
\]
Thus, Assumption~\ref{assump:MNL}(\ref{item:opt-profit-positive}) holds.

Next, for  each $j$, $F\jj$ is
continuous and has bounded density
\[
\phi_F\jj(\chi) = 2\,\I{\chi\in(\frac{1}{2},1)}.
\]
So Assumption~\ref{assump:MNL}(\ref{item:assump-Lip}) holds  with
$W_{F^{(j)}}=2.$
\end{proof}

\begin{lemma}\label{lemma:pcfsc-reduction-full-equals-sub}
Let $\mathrm{OPT}^{\mathrm{full}}$ denote the optimal value of the   constructed instance and
let $\mathrm{OPT}^{\mathrm{sub}}$ denote the optimal value of the
same instance when only the subscription mode is offered.
Then
\[
\mathrm{OPT}^{\mathrm{full}} = \mathrm{OPT}^{\mathrm{sub}}.
\]
\end{lemma}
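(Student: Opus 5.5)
The plan is to show that in the constructed instance the ad mode is never chosen, whatever $\bm{y}$ is. Once that holds, ad-side assortment distributions can only add procurement cost, so replacing every $\bm{y}^{(a,e)}$ by $\ynullS$ weakly improves the objective. That gives $\mathrm{OPT}^{\mathrm{full}}\le \mathrm{OPT}^{\mathrm{sub}}$. The reverse inequality $\mathrm{OPT}^{\mathrm{sub}}\le\mathrm{OPT}^{\mathrm{full}}$ is immediate, because the subscription-only problem is the full problem with $\bm{y}^{(a,e)}=\ynullS$ imposed.

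\textbf{Step 1: per-click utility is bounded.} Since $\kappa=1$, every nonempty assortment is a singleton $\{\ell_S\}$. If $e\in S$, then $x^{(e)}(\{\ell_S\})=1/2$ and $u^{(e)}(\{\ell_S\})=1$. If $e\notin S$, both quantities are $0$. Hence, for any mixture, $u^{(m,e)}=2x^{(m,e)}$ and $u^{(m,e)}\le 1$. In particular $(u^{(s,e)}-p)_+=0$, since $p=1$.

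\textbf{Step 2: the cutoff is large.} There are two cases.
\begin{itemize}
\item If $u^{(a,e)}>0$, then $h^{(e)}(\bm{y})=\sigma x^{(a,e)}/u^{(a,e)}=3/2$.
\item If $u^{(a,e)}=0$, then $h^{(e)}=+\infty$.
\end{itemize}
In both cases $F^{(e)}(h^{(e)})=1$, because $F^{(e)}\equiv1$ on $[1,\infty)$. Therefore $P^{(a,e)}(\bm{y})=0$ and $P^{(s,e)}(\bm{y})=\I{u^{(s,e)}\ge 1}$ for every $\bm{y}$. So the mode-choice probabilities do not depend on $\bm{y}^{(a,e)}$ at all.

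\textbf{Step 3: compare objectives term by term.} With $\radse=0$ and $P^{(a,e)}=0$, the revenue $\sum_e \lambda^{(e)}\mathsf{Rev}^{(e)}(\bm{y})=\sum_e\I{u^{(s,e)}\ge1}$ is unchanged when the ad side is emptied. In the cost \eqref{eqn:cost-content}, the rental term for each family is multiplied by $P^{(a,e)}=0$ on the ad side, so it is unchanged. The buy term $\eta_\ell\max_{e,m} f_\ell^{(m,e)}$ weakly decreases when the ad flows $f_\ell^{(a,e)}$ are set to $0$. Since $\min\{\cdot,\cdot\}$ is monotone in each argument, $C(\bm{y})$ weakly decreases, and the profit of the modified solution is at least the original. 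Taking the supremum over $\bm{y}$ finishes the argument.

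The only delicate point is Step 2. We must check the degenerate cases of the definition \eqref{eqn:padj}, namely $x^{(a,e)}=0$ and a zero denominator, and confirm that the tie-breaking conventions still give $P^{(a,e)}=0$. This holds because $u^{(a,e)}=0$ forces $h^{(e)}=\infty$ and $\bar F^{(e)}(\infty)=0$. No other step needs more than bookkeeping.
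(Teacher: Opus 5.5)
Your proposal is correct and follows essentially the same route as the paper's proof. Both arguments show $(u^{(s,e)}-p)_+=0$, show the ad cutoff is at least $\sigma/2>1$ (or infinite), conclude $P^{(a,e)}=0$ and $P^{(s,e)}=\I{u^{(s,e)}\ge 1}$ for every $\bm{y}$, and then empty the ad side, which leaves revenue and rental costs unchanged and weakly lowers the buy term. The only difference is minor: you use the exact identity $u^{(e)}(A)=2x^{(e)}(A)$ to get $h^{(e)}=3/2$, where the paper needs only $u^{(e)}(A)\le 2x^{(e)}(A)$ to get $h^{(e)}\ge 3/2$.
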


\begin{proof}[Proof of Lemma~\ref{lemma:pcfsc-reduction-full-equals-sub}]

Throughout this proof, for an assortment $A$ and a type
$e\in U$, $x^{(e)}(A)$ and $u^{(e)}(A)$ denote the click
probability and expected utility induced by offering $A$ to type $e$.
For a mode $m\in\{a,s\}$, the quantities $x^{(m,e)}$ and $u^{(m,e)}$
denote the corresponding averages under the mode-$m$ randomized assortment
distribution $\bm y^{(m,e)}$:
\[
x^{(m,e)}=\sum_{A\in\mathcal A_\kappa}y_A^{(m,e)}x^{(e)}(A),
\qquad
u^{(m,e)}=\sum_{A\in\mathcal A_\kappa}y_A^{(m,e)}u^{(e)}(A).
\]
Thus, for a    type $e\in U$, the superscripts $(a,e)$ and $(s,e)$
refer to the ad-side and subscription-side averages for type $e$.

The inequality $\mathrm{OPT}^{\mathrm{full}} \ge \mathrm{OPT}^{\mathrm{sub}}$ is immediate.
We prove the reverse inequality $\mathrm{OPT}^{\mathrm{full}} \le \mathrm{OPT}^{\mathrm{sub}}$. Fix any feasible assortment distribution $\bm{y}$
for the full model. 
Since $\kappa=1$, every nonempty assortment is a singleton.

\emph{Step 1: every ad-side singleton satisfies $u^{(e)}(A) \le 2x^{(e)}(A)$ for all feasible $A$.}

Fix a user type \(e\in U\). If \(A=\{\ell_S\}\) with \(e\in S\), then
\[
x^{(e)}(\{\ell_S\})=\frac12,
\qquad
u^{(e)}(\{\ell_S\})=1.
\]
If \(A=\{\ell_S\}\) with \(e\notin S\), then
\[
x^{(e)}(\{\ell_S\})=0,
\qquad
u^{(e)}(\{\ell_S\})=0.
\]
The empty assortment also satisfies
\[
x^{(e)}(\emptyset)=0,
\qquad
u^{(e)}(\emptyset)=0.
\]
Combining these cases, every feasible assortment \(A\in\mathcal A_\kappa\)
satisfies
\[
u^{(e)}(A)\le 2x^{(e)}(A).
\]

Therefore, by averaging over the ad-side assortment distribution $\bm y^{(a,e)}$,
\begin{align}
u^{(a,e)}
=
\sum_{A\in\mathcal A_\kappa} y_A^{(a,e)}u^{(e)}(A)
\le
2\sum_{A\in\mathcal A_\kappa} y_A^{(a,e)}x^{(e)}(A)
=
2x^{(a,e)} \qquad \forall e\in U. \label{eqn:sub-matters-step-1}
\end{align}

\emph{Step 2: $\bigl(u^{(s,e)}-p\bigr)_+ = 0$.}
For every user type $e$, 
the only positive-utility subscription-side singletons are
\(\{\ell_S\}\) with \(e\in S\), and each such singleton yields expected
utility \(1\).
 
Thus,
\[
u^{(s,e)} \le 1 = p
\qquad \forall e\in U, %
\]
so
\begin{align}
    \bigl(u^{(s,e)}-p\bigr)_+ = 0
\qquad \forall e\in U. %
\label{eqn:sub-matters-step-2}
\end{align}

\emph{Step 3: no type ever chooses the ad mode.}
Fix $e\in U$. If $x^{(a,e)} > 0$, then necessarily $u^{(a,e)} > 0$, because every singleton
that contributes positive inside-choice probability also contributes positive utility. Using~\eqref{eqn:sub-matters-step-1} and~\eqref{eqn:sub-matters-step-2}, the ad threshold in~\eqref{eqn:padj} becomes
\[
h^{(e)}(\bm y)
=
\frac{\sigma x^{(a,e)}}{u^{(a,e)}}
\ge
\frac{\sigma}{2}
>
1.
\]
Since $F^{(e)}(\chi)=1$ for all $\chi\ge 1$, it follows that
\[
P^{(a,e)} = \bar F^{(e)}\bigl(h^{(e)}\bigr) = 0,
\qquad
P^{(s,e)} = \I{u^{(s,e)}\ge 1}\,F^{(e)}\bigl(h^{(e)}\bigr) = \I{u^{(s,e)}\ge 1}.
\]
If instead $x^{(a,e)}=0$, then also $u^{(a,e)}=0$ by the inequality $u^{(a,e)}\le 2x^{(a,e)}$, so
$h^{(e)}=\infty$ by definition and again
\[
P^{(a,e)} = 0,
\qquad
P^{(s,e)} = \I{u^{(s,e)}\ge 1}.
\]
Thus deleting the ad side leaves every subscription choice probability unchanged and sets every
ad choice probability to zero.

\emph{Step 4: deleting the ad side weakly improves the objective.}
Replace every ad-side routing by the empty assortment. Revenue is unchanged because $\radse =0$ and
all $P^{(s,e)}$ values are unchanged. Procurement cost weakly decreases: for each content family $\ell$, the buy term
$\eta_\ell \max_{e,m} f_\ell^{(m,e)}(\bm{y})$ can only decrease when ad-side flows are removed, and the rental
term
\[
\gamma_\ell \sum_e \lambda^{(e)}
\bigl(
P^{(a,e)} f_\ell^{(a,e)}(\bm{y}) + P^{(s,e)} f_\ell^{(s,e)}(\bm{y})
\bigr)
\]
also weakly decreases because the ad contribution vanishes while the subscription-side terms stay
unchanged. Therefore,   every full-model assortment distribution can be converted into a subscription-only assortment distribution with
objective value at least as large. Thus,
\[
\mathrm{OPT}^{\mathrm{full}} \le \mathrm{OPT}^{\mathrm{sub}}.
\]
Combining the two inequalities proves the claim.
\end{proof}

From now on,
we work in the subscription-only restriction.

\begin{lemma}\label{lemma:pcfsc-reduction-reduced-form}
There exists an optimal subscription-only assortment distribution with the following properties:
\begin{enumerate}
\item whenever $P^{(s,e)} = 0$, type $e$ is routed to the empty assortment; \label{item:zero-prob-implies-zero-assortment}
\item for each user type $e$, the 
nonempty 
assortments offered can be 
restricted to $\set{\set{\ell_S}: S \ni e}$. 
\label{item:restricted-assortment-offer}
\end{enumerate}
\end{lemma}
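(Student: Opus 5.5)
Start from an arbitrary optimal subscription-only assortment distribution $\bm{y}$. Such a distribution exists: Lemma~\ref{lemma:pcfsc-reduction-full-equals-sub} reduces to the subscription-only problem, and that problem is a finite-dimensional maximization in which each type's subscription indicator $\I{u^{(s,e)}\ge 1}$ defines a closed set. On each such set the objective is continuous, so there are finitely many cases and each attains its maximum. The plan is to apply two modifications, one per item, and to check that each weakly increases the objective $\Pi$. Throughout we use the facts from the proof of Lemma~\ref{lemma:pcfsc-reduction-full-equals-sub}: $P^{(a,e)}=0$, $P^{(s,e)}=\I{u^{(s,e)}\ge 1}$, $\radse=0$, and revenue equals $\sum_e P^{(s,e)}$.

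For item~\ref{item:restricted-assortment-offer}, fix a type $e$ and let $y_A^{(s,e)}>0$ for some $A=\{\ell_S\}$ with $e\notin S$. Then $\alpha_{\ell_S}^{(e)}=0$, so $x^{(e)}(A)=u^{(e)}(A)=0$ and $p_\ell^{(e)}(A)=0$ for every $\ell$. Move this mass onto $\emptyset$. This changes neither $u^{(s,e)}$ (hence $P^{(s,e)}$ and revenue are unchanged) nor any flow $f_\ell^{(s,e)}$. The flows are unchanged because $f_{\ell_S}^{(s,e)}$ only collects $y_A p_{\ell_S}^{(e)}(A)=0$. Both the buy term and the rental term of $C(\bm{y})$ are therefore unchanged, and so is the objective.

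For item~\ref{item:zero-prob-implies-zero-assortment}, take a type $e$ with $P^{(s,e)}=0$, i.e.\ $u^{(s,e)}<1$, and replace $\bm{y}^{(s,e)}$ by $\ynullS$. Then $u^{(s,e)}=0<1$, so $P^{(s,e)}$ stays $0$. Types other than $e$ are untouched, so their $P^{(s,e')}$ are unchanged and revenue is unchanged. In the cost~\eqref{eqn:cost-content}, all flows $f_\ell^{(s,e)}$ become $0$. The buy argument $\max_{j,m} f_\ell^{(m,j)}$ therefore weakly decreases. The rental argument is unchanged, since type $e$'s contribution is multiplied by $P^{(s,e)}=0$ both before and after. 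Each summand $\min\{\cdot,\cdot\}$ thus weakly decreases, so $C$ weakly decreases and $\Pi$ weakly increases.

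Apply the item-1 modification to every type with $P^{(s,e)}=0$, then the item-2 modification. The second step keeps empty assortments empty and does not change any $P^{(s,e)}$, so item 1 is preserved. The resulting distribution has value at least that of the original optimum, so it is optimal and satisfies both properties. The only delicate point, which I expect to be the main obstacle, is the min/max structure of the cost. We need monotonicity of $\min\{\eta\max(\cdot),\gamma\sum(\cdot)\}$ in each flow argument, together with the observation that changing $\bm{y}$ does not alter the $P^{(s,e')}$ of other types, since each $P^{(s,e')}$ depends only on type $e'$'s own distribution.
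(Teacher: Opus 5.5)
Your proof is correct and follows essentially the same route as the paper: route every type with $P^{(s,e)}=0$ to $\emptyset$ (revenue unchanged, buy-side maxima weakly decrease, rental terms carry the factor $P^{(s,e)}=0$), and move mass from zero-attraction singletons $\{\ell_S\}$ with $e\notin S$ to $\emptyset$ (no utility, no flow). Your explicit check that the second step preserves the first is a small addition. The paper simply assumes an optimum exists, whereas your existence sketch is loose as written: on a closed set $\{u^{(s,e)}\ge 1,\ e\in T\}$ the true objective can jump when other types cross the threshold. It is repaired by maximizing the continuous surrogate objective that counts exactly $T$ as subscribers and then applying your own item-1 step to any extra threshold-crossers.
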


\begin{proof}[Proof of Lemma~\ref{lemma:pcfsc-reduction-reduced-form}]

Because $\kappa=1$, every nonempty assortment is a singleton.

Whenever $P^{(s,e)}=0$,
replacing all of type $e$'s routing by the empty assortment leaves revenue unchanged and weakly
decreases procurement: buy costs depend on maxima of flows, and rental costs are multiplied by
$P^{(s,e)}$, which is zero. Thus, in an optimal assortment distribution,
we may assume   that
$P^{(s,e)}=0 \Rightarrow y^{(s,e)}=y_{\emptyset}$.
This proves item~(\ref{item:zero-prob-implies-zero-assortment}).

Finally, fix any singleton that has zero attraction and zero utility for type $e$; this includes every
$\{\ell_S\}$ with $e\notin S$.
Moving the mass on
such a singleton to the empty assortment leaves both revenue and procurement unchanged, because it
creates no flow and contributes no utility. Therefore,   in an optimal assortment distribution the only nonempty assortments
offered to type $e$ can be taken to be $\{\ell_S\}$ with $e\in S$. This proves~(\ref{item:restricted-assortment-offer}).
\end{proof}

For an assortment distribution in the reduced form of Lemma~\ref{lemma:pcfsc-reduction-reduced-form}, write
$y_S^{(e)}$ for the probability that type $e$ is offered $\{\ell_S\}$.

\begin{lemma}\label{lemma:pcfsc-reduction-identities-self-contained}
Fix any feasible subscription-only assortment distribution of the reduced form defined in Lemma~\ref{lemma:pcfsc-reduction-reduced-form}. Then, for every user type
$e\in U$,
\begin{align}
    u^{(s,e)} = \sum_{S\ni e} y_S^{(e)},
\qquad
P^{(s,e)} = \I{\sum_{S\ni e} y_S^{(e)} = 1}, \label{eqn:u-sum-of-y}
\end{align}
and, for every set $S\in\mathcal S$,
\begin{align}
    f_{\ell_S}^{(s,e)}(\bm{y}) = \frac{1}{2} y_S^{(e)}\I{e\in S},
\qquad
m_S(\bm{y}) := \max_{e\in U} f_{\ell_S}^{(s,e)}(\bm{y}) = \frac{1}{2}\max_{e\in S} y_S^{(e)}, \label{eqn:m_S-y-def}
\end{align}
so that
\[
\eta_{\ell_S} m_S(\bm{y}) = c(S)\max_{e\in S} y_S^{(e)}.
\]
\end{lemma}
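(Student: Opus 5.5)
This is a direct computation from the MNL formulas together with the reduced form of Lemma~\ref{lemma:pcfsc-reduction-reduced-form}. I will first evaluate the singletons, then average over $\bm{y}^{(s,e)}$, and finally take the maximum over types for the buy term.

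\emph{Singletons.} Fix $e\in U$. By item~(\ref{item:restricted-assortment-offer}) of Lemma~\ref{lemma:pcfsc-reduction-reduced-form}, the subscription distribution of type $e$ puts mass $y_S^{(e)}$ on $\{\ell_S\}$ for each $S\ni e$, and the remaining mass $y_\emptyset^{(e)}=1-\sum_{S\ni e}y_S^{(e)}$ on $\emptyset$. As in Step~1 of the proof of Lemma~\ref{lemma:pcfsc-reduction-full-equals-sub}, a singleton $\{\ell_S\}$ with $e\in S$ has $\alpha=1$ and $u=2$. Hence
\[
p^{(e)}_{\ell_S}(\{\ell_S\})=\tfrac12,
\qquad
u^{(e)}(\{\ell_S\})=2\cdot\tfrac12=1 .
\]
The empty assortment contributes zero utility and zero flow.

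\emph{Utility and mode choice.} Averaging the singleton values gives $u^{(s,e)}=\sum_{S\ni e}y_S^{(e)}$. By~\eqref{eqn:sub-matters-step-2} and Step~3 of the proof of Lemma~\ref{lemma:pcfsc-reduction-full-equals-sub}, we have $P^{(s,e)}=\I{u^{(s,e)}\ge 1}$ in this instance (recall $p=1$). Since $u^{(s,e)}\le\sum_{A}y_A^{(s,e)}=1$, the event $\{u^{(s,e)}\ge 1\}$ coincides with $\{\sum_{S\ni e}y_S^{(e)}=1\}$, which yields the second identity in~\eqref{eqn:u-sum-of-y}.

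\emph{Flows and buy cost.} By definition~\eqref{eqn:y-flow-def}, only the singleton $\{\ell_S\}$ contains $\ell_S$, so
\[
f^{(s,e)}_{\ell_S}(\bm{y})=y_S^{(e)}\,p^{(e)}_{\ell_S}(\{\ell_S\}).
\]
This equals $\tfrac12 y_S^{(e)}$ when $e\in S$. When $e\notin S$, the reduced form puts no mass on $\{\ell_S\}$ for type $e$, and in any case the choice probability is zero, so the flow vanishes. Taking the maximum over $e\in U$, only the types $e\in S$ contribute, and all flows are nonnegative; this gives $m_S(\bm{y})=\tfrac12\max_{e\in S}y_S^{(e)}$. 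The maximum is over a nonempty set because, after preprocessing, every set is nonempty (otherwise we drop it as irrelevant). Multiplying by $\eta_{\ell_S}=2c(S)$ gives the final identity.

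\emph{Caveat.} There is no real obstacle; the only subtle point is the ad side. We are in the subscription-only restriction, so ad-side flows are zero, and the maxima over $m\in\{a,s\}$ reduce to the subscription mode. In $m_S$ we therefore only need the maximum over types.
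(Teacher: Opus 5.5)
Your proposal is correct and follows the paper's proof essentially step by step: you compute the singleton choice probability $\tfrac12$ and utility $1$, average to get $u^{(s,e)}=\sum_{S\ni e}y_S^{(e)}\le 1$, use $p=1$ to get the subscription indicator, and read off the flows $f^{(s,e)}_{\ell_S}$ and the buy term from $\eta_{\ell_S}=2c(S)$. Your extra remarks are harmless refinements of points the paper leaves implicit, namely justifying the subscription-choice formula through the empty ad side (so the cutoff is infinite), and noting that $S$ must be nonempty for $\max_{e\in S}$ to make sense.
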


\begin{proof}[Proof of Lemma~\ref{lemma:pcfsc-reduction-identities-self-contained}]
For $e\in U$ and $S\in\mathcal S$,
\[
p_{\ell_S}^{(e)}(\{\ell_S\})
=
\frac{\alpha_{\ell_S}^{(e)}}{1+\alpha_{\ell_S}^{(e)}}
=
\frac{1}{2}\,\I{e\in S}.
\]
Thus,
\[
u^{(e)}(\{\ell_S\})
=
u_{\ell_S}^{(e)}\,p_{\ell_S}^{(e)}(\{\ell_S\})
=
2\,\I{e\in S}\cdot \frac{1}{2}\,\I{e\in S}
=
\I{e\in S}.
\]
Therefore,  
\[
u^{(s,e)}
=
\sum_{S\in\mathcal S} y_S^{(e)} u^{(e)}(\{\ell_S\})
=
\sum_{S\ni e} y_S^{(e)}.
\]
Since the reduced form uses only the empty assortment and singletons, we have
$\sum_{S\in\mathcal S} y_S^{(e)} \le 1$.
Thus, $u^{(s,e)}\in[0,1]$. Because $p=1$,
\[
P^{(s,e)} = \I{u^{(s,e)}\ge 1} = \I{\sum_{S\ni e} y_S^{(e)} = 1}.
\]
Also, recall the definition of $f$ in~\eqref{eqn:y-flow-def},
\begin{align}
    f_{\ell_S}^{(s,e)}(\bm{y})
=
y_S^{(e)} p_{\ell_S}^{(e)}(\{\ell_S\})
=
\frac{1}{2} y_S^{(e)}\I{e\in S}. \label{eqn:f-s-l-s-expr}
\end{align}

Using~\eqref{eqn:f-s-l-s-expr},
we obtain
\[
m_S(\bm{y})
=
\max_{e\in U} f_{\ell_S}^{(s,e)}(\bm{y})
=
\max_{e\in U}\frac{1}{2} y_S^{(e)}\mathcal I(e\in S)
=
\frac{1}{2}\max_{e\in S} y_S^{(e)}.
\]
Since $\eta_{\ell_S}=2c(S)$,
\[
\eta_{\ell_S}m_S(\bm{y})
=
2c(S)\cdot \frac{1}{2}\max_{e\in S} y_S^{(e)}
=
c(S)\max_{e\in S} y_S^{(e)}.
\]
\end{proof}

\begin{lemma}\label{lemma:pcfsc-reduction-objective-self-contained}
Fix any feasible subscription-only assortment distribution 
of the reduced form in
Lemma~\ref{lemma:pcfsc-reduction-reduced-form}. Then every used content family $\ell_S$ is weakly
cheaper to procure by buying than by renting. 
Recall that $\Pi(\bm{y})$ denotes the objective value of the constructed platform instance under assortment distribution $\bm{y}$.
Consequently,
\[
\Pi(\bm{y})
=
\sum_{e\in U} P^{(s,e)}
-
\sum_{S\in\mathcal S} c(S)\max_{e\in S} y_S^{(e)}.
\]
Further, if we define
\begin{align}
    \mathsf{cost}(\bm{y})
:=
\sum_{S\in\mathcal S} c(S)\max_{e\in S} y_S^{(e)}
+
\sum_{e\in U} \I{u^{(s,e)}<1}, \label{eqn:cost-y-def}
\end{align}
then
\[
\Pi(\bm{y}) = |U| - \mathsf{cost}(\bm{y}).
\]
\end{lemma}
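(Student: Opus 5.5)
The plan is to compute, for each content family $\ell_S$, the buy term and the rent term in~\eqref{eqn:cost-content} under a reduced-form subscription-only $\bm{y}$, and show buy $\le$ rent. Then $C(\bm{y})$ equals the sum of buy terms. Revenue is computed directly from Lemma~\ref{lemma:pcfsc-reduction-identities-self-contained}.

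\emph{Revenue.} Since $\radse=0$ and $P^{(a,e)}=0$ (Lemma~\ref{lemma:pcfsc-reduction-full-equals-sub}), with $\lambda^{(e)}=1$ and $p=1$, total revenue is $\sum_{e\in U}P^{(s,e)}$.

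\emph{Buy term.} Ad-side flows vanish, so the buy term for $\ell_S$ is $\eta_{\ell_S}m_S(\bm{y})=c(S)\max_{e\in S}y_S^{(e)}$ by~\eqref{eqn:m_S-y-def}.

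\emph{Rent term.} Using~\eqref{eqn:f-s-l-s-expr}, the rent term is
\[
\gamma_{\ell_S}\sum_{e\in U}P^{(s,e)}f^{(s,e)}_{\ell_S}(\bm{y})=2c(S)\sum_{e\in S}P^{(s,e)}\tfrac12 y_S^{(e)}=c(S)\sum_{e\in S}P^{(s,e)}y_S^{(e)}.
\]

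\emph{Comparison (the main point).} A naive bound fails, because $\sum_e P^{(s,e)}y_S^{(e)}$ can be smaller than $\max_e y_S^{(e)}$ if the maximizing type does not subscribe. Here item~(\ref{item:zero-prob-implies-zero-assortment}) of Lemma~\ref{lemma:pcfsc-reduction-reduced-form} is what makes the argument work. Since $P^{(s,e)}\in\{0,1\}$, any type with $P^{(s,e)}=0$ is routed to the empty assortment and has $y_S^{(e)}=0$. Hence $P^{(s,e)}y_S^{(e)}=y_S^{(e)}$ for every $e$, and the rent term equals $c(S)\sum_{e\in S}y_S^{(e)}\ge c(S)\max_{e\in S}y_S^{(e)}$ because all entries are nonnegative and $c(S)\ge0$. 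So the minimum in~\eqref{eqn:cost-content} is the buy term for every $\ell_S$. Families with zero flow contribute zero to both terms, which covers unused families.

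\emph{Assembling.} Combining the pieces gives
\[
\Pi(\bm{y})=\sum_{e}P^{(s,e)}-\sum_{S}c(S)\max_{e\in S}y_S^{(e)}.
\]
By~\eqref{eqn:u-sum-of-y} and $u^{(s,e)}\in[0,1]$, we have $P^{(s,e)}=1-\I{u^{(s,e)}<1}$. Summing over $e$ yields $\Pi(\bm{y})=|U|-\mathsf{cost}(\bm{y})$.
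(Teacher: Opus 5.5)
Your proof is correct and follows the paper's route. Both arguments compute revenue and the buy term from Lemma~\ref{lemma:pcfsc-reduction-identities-self-contained}, then use item~(\ref{item:zero-prob-implies-zero-assortment}) of Lemma~\ref{lemma:pcfsc-reduction-reduced-form} to show that renting weakly dominates buying, and finish with $P^{(s,e)}=1-\I{u^{(s,e)}<1}$. The only difference is cosmetic: the paper isolates a maximizing type $e^\star$, which must subscribe since $y_S^{(e^\star)}>0$, whereas you note $P^{(s,e)}y_S^{(e)}=y_S^{(e)}$ for every $e$ and bound the whole sum below by its maximum.
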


\begin{proof}[Proof of Lemma~\ref{lemma:pcfsc-reduction-objective-self-contained}]

Fix $S\in\mathcal S$. If $\max_{e\in S} y_S^{(e)}=0$, then content family $\ell_S$ is unused and contributes
zero under both buying and renting. Suppose now that $\max_{e\in S} y_S^{(e)} > 0$, and choose
\[
e^\star \in \arg\max_{e\in S} y_S^{(e)}.
\]
Then $y_S^{(e^\star)} > 0$. By Lemma~\ref{lemma:pcfsc-reduction-reduced-form} (\ref{item:zero-prob-implies-zero-assortment}), type $e^\star$
cannot have $P^{(s,e^\star)}=0$, because such a type would be routed to the empty assortment.
Thus, $P^{(s,e^\star)}=1$. Using Lemma~\ref{lemma:pcfsc-reduction-identities-self-contained} and $\lambda^{(e)} = 1$ for all $e$,
\[
\sum_e \lambda^{(e)} P^{(s,e)} f_{\ell_S}^{(s,e)}(\bm{y})
\ge
\lambda^{(e\opt)}
P^{(s,e\opt)} f_{\ell_S}^{(s,e\opt)}(\bm{y})
=
f_{\ell_S}^{(s,e^\star)}(\bm{y})
=
\frac{1}{2}\max_{e\in S} y_S^{(e)}
=
m_S(\bm{y}).
\]
Since $\gamma_{\ell_S} \ge \eta_{\ell_S}$,
\[
\gamma_{\ell_S} \sum_e \lambda^{(e)} P^{(s,e)} f_{\ell_S}^{(s,e)}(\bm{y})
\ge
\eta_{\ell_S} m_S(\bm{y}).
\]
So for every used set-item, renting is weakly more expensive than buying. Since the procurement
term uses the minimum of the buy cost and the rental cost, the actual procurement cost of
any used set-item equals its buy cost. 
Therefore,  
using $\lambda^{(e)}=1$ and $p=1$,
\[
\Pi(\bm{y})
=
\sum_{e\in U} P^{(s,e)}
-
\sum_{S\in\mathcal S} \eta_{\ell_S} m_S(\bm{y})
=
\sum_{e\in U} P^{(s,e)}
-
\sum_{S\in\mathcal S} c(S)\max_{e\in S} y_S^{(e)},
\]
where the second equality  follows from Lemma~\ref{lemma:pcfsc-reduction-identities-self-contained}.

Finally, $u^{(s,e)}\in[0,1]$ by Lemma~\ref{lemma:pcfsc-reduction-identities-self-contained}, so
\[
P^{(s,e)} = \I{u^{(s,e)}=1} = 1 - \I{u^{(s,e)}<1}.
\]
Substituting this into the objective formula gives
\[
\Pi(\bm{y})
=
|U|
-
\sum_{S\in\mathcal S} c(S)\max_{e\in S} y_S^{(e)}
-
\sum_{e\in U} \I{u^{(s,e)}<1}
=
|U|-\mathsf{cost}(\bm{y}),
\] where the last equality follows from the definition of $\mathsf{cost}(\bm{y})$ in~\eqref{eqn:cost-y-def}.
\end{proof}

Lemma~\ref{lemma:pcfsc-reduction-platform-to-pcfsc-self-contained} shows that any feasible reduced-form platform solution induces a feasible PCFSC solution with the same cost.

\begin{lemma}\label{lemma:pcfsc-reduction-platform-to-pcfsc-self-contained}
Let $\bm{y}$ be any feasible subscription-only assortment distribution of the reduced form in
Lemma~\ref{lemma:pcfsc-reduction-reduced-form}. Define
\[
x_S := \max_{e\in S} y_S^{(e)},
\qquad
\bar z_e := \I{u^{(s,e)}<1}.
\]
Then $(\bm{x},\bar{\bm z})$ is feasible for PCFSC and
\[
\sum_{S\in\mathcal S} c(S)x_S + \sum_{e\in U} \bar z_e = \mathsf{cost}(\bm{y}),
\]
where $\mathsf{cost}$
is defined in~\eqref{eqn:cost-y-def}.
\end{lemma}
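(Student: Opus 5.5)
The proof splits into two checks: feasibility of $(\bm{x},\bar{\bm z})$ for \eqref{eqn:prize-collection-fractional-set-cover}, and equality of the two objective values. The cost identity is essentially definitional. By construction, $\sum_{S} c(S)x_S = \sum_S c(S)\max_{e\in S} y_S^{(e)}$ and $\sum_e \bar z_e = \sum_e \I{u^{(s,e)}<1}$. Their sum is exactly $\mathsf{cost}(\bm{y})$ as defined in~\eqref{eqn:cost-y-def}, so nothing further is needed there beyond noting that $c(S)\ge 0$ plays no role in the equality itself.

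For feasibility, the sign and integrality constraints come first. We have $x_S\ge 0$ because each $y_S^{(e)}$ is a probability. If $S$ were empty, we would set $x_S=0$ by the convention that a max over an empty set is $0$; since $\ell_S$ then has zero attraction for every type, this is consistent with the identity in Lemma~\ref{lemma:pcfsc-reduction-identities-self-contained}. Also $\bar z_e\in\{0,1\}$ because it is an indicator.

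The substantive step is the covering constraint $\sum_{S\ni e}x_S+\bar z_e\ge 1$ for every $e\in U$, which I would prove by cases on $u^{(s,e)}$.
\begin{itemize}
\item If $u^{(s,e)}<1$, then $\bar z_e=1$ and the constraint holds because $x_S\ge 0$.
\item If $u^{(s,e)}\ge 1$, then by~\eqref{eqn:u-sum-of-y} in Lemma~\ref{lemma:pcfsc-reduction-identities-self-contained} we have $u^{(s,e)}=\sum_{S\ni e}y_S^{(e)}$, and this sum is at most $1$, so it equals $1$. For each $S\ni e$ we have $x_S=\max_{e'\in S}y_S^{(e')}\ge y_S^{(e)}$. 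Hence $\sum_{S\ni e}x_S\ge \sum_{S\ni e}y_S^{(e)}=1$, and the constraint holds even with $\bar z_e=0$.
\end{itemize}

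The only point needing care is in the second case: replacing $y_S^{(e)}$ by the max over all $e'\in S$ can only increase each term, which is what makes the per-set fractional purchase $x_S$ shared across elements. Because the reduced form of Lemma~\ref{lemma:pcfsc-reduction-reduced-form} offers only singletons $\{\ell_S\}$ with $S\ni e$, the sum over $S\ni e$ captures all of type $e$'s utility, so no mass is lost. I expect no real obstacle; the argument is bookkeeping on top of Lemma~\ref{lemma:pcfsc-reduction-identities-self-contained}.
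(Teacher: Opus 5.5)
Your proof is correct and follows essentially the same route as the paper: the cost identity is definitional, and the covering constraint is checked by cases on $\bar z_e$, using $u^{(s,e)}=\sum_{S\ni e}y_S^{(e)}$ from Lemma~\ref{lemma:pcfsc-reduction-identities-self-contained} together with $x_S\ge y_S^{(e)}$. Your extra remarks are harmless additions that the argument does not need: nonnegativity and integrality of the variables, the empty-set convention, and the observation that the sum in fact equals $1$.
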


\begin{proof}[Proof of Lemma~\ref{lemma:pcfsc-reduction-platform-to-pcfsc-self-contained}]

Fix $e\in U$. If $\bar z_e = 0$, then $u^{(s,e)} \ge 1$ since $\bar z_e = \I{u^{(s,e)}<1}$, so,
by
$u^{(s,e)} = \sum_{S\ni e} y_S^{(e)}$
in~\eqref{eqn:u-sum-of-y}
in Lemma~\ref{lemma:pcfsc-reduction-identities-self-contained},
\[
\sum_{S\ni e} x_S
\ge
\sum_{S\ni e} y_S^{(e)}
=
u^{(s,e)}
\ge 
1.
\]
If $\bar z_e = 1$, then the PCFSC covering constraint 
$\sum_{S\ni e}x_S + \bar{z}_e \ge 1,~  \forall e\in U$
is automatic. Thus, $(\bm{x},\bar{\bm{z}})$ is feasible.
By definition,
\[
\sum_{S\in\mathcal S} c(S)x_S + \sum_{e\in U} \bar z_e
=
\sum_{S\in\mathcal S} c(S)\max_{e\in S} y_S^{(e)}
+
\sum_{e\in U} \I{u^{(s,e)}<1}
=
\mathsf{cost}(\bm{y}).
\]
\end{proof}

\begin{lemma}\label{lemma:pcfsc-reduction-pcfsc-to-platform-self-contained}
For every feasible PCFSC solution $(\bm{x}, \bm{z})$, there exists a feasible subscription-only assortment distribution $\bm{y}$ in the
reduced form   of Lemma~\ref{lemma:pcfsc-reduction-reduced-form}
such that
\[
\Pi(\bm{y})
\ge
|U| - \sum_{S\in\mathcal S} c(S)x_S - \sum_{e\in U} z_e.
\]
\end{lemma}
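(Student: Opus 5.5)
Given a feasible PCFSC solution $(\bm{x},\bm{z})$, I will first normalize it and then construct the platform distribution explicitly. Let $U_1:=\{e\in U: z_e=0\}$ be the covered elements. For each $e\in U_1$ the covering constraint gives $\sum_{S\ni e}x_S\ge 1$. I would greedily truncate: order the sets containing $e$ arbitrarily and choose $y_S^{(e)}\in[0,x_S]$ with $\sum_{S\ni e}y_S^{(e)}=1$. This is possible since the $x_S$ over $S\ni e$ sum to at least $1$. For instance, take $y_S^{(e)}:=x_S/\sum_{S'\ni e}x_{S'}$, which satisfies $y_S^{(e)}\le x_S$ because the denominator is at least $1$. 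Set $y_\emptyset^{(s,e)}:=0$ for such $e$. For $e\notin U_1$ (i.e.\ $z_e=1$), route type $e$ entirely to the empty assortment. Every ad-side distribution is set to $\bm y_\emptyset$. This is a valid element of $\simplexprod$, it is subscription-only, and it uses only singletons $\{\ell_S\}$ with $S\ni e$, so it has the reduced form of Lemma~\ref{lemma:pcfsc-reduction-reduced-form}.

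Next I evaluate $\Pi(\bm y)$. Lemma~\ref{lemma:pcfsc-reduction-objective-self-contained} gives $\Pi(\bm y)=|U|-\mathsf{cost}(\bm y)$. However, its proof used item~(\ref{item:zero-prob-implies-zero-assortment}) of the reduced form, namely that types with $P^{(s,e)}=0$ are routed to the empty assortment. Our construction satisfies this directly: by Lemma~\ref{lemma:pcfsc-reduction-identities-self-contained}, every $e\in U_1$ has $u^{(s,e)}=1$ and hence $P^{(s,e)}=1$, and every other type is routed to $\emptyset$. To be safe, I would rather argue directly that the minimum in~\eqref{eqn:cost-content} is at most the buy term. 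This gives
\[
\Pi(\bm y)\ \ge\ \sum_{e}P^{(s,e)}-\sum_S c(S)\max_{e\in S}y_S^{(e)}.
\]
Here $\sum_e P^{(s,e)}=|U_1|=|U|-\sum_e z_e$. Since $y_S^{(e)}\le x_S$ for all $e$ (it is $0$ when $e\notin U_1$), we get $\max_{e\in S}y_S^{(e)}\le x_S$. Using $c(S)\ge0$ then yields the bound $|U|-\sum_S c(S)x_S-\sum_e z_e$.

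The only delicate point is the mode probabilities. Since ad sides are empty, $x^{(a,e)}=u^{(a,e)}=0$, so $h^{(e)}=+\infty$, which gives $P^{(a,e)}=0$ and $P^{(s,e)}=\I{u^{(s,e)}\ge 1}$, exactly as in Step 3 of Lemma~\ref{lemma:pcfsc-reduction-full-equals-sub}. The tie-breaking convention (subscription wins at $u^{(s,e)}=p=1$) is what makes covered types subscribe. I expect no real obstacle beyond checking that the normalization keeps $\sum_{S\ni e}y_S^{(e)}$ exactly $1$ and does not exceed $1$, since $u^{(s,e)}=1$ must hold exactly.
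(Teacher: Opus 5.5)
Your proof is correct and uses the same construction as the paper: the proportional rescaling $y_S^{(e)}=x_S/C_e$ for $z_e=0$, the empty assortment for $z_e=1$, and $C_e\ge 1$ to obtain $\max_{e\in S}y_S^{(e)}\le x_S$. The only difference is cosmetic: you bound the procurement minimum by the buy term directly, whereas the paper uses the equality $\Pi(\bm y)=|U|-\mathsf{cost}(\bm y)$ from Lemma~\ref{lemma:pcfsc-reduction-objective-self-contained}, which applies because the constructed $\bm y$ is already in reduced form.
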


\begin{proof}[Proof of Lemma~\ref{lemma:pcfsc-reduction-pcfsc-to-platform-self-contained}]

Fix any feasible PCFSC solution 
$(\bm{x}, \bm{z})$. Construct a subscription-only assortment distribution as follows.
For each $e\in U$:
\begin{itemize}
    \item if $z_e=1$, offer the empty assortment with probability one;
    \item if $z_e=0$, let
 \begin{align}
         C_e := \sum_{S\ni e} x_S \ge 1, \label{eqn:C-e-ge-1}
 \end{align}
    where $C_e \ge 1$ follows from the feasibility of  $(\bm{x}, \bm{z})$,
    and set
    \[
    y_S^{(e)}
    :=
    \begin{cases}
    x_S/C_e, & e\in S,\\
    0, & e\notin S,
    \end{cases}
    \qquad
    y_{\emptyset}^{(e)} := 0.
    \]
\end{itemize}
This assortment distribution is feasible and is already in reduced form: it
only offers the empty assortment or set-items \(\{\ell_S\}\) with \(e\in S\).

If $z_e=0$, then
\[
\sum_{S\ni e} y_S^{(e)}
=
\frac{1}{C_e}\sum_{S\ni e} x_S
=
1,
\]
so Lemma~\ref{lemma:pcfsc-reduction-identities-self-contained} gives $u^{(s,e)}=1$. If $z_e=1$,
then $u^{(s,e)}=0$. Therefore,  
\[
\sum_{e\in U} \I{u^{(s,e)}<1} = \sum_{e\in U} z_e.
\]
Also,
since $C_e \ge 1$ in~\eqref{eqn:C-e-ge-1},
for every $S\in\mathcal S$ and every $e\in S$,
\[
y_S^{(e)} = x_S/C_e \le x_S,
\]
so
\[
\max_{e\in S} y_S^{(e)} \le x_S.
\]
Thus,
\[
\mathsf{cost}(\bm{y})
\le
\sum_{S\in\mathcal S} c(S)x_S + \sum_{e\in U} z_e.
\]
Because the assortment distribution is in reduced form, Lemma~\ref{lemma:pcfsc-reduction-objective-self-contained}
applies, and therefore
\[
\Pi(\bm{y})
=
|U| - \mathsf{cost}(\bm{y})
\ge
|U| - \sum_{S\in\mathcal S} c(S)x_S - \sum_{e\in U} z_e.
\]
\end{proof}

\begin{lemma}\label{lemma:pcfsc-reduction-value-identity-self-contained}
Let $\optp$ denote the optimal value of the constructed platform instance. Then
\[
\optp = |U| - \optpcfsc.
\]
\end{lemma}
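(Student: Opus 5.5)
We prove the identity by establishing the two inequalities separately, combining the lemmas just proved.

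For $\optp \le |U| - \optpcfsc$, we start from an optimal solution of the constructed instance. By Lemma~\ref{lemma:pcfsc-reduction-full-equals-sub}, its value equals the optimal value of the subscription-only restriction. By Lemma~\ref{lemma:pcfsc-reduction-reduced-form}, that restriction admits an optimal assortment distribution $\bm{y}$ in reduced form. Hence $\optp = \Pi(\bm{y})$ for some reduced-form $\bm{y}$, where $\Pi$ is now evaluated in the subscription-only model; this coincides with the full-model value once the ad side is set to the empty assortment. Lemma~\ref{lemma:pcfsc-reduction-objective-self-contained} then gives $\Pi(\bm{y}) = |U| - \mathsf{cost}(\bm{y})$. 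Lemma~\ref{lemma:pcfsc-reduction-platform-to-pcfsc-self-contained} produces a feasible PCFSC solution whose objective equals $\mathsf{cost}(\bm{y})$, so $\mathsf{cost}(\bm{y}) \ge \optpcfsc$. Together these give $\optp \le |U| - \optpcfsc$.

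For $\optp \ge |U| - \optpcfsc$, we first argue that an optimal PCFSC solution $(\bm{x}^\star,\bm{z}^\star)$ exists. The variable $\bm{z}$ ranges over a finite set, and for each fixed $\bm{z}$ the remaining problem is an LP in $\bm{x}$. This LP is bounded below by $0$ since $c \ge 0$, and it is feasible because the preprocessing guarantees every element lies in some set. So the minimum is attained. Lemma~\ref{lemma:pcfsc-reduction-pcfsc-to-platform-self-contained} then yields a feasible subscription-only reduced-form $\bm{y}$ with $\Pi(\bm{y}) \ge |U| - \optpcfsc$. Extending $\bm{y}$ by empty ad-side assortments gives a feasible full-model solution with the same value, since the ad side is inactive and contributes neither revenue nor flow. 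Therefore $\optp \ge \Pi(\bm{y}) \ge |U| - \optpcfsc$.

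The only delicate point is bookkeeping: $\Pi$ in the reduced-form lemmas must mean the full-model objective evaluated at a subscription-only $\bm{y}$ padded with empty ad assortments. The step-3 argument in Lemma~\ref{lemma:pcfsc-reduction-full-equals-sub} already gives $P^{(a,e)}=0$ and leaves the subscription probabilities unchanged, so the two objectives coincide. Attainment of the optimum on each side should also be noted, but the right-hand inequality does not actually require attainment: applying it to $\epsilon$-optimal PCFSC solutions and letting $\epsilon \to 0$ suffices.
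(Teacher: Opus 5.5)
Your proposal is correct and matches the paper's proof essentially step for step. The upper bound combines Lemmas~\ref{lemma:pcfsc-reduction-full-equals-sub}, \ref{lemma:pcfsc-reduction-reduced-form}, \ref{lemma:pcfsc-reduction-objective-self-contained}, and~\ref{lemma:pcfsc-reduction-platform-to-pcfsc-self-contained}, and the lower bound applies Lemma~\ref{lemma:pcfsc-reduction-pcfsc-to-platform-self-contained} to an optimal PCFSC solution; your remarks on attainment of the PCFSC optimum and on padding subscription-only solutions with empty ad assortments fill in details the paper leaves implicit without changing the argument.
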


\begin{proof}[Proof of Lemma~\ref{lemma:pcfsc-reduction-value-identity-self-contained}]
By Lemma~\ref{lemma:pcfsc-reduction-full-equals-sub}, it is enough to optimize over the
subscription-only restriction. By Lemma~\ref{lemma:pcfsc-reduction-reduced-form}, there exists an
optimal subscription-only assortment distribution $\bm{y}\opt$ in reduced form. Define
\[
x_S := \max_{e\in S} y_S^{\star(e)},
\qquad
\bar z_e := \I{u^{(s,e)}(\bm{y}\opt)<1}.
\]
By Lemma~\ref{lemma:pcfsc-reduction-platform-to-pcfsc-self-contained}, the pair $(\bm{x},\bar{\bm{z}})$ is
feasible for PCFSC and has objective value $\mathsf{cost}(\bm{y}\opt)$. Therefore,  
\[
\optpcfsc \le \mathsf{cost}(\bm{y}\opt).
\]
Using Lemma~\ref{lemma:pcfsc-reduction-objective-self-contained},
\begin{align}
    \optp
=
\Pi(\bm{y}\opt)
=
|U| - \mathsf{cost}(\bm{y}\opt)
\le
|U| - \optpcfsc. \label{eqn:optp-is-small-one-side}
\end{align}

Conversely, let $(\bm{x}^\star,\bm{z}^\star)$ be an optimal PCFSC solution. By
Lemma~\ref{lemma:pcfsc-reduction-pcfsc-to-platform-self-contained}, there exists a feasible
subscription-only assortment distribution $\bm{y}$ in reduced form such that
\[
\Pi(\bm{y})
\ge
|U| - \sum_{S\in\mathcal S} c(S)x_S^\star - \sum_{e\in U} z_e^\star
=
|U| - \optpcfsc.
\]
Thus,
\begin{align}
    \optp \ge |U| - \optpcfsc. \label{eqn:optp-is-large-one-side}
\end{align}
Combining~\eqref{eqn:optp-is-small-one-side}
and~\eqref{eqn:optp-is-large-one-side}
proves the claim.
\end{proof}

\begin{proof}[Proof of Lemma~\ref{lemma-our-prob-harder-than-fractional-cover}]

Lemma~\ref{lemma:pcfsc-reduction-assumption1}
and
Lemma~\ref{lemma:pcfsc-reduction-value-identity-self-contained} show  that from every PCFSC
instance one can construct, in polynomial time, an instance of problem~\eqref{eqn:problem-buy-or-rent} with $\kappa=1$
satisfying Assumption~\ref{assump:MNL} and obeying the exact value identity
\[
\optp = |U| - \optpcfsc.
\]
Equivalently, for every bound $B$,
\[
\optpcfsc \le B
\qquad\Longleftrightarrow\qquad
\optp \ge |U| - B.
\]
Thus the PCFSC decision problem reduces in polynomial time to the decision version of
problem~\eqref{eqn:problem-buy-or-rent}.  
\end{proof}

\subsection{Lemma \ref{lemma:pcfsc-nphard} and its proof}
\label{subsec:pcfsc-nphard-proof}

\begin{lemma}\label{lemma:pcfsc-nphard} 
The decision version of~\eqref{eqn:prize-collection-fractional-set-cover}, with rational input data and a rational
bound, is NP-hard.
\end{lemma}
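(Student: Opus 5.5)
We will prove Lemma~\ref{lemma:pcfsc-nphard} by a polynomial reduction from \textsc{Exact Cover by 3-Sets} (X3C). An X3C instance consists of a ground set $X$ with $|X|=3q$ and a family $T_1,\dots,T_n$ of $3$-subsets of $X$. The question is whether some $q$ of the $T_i$ partition $X$.

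\textbf{Construction.} Fix an integer $k\ge 4$ (say $k=4$). For each $i$, introduce $k$ fresh private elements $P_i$. Set $U:=X\cup\bigcup_i P_i$, with sets $S_i:=T_i\cup P_i$ and uniform cost $c(S_i)=c:=k+\tfrac32$. Every penalty equals $1$, as in~\eqref{eqn:prize-collection-fractional-set-cover}. The data are rational and the construction is polynomial. We will show that
\[
\optpcfsc\le |U|-\tfrac32 q
\quad\Longleftrightarrow\quad
\text{the X3C instance is a yes-instance.}
\]
It is convenient to measure the \emph{savings} $|U|-(\text{cost})$ and to prove that the maximum saving equals $\tfrac32 q$ exactly when an exact cover exists, and is strictly smaller otherwise.

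\textbf{Key steps.}
\begin{enumerate}
\item \emph{Yes-instance gives saving $\tfrac32 q$.} Let $I$ index an exact cover. Set $x_i=1$ for $i\in I$ and $x_i=0$ otherwise. Set $z_e=0$ exactly for the covered elements $X\cup\bigcup_{i\in I}P_i$, and $z_e=1$ for all other elements. Each chosen set covers $k+3$ elements at cost $c$, so it saves $k+3-c=\tfrac32$. The total saving is therefore $\tfrac32 q$.
\item \emph{Reduction to integral $x$.} Take any feasible $(\bm x,\bm z)$ and let $R=\{e:z_e=0\}$ be the set of covered elements. The element $p\in P_i$ lies only in $S_i$, so $p\in R$ forces $x_i\ge1$. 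We may therefore cap $x_i$ at $1$, and we may assume each $P_i$ is either entirely in $R$ (when $x_i=1$) or entirely outside it.
\item \emph{Fractional mass on sets with $x_i<1$ is unprofitable.} Let $F=\{i: 0<x_i<1\}$. These sets cover no private elements. Together they contribute at most $3\sum_{i\in F}x_i$ units of coverage, so the ground elements they help cover number at most $3\sum_{i\in F}x_i$. Their cost is $c\sum_{i\in F}x_i$, and $c>3$. Replacing each $x_i$, $i\in F$, by $0$ and paying penalty $1$ for each ground element that becomes uncovered therefore weakly lowers the total cost. After this step $x\in\{0,1\}^n$.
\item \emph{Integral case.} Let $I=\{i:x_i=1\}$. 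The saving equals $|\bigcup_{i\in I}T_i| + k|I| - c|I| = |\bigcup_{i\in I}T_i|-\tfrac32|I|$. Since $|\bigcup_{i\in I}T_i|\le\min(3|I|,3q)$, the saving is at most $\tfrac32|I|$, and at most $3q-\tfrac32|I|$. Both bounds equal $\tfrac32 q$ only when $|I|=q$ and the chosen $T_i$ are pairwise disjoint, that is, when they form an exact cover. In every other case the saving is strictly less than $\tfrac32 q$; by integrality it is at most $\tfrac32q-\tfrac12$.
\end{enumerate}

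\textbf{Main obstacle.} The delicate step is step~3, the exchange argument. The danger is that fractional mass from several sets might jointly cover a ground element more cheaply than its penalty. This is exactly what the private elements and the choice $c/3>1$ are meant to prevent. To make step~3 rigorous, we will bound the number of ground elements whose coverage depends on $F$ by $\sum_{e}\sum_{i\in F:e\in S_i}x_i\le 3\sum_{i\in F}x_i$. This uses that, after step~2, the sets with $x_i=1$ already fully cover their own members. Composing this reduction with Lemma~\ref{lemma-our-prob-harder-than-fractional-cover} then yields Theorem~\ref{thm:our-prob-np-hard}.
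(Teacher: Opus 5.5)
Your reduction is correct, but it follows a genuinely different route from the paper.

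The paper reduces from the strongly NP-complete feasibility system of \cite{QiuAhDeWo14} (Proposition~\ref{prop:qiu-HI-feasibility}). It takes universe $V\cup E$ and sets $S_v=\{v\}\cup\{e\in E: e \text{ incident to } v\}$ with cost $2$, unit penalties, and bound $B=q+|V|$. Complementing the binary variables ($w_v=1-z_v$, $w_e=1-y_e$) then gives a one-to-one correspondence between solutions of that system and PCFSC solutions of cost at most $B$. No exchange argument is needed there, because the cited system already is a PCFSC instance in disguise, fractional $x$ included.

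You start instead from X3C, where fractional exact covers can exist even when no integral one does, so you must rule out fractional savings yourself. The private elements $P_i$ together with $c>3$ accomplish this. Your bound $|D|\le\sum_{e\in D}\sum_{i\in F:\,e\in T_i}x_i\le 3\sum_{i\in F}x_i$, on the set $D$ of covered ground elements outside $\bigcup_{i:x_i=1}T_i$, makes Step~3 rigorous. Any $k\ge 2$ would already give $c>3$.

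The paper's route is shorter, but it delegates all of the combinatorial content to a specialized lemma from the literature, in particular the explanation of why fractional coverage does not make the problem easy. Yours is self-contained from a textbook strongly NP-complete problem and makes that mechanism explicit. It even yields an additive gap of $\tfrac12$ between yes- and no-instances. Both produce uniform unit penalties and uniform rational set costs, which is all that the subsequent embedding in Lemma~\ref{lemma-our-prob-harder-than-fractional-cover} requires.
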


We consider the decision variant of the prize-collecting fractional set cover formulation~\eqref{eqn:prize-collection-fractional-set-cover}: given a rational instance and a rational bound  $B$, decide whether there exists a
feasible solution with objective value at most $B$.
We will use the following strongly NP-complete feasibility problem that appears 
in the proof of Lemma~1 of \cite{QiuAhDeWo14}.

\begin{proposition}[{\cite{QiuAhDeWo14}, Lemma~1}]
\label{prop:qiu-HI-feasibility}
Let $G=(V,E)$ be an undirected graph and let $q\in\mathbb{N}$. 
Deciding feasibility of the following system is strongly NP-complete:
\begin{align}
2\sum_{v\in V} x_v \;-\; \sum_{v\in V} z_v \;-\; \sum_{e\in E} y_e
&\le q-|E|,
\label{eq:qiu-feas-7}\\
x_i + x_j &\ge y_{ij}, \quad (i,j) \in E
\label{eq:qiu-feas-8}\\
x_i &\ge z_i, \quad i \in V,
\label{eq:qiu-feas-9}\\
x &\in \mathbb{R}^{|V|}_{\ge 0},\quad
y\in\{0,1\}^{|E|},\quad
z\in\{0,1\}^{|V|}.
\label{eq:qiu-feas-dom}
\end{align} 
\end{proposition}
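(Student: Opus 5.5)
The plan is a reduction from \textsc{Vertex Cover}: given $(G,q)$, decide whether $G$ has a vertex cover of size at most $q$. The graph and $q$ are passed to the system \eqref{eq:qiu-feas-7}--\eqref{eq:qiu-feas-dom} unchanged. All coefficients are in $\{0,\pm1,\pm2\}$ and $q\le |V|$, so every number is polynomially bounded, and NP-hardness of the system will therefore be \emph{strong}.

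Membership in NP is straightforward. A certificate is the binary pair $(y,z)$. Once $(y,z)$ is fixed, what remains is a linear feasibility problem in $x\ge 0$, which can be decided in polynomial time. Alternatively, the rounding argument below yields a half-integral witness $x$.

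The core is to show that the system is feasible if and only if $\tau(G)\le q$, where $\tau(G)$ denotes the vertex cover number. Equivalently, we show
\[
\min \Bigl\{2\textstyle\sum_v x_v-\sum_v z_v-\sum_e y_e\Bigr\}=\tau(G)-|E|,
\]
where the minimum is over all points satisfying \eqref{eq:qiu-feas-8}--\eqref{eq:qiu-feas-dom}.

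\emph{($\le$).} Let $C$ be a minimum vertex cover. Set $x=z=\mathbf 1_C$ and $y\equiv 1$. This point has value $2|C|-|C|-|E|=\tau(G)-|E|$.

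\emph{($\ge$).} Fix any feasible point. For the current $x$, raising $y$ and $z$ as far as the constraints allow only lowers the value. So we may take $z_i=\mathcal I(x_i\ge1)$ and $y_{ij}=\mathcal I(x_i+x_j\ge1)$. Next we may move each $x_i$ into $\{0,\tfrac12,1\}$ without increasing the value:
\begin{itemize}
\item lower $x_i$ to $1$ if $x_i>1$;
\item lower $x_i$ to $\tfrac12$ if $\tfrac12\le x_i<1$;
\item set $x_i$ to $0$ if $x_i<\tfrac12$, after first handling edges that relied on such small values.
\end{itemize}
This is where I expect the main care to be needed. An edge with $x_i+x_j\ge1$ where one endpoint has $x<\tfrac12$ forces the other endpoint above $\tfrac12$. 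I plan to handle this with the standard half-integrality and threshold argument used for the vertex cover LP: shift values by $\pm\epsilon$ on the two sets $\{0<x<\tfrac12\}$ and $\{\tfrac12<x<1\}$. The objective is linear in $\epsilon$ on these pieces, and the counts of $y$ and $z$ do not drop, so one of the two directions does not increase the value and reduces the number of distinct fractional levels.

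With half-integral $x$, let $O=\{x=1\}$, $H=\{x=\tfrac12\}$ and $Z=\{x=0\}$. Then the value equals
\[
|O|+|H|-|E(H)|-|\delta(O)|,
\]
where $\delta(O)$ is the set of edges with at least one endpoint in $O$. Now let $C_H$ be a vertex cover of $G[H]$ of size at most $|H|$, and replace $x$ by the integral vector $\mathbf 1_{O\cup C_H}$. This changes the value to $|O|+|C_H|-|E(H)|-|\delta(O)|$, which is no larger. So integral points suffice. For an integral $x=\mathbf 1_S$, the value is $|S|-|\delta(S)|=|S|+\#\{\text{uncovered edges}\}-|E|$. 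Adding one endpoint of each uncovered edge to $S$ does not increase this quantity and produces a vertex cover. Hence the value is at least $\tau(G)-|E|$, as claimed.

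Combining both directions, the system is feasible if and only if $\tau(G)\le q$, which gives strong NP-completeness.
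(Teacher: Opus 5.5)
The paper gives no proof of this statement. It imports it verbatim as Lemma~1 of \cite{QiuAhDeWo14} and uses it only as the source problem in the proof of Lemma~\ref{lemma:pcfsc-nphard}. Your reduction from Vertex Cover is correct: you encode a cover $C$ as $x=z=\mathbf{1}_C$, $y\equiv 1$, and show that the minimum of $2\sum_v x_v-\sum_v z_v-\sum_e y_e$ over the constraints is $\tau(G)-|E|$. This gives a self-contained proof of what the paper leaves to the reference.

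The converse direction needs much less than the half-integrality machinery. For any feasible point, put $S=\{i: x_i\ge \tfrac12\}$.
\begin{itemize}
\item Every $i\in S$ has $2x_i-z_i\ge 1$, and every $i\notin S$ has $2x_i-z_i\ge 0$. This uses $z_i\le 1$, together with $z_i=0$ whenever $x_i<1$.
\item $y_{ij}=1$ forces $\max\{x_i,x_j\}\ge \tfrac12$, so every edge with $y_{ij}=1$ meets $S$.
\end{itemize}
Hence the value is at least $|S|-|\delta(S)|$, which is $\ge \tau(G)-|E|$ by your own final step.

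This shortcut also avoids two loose spots in your write-up:
\begin{itemize}
\item The bullet ``lower $x_i$ to $\tfrac12$ if $\tfrac12\le x_i<1$'' does not preserve the value on its own. Take a single edge with $x_i=0.8$, $x_j=0.3$: lowering $x_i$ loses $y_{ij}$, and the value rises from $1.2$ to $1.6$. That step is justified only by your $\epsilon$-shift. The shift must use opposite signs on $\{0<x<\tfrac12\}$ and $\{\tfrac12<x<1\}$, so that every edge sum between these two sets stays fixed. It must also stop as soon as some coordinate reaches $0$, $\tfrac12$ or $1$.
\item After replacing $x$ by $\mathbf{1}_{O\cup C_H}$, the new value is at most, not equal to, $|O|+|C_H|-|E(H)|-|\delta(O)|$. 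The set $\delta(O\cup C_H)$ can contain further edges, for example from $C_H$ to $\{x=0\}$. The inequality goes the right way, so your conclusion stands.
\end{itemize}
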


\paragraph{Exact one-to-one mapping to prize-collecting fractional set cover.}
Fix any instance $(G=(V,E),q)$ of Proposition~\ref{prop:qiu-HI-feasibility}.
We construct a prize-collecting fractional set cover instance $(U,\mathcal{S},c,p)$ and bound $B$
such that there is an \emph{explicit one-to-one correspondence} between solutions of
\eqref{eq:qiu-feas-7}--\eqref{eq:qiu-feas-dom} and solutions of \eqref{eqn:prize-collection-fractional-set-cover}
meeting the bound.

\subsubsection*{Construction (universe, sets, costs, penalties, and bound)}
\paragraph{Universe.}
Let
\[
U \ :=\ V\ \cup\ E,
\]
i.e., we create one element for each vertex and one element for each edge.

\paragraph{Sets.}
For each vertex $v\in V$, create a set $S_v\in\mathcal{S}$ defined by
\[
S_v \ :=\ \{v\}\ \cup\ \{e\in E : e \text{ is incident to } v\}.
\]
Thus,
each edge-element $e=\{i,j\}$ belongs to exactly the two sets $S_i,S_j$, while each vertex-element $v$
belongs to exactly one set, namely $S_v$.

\paragraph{Costs and penalties.}
Set uniform costs and penalties
\[
c(S_v)=2 \ \ \forall v\in V,
\qquad
p(u)=1 \ \ \forall u\in U.
\]

\paragraph{Decision bound.}
Set
\[
B \ :=\ q + |V|.
\]

\subsubsection*{The induced PCFSC MIP.}
Writing $x_v$ for the variable $x_{S_v}$, the PCFSC decision instance is:
\begin{equation}\label{eq:PCFSC-from-qiu}
\begin{split}
\min_{\bm{x}, \bm{w}} & \quad  2\sum_{v\in V} x_v \;+\; \sum_{u\in U} w_u\\
\text{s.t. }\
& x_v + w_v \ge 1 \qquad\qquad\qquad\qquad\ \forall v\in V,\\
& x_i + x_j + w_{\{i,j\}} \ge 1 \qquad\qquad\ \forall \{i,j\}\in E,\\
& x_v \ge 0 \qquad\qquad\qquad\qquad\qquad\qquad\ \ \forall v\in V,\\
& w_u \in\{0,1\} \qquad\qquad\qquad\qquad\ \ \forall u\in U.
\end{split}
\end{equation}
This is exactly \eqref{eqn:prize-collection-fractional-set-cover} on the constructed $(U,\mathcal{S},c,p)$,
with the binary variables renamed from $\bm{z}$ to $\bm{w}$ to avoid confusion with Proposition~\ref{prop:qiu-HI-feasibility}.

\begin{lemma}
\label{lem:qiu-to-PCFSC-mapping}
Take any triple $(\bm x,\bm y, \bm z)$ with
$\bm x\in\mathbb{R}^{|V|}_{\ge 0}$, $\bm y\in\{0,1\}^{|E|}$, and $\bm z\in\{0,1\}^{|V|}$.
Define PCFSC variables $(\bm x,\bm w)$ by keeping the $\bm x$-variables the same and complementing the binary variables:
\[
\begin{aligned}
x_{S_v} &:= x_v && \text{for all } v\in V,\\
w_v &:= 1-z_v && \text{for all } v\in V,\\
w_e &:= 1-y_e && \text{for all } e\in E.
\end{aligned}
\]
This mapping is one-to-one, since we can recover $y_e=1-w_e$ for $e\in E$ and $z_v=1-w_v$ for $v\in V$.

Moreover,
\[
(\bm x,\bm y, \bm z)\text{ satisfies \eqref{eq:qiu-feas-8} and \eqref{eq:qiu-feas-9}}
\Longleftrightarrow
(\bm{x}, \bm{w})\text{ is feasible for \eqref{eq:PCFSC-from-qiu}},
\]
and
\[
(\bm x,\bm y, \bm z)\text{ satisfies \eqref{eq:qiu-feas-7}}
\Longleftrightarrow
\text{the PCFSC objective value of }(\bm{x}, \bm{w})\text{ is at most }B=q+|V|.
\]
\end{lemma}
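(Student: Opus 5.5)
The plan is to verify the three claims of Lemma~\ref{lem:qiu-to-PCFSC-mapping} in turn by direct substitution. Each claim is a linear identity under the complement map $w=1-z$ and $w=1-y$.

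\emph{Injectivity.} The map is an affine involution on the binary coordinates and the identity on $\bm x$. Hence it is a bijection between $\mathbb{R}^{|V|}_{\ge 0}\times\{0,1\}^{|E|}\times\{0,1\}^{|V|}$ and $\mathbb{R}^{|V|}_{\ge 0}\times\{0,1\}^{|U|}$. In particular, the domain constraints~\eqref{eq:qiu-feas-dom} correspond exactly to $x_v\ge 0$ and $w_u\in\{0,1\}$ in~\eqref{eq:PCFSC-from-qiu}.

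\emph{Feasibility equivalence.} We check the two families of covering constraints separately.
\begin{itemize}
\item For a vertex $v$, substituting $w_v=1-z_v$ into $x_v+w_v\ge 1$ gives $x_v-z_v\ge 0$, which is~\eqref{eq:qiu-feas-9}.
\item For an edge $\{i,j\}$, substituting $w_{\{i,j\}}=1-y_{ij}$ into $x_i+x_j+w_{\{i,j\}}\ge 1$ gives $x_i+x_j\ge y_{ij}$, which is~\eqref{eq:qiu-feas-8}.
\end{itemize}
Each step is an equivalence, so the constraint systems coincide under the mapping.

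\emph{Objective equivalence.} We compute
\[
2\sum_{v}x_v+\sum_{u\in U}w_u = 2\sum_v x_v + |V|-\sum_v z_v + |E|-\sum_e y_e.
\]
The condition that this is at most $B=q+|V|$ is therefore equivalent to
\[
2\sum_v x_v-\sum_v z_v-\sum_e y_e\le q-|E|,
\]
which is exactly~\eqref{eq:qiu-feas-7}.

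No step is a real obstacle. The only point needing care is the bookkeeping of the constant $|V|+|E|$ arising from the complements, which the choice $B=q+|V|$ absorbs exactly. One should also confirm that each edge element lies only in $S_i$ and $S_j$, and each vertex element lies only in $S_v$; this holds by the construction of the sets.
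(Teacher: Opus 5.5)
Your proposal is correct and follows the paper's proof essentially line for line. The vertex and edge covering constraints are checked by direct substitution of $w_v=1-z_v$ and $w_e=1-y_e$, and the objective bound follows from the identity $2\sum_v x_v+\sum_{u\in U}w_u=\bigl(2\sum_v x_v-\sum_v z_v-\sum_e y_e\bigr)+|V|+|E|$ with $B=q+|V|$; your added remarks on the domain constraints and on which sets contain each element are harmless bookkeeping that the paper leaves implicit.
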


\begin{proof}[Proof of Lemma \ref{lem:qiu-to-PCFSC-mapping}]
Constraints.
Fix $v\in V$. The PCFSC constraint $x_v+w_v\ge 1$ is equivalent (using $w_v=1-z_v$) to $x_v\ge z_v$, which is \eqref{eq:qiu-feas-9}.
Fix an edge $e=\{i,j\}\in E$. The PCFSC constraint $x_i+x_j+w_e\ge 1$ is equivalent (using $w_e=1-y_e$) to $x_i+x_j\ge y_e$, which is \eqref{eq:qiu-feas-8}.

Objective bound.
Using $w_v=1-z_v$ for $v\in V$ and $w_e=1-y_e$ for $e\in E$,
\[
2\sum_{v\in V}x_v + \sum_{u\in U} w_u
=
\Bigl(2\sum_{v\in V}x_v - \sum_{v\in V}z_v - \sum_{e\in E}y_e\Bigr) + (|V|+|E|).
\]
Therefore,
the PCFSC objective is at most $B=q+|V|$ if and only if
\[
2\sum_{v\in V}x_v - \sum_{v\in V}z_v - \sum_{e\in E}y_e \le q-|E|,
\]
which is exactly \eqref{eq:qiu-feas-7}.
\end{proof}

\begin{proof}[Proof of Lemma~\ref{lemma:pcfsc-nphard}]
Proposition~\ref{prop:qiu-HI-feasibility} (from \cite{QiuAhDeWo14}) states that it is strongly NP-complete
to decide whether the feasibility system \eqref{eq:qiu-feas-7}--\eqref{eq:qiu-feas-dom} has a solution.

Given an instance $(G,q)$ of that system, we construct the prize-collecting fractional set cover instance
$(U,\mathcal{S},c,p)$ and bound $B=q+|V|$ as above. By Lemma~\ref{lem:qiu-to-PCFSC-mapping}, there is a
one-to-one correspondence between feasible solutions of \eqref{eq:qiu-feas-7}--\eqref{eq:qiu-feas-dom} and
feasible PCFSC solutions of cost at most $B$. 
Thus, the PCFSC decision problem is NP-hard.
\end{proof}

\subsection{Proof of Theorem~\ref{thm:our-prob-np-hard}} \label{sec:proof-np-hard-using-lemma}

\begin{proof}[Proof of Theorem \ref{thm:our-prob-np-hard}]
This follows by combining Lemma \ref{lemma-our-prob-harder-than-fractional-cover} 
and Lemma \ref{lemma:pcfsc-nphard}.
\end{proof}

\section{Proof of 
Lemma~\ref{lem:sub-zero-p}, 
Lemma~\ref{lemma:ratio-reduction-in-G}, Theorem~\ref{thm:mnl-inner-separated-gap}, and Theorem~\ref{thm:asymptotic-opt}}
\label{sec:proof-of-thm}

We organize the proof section as follows.
Section~\ref{sec:prelim-lemmas} states the preliminary lemmas.
The detailed proofs of the preliminary lemmas are collected in
Sections~\ref{app:proof-detail-prelim-lemma}$-$~\ref{sec:proof-additional-lemmas},  and the proofs of
Lemma~\ref{lem:sub-zero-p}, 
Lemma~\ref{lemma:ratio-reduction-in-G},
Theorem~\ref{thm:mnl-inner-separated-gap}, and
Theorem~\ref{thm:asymptotic-opt} are given in
Section~\ref{sec:proof-main-lemma-thm}.
Throughout Section~\ref{sec:proof-of-thm},
we will fix $(S,\bm{\beta})$ 
and a user type $j$
unless otherwise stated.

\subsection{Preliminary lemmas}
\label{sec:prelim-lemmas}

We first collect several preliminary lemmas
that isolate the main structural ingredients used in the proof. Lemma~\ref{lemma:relaxation-gap}
bounds the loss from replacing the max-type buying cost by the
$\bm{\beta}$-relaxation. Lemma~\ref{lemma:buyset-gap-mnl} bounds the loss from
fixing a candidate buy set $S$, and Corollary~\ref{cor:buyset-threshold}
identifies the buy set that minimizes this bound. Lemma~\ref{lem:sub-zero-p}
shows that the subscription-side utility can be reduced to either $0$ or $p$,
and Lemma~\ref{lem:ratio-interval} characterizes the feasible range of the
ads-side ratio $x^{(a,j)}/u^{(a,j)}$.

\begin{lemma}
\label{lemma:relaxation-gap}
For every feasible assortment distribution $\bm{y}$,
\begin{align} 
G(S,\bm{\beta}, \bm{y}) - R(S, \bm{y})
=
\sum_{\ell \in S} \eta_\ell
\left[
\max_{j \in [J],m \in \set{a,s}} f_\ell^{(m,j)}(\bm{y})
-
\sum_{j \in [J],m \in \set{a,s}} \beta_\ell^{(m,j)} f_\ell^{(m,j)}(\bm{y})
\right]
\ge 0.\label{eqn:gap-def-identity}
\end{align}
Then,
\begin{align}
0  
\le
G(S,\bm{\beta}, \bm{y}) - R(S, \bm{y}) \le 
\sum_{\ell \in S}
\eta_\ell
\left(1-\min_{j \in [J],m \in \set{a,s}}\beta_\ell^{(m,j)}\right)
\max_{j \in [J],m \in \set{a,s}}f_\ell^{(m,j)}(\bm{y})
\le \TOP_{2J}\set{\bm{q}^{S,\bm{\beta}}},
\label{eqn:gap-bound}
\end{align}
where  $\bm{q}^{S,\bm{\beta}}$ 
is defined in~\eqref{eqn:q-S-def}.
In addition, if $\bm{y}\opt$ is any assortment distribution
that attains 
$G(S,\bm{\beta})$ in~\eqref{eqn:G-S-beta-def}, then
\begin{align}
0 \le R(S) - R(S, \bm{y}\opt) \le 
\TOP_{2J}\set{\bm{q}^{S,\bm{\beta}}}.
\label{eqn:gap-bound-by-relaxation}
\end{align}
\end{lemma}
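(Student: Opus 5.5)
The identity \eqref{eqn:gap-def-identity} follows by direct subtraction. Comparing \eqref{eqn:R-S-def} with \eqref{eqn:G-S-beta-def}, the revenue terms and the rental terms over $[L]\setminus S$ coincide. Only the buy terms differ: $R$ subtracts $\eta_\ell \max_{j,m} f_\ell^{(m,j)}$, while $G$ subtracts $\eta_\ell\sum_{j,m}\beta_\ell^{(m,j)} f_\ell^{(m,j)}$. Their difference is the displayed bracket. Since $\bm{\beta}\in\mc{B}_S$ gives a probability vector over the $2J$ pairs $(j,m)$ for each $\ell$, the weighted average is at most the maximum. With $\eta_\ell\ge0$, this gives nonnegativity, which is \eqref{eqn:beta-relax-ineq}.

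For the middle inequality in \eqref{eqn:gap-bound}, fix $\ell\in S$ and write $M_\ell=\max_{j,m} f_\ell^{(m,j)}$, attained at some pair $(j^\star,m^\star)$. Drop all other nonnegative terms in the weighted sum. This gives
\[
\sum_{j,m}\beta_\ell^{(m,j)}f_\ell^{(m,j)}\ge \beta_\ell^{(m^\star,j^\star)}M_\ell\ge \Bigl(\min_{j,m}\beta_\ell^{(m,j)}\Bigr)M_\ell ,
\]
so the bracket is at most $(1-\min\beta_\ell)M_\ell$.

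The last inequality needs a counting argument, and I expect it to be the only nontrivial step. The plan is to use \eqref{eqn:y-sum-less-than-kappa}. For each pair $(j,m)$ we have $\sum_\ell f_\ell^{(m,j)}\le 1$, so
\[
\sum_{\ell\in[L]} M_\ell \le \sum_\ell\sum_{j,m} f_\ell^{(m,j)} \le 2J .
\]
Each $M_\ell$ also lies in $[0,1]$. The bound is therefore $\sum_\ell q^{S,\bm{\beta}}_\ell M_\ell$, where $q^{S,\bm{\beta}}_\ell\ge0$ and $q^{S,\bm{\beta}}_\ell=0$ for $\ell\notin S$. Maximizing this linear function over $\{M\in[0,1]^L:\sum_\ell M_\ell\le 2J\}$ is a fractional knapsack. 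A greedy exchange argument shows it is maximized by putting weight one on the $2J$ largest entries, which gives $\TOP_{2J}\{\bm q^{S,\bm\beta}\}$. If $L<2J$, the cap $\min\{m,L\}$ in \eqref{eqn:top-def} handles this case automatically.

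For \eqref{eqn:gap-bound-by-relaxation}, the lower bound $R(S)\ge R(S,\bm y^\star)$ holds by definition of $R(S)$. For the upper bound, let $\bm y'$ attain $R(S)$; a maximizer exists because the feasible set $\simplexprod$ is compact, and otherwise I would run the argument with a near-maximizer. Then chain
\[
R(S)=R(S,\bm y')\le G(S,\bm\beta,\bm y')\le G(S,\bm\beta,\bm y^\star)\le R(S,\bm y^\star)+\TOP_{2J}\{\bm q^{S,\bm\beta}\}.
\]
The first inequality is \eqref{eqn:beta-relax-ineq}, the second is optimality of $\bm y^\star$ for $G$, and the third is \eqref{eqn:gap-bound} applied at $\bm y^\star$.
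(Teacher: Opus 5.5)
Your proposal is correct and follows the paper's proof essentially step for step: the identity by direct subtraction, the middle bound via the minimum weight (the paper writes $\sum_{j,m}\beta_\ell^{(m,j)} f_\ell^{(m,j)} \ge \underline{\beta}_\ell \sum_{j,m} f_\ell^{(m,j)} \ge \underline{\beta}_\ell M_\ell$ rather than dropping terms, which makes no difference), the $\TOP_{2J}$ bound from $M_\ell\in[0,1]$ and $\sum_\ell M_\ell\le 2J$, and the final chain through $R(S)\le G(S,\bm\beta)=G(S,\bm\beta,\bm y^\star)$. One small remark: compactness alone does not guarantee a maximizer of $R(S,\cdot)$, because the mode-choice probabilities make it discontinuous, so close that step with your near-maximizer fallback or, as the paper does, by taking the supremum over $\bm y$ on both sides of \eqref{eqn:beta-relax-ineq}.
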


Next, Lemma~\ref{lemma:buyset-gap-mnl}
bounds the gap induced by the choice of the buying set $S$, namely, the difference between the maximum profit attainable when the buying set is restricted to $S$ and the maximum profit attainable under an
optimal buying set $S\opt$.

\begin{lemma}
\label{lemma:buyset-gap-mnl}
For   every buy set $S \subseteq [L]$, 
\begin{equation}\label{eq:buyset-gap-mnl}
R\opt - R(S)
\le
\TOP_{2 J} \set{\bm{w}^{S}},
\end{equation}
where $\bm{w}^S$
is defined in~\eqref{eq:wS-mnl-def}.
\end{lemma}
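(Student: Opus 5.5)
Fix $S$ and let $\bm{y}\opt$ attain $R\opt=\max_{\bm y} R(\bm y)$. Since $R(S)\ge R(S,\bm{y}\opt)$, it suffices to show
\[
R(\bm y\opt)-R(S,\bm y\opt)\le \TOP_{2J}\set{\bm w^S}.
\]
The per-family procurement cost depends on $S$ only through the choice, for each family, of buy term versus rental term. The comparison is therefore family by family. Write $b_\ell(\bm y):=\eta_\ell\max_{j,m} f_\ell^{(m,j)}(\bm y)$ and $r_\ell(\bm y):=\gamma_\ell\sum_j\lambda\jj(\padj f_\ell\adj+\psubj f_\ell\subj)$. Then $R(\bm y)$ charges $\min\{b_\ell,r_\ell\}$, while $R(S,\bm y)$ charges $b_\ell$ for $\ell\in S$ and $r_\ell$ otherwise. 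Revenue terms coincide, so
\[
R(\bm y)-R(S,\bm y)=\sum_{\ell\in S}\bigl(b_\ell-\min\{b_\ell,r_\ell\}\bigr)+\sum_{\ell\notin S}\bigl(r_\ell-\min\{b_\ell,r_\ell\}\bigr).
\]

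Each term is bounded using the flows. Let $M_\ell:=\max_{j,m} f_\ell^{(m,j)}\in[0,1]$.

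\emph{Case $\ell\in S$.} The term is at most $b_\ell\le \eta_\ell M_\ell = w^S_\ell M_\ell$.

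\emph{Case $\ell\notin S$.} The term is $(r_\ell-b_\ell)_+$. Since $\padj+\psubj\le 1$ and each flow is at most $M_\ell$, we get $r_\ell\le \gamma_\ell\sum_j\lambda\jj M_\ell$. Hence $(r_\ell-b_\ell)_+\le (\gamma_\ell\sum_j\lambda\jj-\eta_\ell)_+M_\ell = w^S_\ell M_\ell$.

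Both cases together give
\[
R(\bm y)-R(S,\bm y)\le \sum_\ell w^S_\ell M_\ell.
\]

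It remains to pass from the weighted sum to the top-$2J$ sum. This follows from the constraint on the weights $M_\ell\in[0,1]$:
\[
\sum_\ell M_\ell\le \sum_\ell\sum_{j,m} f_\ell^{(m,j)}=\sum_{j,m} x^{(m,j)}\le 2J,
\]
using~\eqref{eqn:y-sum-less-than-kappa}. Maximizing the linear function $\sum_\ell w^S_\ell M_\ell$ over $\{M\in[0,1]^L:\sum_\ell M_\ell\le 2J\}$ puts weight one on the $2J$ largest (nonnegative) entries of $\bm w^S$. The maximum is therefore $\TOP_{2J}\set{\bm w^S}$, which completes the bound.

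The main point requiring care is bounding the rental term by $M_\ell$ times total mass. This uses that $\padj+\psubj\le1$, which holds by the tie-breaking conventions since the two events are disjoint. Everything else is a routine fractional-knapsack argument. The top-$2J$ structure comes precisely from each of the $2J$ type–mode pairs contributing total flow at most one.
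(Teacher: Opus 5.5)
Your proposal is correct and matches the paper's proof step for step. The paper uses the same family-by-family decomposition, giving terms bounded by $\eta_\ell M_\ell$ for $\ell\in S$ and by $(r_\ell-b_\ell)_+$ for $\ell\notin S$. It bounds the rental exposure the same way via $P^{(a,j)}+P^{(s,j)}\le 1$, and it uses the same fractional-knapsack step with $\sum_\ell M_\ell\le 2J$ to reach $\TOP_{2J}\{\bm w^S\}$. One minor remark, which applies equally to the paper's argument: your bound $R(\bm y)-R(S,\bm y)\le \TOP_{2J}\{\bm w^S\}$ holds for every $\bm y$, so you can finish by taking the supremum over $\bm y$ instead of assuming that $R^\star$ is attained.
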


We can choose a buying set $S$ that minimizes the upper bound in Lemma~\ref{lemma:buyset-gap-mnl}, namely the right-hand side of~\eqref{eq:buyset-gap-mnl}, where $\bm{w}^{S}$ is defined in~\eqref{eq:wS-mnl-def}.

\begin{corollary}
\label{cor:buyset-threshold}
Let
\begin{equation*}
  b_\ell
  :=
  \Bigl(
    \gamma_\ell \sum_{j=1}^J \lambda\jj - \eta_\ell
  \Bigr)_+.
\end{equation*}
A buy set 
$S$
minimizing the upper bound~\eqref{eq:buyset-gap-mnl}
in Lemma~\ref{lemma:buyset-gap-mnl}, denoted by $S\opt$, is obtained   by
\begin{equation*}
  \ell \in S^\star
  \quad\Leftrightarrow\quad
  \eta_\ell \le b_\ell
  \quad\Leftrightarrow\quad
  2\eta_\ell \le \gamma_\ell \sum_{j=1}^J \lambda\jj.
\end{equation*} 
\end{corollary}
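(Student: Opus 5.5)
The statement to prove is Corollary~\ref{cor:buyset-threshold}: the threshold set minimizes $\TOP_{2J}\{\bm w^S\}$. The plan is to show that this threshold choice minimizes $\bm w^S$ \emph{coordinatewise} over all $S\subseteq[L]$. The key observation is that each entry $w^S_\ell$ depends on $S$ only through the indicator $\I{\ell\in S}$. Specifically, $w^S_\ell=\eta_\ell$ if $\ell\in S$ and $w^S_\ell=b_\ell$ if $\ell\notin S$, with $b_\ell$ as in the corollary. So for each family separately we can choose the membership that makes its entry smaller, namely $w^{S\opt}_\ell=\min\{\eta_\ell,b_\ell\}$.

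\textbf{Step 1 (coordinatewise domination).} For any $S$ and any $\ell$, we have $w^S_\ell\ge \min\{\eta_\ell,b_\ell\}=w^{S\opt}_\ell$, where $S\opt$ is defined by $\ell\in S\opt\Leftrightarrow \eta_\ell\le b_\ell$. Ties may be broken either way, since both options then give the same value.

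\textbf{Step 2 (monotonicity of $\TOP_{2J}$).} For vectors $\bm v\le \bm v'$ coordinatewise, $\TOP_m\{\bm v\}\le\TOP_m\{\bm v'\}$. The reason is that the $q$-th largest entry is monotone: $v_{(q)}\le v'_{(q)}$, because at least $q$ entries of $\bm v'$ dominate the top $q$ entries of $\bm v$. Alternatively, write $\TOP_m\{\bm v\}=\max_{|T|\le m}\sum_{\ell\in T}v_\ell$ (all entries here are nonnegative), which is a maximum of sums that are monotone in $\bm v$. Combining Steps 1 and 2 gives $\TOP_{2J}\{\bm w^{S\opt}\}\le \TOP_{2J}\{\bm w^S\}$ for all $S$, so $S\opt$ minimizes the bound in~\eqref{eq:buyset-gap-mnl}.

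\textbf{Step 3 (simplifying the condition).} It remains to show $\eta_\ell\le b_\ell \Leftrightarrow 2\eta_\ell\le\gamma_\ell\sum_j\lambda\jj$. Let $\Lambda=\sum_j\lambda\jj$.
\begin{itemize}
\item[($\Leftarrow$)] If $2\eta_\ell\le\gamma_\ell\Lambda$, then $\gamma_\ell\Lambda-\eta_\ell\ge\eta_\ell\ge0$. Hence $b_\ell=\gamma_\ell\Lambda-\eta_\ell\ge\eta_\ell$.
\item[($\Rightarrow$)] Suppose $\eta_\ell\le b_\ell$. If $b_\ell>0$, then $b_\ell=\gamma_\ell\Lambda-\eta_\ell$, so $2\eta_\ell\le\gamma_\ell\Lambda$. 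If $b_\ell=0$, then $\eta_\ell=0$, and $0\le\gamma_\ell\Lambda$ holds trivially.
\end{itemize}

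No step here is a real obstacle. The only point needing care is the monotonicity of $\TOP$ under coordinatewise domination, and the $\max_{|T|\le m}$ representation settles it cleanly.
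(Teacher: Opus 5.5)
Your proof is correct and follows essentially the same route as the paper. Each entry $w^S_\ell$ equals $\eta_\ell$ or $b_\ell$ depending only on whether $\ell\in S$, so taking the smaller value coordinatewise and using the coordinatewise monotonicity of $\TOP_{2J}$ gives a minimizer, and unpacking the positive part gives the threshold condition; you also justify the monotonicity of $\TOP_{2J}$ and treat the $b_\ell=0$ case of the equivalence explicitly, both of which the paper asserts without proof.
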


Once the subscription side has been reduced to either zero utility or utility exactly equal to the price, the remaining   degree of freedom on the ads  side is the ratio between click mass and utility mass.
The next lemma identifies the exact feasible interval of this ratio 
and shows that every feasible ratio can already be attained by a mixture supported on at most two singleton assortments.

\begin{lemma} 
\label{lem:ratio-interval}

Every feasible ad assortment distribution with $u^{(a,j)} > 0$ induces a ratio
\begin{equation*}
  \frac{x^{(a,j)}}{u^{(a,j)}} \in I\jj,
\end{equation*}
and every $t \in I^{(j)}$ is attainable by an assortment distribution supported on at most two singleton assortments.
\end{lemma}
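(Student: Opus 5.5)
The argument has two parts: a mediant (weighted-average) inequality for the containment claim, and an explicit two-point mixture for attainability. Fix type $j$. For a deterministic assortment $A$, write
\[
x^{(j)}(A)=\frac{\sum_{\ell\in A}\alpha_\ell^{(j)}}{1+\sum_{k\in A}\alpha_k^{(j)}}, \qquad
u^{(j)}(A)=\frac{\sum_{\ell\in A}\alpha_\ell^{(j)}u_\ell^{(j)}}{1+\sum_{k\in A}\alpha_k^{(j)}}.
\]
Both quantities share the denominator $1+\sum_{k\in A}\alpha_k^{(j)}$. Families with $\alpha_\ell^{(j)}=0$ contribute nothing to either sum, so only families with $\alpha_\ell^{(j)}>0$ matter. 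For those families, Assumption~\ref{assump:MNL}(\ref{assm:alpha-implies-u-support}) gives $u_\ell^{(j)}>0$, and by definition $u_\ell^{(j)}\in[\uminj,\umaxj]$.

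\emph{Containment.} The bounds on $u_\ell^{(j)}$ give, for every $A$,
\[
\uminj\, x^{(j)}(A)\le u^{(j)}(A)\le \umaxj\, x^{(j)}(A).
\]
Averaging these linear inequalities under $\yadj$ yields
\[
\uminj\, x^{(a,j)}\le u^{(a,j)}\le \umaxj\, x^{(a,j)}.
\]
If $u^{(a,j)}>0$, the right inequality forces $x^{(a,j)}>0$. Dividing through then gives $u^{(a,j)}/x^{(a,j)}\in[\uminj,\umaxj]$, which is equivalent to $x^{(a,j)}/u^{(a,j)}\in[\tminj,\tmaxj]=I^{(j)}$. This step is routine; the only care needed is the division by $x^{(a,j)}$, which the inequality just obtained justifies.

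\emph{Attainability.} Use the singletons $\{\lplus^{(j)}\}$ and $\{\lminus^{(j)}\}$ from~\eqref{eqn:two-prod-pair-def}. For a singleton $\{\ell\}$ with $\alpha_\ell^{(j)}>0$,
\[
x^{(j)}(\{\ell\})=c_\ell:=\frac{\alpha_\ell^{(j)}}{1+\alpha_\ell^{(j)}}>0, \qquad u^{(j)}(\{\ell\})=c_\ell\, u_\ell^{(j)},
\]
so its ratio is $1/u_\ell^{(j)}$. Hence $\{\lplus^{(j)}\}$ has ratio $\tminj$ and $\{\lminus^{(j)}\}$ has ratio $\tmaxj$. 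Put weight $\theta$ on $\{\lplus^{(j)}\}$ and $1-\theta$ on $\{\lminus^{(j)}\}$. The induced ratio is
\[
g(\theta)=\frac{\theta c_+ + (1-\theta)c_-}{\theta c_+ u_+ + (1-\theta)c_- u_-},
\]
with $c_\pm:=c_{\ell_\pm^{(j)}}$ and $u_\pm:=u_{\ell_\pm^{(j)}}^{(j)}$. Since $c_\pm>0$ and $u_\pm>0$, the denominator is strictly positive on $[0,1]$, so $g$ is continuous there. Also $g(1)=\tminj$ and $g(0)=\tmaxj$, so by the intermediate value theorem every $t\in I^{(j)}$ equals $g(\theta)$ for some $\theta\in[0,1]$.

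One can avoid the intermediate value theorem by solving the linear equation
\[
\theta c_+ + (1-\theta)c_- = t\bigl(\theta c_+u_+ + (1-\theta)c_-u_-\bigr)
\]
directly. This gives
\[
\theta=\frac{c_-(1-tu_-)}{c_-(1-tu_-)-c_+(1-tu_+)}.
\]
Here $1-tu_+\le 0\le 1-tu_-$, so $\theta\in[0,1]$. When $t\in(\tminj,\tmaxj)$ this is exactly the ratio of the residual slopes $\Theta_\pm$ in~\eqref{eqn:Theta-ads-def}. In the degenerate case $\tminj=\tmaxj$, a single singleton already attains $t$.

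The main obstacle is bookkeeping rather than mathematics: restricting to families with $\alpha_\ell^{(j)}>0$ so that the singletons are genuine and have positive utility, and handling the degenerate interval.
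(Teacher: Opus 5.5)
Your proof is correct and follows essentially the same route as the paper: containment is the weighted-average fact for ratios of sums, and attainability is an explicit two-singleton mixture. Your sandwich $\uminj\,x^{(j)}(A)\le u^{(j)}(A)\le \umaxj\,x^{(j)}(A)$ is a tidier packaging of the paper's two-stage convex-combination argument, since it averages directly under $\yadj$ and zero-utility assortments need no separate treatment, and your extreme pair $\{\lplus^{(j)}\},\{\lminus^{(j)}\}$ is a special case of the paper's bracketing pair $\ell_1,\ell_2$. The aside that $\theta$ is ``exactly the ratio of the residual slopes $\Theta_\pm$'' is loose, because $\theta=\frac{(\tmaxj-t)/\Theta_+\adj}{(\tmaxj-t)/\Theta_+\adj+(t-\tminj)/\Theta_-\adj}$ also depends on $t$, but nothing in the proof rests on it.
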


Next,  in Section~\ref{sec:generic-LP}, 
we analyze the setting in Section~\ref{sec:abstract-LP}.
 Then we apply it to the ads LP~\eqref{eqn:ads-LP-def} and the subscription LP~\eqref{eqn:sub-LP-def} in Sections~\ref{sec:LP-ads} and~\ref{sec:LP-subscription}, respectively.

\subsubsection{A generic one-equality parametric LP primitive}
\label{sec:generic-LP}

For notational simplicity, we first state the one-equality LP primitive
for a fixed value of the parameter and suppress the dependence on
\(\theta\).  
Let
\begin{align}
    F\opt
=
\max\left\{
\sum_{a \in \cA} y_a \phi(a)
:
y \in \Delta(\cA),\
\sum_{a \in \cA} y_a d(a) = 0
\right\}, \label{eqn:F-opt}
\end{align}
and
\begin{align}
    g(\nu)
=
\max_{a \in \cA}
\left\{
\phi(a)+\nu d(a)
\right\} \label{eqn:g-theta-def}
\end{align}
so that
\[
F\opt = \min_{\nu \in \R} g(\nu).
\]

We refer to
\begin{align*}
g(\nu)
:=
\max_{a\in\cA}
\{
\phi(a)+\nu d(a)
\}
\end{align*}
as the \emph{full dual}, because it is the   dual of the one-constraint primal problem over the full action set $\cA$.
For any finite candidate set $\cC \subseteq \cA$, we refer to
\begin{align}
g_{\cC}(\nu)
:=
\max_{a\in\cC}
\{
\phi(a)+\nu d(a)
\}
\label{eqn:g-theta-C-def}
\end{align}
as the corresponding \emph{restricted dual}, because it is the dual of the same one-constraint primal problem restricted to the candidate set $\cC$.

For later reference, for any finite candidate set $\cC \subseteq \cA$, define
\begin{align}
\Phi_{\mc{C}}
:=
\max_{a \in \cC} \phi(a)
-
\min_{a \in \cC} \phi(a),
\qquad
D
:=
\max_{a \in \cA} |d(a)|.
\label{eqn:Phi-G-theta-def}
\end{align}
In the bisection-based primitive below, the initial dual interval is taken as an input,
so only $D$ enters the final approximation bound.

Suppose  we are given a compact interval
\[
\bar{\mc{J}}
=
[L_{0},U_{0}]
\subseteq \R,
\]
and two actions $b^-_{0}, b^+_{0} \in \cA$ such that
\begin{align}
d(b^-_{0}) < 0 < d(b^+_{0}),
\label{eqn:generic-init-sign}
\end{align}
and
\begin{align}
\phi(b^-_{0}) + L_{0} d(b^-_{0})
\ge g(0),
\qquad
\phi(b^+_{0}) + U_{0} d(b^+_{0})
\ge g(0).
\label{eqn:generic-init-domination}
\end{align}

First, in Lemma~\ref{lem:dual-localization},
we show that the above endpoint certificates imply that the full dual can be minimized over the compact interval $\bar{\mc{J}}$.

\begin{lemma}
\label{lem:dual-localization}
 Assume that $\bar{\mc{J}} = [L_{0},U_{0}]$ and
$b^-_{0}, b^+_{0} \in \cA$
satisfy~\eqref{eqn:generic-init-sign}--\eqref{eqn:generic-init-domination}.
Then
\[
\min_{\nu \in \R} g(\nu)
=
\min_{\nu \in \bar{\mc{J}}} g(\nu).
\]
\end{lemma}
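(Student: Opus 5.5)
The plan is to show that $g$ is nondecreasing to the right of $U_0$ and nonincreasing to the left of $L_0$, with values outside $\bar{\mc{J}}$ bounded below by $g(0)$. Since $g$ is a pointwise maximum of affine functions of $\nu$, it is convex. For $\nu \ge U_{0}$ we use the action $b^+_{0}$: because $d(b^+_{0})>0$,
\[
g(\nu) \ge \phi(b^+_{0}) + \nu d(b^+_{0}) \ge \phi(b^+_{0}) + U_{0} d(b^+_{0}) \ge g(0).
\]
Symmetrically, for $\nu \le L_{0}$ the action $b^-_{0}$ has $d(b^-_{0})<0$, so $\phi(b^-_{0})+\nu d(b^-_{0}) \ge \phi(b^-_{0})+L_0 d(b^-_{0}) \ge g(0)$. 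Hence $g(\nu)\ge g(0)$ for every $\nu \notin (L_0,U_0)$.

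Next we need $0\in\bar{\mc{J}}$, or an equivalent argument. Suppose $U_0<0$. Then $\nu=0 \ge U_0$, so the first inequality gives $\phi(b^+_0) \ge \phi(b^+_0)+U_0 d(b^+_0)$, which is consistent. A cleaner route avoids this case analysis by using convexity directly. Let $\nu^\ast$ be any point (if the infimum is not attained, take a minimizing sequence). Then:
\begin{itemize}
\item If $\nu^\ast>U_0$, the argument above gives $g(\nu^\ast)\ge \phi(b^+_0)+U_0 d(b^+_0)$. Moreover, $\min_{\bar{\mc J}} g \le g(U_0)$, and we want $g(U_0)\le g(\nu^\ast)$.
\item To get this, compare with $0$. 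If $0\le U_0$, convexity on the segment $[0,\nu^\ast]\ni U_0$ gives $g(U_0)\le \max\{g(0),g(\nu^\ast)\}\le g(\nu^\ast)$, using $g(\nu^\ast)\ge g(0)$.
\item If instead $U_0<0$, we use the bound at $\nu=0$ itself. Take the point $0\ge U_0$ and the point $L_0\le U_0$. Then $g(\nu)\ge g(0)$ holds both at $L_0$ and at every $\nu\ge U_0$, including $\nu=0$ trivially.
\end{itemize}
So the expected general principle is: $g\ge g(0)$ outside $(L_0,U_0)$, and $g$ is convex. If $0\in[L_0,U_0]$, this yields $\inf_{\R} g = \inf_{\bar{\mc J}} g$ directly, since any $\nu$ outside gives $g(\nu)\ge g(0)\ge \min_{\bar{\mc J}}g$.

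The main obstacle is therefore showing $0\in\bar{\mc{J}}$, i.e.\ $L_0\le 0\le U_0$. We would derive it from \eqref{eqn:generic-init-domination}. Since $g(0)\ge\phi(b^+_0)$, we get $\phi(b^+_0)+U_0 d(b^+_0)\ge g(0)\ge \phi(b^+_0)$, hence $U_0 d(b^+_0)\ge0$, and thus $U_0\ge 0$ because $d(b^+_0)>0$. Likewise $L_0 d(b^-_0)\ge 0$ with $d(b^-_0)<0$ gives $L_0\le 0$.

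With $0\in\bar{\mc J}$ established, the proof closes in four steps:
\begin{enumerate}
\item Every $\nu\notin\bar{\mc J}$ satisfies $g(\nu)\ge g(0)\ge \min_{\bar{\mc J}} g$.
\item The minimum over the compact set $\bar{\mc J}$ is attained, because $g$ is a finite maximum of affine functions; if $\cA$ is infinite, $g$ is a convex, finite-valued supremum, hence continuous.
\item Together, steps 1 and 2 give $\min_{\R} g=\min_{\bar{\mc J}} g$.
\item This also shows the minimum over $\R$ exists. Finiteness of $g$ holds because $\cA=\mc A_\kappa$ is finite.
\end{enumerate}
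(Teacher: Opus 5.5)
Your proof is correct and follows essentially the paper's argument: both lower-bound $g$ outside $\bar{\mc J}$ by the certificate lines $\phi(b_0^\pm)+\nu d(b_0^\pm)$, which by the sign conditions and \eqref{eqn:generic-init-domination} stay above $g(0)$. The only difference is the closing step. The paper uses $g(0)\ge F\opt$ to conclude that no minimizer lies outside $\bar{\mc J}$, taking attainment of $\min_\nu g$ from LP duality. You instead show $0\in\bar{\mc J}$ and use compactness, which is slightly more self-contained and makes your convexity/case-split detour for $U_0<0$ unnecessary.
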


We now describe a  bisection routine.
Starting from
\[
L_{0},\ U_{0},\ b^-_{0},\ b^+_{0},
\qquad
\cC^{\mathrm{bis}}_{0}
:=
\{b^-_{0},b^+_{0}\},
\]
we iterate as follows.
At iteration $n \ge 0$, define the midpoint
\[
m_{n}
:=
\frac{L_{n}+U_{n}}{2},
\]
query
\[
a_{n}
\in
\mathcal \argmax_{a\in\cA}
\{
\phi(a)+ m_n d(a)
\},
\]
and enlarge the candidate set by
\[
\cC^{\mathrm{bis}}_{n+1}
:=
\cC^{\mathrm{bis}}_{n}
\cup
\{a_{n}\}.
\]
If $d(a_{n})=0$, the routine terminates.
If $d(a_{n})<0$, we update
\[
L_{n+1}:=m_{n},
\qquad
U_{n+1}:=U_{n},
\qquad
b^-_{n+1}:=a_{n},
\qquad
b^+_{n+1}:=b^+_{n}.
\]
If $d(a_{n})>0$, we update
\[
L_{n+1}:=L_{n},
\qquad
U_{n+1}:=m_{n},
\qquad
b^-_{n+1}:=b^-_{n},
\qquad
b^+_{n+1}:=a_{n}.
\]

For a fixed iteration budget $N \ge 0$, let
$\cC^{\mathrm{bis}}_{N}$
denote the final candidate set returned by this routine after at most $N$ iterations,
and define the restricted primal value
\begin{align}
F^{N}
:=
\max\left\{
\sum_{a \in \cC^{\mathrm{bis}}_{N}} y_a \phi(a)
:
y \in \Delta(\cC^{\mathrm{bis}}_{N}),\
\sum_{a \in \cC^{\mathrm{bis}}_{N}} y_a d(a)=0
\right\}.
\label{eqn:LP-resctricted}
\end{align}

Lemma~\ref{lemma:generic-fixed-theta} 
gives the pointwise guarantee of the maintained-certificate bisection routine.

\begin{lemma}
\label{lemma:generic-fixed-theta}
Assume that
$\bar{\mc{J}} = [L_{0},U_{0}]$ and
$b^-_{0}, b^+_{0} \in \cA$
satisfy~\eqref{eqn:generic-init-sign}--\eqref{eqn:generic-init-domination}.
Run the above bisection routine for at most $N$ iterations,
and let $F^{N}$ be the restricted-LP value in~\eqref{eqn:LP-resctricted}.
If the routine terminates early because
$d(a_{n})=0$ for some iteration $n$, then
\[
F^{N}=F\opt.
\]
Otherwise,
\[
F^{N}
\ge
F\opt
-
\frac{D}{2}
\bigl(U_{0}-L_{0}\bigr) 2^{-N}.
\]
\end{lemma}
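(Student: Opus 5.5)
The plan is to work with the restricted dual $g_{\cC}$ from~\eqref{eqn:g-theta-C-def}. By LP duality for the one-equality restricted problem, $F^{N} = \min_{\nu\in\R} g_{\cC^{\mathrm{bis}}_N}(\nu)$. Feasibility of the restricted LP is not in doubt: $\cC^{\mathrm{bis}}_N$ always contains $b^-_0$ and $b^+_0$, whose residuals have opposite signs. The goal is therefore to show that $g_{\cC}$ is uniformly close to the full dual $g$ near the bracket, and then to invoke Lemma~\ref{lem:dual-localization} to replace $\min_\R g$ by $\min_{\bar{\mc J}} g$.

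\textbf{Early-termination case.} If $d(a_n)=0$ at some step, then $a_n$ is an exact maximizer of $\phi+m_n d$ over $\cA$, so $g(m_n)=\phi(a_n)$. The point mass on $a_n$ is feasible for the restricted LP, which gives $F^N\ge \phi(a_n)=g(m_n)\ge \min g = F\opt$. The reverse inequality $F^N\le F\opt$ holds trivially, because restricting the action set can only lower the primal value.

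\textbf{Invariants of the bisection.} Next I would establish two invariants for every $n$:
\begin{itemize}
\item[(i)] $d(b^-_n)<0<d(b^+_n)$;
\item[(ii)] $\phi(b^-_n)+L_n d(b^-_n)\ge g(L_n)$ and $\phi(b^+_n)+U_n d(b^+_n)\ge g(U_n)$; in words, $b^-_n$ is a maximizer at $L_n$ and $b^+_n$ is a maximizer at $U_n$.
\end{itemize}
Both are preserved by the update rule, since the new bracket endpoint is $m_n$ and the new certificate $a_n$ maximizes at $m_n$. At $n=0$ one needs (ii) in the stronger form of~\eqref{eqn:generic-init-domination}. There I expect the initialization to be the delicate point. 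The inequality $\ge g(0)$ in~\eqref{eqn:generic-init-domination} is not the same as being a maximizer at $L_0$. So rather than use (ii) literally at $n=0$, I would rework the argument to use only what is actually needed: lines that upper-bound... more precisely, the restricted dual at the endpoints dominates the value $\min_{\bar{\mc J}} g$. By Lemma~\ref{lem:dual-localization}, that value is at most $g(0)$ when $0\in\bar{\mc J}$. This is the main obstacle, and I would handle it by treating the initial endpoints separately from the updated ones.

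\textbf{Convexity argument.} With the invariants in hand, I would use that $g$ is convex and piecewise linear. The sign of $d(a_n)$ is a subgradient of $g$ at $m_n$, so every discarded half-interval contains no point with smaller $g$-value than the retained bracket. Hence $\min_{\bar{\mc J}} g = \min_{[L_N,U_N]} g$, up to the initialization caveat above.

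On $[L_N,U_N]$, I would consider the two lines $\ell^\pm(\nu)=\phi(b^\pm_N)+\nu d(b^\pm_N)$. Both lie below $g_{\cC}\le g$. Because of their opposite slopes, $\max(\ell^-,\ell^+)$ has its minimum $\nu^\ast$ inside the bracket (or the bracket endpoint is the minimizer). I would then bound $F^N\ge \min_\nu \max(\ell^-,\ell^+)$ and compare this to $g$ at the endpoints. Since $\ell^-(L_N)\ge g(L_N)$ and the slopes are bounded by $D$, one gets
\begin{align*}
\max(\ell^-,\ell^+)(\nu)\ge \min_{[L_N,U_N]} g - D\,\mathrm{dist}(\nu,\{L_N,U_N\}) \ge F\opt - \tfrac{D}{2}(U_N-L_N).
\end{align*}
The factor $D/2$ comes from the fact that any point of the bracket lies within $(U_N-L_N)/2$ of an endpoint. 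Finally, substituting $U_N-L_N=2^{-N}(U_0-L_0)$ yields the stated bound.
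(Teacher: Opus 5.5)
Your proposal is correct and follows the paper's argument. The paper maintains the same sign and width invariants. It states the certificate invariant from the outset in the weakened form $\phi(b_n^-)+L_n d(b_n^-)\ge F^\star$ and $\phi(b_n^+)+U_n d(b_n^+)\ge F^\star$ that you arrive at, so the initial endpoints need no separate treatment because $g(0)\ge F^\star$. It then lower-bounds the restricted dual by the two certificate lines exactly as you do.

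One tightening: you assert that the crossing of $\ell^-$ and $\ell^+$ lies in $[L_N,U_N]$, but this does not follow from opposite slopes alone. The paper avoids the claim by using monotonicity instead: $\ell^-$ is decreasing, so $\ell^-(\nu)\ge \ell^-(L_N)\ge F^\star$ for $\nu\le L_N$, and symmetrically $\ell^+(\nu)\ge F^\star$ for $\nu\ge U_N$. Your bracket-localization step $\min_{\bar{\mathcal J}} g=\min_{[L_N,U_N]}g$ is likewise not needed for the final bound.
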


\paragraph{Parameter-dependent notation.}
In what follows, we use the generic one-equality LP as a
parameterized primitive.  Thus, the reward, residual, candidate set, and
the constants used in the dual-search certificate may all depend on the
parameter \(\theta\).  To make this dependence explicit, we write the
optimal value of \eqref{eqn:F-opt} as
\begin{align*}
    F\opt(\theta)
:=
\max_{y\in\Delta(\mathcal A)}
\left\{
\sum_{a\in\mathcal A} y_a \phi(a, \theta)
:
\sum_{a\in\mathcal A} y_a d(a, \theta)=0
\right\}. 
\label{eqn:F-opt-para}
\end{align*}
Its dual is
\[
g(\nu, \theta)
:=
\max_{a\in\mathcal A}
\left\{
\phi(a, \theta)+\nu d(a, \theta)
\right\},
\qquad
F\opt(\theta)=\min_{\nu\in\mathbb R} g(\nu, \theta).
\]
Similarly, for a finite candidate set \(C(\theta)\subseteq\mathcal A\), we
define the parameter-dependent reward spread and residual bound by
\[
\Phi_C(\theta)
:=
\max_{a\in C(\theta)}\phi(a, \theta)
-
\min_{a\in C(\theta)}\phi(a, \theta),
\qquad
D(\theta)
:=
\max_{a\in\mathcal A}|d(a, \theta)|.
\]
All bisection intervals,   actions, generated candidate sets,
restricted primal values, and postprocessed solutions below are therefore
understood to be evaluated pointwise at the same parameter value
\(\theta\).  

Lemma~\ref{lemma:grid-generic} is the outer-grid counterpart of Lemma~\ref{lemma:generic-fixed-theta}.

\begin{lemma}
\label{lemma:grid-generic}
Suppose $I$ is a compact interval, $F\opt$ is $W_F$-Lipschitz on $I$, and
$\mathcal G \subseteq I$ is a finite   grid with  
\[
\Delta_{\mathcal G}
:=
\max_{\theta \in I}\min_{\theta_G \in \mathcal G} |\theta-\theta_G|.
\]
For each     $\theta \in \mathcal G$, suppose we are given
an initial interval
\[
\bar{\mc{J}}(\theta)
=
[L_0(\theta),U_0(\theta)],
\]
and actions
$b^-(\theta), b^+(\theta) \in \cA$
satisfying~\eqref{eqn:generic-init-sign}--\eqref{eqn:generic-init-domination}
with $g(\cdot)=g(\cdot,\theta)$.

Define
\[
C^{\mathrm{dual}}
:=
\max_{\theta \in \mathcal G}
\set{
\frac{D(\theta)}{2}
\bigl(U_0(\theta)-L_0(\theta)\bigr)2^{-N(\theta)}}.
\]
Then
\[
\max_{\theta \in \mathcal G} F^{N(\theta)}(\theta)
\ge
\max_{\theta \in I} F\opt(\theta)
-
W_F \Delta_{\mathcal G}
-
C^{\mathrm{dual}},
\] where $F^{N(\theta)}(\theta)$ denotes the   value of~\eqref{eqn:LP-resctricted} pointwise at parameter $\theta$
after $N(\theta)$   bisection iterations.
\end{lemma}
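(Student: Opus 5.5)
The plan is to combine the pointwise bisection guarantee of Lemma~\ref{lemma:generic-fixed-theta} with a nearest-grid-point argument that uses the Lipschitz property of $F\opt$. Since $I$ is compact and $F\opt$ is $W_F$-Lipschitz, hence continuous, on $I$, the maximum $\max_{\theta\in I}F\opt(\theta)$ is attained at some $\theta^\star\in I$.

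\textbf{Step 1 (grid discretization).} By the definition of $\Delta_{\mathcal G}$, there is a grid point $\theta_G\in\mathcal G$ with $|\theta^\star-\theta_G|\le\Delta_{\mathcal G}$. The Lipschitz condition then gives
\[
F\opt(\theta_G)\ge F\opt(\theta^\star)-W_F\Delta_{\mathcal G}.
\]

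\textbf{Step 2 (inner bisection at $\theta_G$).} At the fixed parameter $\theta_G$, the hypotheses \eqref{eqn:generic-init-sign}--\eqref{eqn:generic-init-domination} hold with $g(\cdot)=g(\cdot,\theta_G)$. So Lemma~\ref{lemma:generic-fixed-theta} applies with budget $N(\theta_G)$. It returns either exactness, $F^{N(\theta_G)}(\theta_G)=F\opt(\theta_G)$, or the bound
\[
F^{N(\theta_G)}(\theta_G)\ge F\opt(\theta_G)-\tfrac{D(\theta_G)}{2}\bigl(U_0(\theta_G)-L_0(\theta_G)\bigr)2^{-N(\theta_G)}.
\]
In both cases the loss is at most $C^{\mathrm{dual}}$, because $C^{\mathrm{dual}}$ is the maximum of these terms over $\mathcal G$ and each term is nonnegative.

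\textbf{Step 3 (chaining).} Combining the two steps,
\[
\max_{\theta\in\mathcal G}F^{N(\theta)}(\theta)\ge F^{N(\theta_G)}(\theta_G)\ge F\opt(\theta^\star)-W_F\Delta_{\mathcal G}-C^{\mathrm{dual}},
\]
which is the claim.

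The only point needing care is that the per-$\theta$ quantities ($D(\theta)$, the interval, and the certificates) are all evaluated at the single parameter $\theta_G$, so that Lemma~\ref{lemma:generic-fixed-theta} is applied pointwise without any uniformity issue. The early-termination case also has to be covered, and it is trivially within the bound. No step should be a genuine obstacle.
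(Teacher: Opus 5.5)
Your proposal is correct and follows the paper's proof essentially step for step: you take a maximizer $\theta^\star$, pass to a nearest grid point using the Lipschitz bound, apply Lemma~\ref{lemma:generic-fixed-theta} at that grid point, bound the resulting loss by $C^{\mathrm{dual}}$, and chain through $\max_{\theta\in\mathcal G}$. You also say explicitly that the early-termination case lies within the bound because the dual term is nonnegative; the paper leaves this implicit.
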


Next, we   apply the above results to 
the ads LP~\eqref{eqn:ads-LP-def}.

\subsubsection{Specialization to the ads LP}
\label{sec:LP-ads}

Recall the definition of $G\adj(t, \mu)$ in~\eqref{eq:ads-fixed-mu}
and $G\adj(t) = 
\min_{\mu \in \R} G\adj(t, \mu)$
in~\eqref{eqn:ads-duality}.
The next result specializes Lemmas~\ref{lem:dual-localization} and~\ref{lemma:generic-fixed-theta} 
to the fixed-$t$ ads LP and 
obtains an explicit interval in model parameters 
using two singleton  assortments.

Fix an interior ratio
$t\in (t_{\min}^{(j)},t_{\max}^{(j)})$.
Choose     content families $\lplus := \lplus\jj,\lminus := \lminus\jj \in[L]$ from~\eqref{eqn:two-prod-pair-def}
with $\alpha\jj_{\lplus} > 0 , \alpha\jj_{\lminus} > 0$
(we drop the dependence on $(j,t)$ for simplicity)
such that
\[
\frac{1}{u_{\lplus }^{(j)}}<t<\frac{1}{u_{\lminus}^{(j)}},
\qquad
A_t^-:=\{\lplus \},\qquad A_t^+:=\{\lminus\}.
\]
Recall from~\eqref{eqn:ads-dual-interval-algo} that
\begin{align}
    \bar{\mc{J}}\adj(t)
=\left[
\frac{-(1+\alpha_{\lplus }^{(j)}) \bar\Phi\adj}
{\alpha_{\lplus }^{(j)}(t u_{\lplus }^{(j)}-1)},
\frac{(1+\alpha_{\lminus}^{(j)}) \bar\Phi\adj}
{\alpha_{\lminus}^{(j)}(1-t u_{\lminus}^{(j)})}
\right],
\label{eqn:J-j-a-def}
\end{align}
where $\bar\Phi\adj
=
\lambda\jj(\rads\sigma+\gamma_{\max})+ \eta_{\max}$ 
is defined in~\eqref{eqn:Phi-ads-sub-def}.

\begin{lemma} 
For any $j$ and $t\in (t_{\min}^{(j)},t_{\max}^{(j)})$,
\begin{align*}
G\adj(t) = 
\min_{\mu \in \R} G\adj(t, \mu) = 
\min_{\mu \in \bar{\mc{J}}\adj(t)} G\adj(t, \mu).
\end{align*}
 \label{lemma-explict-dual-interval-ads}
\end{lemma}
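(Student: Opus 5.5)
The plan is to reduce the claim to Lemma~\ref{lem:dual-localization}, applied to the fixed-$t$ ads LP~\eqref{eqn:ads-LP-def}. Here the action set is $\mc{A}_\kappa$, the reward is $\phi(A)=G\adj(t,A)$, and the residual is $d(A)=x^{(j)}(A)-t u^{(j)}(A)$. With these choices the full dual $g(\mu)$ is exactly $G\adj(t,\mu)$ in~\eqref{eq:ads-fixed-mu}, and by~\eqref{eqn:ads-duality} its minimum over $\R$ equals $G\adj(t)$. It therefore suffices to check the sign condition~\eqref{eqn:generic-init-sign} and the domination condition~\eqref{eqn:generic-init-domination} for $b^-_0=A_t^-=\{\lplus\}$, $b^+_0=A_t^+=\{\lminus\}$ and the interval $\bar{\mc{J}}\adj(t)$ in~\eqref{eqn:J-j-a-def}.

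\textbf{Sign condition.} For a singleton $\{\ell\}$ with $\alpha_\ell^{(j)}>0$, the residual is
\[
d(\{\ell\})=\frac{\alpha_\ell^{(j)}}{1+\alpha_\ell^{(j)}}\bigl(1-t u_\ell^{(j)}\bigr).
\]
Since $t>t_{\min}^{(j)}=1/u^{(j)}_{\lplus}$, we get $d(\{\lplus\})<0$. Since $t<t_{\max}^{(j)}=1/u^{(j)}_{\lminus}$, we get $d(\{\lminus\})>0$. This also yields the inverse-slope identities below~\eqref{eqn:Theta-ads-def}.

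\textbf{Domination condition.} This is the main step. I first bound the reward uniformly. For every $A$, $G\adj(t,A)$ is an MNL-weighted average of the per-family rewards $\lambda\jj\bar F\jj(\sigma t)(\rads\sigma-\gamma_\ell\I{\ell\notin S})-\eta_\ell\beta\adjl\I{\ell\in S}$, with total weight $x^{(j)}(A)\le 1$. Each such reward lies in $[-\lambda\jj\gamma_{\max}-\eta_{\max},\,\lambda\jj\rads\sigma]$, using $\beta\le 1$ and $\bar F\le 1$, and the weighted sum inherits the same bounds because the zero weight on the no-item alternative keeps everything inside the interval. Hence
\[
G\adj(t,A)\in[-\bar\Phi\adj+\lambda\jj\rads\sigma,\;\lambda\jj\rads\sigma],
\qquad\text{so}\qquad
\max_A\phi(A)-\min_A\phi(A)\le\bar\Phi\adj .
\]
In particular $g(0)=\max_A\phi(A)\le \phi(b)+\bar\Phi\adj$ for any action $b$.

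Now take the endpoint $U_0=\bar\Phi\adj/d(\{\lminus\})$. Then $U_0\,d(\{\lminus\})=\bar\Phi\adj$, so $\phi(\{\lminus\})+U_0\,d(\{\lminus\})\ge g(0)$. Symmetrically, $L_0=-\bar\Phi\adj/|d(\{\lplus\})|$ gives $\phi(\{\lplus\})+L_0\,d(\{\lplus\})\ge g(0)$. Substituting $d(\{\ell\})$ recovers exactly the endpoints in~\eqref{eqn:J-j-a-def}, and $L_0<0<U_0$, so the interval is nondegenerate.

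\textbf{Conclusion.} Lemma~\ref{lem:dual-localization} then gives $\min_{\R}g=\min_{\bar{\mc{J}}\adj(t)}g$. The intuition is that for $\mu>U_0$ the action $\{\lminus\}$ gives $g(\mu)>g(0)$, and similarly for $\mu<L_0$, using convexity of $g$. Combined with~\eqref{eqn:ads-duality}, this proves the lemma.

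The only delicate point is the uniform spread bound: it must cover all $A$, including the empty assortment with $\phi=0$. It does, because $0$ lies in the stated range.
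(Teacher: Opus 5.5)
Your proof is correct and follows essentially the same route as the paper: you reduce to Lemma~\ref{lem:dual-localization} using the two singleton certificates and a uniform bound on the reward spread. The only cosmetic difference is that the paper first localizes with the exact spread $\Phi_t^{(a,j)}$ and then enlarges the interval via $\Phi_t^{(a,j)}\le\bar\Phi^{(a,j)}$, whereas you verify the domination condition directly with $\bar\Phi^{(a,j)}$, as the paper itself does in the proof of Lemma~\ref{lem:ads-bisection-guarantee}. One harmless inaccuracy: your claim $L_0<0<U_0$ fails when $\bar\Phi^{(a,j)}=0$, but nondegeneracy of the interval is never used.
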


Once the ads dual variable has been localized to a compact interval~\eqref{eqn:J-j-a-def}, each   subproblem~\eqref{eq:ads-fixed-mu}
is  a cardinality-constrained MNL assortment problem with fixed costs. 
The next lemma uses that reduction  to obtain a guarantee for the fixed-$(j,t)$ ads branch.

\begin{lemma}
\label{lem:ads-bisection-guarantee} %
Fix an interior ratio $t \in (t_{\min}^{(j)}, t_{\max}^{(j)})$. \\
Let  $(\ghat(t), \yhat(t) )\gets$~\ref{alg:mnl-fixed-jt-ads}$(j,t,S,\bm{\beta}, \bar{\mc{J}}\adj(t), K)$.
 Then
\begin{align}
    \ghat(t)
\ge
\geadj(t)
-
\frac{D^{(a,j)}}{2}\,
|\bar{\mc{J}}\adj(t)|\,2^{-K},
\label{eqn:ads-opt-gap-in-K}
\end{align} where $\geadj(t)$ is defined  in~\eqref{eqn:ads-LP-def}.
\end{lemma}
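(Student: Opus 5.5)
The plan is to obtain Lemma~\ref{lem:ads-bisection-guarantee} as a direct instance of the generic primitive, Lemma~\ref{lemma:generic-fixed-theta}. The identification is as follows. The action set is $\cA=\mc{A}_\kappa$. The reward is $\phi(A)=G\adj(t,A)$, which is affine in $\bm{y}$ because $\padj(t)=\bar F\jj(\sigma t)$ is fixed once $t$ is fixed. The residual is $d(A)=x^{(j)}(A)-t\,u^{(j)}(A)$. The initial interval is $[L_0,U_0]=\bar{\mc{J}}\adj(t)$, and the initial certificates are $b^-_0=A_t^-=\{\lplus\}$ and $b^+_0=A_t^+=\{\lminus\}$. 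For each $\mu$, the dual function $g(\mu)$ equals $G\adj(t,\mu)$ in~\eqref{eq:ads-fixed-mu}. The queried maximizer $a_n$ is therefore exactly the assortment returned by Algorithm~\ref{alg:mnl-oracle} inside Algorithm~\ref{alg:dual-bisection}. The bisection updates in Algorithm~\ref{alg:dual-bisection} match the generic routine line for line, including the early break when $d(A)=0$.

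The first hypothesis to check is the sign condition~\eqref{eqn:generic-init-sign}. For a singleton $\{\ell\}$ we have $d=\frac{\alpha_\ell(1-tu_\ell)}{1+\alpha_\ell}$. Since $t>1/u_{\lplus}$ this is negative for $\lplus$, and since $t<1/u_{\lminus}$ it is positive for $\lminus$. These are the slope identities behind~\eqref{eqn:Theta-ads-def}.

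The second hypothesis is the domination condition~\eqref{eqn:generic-init-domination}, and this is where the choice of $\bar\Phi\adj$ matters. The endpoint $L_0$ is chosen so that $L_0\,d(b^-_0)=\bar\Phi\adj$. Hence $\phi(b^-_0)+L_0 d(b^-_0)=\phi(b^-_0)+\bar\Phi\adj$, and I must show this is at least $g(0)=\max_A\phi(A)$. This follows once $\bar\Phi\adj$ bounds the reward spread $\max_A\phi(A)-\min_A\phi(A)$ over deterministic assortments. To see that bound, note $\phi(A)=\sum_{\ell\in A}p_\ell(A)c_\ell$ with coefficients $c_\ell\in[-\lambda\jj\gamma_{\max}-\eta_{\max},\,\lambda\jj\rads\sigma]$ and $\sum_\ell p_\ell\le1$. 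The same argument with $U_0$ and $b^+_0$ gives the other inequality. This step also underlies Lemma~\ref{lemma-explict-dual-interval-ads}, which I may simply invoke.

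The remaining subtlety is the candidate set. Algorithm~\ref{alg:mnl-fixed-jt-ads} initializes $\mathcal C_0=\{\emptyset,A^-,A^+\}$, while the generic routine starts from $\{b^-_0,b^+_0\}$. The set used by the algorithm is therefore a superset of $\cC^{\mathrm{bis}}_N$, so the restricted LP value computed by Algorithm~\ref{alg:lp-postprocessing} is at least $F^N$. Note also that $\ghat(t)$ is the value of that restricted LP. Its mixtures satisfy $d=0$, so they are feasible for~\eqref{eqn:ads-LP-def} and the value is well defined. The last ingredient is the residual bound $D\le D\adj=1+\tmaxj\uumaxj$, which holds because $0\le x^{(j)}(A)\le1$ and $0\le u^{(j)}(A)\le\uumaxj$. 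Lemma~\ref{lemma:generic-fixed-theta}, applied with $N=K$, then yields~\eqref{eqn:ads-opt-gap-in-K}; in the early-termination case the result is even exact.

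I expect the domination check to be the main obstacle. One has to verify carefully that $\bar\Phi\adj$ dominates the reward spread, including the sign of the $\eta_\ell\beta$ terms and the factor $\bar F\jj\le1$.
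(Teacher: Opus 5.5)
Your proposal is correct and follows the paper's proof essentially step for step. It uses the same instantiation of Lemma~\ref{lemma:generic-fixed-theta} with certificates $A_t^-$ and $A_t^+$, the same sign and domination checks through the spread bound $\Phi_t^{(a,j)}\le\bar\Phi^{(a,j)}$ (so that $L_0\,d(A_t^-)=U_0\,d(A_t^+)=\bar\Phi^{(a,j)}$), and the same residual bound $|d(A)|\le 1+t_{\max}^{(j)}U_{\max}^{(j)}$. You also observe that the algorithm's candidate set, which additionally contains $\emptyset$, is a superset of the generic routine's set, so the postprocessed value can only be larger; the paper leaves this implicit, and it is a useful clarification.
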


\subsubsection{Specialization to the subscription LP}
\label{sec:LP-subscription}

 Recall the definition of $G\subj(t, \xi)$ in~\eqref{eq:subs-dual-split}
 and $G\subj(t) = 
\max\set{\min_{\xi \in \R} G\subj(t, \xi), 0}$
in~\eqref{eqn:sub-duality}.
We next perform  a similar analysis for the subscription branch.

\begin{lemma}
Fix  a ratio
$t\in I^{(j)} \cup \set{\infty}$ and $p\in(0,\uumaxj)$.
Recall
\[
\uumaxj =\max_{B\in A_\kappa}u^{(j)}(B).
\] 
Define
\begin{align}
\bar{\mc{J}}\subj 
=
\left[
-\frac{\bar\Phi\subj}{p},
\frac{\bar\Phi\subj}{\uumaxj-p}
\right],
\label{eqn:J-j-s-def}
\end{align}
where $\bar\Phi\subj$ is defined in~\eqref{eqn:Phi-ads-sub-def}.
Then
\begin{align*}
\min_{\xi \in \R} G\subj(t, \xi) = 
\min_{\xi \in \bar{\mc{J}}\subj} G\subj(t, \xi).
\end{align*}
\label{lemma-explict-dual-interval-subscription}
\end{lemma}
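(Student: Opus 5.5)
We will derive the statement from Lemma~\ref{lem:dual-localization}, applied to the subscription LP inside~\eqref{eqn:sub-LP-def} at the fixed ratio $t$. The action set is $\mc{A}_\kappa$, the reward is $\phi(B)=G\subj(t,B)$, and the residual is $d(B)=u^{(j)}(B)-p$. We must produce two certificate actions with residuals of opposite signs and check the domination conditions~\eqref{eqn:generic-init-domination} at the endpoints $L_0=-\bar\Phi\subj/p$ and $U_0=\bar\Phi\subj/(\uumaxj-p)$. The natural choice is $b_0^-=\emptyset$, with $d(\emptyset)=-p<0$, and $b_0^+=B^+$, a utility-maximizing assortment with $d(B^+)=\uumaxj-p>0$. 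Both signs are strict because $p\in(0,\uumaxj)$.

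The key step is to bound the reward spread over all deterministic assortments. For any $B\in\mc{A}_\kappa$,
\[
G\subj(t,B)=\lambda\jj F\jj(\sigma t)\I{u^{(j)}(B)=p}\Bigl(p-\sum_{\ell\notin S}\gamma_\ell p_\ell^{(j)}(B)\Bigr)-\sum_{\ell\in S}\eta_\ell\beta\subjl p_\ell^{(j)}(B).
\]
However, the dualized form~\eqref{eq:subs-dual-split} writes the reward linearly as $c\subj(t,\xi)+\ldots$, so the constant $\lambda\jj pF\jj(\sigma t)$ is attached to every action. I will therefore work with the reward $\phi(B)=\lambda\jj pF\jj(\sigma t)-\sum_\ell p_\ell^{(j)}(B)\bigl(\lambda\jj F\jj(\sigma t)\gamma_\ell\I{\ell\notin S}+\eta_\ell\beta\subjl\I{\ell\in S}\bigr)$. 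This is the reward consistent with $G\subj(t,\xi)$, and it agrees with the indicator form on the feasible set $\{u=p\}$.

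Since $\sum_\ell p_\ell^{(j)}(B)\le 1$, $F\jj\le 1$, and $\beta\le 1$, we get
\[
\lambda\jj pF\jj(\sigma t)-\bar\Phi\subj\le\phi(B)\le\lambda\jj pF\jj(\sigma t)=\phi(\emptyset).
\]
Consequently $\max_B\phi(B)-\min_B\phi(B)\le\bar\Phi\subj$, and moreover $g(0)=\max_B\phi(B)=\phi(\emptyset)$.

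Now check~\eqref{eqn:generic-init-domination}. At the lower endpoint,
\[
\phi(\emptyset)+L_0d(\emptyset)=\phi(\emptyset)+\bar\Phi\subj\ge g(0).
\]
At the upper endpoint,
\[
\phi(B^+)+U_0(\uumaxj-p)=\phi(B^+)+\bar\Phi\subj\ge\phi(\emptyset)=g(0),
\]
using the spread bound. Lemma~\ref{lem:dual-localization} then gives $\min_{\xi\in\R}g(\xi)=\min_{\xi\in\bar{\mc{J}}\subj}g(\xi)$, and $g(\xi)=G\subj(t,\xi)$ by~\eqref{eq:subs-dual-split}. This conclusion holds regardless of the outer $\max\{\cdot,0\}$. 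The case $t=\infty$ is identical, with $F\jj(\sigma t)$ replaced by $1$.

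The main obstacle is this bookkeeping of the reward. The indicator $\I{u=p}$ in $\psubj(t,\cdot)$ makes the per-action reward discontinuous, so the spread must be computed for the affinely extended reward that appears in the dual~\eqref{eqn:duality-sub}, not for the raw indicator version. I will state explicitly that these two rewards coincide on feasible mixtures, so the primal value and its dual are unchanged.
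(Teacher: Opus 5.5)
Your proposal is correct and follows essentially the same route as the paper. You use the certificates $\emptyset$ and a utility-maximizing assortment $B^+$, the affine (indicator-free) per-assortment reward that appears in the dual, the spread bound $\bar\Phi^{(s,j)}$, and an appeal to Lemma~\ref{lem:dual-localization}. The only cosmetic difference is that the paper first localizes with the exact spread $\Phi_t^{(s,j)}$ and then enlarges the interval, whereas you verify the domination conditions directly at the $\bar\Phi^{(s,j)}$ endpoints using $\phi(\emptyset)=g(0)$; the two are equivalent.
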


After splitting the subscription dual into an affine term and a reduced assortment-maximization problem, we can apply the same 
strategy as on the ads side. The next lemma records the resulting guarantee for the fixed-$(j,t)$ subscription branch, with the additional affine-shift term made explicit.

\begin{lemma}
\label{lem:subscription-bisection-guarantee} %
Fix $p$  and $t \in I^{(j)} \cup \{\infty\}$.
Let 
$(\uumaxj, B^{(j,+)})
\gets$~\ref{alg:mnl-oracle}$(\bm\alpha^{(j)}, \bm{u}\jj, \kappa)$   and
\\ 
$(\ghat(t), \yhat(t)) \gets$~\ref{alg:mnl-fixed-jt-sub}($j,t,S,\bm{\beta},\bar{\mc{J}}\subj, K, B^{(j,+)},\uumaxj$). 
Suppose $p \in (0,\uumaxj)$. 
Then,
\[
\ghat(t)
\ge
\gesubj(t)
-
\frac{D\subj}{2}
|\bar{\mc{J}}\subj| 2^{-K}.
\]
\end{lemma}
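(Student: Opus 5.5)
The plan is to apply Lemma~\ref{lemma:generic-fixed-theta} to the subscription LP inside~\eqref{eqn:sub-LP-def}, following the ads-side argument of Lemma~\ref{lem:ads-bisection-guarantee}. We take the action set $\mc{A}=\mc{A}_\kappa$, reward $\phi(B)=G\subj(t,B)$, and residual $d(B)=u^{(j)}(B)-p$.

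First, we check the initial certificates required by~\eqref{eqn:generic-init-sign}--\eqref{eqn:generic-init-domination}. Choose $b_0^-=\emptyset$ and $b_0^+=B^{(j,+)}$. Then $d(\emptyset)=-p<0$ and $d(B^{(j,+)})=\uumaxj-p>0$, because $p\in(0,\uumaxj)$. The domination inequalities should follow from the reward-spread bound $\max_B\phi(B)-\min_B\phi(B)\le\bar\Phi\subj$. Indeed, the revenue part $\lambda\jj pF\jj(\sigma t)\I{u=p}$ and the cost part, which is bounded by $\lambda\jj\gamma_{\max}+\eta_{\max}$ times a flow of at most one, give this spread. One subtlety is the indicator in $\psubj$. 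On the feasible set $d=0$ the indicator equals one, so I would use the linear reward $\lambda\jj pF\jj(\sigma t)-(\text{costs})$, which agrees with the affine decomposition~\eqref{eq:subs-dual-split}. Hence
\[
\phi(\emptyset)+L_0 d(\emptyset)=\phi(\emptyset)+\bar\Phi\subj\ge g(0),
\]
and the analogous inequality holds at $U_0$. This is exactly Lemma~\ref{lemma-explict-dual-interval-subscription}, which I take as given, together with the localization Lemma~\ref{lem:dual-localization}.

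Second, the bisection routine in Algorithm~\ref{alg:dual-bisection} matches the generic routine. At each midpoint $\xi$, the maximizer of $\phi(B)+\xi d(B)$ equals $c\subj(t,\xi)+H\subj(t,\xi)$. Since the affine term does not depend on $B$, Algorithm~\ref{alg:mnl-oracle} with revenues $\bm R\subj(t,\xi)$ from~\eqref{eqn:R-C-def-sub} returns an exact argmax. The initial candidate set $\{B^-,B^+\}$ also matches, and the sign updates agree. Algorithm~\ref{alg:lp-postprocessing} then computes the restricted LP value $F^K$ over the generated set $\mc{C}$. Lemma~\ref{lemma:generic-fixed-theta} yields
\[
\gtilde\ge F\opt-\frac{D}{2}|\bar{\mc{J}}\subj|2^{-K},
\]
where $D=\max_B|u^{(j)}(B)-p|\le\max(p,\uumaxj-p)\le\uumaxj=D\subj$. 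If the routine terminates early, we get equality instead.

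Finally, apply the outer truncation at zero. By~\eqref{eqn:sub-LP-def}, $\gesubj(t)=\max\{F\opt,0\}$, and line~\ref{line:V-hat-after-max} sets $\ghat=\max\{\gtilde,0\}$. Because $x\mapsto\max\{x,0\}$ is monotone and $1$-Lipschitz, we obtain $\ghat\ge\max\{F\opt,0\}-\frac{D\subj}{2}|\bar{\mc{J}}\subj|2^{-K}$.

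The case $t=\infty$ is handled identically with $F\jj(\sigma t)$ replaced by $1$. The main step needing care is the reward linearization: the indicator $\I{u=p}$ must be replaced by one on the feasible set so that the LP is genuinely linear and the duality in~\eqref{eqn:duality-sub} applies.
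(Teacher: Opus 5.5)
Your proposal is correct and follows the paper's proof essentially step for step. The paper likewise uses the linearized reward $\phi_t(B)=\lambda^{(j)}c_t\bigl[p-\sum_{\ell\notin S}\gamma_\ell p_\ell^{(j)}(B)\bigr]-\sum_{\ell\in S}\eta_\ell\beta_\ell^{(s,j)}p_\ell^{(j)}(B)$, with $c_t=F^{(j)}(\sigma t)$ and $c_\infty=1$. It then certifies the endpoints $\emptyset$ and $B^{(j,+)}$ through the spread bound $\Phi_t^{(s,j)}\le\bar\Phi^{(s,j)}$, bounds $|d(B)|\le U_{\max}^{(j)}=D^{(s,j)}$, invokes Lemma~\ref{lemma:generic-fixed-theta}, and finishes with the $1$-Lipschitz clipping $x\mapsto x_+$.

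One wording point: under the linearized reward, the revenue term $\lambda^{(j)}pF^{(j)}(\sigma t)$ is constant in $B$ and contributes nothing to the spread. That is exactly why $\lambda^{(j)}\gamma_{\max}+\eta_{\max}$ suffices, so your sentence crediting the revenue part to the spread should be dropped.
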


\subsubsection{Additional lemmas}
\label{sec:additional-lemmas}

We now aggregate the fixed-$(j,t)$ ads and subscription guarantees over the sampled ratio grid for a fixed procurement decision $(S,\bm{\beta})$. The next lemma  combines the branchwise dual 
search errors and the subscription affine-shift budget into a single sampled-grid guarantee.

\begin{lemma}
\label{lem:G-gap-combined-ads-sub}
Fix an integer $K_T \ge 2$.
For each type $j\in[J]$, let $T\jj_{K_T}\subseteq I\jj$ be the ratio grid in
Algorithm~\ref{alg:mnl-solver-all}.
Let $\bar{\mc{J}}\adj(\infty) := [0,0]$
and $K\adj(\infty) = 0$.
For each  $t \in T\jj_{K_T} \cup \set{\infty},$
let
$(\ghat_{K_T}^{(a,j)}(t),  \bm y\adj(t) )\gets$~\ref{alg:mnl-fixed-jt-ads}$(j,t,S,\bm{\beta}, \bar{\mc{J}}\adj(t), K\adj(t))$ \\
and 
$(\ghat_{K_T}\subj(t), \bm {y}\subj(t)) \gets$~\ref{alg:mnl-fixed-jt-sub}($j,t,S,\bm{\beta},\bar{\mc{J}}\subj, K\subj, B^{(j,+)},\uumaxj$). 
Define
\[
\ghat\jj_{K_T}(t)
:=
\ghat\adj_{K_T}(t)+\ghat\subj_{K_T}(t).
\]
The subscript $K_T$
indicates dependence on the outer grid size, through the budgets 
$K\adj(t), K\subj$.
Then, for every $j\in[J]$ and every $t\in T\jj_{K_T}\cup \offset$,
\begin{align}
\label{eq:combined-error-at-t}
\ghat_{K_T}^{(j)}(t) \ge \geadj(t) + \gesubj(t) - \errork,
\end{align}
where
\begin{align}
\begin{small}
    \label{eqn:search-error-in-grid}
\errork
:=
\begin{cases}
\dfrac{D\adj}{2}  |\bar{\mc{J}}\adj(t)|  2^{-K\adj(t)},
& t=t_r^{(j)} \in T\jj_{K_T}\cap (t_{\min}\jj,t_{\max}\jj), \\[1.0ex]
0,
& t \in \{t_{\min}\jj,t_{\max}\jj,\infty\},
\end{cases}
+
\begin{cases}
\dfrac{D\subj}{2}  |\bar{\mc{J}}\subj|  2^{-K\subj},
& p\in(0,U_{\max}\jj), \\[1.0ex]
0,
& p\notin(0,U_{\max}\jj),
\end{cases}
\end{small}
\end{align}
where  $|\bar{\mc{J}}\adj(t)|, K\adj(t), |\bar{\mc{J}}\subj|, K\subj$ 
(all depend on $K_T$)
are defined in~\eqref{eqn:ads-dual-interval-width},~\eqref{eqn:inner-grid-size-ads},~\eqref{eqn:sub-dual-interval-width}, and~\eqref{eqn:inner-grid-size-sub},
respectively.
Here $D\adj,D\subj$ are defined in~\eqref{eqn:D-j-a-sdef}.

Consequently,
\begin{align}
\label{eq:grid-hat-error}
\sum_{j=1}^J \max_{t\in T\jj_{K_T} \cup \offset}
\ghat\jj_{K_T}(t)
\ge
\sum_{j=1}^J \max_{t\in T\jj_{K_T} \cup \offset} G\jj(t)
-
\sum_{j=1}^J \max_{t\in T\jj_{K_T} \cup \offset}
\errork .
\end{align}
\end{lemma}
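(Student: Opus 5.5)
The bound \eqref{eq:combined-error-at-t} is a pointwise statement, so I will fix $j$ and a grid value $t\in T\jj_{K_T}\cup\{\infty\}$ and handle the ads and subscription branches separately. Adding the two branchwise bounds gives \eqref{eq:combined-error-at-t}. The aggregate statement \eqref{eq:grid-hat-error} then follows by elementary manipulation of maxima.

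\emph{Ads branch.} There are three cases.
\begin{enumerate}
\item For an interior grid point $t=t_r\jj\in(t_{\min}\jj,t_{\max}\jj)$, Algorithm~\ref{alg:mnl-fixed-jt-ads} runs the bisection with interval $\bar{\mc J}\adj(t)$ and budget $K\adj(t)$. Lemma~\ref{lem:ads-bisection-guarantee} then gives directly
\[
\ghat\adj_{K_T}(t)\ge \geadj(t)-\tfrac{D\adj}{2}|\bar{\mc J}\adj(t)|2^{-K\adj(t)}.
\]
\item For $t\in\{t_{\min}\jj,t_{\max}\jj\}$, Algorithm~\ref{alg:mnl-ads-boundary} is exact. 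By the convex-combination observation preceding Lemma~\ref{lem:ratio-interval}, any feasible $\bm y$ with $x-tu=0$ at an endpoint ratio is supported on the empty assortment and on assortments drawn from $\mc M\jj(t)$. This holds because each nonempty $A$ has ratio equal to a weighted average of $1/u_\ell$, and the endpoint forces every weight to sit on extreme families. On that support the residual vanishes and $\mu$ plays no role, so the LP reduces to a single maximization over $A\subseteq\mc M\jj(t)$ (including $\emptyset$) of the MNL objective with revenues $R\adj_\ell(t,0)$. Algorithm~\ref{alg:mnl-oracle} solves this exactly, so the error is zero.
\item For $t=\infty$ the algorithm returns $0=\geadj(\infty)$, so again the error is zero.
\end{enumerate}
The delicate part is the endpoint case. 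I must check that the linear objective over mixtures is maximized at a vertex, which it is, and that restricting Algorithm~\ref{alg:mnl-oracle} to $\mc M\jj(t)$ with cap $\kappa$ covers exactly the feasible support.

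\emph{Subscription branch.} If $p\in(0,\uumaxj)$, Lemma~\ref{lem:subscription-bisection-guarantee} applies for every $t\in I\jj\cup\{\infty\}$ and gives the second summand of $\errork$. If $p\notin(0,\uumaxj)$, there are three cases.
\begin{enumerate}
\item If $p>\uumaxj$, the inner LP in \eqref{eqn:sub-LP-def} is infeasible and $\gesubj(t)=0$. The algorithm returns $0$, so the error is zero.
\item If $p=0$, the residual $u-0=0$ forces $u=0$. This forces zero flow by Assumption~\ref{assump:MNL}(\ref{assm:alpha-implies-u-support}), so the value is $\max\{\lambda\jj p F\jj(\sigma t),0\}=0$, which the algorithm returns exactly.
\item If $p=\uumaxj$, I invoke exactness of Algorithm~\ref{alg:mnl-sub-boundary} (Lemma~\ref{lem:sub-boundary-poly-simple}), together with the final clipping at $0$ in line~\ref{line:V-hat-after-max}. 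The clipping matches the outer $\max\{\cdot,0\}$ in \eqref{eqn:sub-LP-def}.
\end{enumerate}
Where bisection is used, the returned value is $\max\{\gtilde,0\}$. Since $\gtilde\ge \min_\xi G\subj(t,\xi)-\epsilon$, and $\max\{\cdot,0\}$ is monotone and $1$-Lipschitz, the bound survives the clipping. I expect this to be already built into Lemma~\ref{lem:subscription-bisection-guarantee}.

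\emph{Aggregation.} For each $j$, take $t^\dagger\in\arg\max_{t\in T\jj_{K_T}\cup\{\infty\}}G\jj(t)$. Then
\[
\max_t \ghat\jj_{K_T}(t)\ \ge\ \ghat\jj_{K_T}(t^\dagger)\ \ge\ G\jj(t^\dagger)-\mathrm{err}(t^\dagger)\ \ge\ \max_t G\jj(t)-\max_t \mathrm{err}(t).
\]
Summing over $j$ yields \eqref{eq:grid-hat-error}.

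I expect the main obstacle to be justifying exactness at the ads endpoints (case 2 of the ads branch). Everything else is a direct citation of Lemmas~\ref{lem:ads-bisection-guarantee} and~\ref{lem:subscription-bisection-guarantee}.
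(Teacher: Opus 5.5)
Your proposal is correct and follows essentially the same route as the paper: branchwise bounds from Lemmas~\ref{lem:ads-bisection-guarantee} and~\ref{lem:subscription-bisection-guarantee} at the bisection points; exactness of the boundary and trivial solvers everywhere else, with the ads endpoints handled through the convex-combination-of-$1/u_\ell^{(j)}$ observation; and then $\max_t(b_t-c_t)\ge\max_t b_t-\max_t c_t$, summed over $j$. Your endpoint argument is slightly more explicit than the paper's, because you note that a zero-residual mixture at $t\in\{t_{\min}^{(j)},t_{\max}^{(j)}\}$ must be supported on zero-residual assortments (all residuals share a sign there), and this is exactly what justifies restricting to $\mathcal{M}^{(j)}(t)$.
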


\subsubsection{An explicit Lipschitz constant}
\label{sec:lip-consant}

In this section, we provide an explicit Lipschitz constant so that ~\eqref{eqn:lip-cond-holds} holds.
Let
\begin{align}
\deltad 
\defeq 
\min_{A,B \in \mc{A}_\kappa }\set{
\min\{u\jj(A),u\jj(B)\}
\left|
\frac{x\jj(A)}{u\jj(A)}
-
\frac{x\jj(B)}{u\jj(B)}
\right|: \frac{x\jj(A)}{u\jj(A)} \ne \frac{x\jj(B)}{u\jj(B)},
u\jj(A) > 0, u\jj(B) > 0
}.
\label{eqn:delta-d-def}
\end{align}
If the set over which the minimum is taken is empty, define  $\deltad = \infty$; equivalently, the second term in~\eqref{eqn:L-ratio-def}
is interpreted as $0$.
For fixed \((S,\bm{\beta},\sigma,p)\), define
 \begin{align} 
  \lipsig
  :=
\lambda\jj\sigma W_{F\jj}(\rads\sigma+p+2\gamma_{\max})
+
\frac{2\bar\Phi\adj \umaxj}{\deltad} \quad
\text{ with }
\deltad \text{ defined in~\eqref{eqn:delta-d-def}}.
\label{eqn:L-ratio-def}
\end{align}
 Lemma~\ref{lem:mnl-global-ratio-lipschitz}  
 shows that
\(\lipsig \) is a valid global Lipschitz constant for
the fixed-ratio mapping \(t\mapsto G\jj(t)\).
\begin{lemma} 
\label{lem:mnl-global-ratio-lipschitz}
For every type \(j\) and every
\(t,t'\in I\jj\),
\[
\ABS{G\jj(t)-G\jj(t')}
\le
\lipsig |t-t'|.
\]
\end{lemma}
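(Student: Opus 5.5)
The plan is to split $G\jj(t)=G\adj(t)+G\subj(t)$ and bound each piece separately. For each piece, the change in $t$ enters in two ways: through the mode-choice probabilities $\bar F\jj(\sigma t)$ and $F\jj(\sigma t)$, and, for the ads branch only, through the ratio constraint $x\adj-tu\adj=0$. The first term of $\lipsig$ in~\eqref{eqn:L-ratio-def} will cover the probability effect and the second term the constraint effect.

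\emph{Subscription branch.} The feasible set $\{\bm z:u\subj(\bm z)=p\}$ does not depend on $t$. For a fixed $\bm z$, the objective $G\subj(t,\bm z)=\lambda\jj F\jj(\sigma t)\bigl(p-\sum_{\ell\notin S}\gamma_\ell f_\ell\subj\bigr)-\sum_{\ell\in S}\eta_\ell\beta\subjl f_\ell\subj$ depends on $t$ only through $F\jj(\sigma t)$. Since $\sum_\ell f_\ell\subj\le 1$, Assumption~\ref{assump:MNL}(\ref{item:assump-Lip}) gives
\[
|G\subj(t,\bm z)-G\subj(t',\bm z)|\le \lambda\jj\sigma W_{F\jj}(p+\gamma_{\max})|t-t'|.
\]
A supremum of uniformly Lipschitz functions is Lipschitz with the same constant, and so is its maximum with $0$. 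The same constant therefore holds for $G\subj(t)$.

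\emph{Ads branch: probability effect.} Holding $\bm y$ fixed, $G\adj(t,\bm y)$ depends on $t$ through $\bar F\jj(\sigma t)$, multiplied by a quantity bounded by $\rads\sigma+\gamma_{\max}$. This gives a contribution $\lambda\jj\sigma W_{F\jj}(\rads\sigma+\gamma_{\max})|t-t'|$. Added to the subscription bound, this yields the first term $\lambda\jj\sigma W_{F\jj}(\rads\sigma+p+2\gamma_{\max})|t-t'|$ of $\lipsig$.

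\emph{Ads branch: constraint effect (main obstacle).} It remains to show
\[
\max\{G\adj(t',\bm y'):d\adj(t',\bm y')=0\}\ \ge\ \max\{G\adj(t',\bm y):d\adj(t,\bm y)=0\}-\frac{2\bar\Phi\adj\umaxj}{\deltad}|t-t'|,
\]
with the objective evaluated at the same $t'$. By symmetry this gives the claim. The plan is a case split on $|t-t'|$.

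If $|t-t'|\ge\deltad/\umaxj$, the bound is trivial. Every deterministic assortment has reward in an interval of width at most $\bar\Phi\adj$ relative to the empty assortment, which is feasible for every $t$. Hence both optimal values lie within $2\bar\Phi\adj$ of each other, and $2\bar\Phi\adj\le 2\bar\Phi\adj\umaxj|t-t'|/\deltad$.

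If $|t-t'|<\deltad/\umaxj$, take an optimal $\bm y$ at $t$ and write its residual at $t'$ as $\sum_A y_A u\jj(A)\bigl(r_A-t'\bigr)=(t-t')U$, where $r_A=x\jj(A)/u\jj(A)$ and $U=u\adj(\bm y)\le\umaxj$. Say $t'>t$, so this residual is negative. Move a fraction $\theta$ of the mass on assortments with $r_A<t'$ onto the empty assortment, which has zero residual for every $t'$, choosing $\theta$ so that the residual becomes zero. The value loss is at most $\bar\Phi\adj\theta$. The key step is to lower-bound the negative residual mass $N=\sum_{r_A<t'}y_Au\jj(A)(t'-r_A)$ by a multiple of $\deltad$, so that $\theta=(t'-t)U/N$ is at most of order $\umaxj|t-t'|/\deltad$.

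I expect two points to need care here. First, the support of $\bm y$ may contain no assortment with ratio at least $t'$. In that case I would first mix in the singleton $\{\lminus\jj\}$, which has ratio $t_{\max}^{(j)}$, and bound the cost of doing so. Second, one must handle the support concentrating at the single ratio $t$. The definition~\eqref{eqn:delta-d-def}, which controls $\min\{u(A),u(B)\}|r_A-r_B|$ for distinct ratios, is what I would use to turn discrete ratio gaps into the factor $1/\deltad$. The factor $2$ in $2\bar\Phi\adj\umaxj/\deltad$ should absorb the reward change on the removed mass together with the mass added to restore feasibility.
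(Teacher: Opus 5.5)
Your subscription bound, the ads-side probability effect, and the large-$|t-t'|$ case are fine and match the paper. The gap is in the small-$|t-t'|$ ads step. That step is the only place the term $2\bar\Phi^{(a,j)}u^{(j)}_{\max}/(\delta^{(j)})^{\star}$ can come from, and the proposed repair does not give an $O(|t-t'|)$ loss there. The repair deletes a fraction $\theta$ of the mass on assortments with $r_A<t'$. It needs $N\ge c\,(\delta^{(j)})^{\star}U/u^{(j)}_{\max}$, and this is false in general: $(\delta^{(j)})^{\star}$ controls gaps between the ratios of \emph{two assortments}, whereas $N$ is built from the gaps $t'-r_A$ to the arbitrary number $t'$.

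Concretely, let the optimizer at $t$ be a two-point mixture of $A,B$ with $r_A<t<t'<r_B$, $t-r_A=\epsilon$, $t'-t=s$ and $\epsilon\ll s$. This is typical of optimal extreme points just to the right of the ratio of a highly profitable assortment. With utility weights $\pi_A,\pi_B$ (i.e., $y_Au^{(j)}(A)/U$), feasibility at $t$ forces $\pi_B=\epsilon/(r_B-r_A)$, and your $\theta$ is
\[
\theta \;=\; 1-\frac{\pi_B\,(r_B-t')}{\pi_A\,(t'-r_A)}\;\longrightarrow\;1 \qquad\text{as } \epsilon/s\to 0 .
\]
So you delete essentially all of $A$ and lose about $G^{(a,j)}(t',A)$, a quantity of order $\bar\Phi^{(a,j)}$, even though $|t-t'|=s$ is arbitrarily small. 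Refining the loss to $\theta\,\bar\Phi^{(a,j)}\sum_{r_A<t'}y_A$ does not help, since $y_A\approx 1$. Neither of your flagged points covers this case. There is support above $t'$, so the $\{\ell_-^{(j)}\}$ remedy is not triggered. No support assortment has ratio exactly $t$ either. The trouble is the near-degenerate situation in which the far side of $t'$ carries vanishing weight.

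The missing idea is to restore feasibility at $t'$ by shifting weight \emph{toward the far side} of $t'$, not by deleting near-side mass; $(\delta^{(j)})^{\star}$ is what controls the size of that shift. In the example, re-weighting the same pair to be feasible at $t'$ moves only $O(s)$ probability mass. The paper makes this systematic:
\begin{itemize}
\item It restricts to optimal extreme points of the one-equality LP, which have support at most two.
\item For a pair $A,B$ with $\rho_A<\rho_B$, the unique feasible weights $\omega_A(t),\omega_B(t)$ have denominator $D_{A,B}(t)=u^{(j)}(B)(\rho_B-t)+u^{(j)}(A)(t-\rho_A)\ge\min\{u^{(j)}(A),u^{(j)}(B)\}(\rho_B-\rho_A)\ge(\delta^{(j)})^{\star}$.
\item Hence $|\omega_A'|,|\omega_B'|\le u^{(j)}_{\max}/(\delta^{(j)})^{\star}$, and each pair value $V_{A,B}$ is Lipschitz with the full constant.
\item $G^{(a,j)}$ is then the maximum of these pair values (and $0$ from the empty assortment), patched across the finitely many ratio breakpoints.
\end{itemize}
To rescue your perturbation route, you would need a case split that adds mass on an assortment with ratio beyond $t'$ whenever the far-side weight is small. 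Its cost must be charged to a $(\delta^{(j)})^{\star}$-separated pair of assortments. The proposal contains no such step.
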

The constant $\deltad$~\eqref{eqn:delta-d-def}
is computable when $L$ is moderate.
To obtain a valid lower bound on $\deltad$ for larger instances, we  impose the additional
regularity condition.
Note that Algorithm~\ref{alg:mnl-solver-all} does not need $\lipsig$ as part of the input,
and the purpose of computing 
a valid $\lipsig$ is to obtain an explicit convergence guarantee.
\begin{assumption}
\label{assump:rational}
For each type $j$, the model parameters admit a rational representation
\begin{align*}
 \alpha_\ell\jj = \frac{a_\ell\jj}{Q\jj},
 \qquad
 u_\ell\jj = \frac{b_\ell\jj}{L\jj},
 \qquad
 \ell \in [L],
\end{align*}
where $Q\jj, L\jj \in \mathbb{N}$ and $a_\ell\jj, b_\ell\jj \in \mathbb{Z}_{\ge 0}$. 
\end{assumption}

Assumption~\ref{assump:rational} is a technical condition used to obtain a computable lower bound for $\deltad$~\eqref{eqn:delta-d-def}.
Similar arithmetic assumptions appear in the assortment-optimization literature; see, for example,~\citep{SumidaGaRuToDa21}. 
Under Assumption~\ref{assump:rational}, for each type $j$, define
\[
N_{j,\kappa} := \TOP_{\kappa}\{(a_\ell\jj b_\ell\jj)_{\ell=1}^L\} > 0,
\]
and
\[
A_{j,\kappa}^{\mathrm{tot}} := \TOP_{\kappa}\{(a_\ell\jj)_{\ell=1}^L\}.
\]
Set
\begin{align}
\delta\jj := \frac{1}{(Q\jj + A_{j,\kappa}^{\mathrm{tot}}) N_{j,\kappa}} > 0.\label{eqn:delta-j-def}
\end{align}

Next,
we show  $    \deltad  
\ge \delta\jj.$
\begin{lemma} 
\label{lem:kappa-aware-weighted-separation}
Fix a type $j$.  Recalling 
$\deltad$~\eqref{eqn:delta-d-def} and
$\delta\jj$~\eqref{eqn:delta-j-def},
we have
\begin{align*}
\deltad  
\ge \delta\jj.  
\end{align*}
 Thus, in the definition of $\lipsig$~\eqref{eqn:L-ratio-def}, we can replace 
$\deltad$ with $\delta\jj$ and still have  
\begin{align*}
\ABS{G\jj(t)-G\jj(t')}
\le
\lipsig |t-t'|.
\end{align*}
\end{lemma}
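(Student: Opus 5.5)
The plan is a direct arithmetic computation. Under Assumption~\ref{assump:rational}, the ratio and the utility of any deterministic assortment become explicit rational expressions with integer numerators. A nonzero difference of two such ratios is then bounded below by an integrality argument. Fix $j$ and drop the superscript $(j)$ on $a_\ell,b_\ell,Q,L$ where harmless. For $A\in\mc{A}_\kappa$, set
\[
s_A:=\sum_{\ell\in A}a_\ell\in\mathbb{Z}_{\ge 0},\qquad n_A:=\sum_{\ell\in A}a_\ell b_\ell\in\mathbb{Z}_{\ge 0}.
\]
Substituting $\alpha_\ell=a_\ell/Q$ and $u_\ell=b_\ell/L$ into the MNL formulas gives
\[
x\jj(A)=\frac{s_A}{Q+s_A},\qquad u\jj(A)=\frac{n_A}{L(Q+s_A)},\qquad \frac{x\jj(A)}{u\jj(A)}=\frac{L\,s_A}{n_A}\quad(\text{when }n_A>0).
\]
Note that $u\jj(A)>0$ if and only if $n_A\ge 1$.

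Next I take any admissible pair $A,B$ in~\eqref{eqn:delta-d-def}, so $n_A,n_B\ge 1$ and the two ratios differ. Then
\[
\left|\frac{x\jj(A)}{u\jj(A)}-\frac{x\jj(B)}{u\jj(B)}\right|=\frac{L\,|s_An_B-s_Bn_A|}{n_An_B}\ge\frac{L}{n_An_B},
\]
because $s_An_B-s_Bn_A$ is a nonzero integer. The quantity $\min\{u\jj(A),u\jj(B)\}$ equals one of $u\jj(A)$ or $u\jj(B)$. If it is $u\jj(A)$, the product is at least
\[
\frac{n_A}{L(Q+s_A)}\cdot\frac{L}{n_An_B}=\frac{1}{(Q+s_A)\,n_B},
\]
and symmetrically it is at least $1/((Q+s_B)\,n_A)$ otherwise.

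It remains to bound $s_X$ and $n_Y$ uniformly. Since $|X|\le\kappa$ and all entries are nonnegative, $s_X\le \TOP_\kappa\{(a_\ell)\}=A^{\mathrm{tot}}_{j,\kappa}$ and $n_Y\le \TOP_\kappa\{(a_\ell b_\ell)\}=N_{j,\kappa}$. Hence every admissible term in the minimum defining $\deltad$ is at least $\delta\jj$, so $\deltad\ge\delta\jj$. If the admissible set is empty, then $\deltad=\infty$ and the inequality is trivial.

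I also need $N_{j,\kappa}>0$ for $\delta\jj$ to be well defined. This follows from Assumption~\ref{assump:MNL}(\ref{assm:alpha-implies-u-support}): some $\alpha_\ell>0$, hence $u_\ell>0$, hence $a_\ell b_\ell\ge 1$.

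For the second claim, observe that $\lipsig$ in~\eqref{eqn:L-ratio-def} depends on $\deltad$ only through the term $2\bar\Phi\adj\umaxj/\deltad$, which is nonincreasing in $\deltad$. Replacing $\deltad$ by the smaller positive number $\delta\jj$ therefore gives a constant at least as large as the original. Lemma~\ref{lem:mnl-global-ratio-lipschitz} shows the original constant is valid, so the larger one is a valid Lipschitz constant as well.

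The only delicate point is pairing the correct $s$ and $n$ so that the $n_A$ factors cancel. Choosing $X$ as the argmin assortment and $Y$ as the other one handles this.
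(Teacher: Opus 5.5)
Your proof is correct and follows the paper's argument essentially line for line: you use the same integer quantities $\sum_{\ell\in A}a_\ell^{(j)}$ and $\sum_{\ell\in A}a_\ell^{(j)}b_\ell^{(j)}$, the same integrality bound $|s_An_B-s_Bn_A|\ge 1$ for a nonzero integer, and the same cancellation of $n_A$ against $u^{(j)}(A)$, with $s_A\le A_{j,\kappa}^{\mathrm{tot}}$ and $n_B\le N_{j,\kappa}$. You also make explicit two points the paper leaves implicit. First, $N_{j,\kappa}>0$ follows from Assumption~\ref{assump:MNL}(\ref{assm:alpha-implies-u-support}). Second, the Lipschitz claim follows because the constant in~\eqref{eqn:L-ratio-def} is nonincreasing in $\deltad$, so replacing $\deltad$ by the smaller $\delta^{(j)}$ keeps it valid.
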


\subsection{Proof of results in Section~\ref{sec:prelim-lemmas}}
 \label{app:proof-detail-prelim-lemma}

\begin{proof}[Proof of Lemma~\ref{lemma:relaxation-gap}]
The identity in~\eqref{eqn:gap-def-identity}
is immediate from the definitions of $R(S, \bm{y})$ and $G(S,\bm{\beta}, \bm{y})$.
Also, since $\sum_{j \in [J], m \in \set{a,s}}\beta_\ell^{(m,j)} = 1$ and $\beta_\ell^{(m,j)} \ge 0$,
\begin{align*}
  \sum_{j \in [J], m \in \set{a,s}}\beta_\ell^{(m,j)} f_\ell^{(m,j)}(\bm{y})
  \le
  \max_{j \in [J], m \in \set{a,s}} f_\ell^{(m,j)}(\bm{y}),
\end{align*}
which gives nonnegativity.

For each $\ell \in S$, let
\begin{align*}
  z_\ell(\bm{y}) := \max_{j \in [J], m \in \set{a,s}} f_\ell^{(m,j)}(\bm{y}),
  \qquad
  \underline{\beta}_\ell := \min_{j \in [J], m \in \set{a,s}}\beta_\ell^{(m,j)}.
\end{align*}
Then
\begin{align*}
  \sum_{j \in [J], m \in \set{a,s}}\beta_\ell^{(m,j)}f_\ell^{(m,j)}(\bm{y})
  \ge
  \underline{\beta}_\ell \sum_{j \in [J], m \in \set{a,s}}f_\ell^{(m,j)}(\bm{y})
  \ge
  \underline{\beta}_\ell  z_\ell(\bm{y}),
\end{align*}
so
\begin{align*}
  z_\ell(\bm{y}) - \sum_{j \in [J], m \in \set{a,s}}\beta_\ell^{(m,j)}f_\ell^{(m,j)}(\bm{y})
  \le
  (1-\underline{\beta}_\ell)  z_\ell(\bm{y}).
\end{align*}
Multiplying by $\eta_\ell$ and summing over $\ell \in S$ yields
\begin{align*}
G(S,\bm{\beta}, \bm{y}) - R(S, \bm{y})
  \le
  \sum_{\ell \in S}
  \eta_\ell (1-\underline{\beta}_\ell) z_\ell(\bm{y})
  =
  \sum_{\ell \in S}
  q^{S,\bm{\beta}}_{\ell}  z_\ell(\bm{y}).
\end{align*}

Finally, $0 \le z_\ell(\bm{y}) \le 1$ and
\begin{align}
  \sum_{\ell \in [L]} z_\ell(\bm{y})
  \le
  \sum_{\ell \in [L]} \sum_{j \in [J], m \in \set{a,s}}f_\ell^{(m,j)}(\bm{y})
  \le 
  2J,
  \label{eqn:sum-z-bound}
\end{align}
where the last inequality follows
from~\eqref{eqn:y-sum-less-than-kappa}.
Hence, by the definition of the $\TOP$ operator,
\begin{align*}
  \sum_{\ell \in [L]} q^{S,\bm{\beta}}_{\ell} z_\ell(\bm{y})
  \le
  \TOP_{2J}\set{q^{S,\bm{\beta}}},
\end{align*}
which proves~\eqref{eqn:gap-bound}.

Next,
we have
\begin{align*}
R(S)- R(S, \bm{y}\opt)
\le 
G(S,\bm{\beta}, \bm{y}\opt)  -
R(S, \bm{y}\opt)  \le
 \TOP_{2J}\set{q^{S,\bm{\beta}}}
\end{align*}
where the first inequality follows from~\eqref{eqn:beta-relax-ineq},
which implies that $R(S)\le 
G(S,\bm{\beta})= 
G(S,\bm{\beta}, \bm{y}\opt)$.
The   second inequality follows 
from~\eqref{eqn:gap-bound}.
\end{proof}

\begin{proof}[Proof of Lemma~\ref{lemma:buyset-gap-mnl}]
Fix    an assortment distribution $\bm{y}$. 
Recall we write $ z_\ell(\bm{y}) = \max_{j \in [J], m \in \set{a,s}} f_\ell^{(m,j)}(\bm{y})$.
Let
\begin{align*}
  \Delta(S, \bm{y})
  :=
  R(\bm{y})-R(S, \bm{y}).
\end{align*}

Recall
\begin{align*}
\mathcal{R}_\ell(\bm{y})
&=
\sum_{j=1}^J \lambda\jj
\Bigl(
\padj(\bm{y}) f_\ell^{(a,j)}(\bm{y})
+
\psubj(\bm{y}) f_\ell^{(s,j)}(\bm{y})
\Bigr) 
\end{align*}

By the definitions of $R(\bm{y})$ and $R(S, \bm{y})$, 
\begin{align*}
  \Delta(S, \bm{y})
  =
  \sum_{\ell \in S}
  \Bigl[
    \eta_\ell z_\ell(\bm{y}) - \min\{\eta_\ell z_\ell(\bm{y}), \gamma_\ell \mathcal{R}_\ell(\bm{y})\}
  \Bigr]
  +
  \sum_{\ell \notin S}
  \Bigl[
    \gamma_\ell \mathcal{R}_\ell(\bm{y}) - \min\{\eta_\ell z_\ell(\bm{y}), \gamma_\ell \mathcal{R}_\ell(\bm{y})\}
  \Bigr].
\end{align*}
Using the identity
$x - \min\set{x,y} = (x-y)_+, \forall x,y \in \R$ and $\min\{\eta_\ell z_\ell(\bm{y}), \gamma_\ell \mathcal{R}_\ell(\bm{y})\} \ge 0$,
\begin{equation}\label{eq:buy-gap-itemwise-start}
  \Delta(S, \bm{y})
  \le
  \sum_{\ell \in S} \eta_\ell z_\ell(\bm{y})
  +
  \sum_{\ell \notin S}
  \bigl(\gamma_\ell \mathcal{R}_\ell(\bm{y})-\eta_\ell z_\ell(\bm{y})\bigr)_+.
\end{equation}

Since $\padj(\bm{y})+\psubj(\bm{y})\le 1$ and
$z_\ell(\bm{y})\ge \max\{ f_\ell^{(a,j)}(\bm{y}), f_\ell^{(s,j)}(\bm{y})\}$, we get
\begin{align*}
  \padj(\bm{y}) f_\ell^{(a,j)}(\bm{y})
  +
  \psubj(\bm{y}) f_\ell^{(s,j)}(\bm{y})
  \le z_\ell(\bm{y}),
\end{align*}
Hence,
\begin{align*}
  \mathcal{R}_\ell(\bm{y})
  \le
  \left(\sum_{j=1}^J \lambda\jj  \right) z_\ell(\bm{y}).
\end{align*}
Substituting this into \eqref{eq:buy-gap-itemwise-start} yields
\begin{align*}
\Delta(S, \bm{y})
\le
\sum_{\ell \in [L]} w_\ell^{S} z_\ell(\bm{y}),
\end{align*}
where $w_\ell^{S}$ is defined in \eqref{eq:wS-mnl-def}. 
Next,
from~\eqref{eqn:sum-z-bound}, 
\begin{align*}
\sum_{\ell \in [L]} z_\ell(\bm{y})
\le 
2J.
\end{align*}
In addition,
for any nonnegative weights $w_\ell$, 
for any integer $M \ge 1$,
if $0 \le z_\ell \le 1$ and $\sum_\ell z_\ell \le M$, then
$\sum_\ell w_\ell z_\ell \le \TOP_{M}\set{\bm{w}}$,
so 
\begin{align}
\Delta(S, \bm{y})
\le
\TOP_{2J}\set{\bm{w}^S}. \label{eqn:Delta-S-bound}
\end{align}

Finally, let $\bm{y}\opt$ attain $R\opt=R(\bm{y}\opt)$. 
Since $R(S)\ge R(S, \bm{y}\opt)$,
\begin{align*}
R\opt-R(S)
\le
R(\bm{y}\opt)-R(S, \bm{y}\opt)
=\Delta(S, \bm{y}\opt)
\le
\TOP_{2J}\set{\bm{w}^S},
\end{align*}
where the last inequality follows from~\eqref{eqn:Delta-S-bound}.
This proves~\eqref{eq:buyset-gap-mnl}.
\end{proof}

\begin{proof}[Proof of Corollary~\ref{cor:buyset-threshold}]
For each content family $\ell$, from Lemma~\ref{lemma:buyset-gap-mnl}, the term $w_\ell^S$ in~\eqref{eq:wS-mnl-def} contributes either $\eta_\ell$ if
$\ell \in S$ or $b_\ell$ if $\ell \notin S$. Since the map
$\bm v \mapsto \TOP_{2J}\{\bm v\}$ is coordinatewise nondecreasing on $\mathbb{R}_+^L$,
the minimizing buy set is obtained by choosing, for each $\ell$, the smaller of the two family-level
penalties. Hence,
\begin{align*}
  \ell \in S^\star
  \quad\Longleftrightarrow\quad
\eta_\ell \le b_\ell.
\end{align*}
By definition,
\begin{align*}
b_\ell
=
\Bigl(
\gamma_\ell \sum_{j=1}^J \lambda\jj   - \eta_\ell
\Bigr)_+,
\end{align*}
so $\eta_\ell \le b_\ell$ is equivalent to
$2\eta_\ell \le \gamma_\ell \sum_{j=1}^{J} \lambda\jj  $.  
\end{proof}

\begin{proof}[Proof of Lemma~\ref{lem:ratio-interval}]
For a deterministic nonempty assortment $A$ with $u^{(j)}(A)>0$,
\[
\frac{x^{(j)}(A)}{u^{(j)}(A)}
=
\frac{\sum_{\ell\in A}\alpha_\ell^{(j)}}{\sum_{\ell\in A}\alpha_\ell^{(j)}u_\ell^{(j)}}
=
\sum_{\ell\in A}
\frac{\alpha_\ell^{(j)}u_\ell^{(j)}}{\sum_{k\in A}\alpha_k^{(j)}u_k^{(j)}}
\cdot
\frac{1}{u_\ell^{(j)}}.
\]
By     Assumption~\ref{assump:MNL}(\ref{assm:alpha-implies-u-support}), every $\ell$ with $\alpha_\ell^{(j)}>0$ also satisfies
$u_\ell^{(j)}>0$, so the coefficients above are well defined, nonnegative,
and sum to one. Thus, $\frac{x^{(j)}(A)}{u^{(j)}(A)}$ is a convex combination
of the singleton ratios $\{1/u_\ell^{(j)}: \alpha_\ell^{(j)}>0\}$, and therefore lies in
\[
\left[\frac{1}{\umaxj},\frac{1}{\uminj}\right]
=
I^{(j)}.
\]

Now let $y^{(a,j)}\in\Delta(A_\kappa)$ be an
ad assortment distribution
with $u^{(a,j)}>0$, and define
\[
A_+ := \{A\in A_\kappa : y_A^{(a,j)}>0,\ u^{(j)}(A)>0\}.
\]
Again by Assumption~\ref{assump:MNL}(\ref{assm:alpha-implies-u-support}), if $u^{(j)}(A)=0$, then $x^{(j)}(A)=0$. Hence
assortments with zero utility contribute neither to the numerator nor to the
denominator of $\frac{x^{(a,j)}}{u^{(a,j)}}$, and
\[
\frac{x^{(a,j)}}{u^{(a,j)}}
=
\sum_{A\in A_+}
\frac{y_A^{(a,j)}u^{(j)}(A)}
{\sum_{B\in A_+}y_B^{(a,j)}u^{(j)}(B)}
\cdot
\frac{x^{(j)}(A)}{u^{(j)}(A)}.
\]
Thus $\frac{x^{(a,j)}}{u^{(a,j)}}$ is a convex combination of points in $I^{(j)}$,
so it also lies in $I^{(j)}$.

Conversely, fix any $t\in I^{(j)}$. Choose $\ell_1,\ell_2\in [L]$ such that
$\alpha_{\ell_1}^{(j)}, \alpha_{\ell_2}^{(j)} > 0$ and
\[
\frac{1}{u_{\ell_1}^{(j)}} \le t \le \frac{1}{u_{\ell_2}^{(j)}},
\]
with $\ell_1=\ell_2$ allowed at an endpoint. If $\ell_1=\ell_2$, the singleton
assortment $\{\ell_1\}$ attains $t$. Otherwise define
\[
\rho_i:=\frac{x^{(j)}(\{\ell_i\})}{u^{(j)}(\{\ell_i\})}=\frac{1}{u_{\ell_i}^{(j)}},
\qquad
w :=\frac{\rho_2-t}{\rho_2-\rho_1}\in[0,1],
\]
and
\[
q:=\frac{w u^{(j)}(\{\ell_2\})}
{w u^{(j)}(\{\ell_2\})+(1- w) u^{(j)}(\{\ell_1\})}\in[0,1].
\]
Assign probability $q$ to the singleton assortment $\{\ell_1\}$ and probability
$1-q$ to the singleton assortment $\{\ell_2\}$. Then
\[
\frac{x^{(a,j)}}{u^{(a,j)}}
=
\frac{q x^{(j)}(\{\ell_1\})+(1-q) x^{(j)}(\{\ell_2\})}
{q u^{(j)}(\{\ell_1\})+(1-q) u^{(j)}(\{\ell_2\})}
=
w \rho_1 + (1-w )\rho_2
=
t.
\]
Therefore, every $t\in I^{(j)}$ is attainable by an assortment distribution
supported on at most two singleton assortments.
\end{proof}

\subsection{Proof of results in Section~\ref{sec:generic-LP}}

\begin{proof}[Proof of Lemma~\ref{lem:dual-localization}]
Define the two affine functions
\[
W^-(\nu)
:=
\phi(b^-_0) + \nu d(b^-_0),
\qquad
W^+(\nu)
:=
\phi(b^+_0) + \nu d(b^+_0).
\]
Recall we have
$\bar J
=
[L_0,U_0]$ and 
$g(\nu)
=
\max_{a \in \cA}
\left\{
\phi(a)+\nu d(a)
\right\}$
from~\eqref{eqn:g-theta-def}.
Since $d(b^-_0)<0$, for every $\nu < L_0$,
we have
\[
W^-(\nu)
=
W^-(L_0)
+
d(b^-_0)(\nu-L_0)
>
W^-(L_0)
\ge
g(0)
\ge
F\opt,
\] where $W^-(L_0)
\ge
g(0)$
follows from~\eqref{eqn:generic-init-domination}
and the last inequality follows from $F\opt = \min_{\nu \in \R} g(\nu).$
Since
$g(\nu) \ge W^-(\nu)$ for every $\nu$, it follows that
$g(\nu) > F\opt$ for all $\nu < L_0$.
Thus, no minimizer of $g$ can lie to the left of $L_0$.

Similarly, since $d(b^+_0)>0$, for every $\nu > U_0$ we have
\[
W^+(\nu)
=
W^+(U_0)
+
d(b^+_0)(\nu-U_0)
>
W^+(U_0)
\ge
g(0)
\ge
F\opt.
\]
Again using $g(\nu)\ge W^+(\nu)$, we obtain
$g(\nu) > F\opt$ for all $\nu > U_0$.
Thus, no minimizer can lie to the right of $U_0$.
Therefore, every minimizer of $g$ lies in $\bar J = [L_0,U_0]$, and thus
we complete the proof.
\end{proof}

\begin{proof}[Proof of Lemma~\ref{lemma:generic-fixed-theta}] 

For brevity, write
\[ 
C_n:=C^{\mathrm{bis}}_{n}.
\]

We claim that, for every iteration $n$ before termination,
\begin{align}
&\text{there exists a minimizer of } g   \text{ in } [L_n,U_n],
\label{eq:B8-inv-bracket}\\
&d  (b_n^-)<0<d  (b_n^+),
\label{eq:B8-inv-sign}\\
&\phi  (b_n^-)+L_n d  (b_n^-)\ge F\opt,\qquad
\phi  (b_n^+)+U_n d  (b_n^+)\ge F\opt,
\label{eq:B8-inv-cert}\\
&U_n-L_n=2^{-n}(U_{0}-L_{0}).
\label{eq:B8-inv-width}
\end{align}

The base case follows from Lemma~\ref{lem:dual-localization},
\eqref{eqn:generic-init-sign}, and \eqref{eqn:generic-init-domination},
together with $F\opt=\min_{\nu\in\R}g  (\nu)\le g  (0)$.

Now assume \eqref{eq:B8-inv-bracket}--\eqref{eq:B8-inv-width} hold at iteration $n$,
and suppose the routine has not yet terminated. Let
\[
m_n:=\frac{L_n+U_n}{2},\qquad a_n \in \arg\max_{a\in A}\{\phi  (a)+m_n d  (a)\}.
\] 
Therefore, since \(g  \) is the pointwise maximum of the affine functions $\nu \mapsto \phi  (a)+\nu d  (a),$ $g$ is convex and 
we have
\[
d  (a_n)\in \partial g  (m_n).
\]

If $d  (a_n)<0$, then for every $\nu<m_n$,
\[
g  (\nu)\ge g  (m_n)+d  (a_n)(\nu-m_n)>g  (m_n)\ge F\opt.
\]
Thus, no minimizer lies to the left of $m_n$, so by \eqref{eq:B8-inv-bracket}
some minimizer lies in $[m_n,U_n]$. With the update
\[
L_{n+1}=m_n,\qquad U_{n+1}=U_n,\qquad
b_{n+1}^-=a_n,\qquad b_{n+1}^+=b_n^+,
\]
the bracket property \eqref{eq:B8-inv-bracket}, sign condition \eqref{eq:B8-inv-sign},
and width identity \eqref{eq:B8-inv-width} are immediate, while
\[
\phi  (b_{n+1}^-)+L_{n+1}d  (b_{n+1}^-)
=
\phi  (a_n)+m_n d  (a_n)
=
g  (m_n)\ge F\opt,
\]
so \eqref{eq:B8-inv-cert} also holds.

If $d  (a_n)>0$, the symmetric argument gives, for every $\nu>m_n$,
\[
g  (\nu)\ge g  (m_n)+d  (a_n)(\nu-m_n)>g  (m_n)\ge F\opt,
\]
so no minimizer lies to the right of $m_n$. With the update
\[
L_{n+1}=L_n,\qquad U_{n+1}=m_n,\qquad
b_{n+1}^-=b_n^-,\qquad b_{n+1}^+=a_n,
\]
the same reasoning yields \eqref{eq:B8-inv-bracket}--\eqref{eq:B8-inv-width} at iteration $n+1$.

Thus, \eqref{eq:B8-inv-bracket}--\eqref{eq:B8-inv-width} hold for every iteration prior to termination.

If the routine terminates at some iteration $n\le N$ because $d  (a_n)=0$, then
\[
g  (m_n)=\phi  (a_n)+m_n d  (a_n)=\phi  (a_n),
\]
and for every $\nu\in\R$,  
\[
g  (\nu)\ge \phi  (a_n)+\nu d  (a_n)=\phi  (a_n)=g  (m_n).
\]
Thus, $m_n$ is a global minimizer, so $F\opt=\phi  (a_n)$. Since
$a_n\in C_N$ and $d  (a_n)=0$, the pure action $a_n$ is feasible for the restricted LP,
which gives $F^N\ge F\opt$. The reverse inequality is trivial because $C_N\subseteq A$.
So
\[
F^N=F\opt.
\]

Assume now that no termination occurs in the first $N$ iterations. Define
\begin{align}
    g_{N}^{\mathrm{bis}}(\nu):=g_{C_N}(\nu) = 
\max_{a \in C_N}
\bigl\{
\phi  (a)+\nu d  (a)
\bigr\}, \label{eqn:g-bis-def}
\end{align}
and  
\[
W_N^-(\nu):=\phi  (b_N^-)+\nu d  (b_N^-),\qquad
W_N^+(\nu):=\phi  (b_N^+)+\nu d  (b_N^+).
\]
Since $b_N^-,b_N^+\in C_N$, 
\begin{align}
g_{N}^{\mathrm{bis}}(\nu) \ge \max\{W_N^-(\nu),W_N^+(\nu)\}
\qquad \forall \nu\in\R,
\label{eqn:g-larger-than-two}
\end{align}
where the inequality follows from 
the definition of 
$g_{N}^{\mathrm{bis}}(\nu)$~\eqref{eqn:g-bis-def}.
Moreover, by \eqref{eq:B8-inv-cert},
\[
W_N^-(L_N)\ge F\opt,\qquad W_N^+(U_N)\ge F\opt.
\]

If $\nu\le L_N$, then \eqref{eq:B8-inv-sign} gives $d  (b_N^-)<0$, thus,
\begin{align}
g_{N}^{\mathrm{bis}}(\nu) \ge  
W_N^-(\nu)\ge W_N^-(L_N)\ge F\opt, \nu\le L_N.
\label{eqn:W_N_large_small_nu}
\end{align}
Similarly, if $\nu\ge U_N$, then
\begin{align}
g_{N}^{\mathrm{bis}}(\nu) \ge  
W_N^+(\nu)\ge W_N^+(U_N)\ge F\opt, \nu\ge U_N.
\label{eqn:W_N_large_large_nu}
\end{align}
Finally,   using $|d  (a)|\le D  $ for all $a\in A$,  
\[
W_N^-(\nu)\ge F\opt-D  (\nu-L_N),\qquad
W_N^+(\nu)\ge F\opt-D  (U_N-\nu), \forall \nu\in[L_N,U_N].
\]
Combining the   previous display with~\eqref{eqn:g-larger-than-two},
\[
g_{N}^{\mathrm{bis}}(\nu)\ge
F\opt-D  \min\{\nu-L_N,U_N-\nu\}
\ge
F\opt-\frac{D  }{2}(U_N-L_N), \forall \nu\in[L_N,U_N].
\]
Combining the previous display with~\eqref{eqn:W_N_large_small_nu} and~\eqref{eqn:W_N_large_large_nu},
\[
g_{N}^{\mathrm{bis}}(\nu)\ge
F\opt-\frac{D  }{2}(U_N-L_N)
\qquad \forall \nu\in\R.
\]

By \eqref{eq:B8-inv-sign}, the restricted LP~\eqref{eqn:LP-resctricted}
is feasible, so strong duality gives
\[
F^N=\min_{\nu\in\R} g_{N}^{\mathrm{bis}}(\nu).
\]
Thus, 
\[
F^N\ge F\opt-\frac{D  }{2}(U_N-L_N).
\]
Using \eqref{eq:B8-inv-width},
\[
U_N-L_N=2^{-N}(U_{0}-L_{0}),
\]
and therefore 
\[
F^N\ge
F\opt-\frac{D  }{2}(U_{0}-L_{0})2^{-N}.
\]
\end{proof}

\begin{proof}[Proof of Lemma~\ref{lemma:grid-generic}]
Since $F\opt$ is Lipschitz on the compact interval $I$, it is continuous, so there exists
\[
\theta^\star \in \argmax_{\theta \in I} F\opt(\theta).
\]
Choose a nearest sampled point $\theta_G^\star \in \mathcal G$ such that
\[
|\theta^\star-\theta_G^\star|
\le
\Delta_{\mathcal G}.
\]
By the $W_F$-Lipschitz continuity of $F\opt$,
\[
F\opt(\theta_G^\star)
\ge
F\opt(\theta^\star) - W_F |\theta_G^\star-\theta^\star|
\ge
\max_{\theta \in I} F\opt(\theta) - W_F \Delta_{\mathcal G}.
\]

Now apply Lemma~\ref{lemma:generic-fixed-theta} at  $\theta_G^\star$.
We obtain
\[
F^{N(\theta_G\opt)}(\theta_G^\star)
\ge
F\opt(\theta_G^\star)
-
\frac{D(\theta_G\opt)}{2}
\bigl(U_{\theta_G^\star,0}-L_{\theta_G^\star,0}\bigr)
2^{-N(\theta_G\opt)}.
\]
By the definition of $C^{\mathrm{dual}}$,
\[
\frac{D(\theta_G\opt)}{2}
\bigl(U_{\theta_G^\star,0}-L_{\theta_G^\star,0}\bigr)
2^{-N(\theta_G\opt)}
\le
C^{\mathrm{dual}}.
\]
Therefore,
\[
F^{N(\theta_G\opt)}(\theta_G^\star)
\ge
\max_{\theta \in I} F\opt(\theta)
-
W_F \Delta_{\mathcal G}
-
C^{\mathrm{dual}}.
\]
Finally, since $\theta_G^\star \in \mathcal G$,
\[
\max_{\theta  \in \mathcal G} F^{N(\theta)}(\theta)
\ge
F^{N(\theta_G\opt)}(\theta_G^\star).
\]
This proves the claim.
\end{proof}

\subsection{Proof of results in Section~\ref{sec:LP-ads}}

\begin{proof}[Proof of Lemma~\ref{lemma-explict-dual-interval-ads}]
Recall that the fixed-$t$ ads LP is of the generic one-equality form, with action set
$\mc{A}_\kappa$, reward
\begin{align*}
\phi(A):=G\adj(t,A),
\end{align*}
and residual
\begin{align*}
d(A):=d\adj(t, A).
\end{align*}
In addition,
\begin{align*}
G\adj(t, \mu)
=
\max_{A\in \mc{A}_\kappa}
\left\{
G\adj(t,A)+\mu  d\adj(t, A)
\right\},
\end{align*}
and
\begin{align*}
G\adj(t)
=
\min_{\mu\in\mathbb R} G\adj(t, \mu).
\end{align*}

We first compute the residuals of the singleton  assortments
$A_t^-=\{\lplus\}$ and $A_t^+=\{\lminus\}$.
For any singleton $\{\ell\}$, the MNL formulas give
\begin{align*}
x^{(j)}(\{\ell\})=\frac{\alpha_\ell^{(j)}}{1+\alpha_\ell^{(j)}},
\qquad
u^{(j)}(\{\ell\})=\frac{\alpha_\ell^{(j)}u_\ell^{(j)}}{1+\alpha_\ell^{(j)}}.
\end{align*}
Therefore,
\begin{align*}
d\adj(t, \{\ell\})
=
x^{(j)}(\{\ell\})-t u^{(j)}(\{\ell\})
=
\frac{\alpha_\ell^{(j)}}{1+\alpha_\ell^{(j)}}\bigl(1-t u_\ell^{(j)}\bigr).
\end{align*}
Applying this to $\lplus$ and $\lminus$ yields
\begin{align}
    d\adj(t, A_t^-)
=
\frac{\alpha_{\lplus}^{(j)}}{1+\alpha_{\lplus}^{(j)}}
\bigl(1-t u_{\lplus}^{(j)}\bigr),
\qquad
d\adj(t, A_t^+)
=
\frac{\alpha_{\lminus}^{(j)}}{1+\alpha_{\lminus}^{(j)}}
\bigl(1-t u_{\lminus}^{(j)}\bigr).
\label{eqn:d-ads-explicit}
\end{align}
Since
\begin{align*}
\frac{1}{u_{\lplus}^{(j)}}<t<\frac{1}{u_{\lminus}^{(j)}},
\end{align*}
we have
\begin{align*}
d\adj(t, A_t^-)
=
-\frac{\alpha_{\lplus}^{(j)}}{1+\alpha_{\lplus}^{(j)}}
\bigl(t u_{\lplus}^{(j)}-1\bigr)<0,  \quad
d\adj(t, A_t^+)
=
\frac{\alpha_{\lminus}^{(j)}}{1+\alpha_{\lminus}^{(j)}}
\bigl(1-t u_{\lminus}^{(j)}\bigr)>0.
\end{align*}

Now define the exact fixed-$t$ ads reward spread
\begin{align*}
\Phi_t\adj
:=
\max_{A\in \mc{A}_\kappa} G\adj(t,A)
-
\min_{A\in \mc{A}_\kappa} G\adj(t,A).
\end{align*}
Set
\[
L:=\frac{\Phi_t^{(a,j)}}{d\adj(t, A_t^-)},
\qquad
U:=\frac{\Phi_t^{(a,j)}}{d\adj(t, A_t^+)}.
\]
Then
\[
L d\adj(t, A_t^-)=\Phi_t^{(a,j)},
\qquad
U d\adj(t, A_t^+)=\Phi_t^{(a,j)}.
\]
Thus,
\[
\phi(A_t^-)+L d(A_t^-)
\ge
\min_{A\in A_\kappa}\phi(A)+\Phi_t^{(a,j)}
=
\max_{A\in A_\kappa}\phi(A),
\]
and similarly,
\[
\phi(A_t^+)+U d(A_t^+)\ge \max_{A\in A_\kappa}\phi(A).
\]
Therefore, the hypotheses of
Lemma~\ref{lem:dual-localization}
hold, and
\begin{align}
\min_{\mu\in\mathbb R} G\adj(t, \mu)
=
\min_{\mu\in
\left[
-\frac{\Phi_t^{(a,j)}}{-d\adj(t, A_t^-)},
\frac{\Phi_t^{(a,j)}}{d\adj(t, A_t^+)}
\right]
}
G\adj(t, \mu).
\label{eqn:d-ads-step-1}
\end{align}
It remains only to replace the exact numerator $\Phi_t^{(a,j)}$ by the explicit upper bound $\bar\Phi\adj$. 
For any assortment $A\in \mc{A}_\kappa$, write
\begin{align*}
R(A):=\sum_{\ell\in A\setminus S}\gamma_\ell p_\ell^{(j)}(A),
\qquad
B(A):=\sum_{\ell\in A\cap S}\eta_\ell \beta_\ell^{(a,j)} p_\ell^{(j)}(A) .
\end{align*}
Then
\begin{align*}
G\adj(t,A)
=
\lambda\jj \bar F\jj(\sigma t)\bigl[\rads\sigma x^{(j)}(A)-R(A)\bigr]-B(A).
\end{align*}
Using
\begin{align}
    0\le x^{(j)}(A)\le 1,
\qquad
0\le R(A)\le \gamma_{\max},
\qquad
0\le B(A)\le  \eta_{\max},
\label{eqn:R-B-bound}
\end{align}
together with $0\le \bar F\jj(\sigma t)\le 1$, we obtain
\begin{align*}
-\lambda\jj\gamma_{\max}-\eta_{\max}
\le
G\adj(t,A)
\le
\lambda\jj \rads\sigma.
\end{align*}
Therefore
\begin{align*}
\Phi_t\adj
\le
\lambda\jj \rads\sigma
-
\bigl(-\lambda\jj\gamma_{\max}-\eta_{\max}\bigr)
=
\lambda\jj(\rads\sigma+\gamma_{\max})+\eta_{\max}
=
\bar\Phi\adj.
\end{align*}
Combining the above with~\eqref{eqn:d-ads-explicit}
and~\eqref{eqn:d-ads-step-1}, we finish the proof.
\end{proof}

\begin{proof}[Proof of Lemma~\ref{lem:ads-bisection-guarantee}]
The fixed-$t$ ads LP~\eqref{eqn:ads-LP-def} is of the generic one-equality form, with
action set $\mathcal A_\kappa$, reward map
\[
\phi(A) := G^{(a,j)}(t,A),
\]
residual map
\[
d(A) := d\adj(t, A),
\]
and dual variable $\mu$.
By the preceding lemma,
\[
G^{(a,j)}(t)
=
\min_{\mu \in \R} G\adj(t, \mu)
=
\min_{\mu \in \bar{\mc{J}}\adj(t)} G\adj(t, \mu).
\]

At every midpoint $\bar\mu$, Algorithm~\ref{alg:mnl-fixed-jt-ads} calls Algorithm~\ref{alg:mnl-oracle} on the
reduced revenue vector in~\eqref{eqn:R-C-def-ads}. Hence, the midpoint oracle returns an assortment in
\[
\argmax_{A \in \mathcal A_\kappa}
\bigl\{
G^{(a,j)}(t,A) + \bar\mu  d\adj(t, A)
\bigr\}.
\]

We now verify the initialization conditions of
Lemma~\ref{lemma:generic-fixed-theta}.
From the proof of the preceding lemma,
\[
d\adj(t, A_t^-) < 0 < d\adj(t, A_t^+).
\]
Define the exact fixed-$t$ ads reward spread by
\[
\Phi_t^{(a,j)}
:=
\max_{A \in \mathcal A_\kappa} G^{(a,j)}(t,A)
-
\min_{A \in \mathcal A_\kappa} G^{(a,j)}(t,A).
\]
The proof of   Lemma~\ref{lemma-explict-dual-interval-ads}  shows that
\[
\Phi_t^{(a,j)} \le \bar\Phi\adj.
\]
Recall $\bar{\mc{J}}\adj(t)$ in~\eqref{eqn:J-j-a-def} 
and let
\begin{align*}
\bar{\mc{J}}\adj(t) = [\underline\mu_t^{(a,j)}, \bar\mu_t^{(a,j)}].
\end{align*}
Since
\[
\underline\mu_t^{(a,j)}  d\adj(t, A_t^-)
=
\bar\Phi\adj,
\qquad
\bar\mu_t^{(a,j)}  d\adj(t, A_t^+)
=
\bar\Phi\adj,
\]
we obtain
\[
\phi(A_t^-)+\underline\mu_t^{(a,j)} d(A_t^-)
=
\phi(A_t^-)+\bar\Phi\adj
\ge
\min_{A \in \mathcal A_\kappa}\phi(A) + \Phi_t^{(a,j)}
=
\max_{A \in \mathcal A_\kappa}\phi(A)
=G\adj(t, \mu = 0)
\]
and similarly
\[
\phi(A_t^+) + \bar\mu_t^{(a,j)} d(A_t^+) \ge
G\adj(t, \mu=0).
\]
Thus the initialization conditions of Lemma~\ref{lemma:generic-fixed-theta} hold
with
\[
[L_0,U_0] = \bar{\mc{J}}\adj(t),
\qquad
b_0^- = A_t^-,
\qquad
b_0^+ = A_t^+.
\]

It remains to bound the residual constant.
For every $A \in \mathcal A_\kappa$,
\[
|d(A)|
=
|d\adj(t, A)|
=
|x^{(j)}(A) - t u^{(j)}(A)|
\le
x^{(j)}(A) + t u^{(j)}(A)
\le
1 + t \uumaxj
\le
1 + t_{\max}^{(j)} \uumaxj
=
D^{(a,j)}.
\]

Applying Lemma~\ref{lemma:generic-fixed-theta} yields
\[
\ghat(t)
\ge
\geadj(t)
-
\frac{D^{(a,j)}}{2} 
|\bar{\mc{J}}\adj(t)| 2^{-K}
\]
and this completes the proof.
\end{proof}

\subsection{Proof of results in Section~\ref{sec:LP-subscription}}

\begin{proof}[Proof of Lemma~\ref{lemma-explict-dual-interval-subscription}]

Let $ B_p^- := \emptyset$
and choose 
$ B_p^+ \in \argmax_{B\in \mc{A}_\kappa} u\jj(B)$, so $u\jj(B_p^+)  = \uumaxj$.
Assume now that $p\in(0,\uumaxj)$. Since $u^{(j)}(\emptyset)=0$, we have
\begin{align*}
d\subj(p, B_p^-)
=
d\subj(p, \emptyset)
=
-p
<
0.
\end{align*}
Also, since $u^{(j)}(B_p^+)=\uumaxj$, we have
\begin{align*}
d\subj(p, B_p^+)
=
\uumaxj-p
>
0.
\end{align*}

Now define the exact fixed-$t$ subscription reward spread
\begin{align*}
\Phi_t\subj
:=
\max_{B\in \mc{A}_\kappa}G\subj(t,B)
-
\min_{B\in \mc{A}_\kappa}G\subj(t,B).
\end{align*}
By Lemma~\ref{lem:dual-localization}, there exists a minimizer of problem $\min_{\xi \in \R} G\subj(t, \xi)$ that lies in
\begin{align*}
\left[
-\frac{\Phi_t\subj}{-d\subj(p, B_p^-)},
\frac{\Phi_t\subj}{d\subj(p, B_p^+)}
\right]
=
\left[
-\frac{\Phi_t\subj}{p},
\frac{\Phi_t\subj}{\uumaxj-p}
\right].
\end{align*}

It remains to bound $\Phi_t\subj$ explicitly. For any assortment $B\in \mc{A}_\kappa$, write
\begin{align*}
R(B):=\sum_{\ell\in B\setminus S}\gamma_\ell p_\ell^{(j)}(B),
\qquad
C(B):=\sum_{\ell\in B\cap S}\eta_\ell\beta_\ell^{(s,j)}  p_\ell^{(j)}(B).
\end{align*}
Then
\begin{align*}
G\subj(t,B)
=
\lambda\jj F\jj(\sigma t) (p
- R(B))
-
C(B).
\end{align*}
The term $\lambda\jj F\jj(\sigma t) p$ is constant in $B$, so only the last two terms contribute
to the spread. Since
\begin{align}
    0\le R(B)\le \gamma_{\max},
\qquad
0\le C(B)\le \eta_{\max},
\qquad
0\le F\jj(\sigma t)\le 1, \label{eqn:R-B-bound}
\end{align}
we obtain
\begin{align*}
-\lambda\jj F\jj(\sigma t)\gamma_{\max}-\eta_{\max}
\le
-\lambda\jj F\jj(\sigma t)R(B)-C(B)
\le 0.
\end{align*}
Therefore,
\begin{align}
\Phi_t\subj
\le
\lambda\jj F\jj(\sigma t)\gamma_{\max}+\eta_{\max}
\le
\lambda\jj\gamma_{\max}+\eta_{\max}
=
\bar\Phi\subj. \label{eqn:phi-j-bound}
\end{align}
Thus,
\begin{align*}
\min_{\xi\in\mathbb R}G\subj(t, \xi)
=
\min_{\xi\in \bar{\mc{J}}\subj}G\subj(t, \xi).
\end{align*}
\end{proof}

\begin{proof}[Proof of Lemma~\ref{lem:subscription-bisection-guarantee}]

For $t \in I^{(j)}$, define
\[
c_t := F\jj(\sigma t),
\]
and for the off branch set
\[
c_\infty := 1.
\]
Write
\[
\phi_t(B)
:=
\lambda\jj c_t
\left[
p -
\sum_{\ell \in [L]\setminus S}\gamma_\ell p_\ell^{(j)}(B)
\right]
-
\sum_{\ell \in S}\eta_\ell \beta_\ell^{(s,j)} p_\ell^{(j)}(B),
\]
and
\[
d(B) := d\subj(p, B)=u^{(j)}(B)-p.
\]
Then, for both $t \in I^{(j)}$ and $t=\infty$, the relevant subscription branch is
a generic one-equality LP with dual variable $\xi$ and full dual
\[
g_t(\xi)
=
\max_{B \in \mathcal A_\kappa}
\bigl\{
\phi_t(B)+\xi d(B)
\bigr\}.
\]

For finite $t$, the preceding lemma gives
\[
\min_{\xi \in \R} G\subj(t,\xi)
=
\min_{\xi \in \bar{\mc{J}}^{(s,j)}} G\subj(t,\xi).
\]
For $t=\infty$, the same dual representation holds with the convention $c_\infty=1$,
which is exactly the current off-branch convention.

At every midpoint $\bar\xi$, Algorithm~\ref{alg:mnl-fixed-jt-sub} calls Algorithm~\ref{alg:mnl-oracle}
with the reduced revenue vector in~\eqref{eqn:R-C-def-sub}. 

We next verify the initialization conditions of
Lemma~\ref{lemma:generic-fixed-theta}.
From the proof of the preceding lemma,
\[
d(B_p^-)= -p <0,
\qquad
d(B_p^+) = \uumaxj-p >0.
\]
Define the exact reward spread by
\[
\Phi_t^{(s,j)}
:=
\max_{B \in \mathcal A_\kappa}\phi_t(B)
-
\min_{B \in \mathcal A_\kappa}\phi_t(B).
\]
For finite $t$,~\eqref{eqn:phi-j-bound} in
the proof of Lemma~\ref{lemma-explict-dual-interval-subscription}
shows that
\[
\Phi_t^{(s,j)} \le \bar\Phi\subj.
\]
For $t=\infty$, we have
\[
\phi_\infty(B)
=
\lambda\jj
\left[
p -
\sum_{\ell \in [L]\setminus S}\gamma_\ell p_\ell^{(j)}(B)
\right]
-
\sum_{\ell \in S}\eta_\ell \beta_\ell^{(s,j)} p_\ell^{(j)}(B),
\]
so only the last two terms contribute to the spread, and therefore
\[
\Phi_\infty^{(s,j)}
\le
\lambda\jj \gamma_{\max} + \eta_{\max}
=
\bar\Phi\subj.
\]
Thus, for all $t \in I^{(j)} \cup \{\infty\}$,
\[
\Phi_t^{(s,j)} \le \bar\Phi\subj.
\]

Recall $\bar{\mc{J}}\subj$ in~\eqref{eqn:J-j-s-def}
and let
\begin{align*}
\bar{\mc{J}}\subj
= [\underline\xi\subj, \bar\xi\subj].
\end{align*}

Since
\[
\underline\xi^{(s,j)}  d(B_p^-)=\bar\Phi\subj,
\qquad
\bar\xi^{(s,j)}  d(B_p^+)=\bar\Phi\subj,
\]
we obtain
\[
\phi_t(B_p^-)+\underline\xi^{(s,j)} d(B_p^-)
=
\phi_t(B_p^-)+\bar\Phi\subj
\ge
\min_{B \in \mathcal A_\kappa}\phi_t(B)+\Phi_t^{(s,j)}
=
\max_{B \in \mathcal A_\kappa}\phi_t(B)
=
g_t(0),
\]
and similarly
\[
\phi_t(B_p^+) + \bar\xi^{(s,j)} d(B_p^+) \ge g_t(0).
\]
Hence, the initialization conditions of Lemma~\ref{lemma:generic-fixed-theta} hold
with
\[
[L_0,U_0]=\bar{\mc{J}}^{(s,j)},
\qquad
b_0^- = B_p^-,
\qquad
b_0^+ = B_p^+.
\]

Finally, for every $B \in \mathcal A_\kappa$,
\[
|d(B)|
=
|u^{(j)}(B)-p|
\le
\uumaxj
=
D^{(s,j)}.
\]

To avoid confusion, we use $\gtilde(t)$ to denote the value of $\ghat(t)$
after the postprocessing step in line~\ref{line:V-hat-before-max} and before the clipping step in line~\ref{line:V-hat-after-max}
in Algorithm~\ref{alg:mnl-fixed-jt-sub}.
Applying Lemma~\ref{lemma:generic-fixed-theta} yields
\[
\gtilde(t)
\ge
\min_{\xi \in \R} G\subj(t,\xi)
-
\frac{D^{(s,j)}}{2} 
|\bar{\mc{J}}^{(s,j)}| 2^{-K}.
\]
If a midpoint query returns an assortment with zero residual before iteration $K$, then
$\gtilde(t) = \min_{\xi \in \R} G\subj(t,\xi)$.

The clipped bound follows once more from the $1$-Lipschitz property of
$x \mapsto x_+$:
\[
\ghat(t) =
\max\{0,\gtilde(t)\}
\ge
\max\{0, \min_{\xi \in \R} G\subj(t,\xi)\}
-
\frac{D^{(s,j)}}{2} 
|\bar{\mc{J}}^{(s,j)}| 2^{-K}.
\]
Since $\gesubj(t) = \max\{0,\min_{\xi \in \R} G\subj(t,\xi)\}$, the claim follows.
\end{proof}

\subsection{Proof of results in Section~\ref{sec:additional-lemmas}}
\label{sec:proof-additional-lemmas}

\begin{proof}[Proof of Lemma~\ref{lem:G-gap-combined-ads-sub}]
Fix $j\in[J]$ and $t\in T\jj_{K_T}\cup\offset$.

We first consider the ads branch.

If $t\in T\jj_{K_T}\cap (t_{\min}\jj,t_{\max}\jj)$, then
Lemma~\ref{lem:ads-bisection-guarantee} gives
\[
\ghat\adj_{K_T}(t)
\ge
\geadj(t)
-
\frac{D\adj}{2} |\bar{\mc{J}}\adj(t)| 2^{-K\adj(t)}.
\]

If $t\in\{t_{\min}\jj,t_{\max}\jj\}$, then
since $x^{(j)}(A)/u^{(j)}(A)$ is a convex combination of the singleton ratios $\{1/u_\ell^{(j)} : \alpha_\ell^{(j)} > 0\}$,   every   $A$ 
with $d\adj(t, A)=0$ 
  only uses content families $\ell$ 
  (note that 
  content families with zero attraction affect neither the residual nor the objective and may be deleted)
  with $u_\ell^{(j)} = 1/t$, i.e.,
 $A \subseteq \mc{M}\jj(t)$ where $\mc{M}\jj(t) = \set{\ell \in [L] : u_\ell^{(j)} = 1/t,  \alpha_\ell^{(j)} > 0}$. 
 So~\eqref{eqn:ads-LP-def} reduces to maximizing $G^{(a,j)}(t,A)$ over $A \subseteq \mc{M}\jj(t)$ with $|A| \le \kappa$, 
 which is exactly 
 the problem solved by Algorithm~\ref{alg:mnl-ads-boundary}. Thus, $\ghat\adj_{K_T}(t)=\geadj(t)$, and the ads contribution to $\errork = 0$
 in~\eqref{eqn:search-error-in-grid}
 by definition.

If $t=\infty$, then by convention the ad mode is inactive, so
\[
\ghat\adj_{K_T}(\infty)=0=\geadj(\infty),
\]
and again the ads contribution to $\errork$ is $0$.

Combining these three cases, we obtain for every $t\in T\jj_{K_T}\cup\offset$,
\begin{align}
\label{eq:B14-ads-branch}
\ghat\adj_{K_T}(t)
\ge
\geadj(t)
-
\begin{cases}
\dfrac{D\adj}{2} |\bar{\mc{J}}\adj(t)| 2^{-K\adj(t)},
& t=t_r^{(j)} \in T\jj_{K_T}\cap (t_{\min}\jj,t_{\max}\jj), \\[1.0ex]
0,
& t\in\{t_{\min}\jj,t_{\max}\jj,\infty\}.
\end{cases}
\end{align}

We next consider the subscription branch.

If $p\in(0,U_{\max}\jj)$, then
Lemma~\ref{lem:subscription-bisection-guarantee} gives, for every
$t\in T\jj_{K_T}\cup\offset$,
\[
\ghat\subj_{K_T}(t)
\ge
\gesubj(t)
-
\frac{D\subj}{2} |\bar{\mc{J}}\subj| 2^{-K\subj}.
\]

If $p\notin(0,U_{\max}\jj)$, then
Algorithm~\ref{alg:mnl-fixed-jt-sub} is exact:
it either returns the zero solution when $p>U_{\max}\jj$ or $p=0$, or solves the
boundary branch exactly when $p=U_{\max}\jj$.
Therefore, for every $t\in T\jj_{K_T}\cup\offset$,
\[
\ghat\subj_{K_T}(t)=\gesubj(t),
\]
and the subscription contribution to $\errork$ is $0$.

Hence, for every $t\in T\jj_{K_T}\cup\offset$,
\begin{align}
\label{eq:B14-sub-branch}
\ghat\subj_{K_T}(t)
\ge
\gesubj(t)
-
\begin{cases}
\dfrac{D\subj}{2} |\bar{\mc{J}}\subj| 2^{-K\subj},
& p\in(0,U_{\max}\jj), \\[1.0ex]
0,
& p\notin(0,U_{\max}\jj).
\end{cases}
\end{align}

Adding~\eqref{eq:B14-ads-branch} and~\eqref{eq:B14-sub-branch}, and using
\[
G\jj(t)=\geadj(t)+\gesubj(t),
\]
yields
\[
\ghat\jj_{K_T}(t)
\ge
G\jj(t)-\errork,
\]
which proves~\eqref{eq:combined-error-at-t}.

 Next, we show~\eqref{eq:grid-hat-error}. Define
\[
a_t := \ghat\jj_{K_T}(t),
\qquad
b_t := G\jj(t),
\qquad
c_t := \errork.
\]
Since $a_t \ge b_t-c_t$ for every $t\in T\jj_{K_T}\cup\offset$, we have
\[
\max_{t\in T\jj_{K_T}\cup\offset} a_t
\ge
\max_{t\in T\jj_{K_T}\cup\offset} (b_t-c_t)
\ge
\max_{t\in T\jj_{K_T}\cup\offset} b_t
-
\max_{t\in T\jj_{K_T}\cup\offset} c_t .
\]
Applying this inequality for all $j$ and summing over $j=1,\dots,J$ gives
\eqref{eq:grid-hat-error}.
\end{proof}

\begin{proof}[Proof of Lemma~\ref{lem:mnl-global-ratio-lipschitz}]
If \((\delta\jj)\opt=\infty\), then all deterministic assortments
with positive utility have the same ratio. Since all positive-utility singleton
assortments belong to \(\mathcal A_\kappa\), this implies
\(t^{(j)}_{\min}=t^{(j)}_{\max}\), so \(I^{(j)}\) is a singleton and the claim
is immediate. Thus, we assume \((\delta\jj)\opt<\infty\).

\paragraph{Ads branch.}
For \(A\in\mathcal A_\kappa\), define
\begin{align*}
\Psi_A(t)
&:=G^{(a,j)}(t,A),\\
d(t, A)
&:=x^{(j)}(A)-t u^{(j)}(A).
\end{align*}
Only the factor \(\bar F\jj(\sigma t)\) depends on \(t\). Since
\(\bar F\jj(\sigma t)\) is \(\sigma W_{F\jj}\)-Lipschitz as a function of \(t\), and
\begin{align*}
0\le x^{(j)}(A)\le 1,\qquad
0\le
\sum_{\ell\in A\setminus S}\gamma_\ell p_\ell^{(j)}(A)
\le \gamma_{\max},
\end{align*}
we have, for all \(t,t'\in I^{(j)}\),
\begin{align}
|\Psi_A(t)-\Psi_A(t')|
\le
\lambda\jj\sigma W_{F\jj}
\bigl(\rads\sigma+\gamma_{\max}\bigr)|t-t'|.
\label{eqn:Psi-A-Lip}
\end{align}
Also,
\begin{align*}
|\Psi_A(t)|\le \bar\Phi\adj.
\end{align*}

By Assumption~\ref{assump:MNL}(\ref{assm:alpha-implies-u-support}), if
\(u^{(j)}(A)=0\), then \(x^{(j)}(A)=0\), and hence
\(d(t, A)=0\) and \(\Psi_A(t)=0\) for all \(t\). Thus zero-utility assortments
contribute only the zero candidate.

The fixed-\(t\) ads problem is a linear program over the simplex with one
linear equality constraint. Thus, an optimal extreme point has support size at
most two. Equivalently,
\begin{align*}
G^{(a,j)}(t)
=
\max_{\substack{
A,B\in\mathcal A_\kappa,\ \theta\in[0,1]\\
\theta d(t, A)+(1-\theta)d(t, B)=0
}}
\left\{
\theta\Psi_A(t)+(1-\theta)\Psi_B(t)
\right\}.
\end{align*}
The case \(A=B\) represents a one-support solution.

For every positive-utility assortment \(A\), write
\begin{align*}
\rho_A^{(j)}
:=
\frac{x^{(j)}(A)}{u^{(j)}(A)}.
\end{align*}
Consider any pair \(A,B\) with positive utility and
\(\rho_A^{(j)}<\rho_B^{(j)}\). For
\(t\in[\rho_A^{(j)},\rho_B^{(j)}]\), the unique feasible mixture supported on
\(\{A,B\}\) has weights
\begin{align*}
\omega_A(t)
&=
\frac{
u^{(j)}(B)\bigl(\rho_B^{(j)}-t\bigr)
}{
D_{A,B}(t)
},\\
\omega_B(t)
&=
\frac{
u^{(j)}(A)\bigl(t-\rho_A^{(j)}\bigr)
}{
D_{A,B}(t)
},
\end{align*}
where
\begin{align*}
D_{A,B}(t)
:=
u^{(j)}(B)\bigl(\rho_B^{(j)}-t\bigr)
+
u^{(j)}(A)\bigl(t-\rho_A^{(j)}\bigr).
\end{align*}
Define the corresponding pair value
\begin{align}
V_{A,B}(t)
:=
\omega_A(t)\Psi_A(t)+\omega_B(t)\Psi_B(t). \label{eqn:V-A-B}
\end{align}
One-support positive-utility candidates are included by these formulas through
a zero weight on one of the two supports. If there is only one positive-utility
ratio, we are in the singleton-interval case already handled above.

Let
\begin{align*}
\Delta_{A,B}:=\rho_B^{(j)}-\rho_A^{(j)}.
\end{align*}
For every \(t\in[\rho_A^{(j)},\rho_B^{(j)}]\),
\begin{align}
D_{A,B}(t)
& \ge \min\{u^{(j)}(A),u^{(j)}(B)\} \cdot \bigl(\rho_B^{(j)}-t + t-\rho_A^{(j)}\bigr) \nonumber  \\
&=
\min\{u^{(j)}(A),u^{(j)}(B)\}\Delta_{A,B} \label{eqn:D-A-B-eq}\\
&\ge
(\delta\jj)\opt. \nonumber
\end{align}
Differentiating the weights gives
\begin{align*}
\omega_A'(t)
&=
-\frac{
u^{(j)}(A)u^{(j)}(B)\Delta_{A,B}
}{
D_{A,B}(t)^2
},\\
\omega_B'(t)
&=
\frac{
u^{(j)}(A)u^{(j)}(B)\Delta_{A,B}
}{
D_{A,B}(t)^2
}.
\end{align*}
Using \(u^{(j)}(A),u^{(j)}(B)\le \umaxj \), and the 
inequality that 
$a,b \in [0,M] \Rightarrow ab \le M \cdot \min\set{a,b}$ and~\eqref{eqn:D-A-B-eq}, we have
\begin{align*}
u^{(j)}(A)u^{(j)}(B)\Delta_{A,B}
\le \umaxj  \cdot \min\{u^{(j)}(A),u^{(j)}(B)\}\Delta_{A,B} 
\le 
\umaxj D_{A,B}(t),
\end{align*}
and therefore
\begin{align*}
|\omega_A'(t)|,\ |\omega_B'(t)|
\le
\frac{\umaxj }{(\delta\jj)\opt}.
\end{align*}

Therefore, for any $t,t' \in [\rho_A^{(j)},\rho_B^{(j)}]$,
\begin{align}
\left|\omega_A(t)-\omega_A(t')\right|
\le
\sup_{s\in[\rho_A^{(j)},\rho_B^{(j)}]}
\left|\omega_A'(s)\right| |t-t'|
\le
\frac{\umaxj}{(\delta\jj)\opt} |t-t'|. \label{eqn:omgea-A-Lip}
\end{align}

Thus, for any \(t,t'\in[\rho_A^{(j)},\rho_B^{(j)}]\),
\begin{align*}
|V_{A,B}(t)-V_{A,B}(t')|
&=
\Big|\omega_A(t) (\Psi_A(t)-\Psi_A(t'))
+
\omega_B(t) (\Psi_B(t)-\Psi_B(t'))\\
&\quad
+
(\omega_A(t)-\omega_A(t'))
(\Psi_A(t')-\Psi_B(t')) \Big|\\
&\le
\omega_A(t)|\Psi_A(t)-\Psi_A(t')|
+
\omega_B(t)|\Psi_B(t)-\Psi_B(t')|\\
&\quad
+
|\omega_A(t)-\omega_A(t')| 
|\Psi_A(t')-\Psi_B(t')|\\
&\le
\left[
\lambda\jj\sigma W_{F\jj}
\bigl(\rads\sigma+\gamma_{\max}\bigr)
+
\frac{
2\bar\Phi\adj \umaxj 
}{
(\delta\jj)\opt
}
\right]
|t-t'|.
\end{align*}
Here the equality uses the definition of $V_{A,B}(t)$~\eqref{eqn:V-A-B} and the fact that $\omega_A(t) + \omega_B(t)=1$;
the first inequality uses the  triangle
inequality;
and the last inequality uses~\eqref{eqn:Psi-A-Lip},~\eqref{eqn:omgea-A-Lip},
and
\begin{align*}
|\Psi_A(t')-\Psi_B(t')|
\le
|\Psi_A(t')|+|\Psi_B(t')|
\le
2\bar\Phi\adj.
\end{align*}

Because the set of deterministic-assortment ratios  $\set{\rho_A\jj}$ is finite, we may partition  $I\jj$ at these ratios; on each resulting interval, 
 $G\adj(t)$ is the pointwise maximum of the pair-value functions  
 $V_{A,B}(t)$
 that span that interval, each of which has the common Lipschitz constant above, and hence the same bound holds globally on $I\jj$
by summing across adjacent intervals.
Thus,
\begin{align}
|G^{(a,j)}(t)-G^{(a,j)}(t')|
\le
\left[
\lambda\jj\sigma W_{F\jj}
\bigl(\rads\sigma+\gamma_{\max}\bigr)
+
\frac{
2\bar\Phi\adj \umaxj 
}{
(\delta\jj)\opt
}
\right]
|t-t'|. \label{eqn:G-ad-lip}
\end{align}

\paragraph{Subscription branch.}
For every deterministic subscription assortment \(B\in\mathcal A_\kappa\),
\begin{align*}
G^{(s,j)}(t,B)
=
\lambda\jj F\jj(\sigma t)
\left[
p-\sum_{\ell\in B\setminus S}\gamma_\ell p_\ell^{(j)}(B)
\right]
-
\sum_{\ell\in B\cap S}
\eta_\ell\beta_\ell^{(s,j)}p_\ell^{(j)}(B).
\end{align*}
Since
\begin{align*}
0\le
\sum_{\ell\in B\setminus S}\gamma_\ell p_\ell^{(j)}(B)
\le
\gamma_{\max},
\end{align*}
we have
\begin{align}
\left|
G^{(s,j)}(t,B)
-
G^{(s,j)}(t',B)
\right|
\le
\lambda\jj\sigma W_{F\jj}(p+\gamma_{\max})|t-t'|. 
\label{eqn:G-sub-lip}
\end{align}

The feasible set
\(\set{\bm{z}\in\simplex:d\subj(p,\bm{z})=0}\)
in~\eqref{eqn:sub-LP-def} is independent of \(t\). For every feasible
\(\bm{z}\), linearity and~\eqref{eqn:G-sub-lip} give
\begin{align*}
\left|G\subj(t,\bm{z})-G\subj(t',\bm{z})\right|
&=
\left|
\sum_{B\in\mc{A}_\kappa}z_B
\left(G\subj(t,B)-G\subj(t',B)\right)
\right|\\
&\le
\lambda\jj\sigma W_{F\jj}(p+\gamma_{\max})|t-t'|.
\end{align*}
If this feasible set is empty, then \(G\subj(t)\equiv 0\). Otherwise,
the inner optimal value in~\eqref{eqn:sub-LP-def}, being the pointwise
maximum over this fixed family of functions, has the same Lipschitz
constant. Since the map \(x\mapsto x_+\) is \(1\)-Lipschitz,
\(G\subj(t)\) is also
\(\lambda\jj\sigma W_{F\jj}(p+\gamma_{\max})\)-Lipschitz.

Combining the ads bound~\eqref{eqn:G-ad-lip} with the subscription bound
above gives
\begin{align*}
\left|
G^{(j)}(t)-G^{(j)}(t')
\right|
&\le
\left[
\lambda\jj\sigma W_{F\jj}
\bigl(\rads\sigma+p+2\gamma_{\max}\bigr)
+
\frac{
2\bar\Phi\adj \umaxj 
}{
(\delta\jj)\opt
}
\right]
|t-t'|\\
&=
\lipsig |t-t'|.
\end{align*}
This proves the claim.
\end{proof}

\begin{proof}[Proof of  Lemma~\ref{lem:kappa-aware-weighted-separation}]
For each $A \in \mc{A}_\kappa$, define
\[
m_A := \sum_{\ell \in A} a_\ell\jj,
\qquad
n_A := \sum_{\ell \in A} a_\ell\jj b_\ell\jj.
\]
Under Assumption~\ref{assump:rational}, whenever $u^{(j)}(A)>0$ we have
\[
x^{(j)}(A)=\frac{m_A}{Q\jj+m_A},
\qquad
u^{(j)}(A)=\frac{n_A}{L\jj(Q\jj+m_A)},
\qquad
\frac{x^{(j)}(A)}{u^{(j)}(A)} = L\jj \frac{m_A}{n_A}.
\]
Since $|A| \le \kappa$, all coefficients are nonnegative, and
$A_{j,\kappa}^{\mathrm{tot}}$, $N_{j,\kappa}$ are the sums of the $\kappa$ largest entries,
we have
\[
m_A \le A_{j,\kappa}^{\mathrm{tot}},
\qquad
n_A \le N_{j,\kappa}.
\]
The same bounds hold for $m_B,n_B$.

Now suppose
\[
\frac{x^{(j)}(A)}{u^{(j)}(A)} \neq \frac{x^{(j)}(B)}{u^{(j)}(B)}.
\]
Equivalently,
\[
\frac{m_A}{n_A} \neq \frac{m_B}{n_B},
\]
so $m_A n_B - m_B n_A$ is a nonzero integer, and therefore
\[
|m_A n_B - m_B n_A| \ge 1.
\]
Hence,
\begin{small}
\[
u^{(j)}(A)
\left|
\frac{x^{(j)}(A)}{u^{(j)}(A)}
-
\frac{x^{(j)}(B)}{u^{(j)}(B)}
\right|
=
\frac{n_A}{L\jj(Q\jj+m_A)}
\cdot
L\jj\left|\frac{m_A}{n_A}-\frac{m_B}{n_B}\right|
=
\frac{|m_A n_B - m_B n_A|}{(Q\jj+m_A)n_B}
\ge
\frac{1}{(Q\jj + A_{j,\kappa}^{\mathrm{tot}})N_{j,\kappa}}
=
\delta\jj.
\]
\end{small}
By symmetry,
\[
u^{(j)}(B)
\left|
\frac{x^{(j)}(A)}{u^{(j)}(A)}
-
\frac{x^{(j)}(B)}{u^{(j)}(B)}
\right|
\ge \delta\jj.
\]
Taking the minimum of the two inequalities proves the first claim.
\end{proof}

\subsection{Proof of 
Lemma~\ref{lem:sub-zero-p}, Lemma~\ref{lemma:ratio-reduction-in-G}, Theorem~\ref{thm:mnl-inner-separated-gap}, and Theorem~\ref{thm:asymptotic-opt}}
\label{sec:proof-main-lemma-thm}

In this section, we present proofs of 
Lemma~\ref{lem:sub-zero-p}, 
Lemma~\ref{lemma:ratio-reduction-in-G}, Theorem~\ref{thm:mnl-inner-separated-gap}, and Theorem~\ref{thm:asymptotic-opt} using lemmas introduced previously.
First, we show Lemmas~\ref{lem:sub-zero-p}
and~\ref{lemma:ratio-reduction-in-G}. 

\begin{proof}[Proof of Lemma~\ref{lem:sub-zero-p}]
Let
\[
\yadS := \yadj, \qquad \ysubS := \ysubj.
\] 

Write 
\[
R_a := \sum_{\ell\in [L]\setminus S}\gamma_\ell f_\ell^{(a,j)}(\yadS), \qquad
C_a := \sum_{\ell\in S}\eta_\ell \beta_\ell^{(a,j)} f_\ell^{(a,j)}(\yadS),
\]
\[
R_s := \sum_{\ell\in [L]\setminus S}\gamma_\ell f_\ell^{(s,j)}(\ysubS), \qquad
C_s := \sum_{\ell\in S}\eta_\ell \beta_\ell^{(s,j)}  f_\ell^{(s,j)}(\ysubS).
\]
Also, let $u_a = \uadj, u_s = \usubj, x_a = \xadj$.
Let $P_a,P_s$ be the induced ad/subscription choice probabilities under
$(\yadS,\ysubS)$.

If $p=0$ and $u_s>0$, let
$M_a:=\rads\sigma x_a-R_a$, and let $P_a^{\emptyset}$ denote the
ad-choice probability after replacing $\ysubS$ by $\ynullS$.
By~\eqref{eqn:padj}, this replacement weakly decreases the ad cutoff,
so $P_a^{\emptyset}\ge P_a$; moreover,
$P_s=1-P_a$ by~\eqref{eqn:psubj}. If $M_a\ge0$, then
\begin{align*}
G^{(j)}(\yadS,\ynullS)-G^{(j)}(\yadS,\ysubS)
&=
\lambda\jj(P_a^{\emptyset}-P_a)M_a
+\lambda\jj P_sR_s+C_s
\ge0.
\end{align*}
If instead $M_a<0$, then
\[
G^{(j)}(\yadS,\ysubS)
=
\lambda\jj P_aM_a-\lambda\jj P_sR_s-C_a-C_s
\le0
=
G^{(j)}(\ynullS,\ynullS).
\]
Thus, in either case, the original solution is weakly dominated by one
with $u_s=0=p$. Thus, an optimal solution exists with $u_s=0=p$, and
below we may assume $p>0$.
If $0<u_s<p$, then $P_s=0$ and $(u_s-p)_+=0$. Replacing $\ysubS$ by $\ynullS$
leaves the ads term unchanged and weakly decreases the subscription-side cost,
so such a solution is dominated.

Now suppose $u_s>p$, and set
\[
\theta := \frac{p}{u_s}\in(0,1), \qquad
\tilde \ysubS := \theta \ysubS + (1-\theta)\ynullS.
\]
By linearity,
\[
u^{(s,j)}(\tilde \ysubS)=p, \qquad
\tilde R_s=\theta R_s\le R_s, \qquad
\tilde C_s=\theta C_s\le C_s.
\]

If the ads mode is inactive, i.e.,
$\uadS = 0$, then
\[
G^{(j)}(\ynullS,\tilde \ysubS)
=
\lambda\jj(p-\tilde R_s)-\tilde C_s
\ge
\lambda\jj(p-R_s)-C_s
=
G^{(j)}(\ynullS,\ysubS),
\]
so replacing $\ysubS$ by $\tilde \ysubS$ weakly improves the objective. Thus it
remains to consider the ads-active case.

Assume now that the ads mode is active, i.e.,
$\uadS > 0$. Since $u_s>p$, we have $P_s=1-P_a$. For
the pair $(\yadS,\tilde \ysubS)$, define
\[
\tilde P_a := \bar F\jj\!\left(\frac{\sigma x_a}{u_a}\right), \qquad
\tilde P_s := 1-\tilde P_a.
\]
Lowering the subscription utility from $u_s$ to $p$ weakly increases the
denominator in the ad-threshold expression, so $\tilde P_a\ge P_a$. Define
\[
\Delta := \lambda\jj\bigl(\rads\sigma x_a - R_a - p + R_s\bigr).
\]
A direct expansion gives
\[
G^{(j)}(\yadS,\tilde \ysubS)
-
G^{(j)}(\yadS,\ysubS)
=
\Delta(\tilde P_a-P_a)
+\lambda\jj(R_s-\tilde R_s)\tilde P_s
+(C_s-\tilde C_s).
\]
If $\Delta\ge 0$, the right-hand side is nonnegative, so replacing $\ysubS$ by
$\tilde \ysubS$ weakly improves the objective and yields subscription utility
exactly $p$.

If instead \(\Delta<0\), then, since \(u_s>p\), we have \(P_s=1-P_a\). Thus,
\begin{align*}
G^{(j)}(\yadS, \ysubS)
&=
\lambda\jj P_a\bigl(\rads\sigma x_a-R_a\bigr)
+
\lambda\jj(1-P_a)(p-R_s)
-C_a-C_s \\
&=
-C_a+\lambda\jj(p-R_s)-C_s
+\lambda\jj\bigl(\rads\sigma x_a-R_a-p+R_s\bigr)P_a \\
&=
-C_a+\lambda\jj(p-R_s)-C_s+\Delta P_a \\
&\le
\lambda\jj(p-R_s)-C_s \\
&\le
\lambda\jj(p-\tilde R_s)-\tilde C_s
=
G^{(j)}(\ynullS,\tilde \ysubS),
\end{align*}
where the first 
equality is the definition of $G^{(j)}(\yadS, \ysubS)$,
 the second equality is direct algebra,
 the third equality uses the definition of $\Delta$, the first
inequality uses \(\Delta<0\), \(P_a\ge 0\), and
\(C_a\ge 0\), and the second inequality uses
\(\tilde R_s\le R_s\) and \(\tilde C_s\le C_s\).
Thus, again the original solution is dominated by one with subscription utility
exactly $p$.

Thus, whenever utility level $p$ is attainable on the subscription side (i.e., $\uumaxj \ge p$), there
exists an optimal solution with
\[
u_\star^{(s,j)} \in \{0,p\}.
\]

Finally, suppose no subscription routing attains utility $p$. As we show above, any
subscription assortment distribution that yields utility in the interval $(0,p)$ is
dominated by $\ynullS$. Hence, there exists an optimal solution with
$u_\star^{(s,j)} = 0$,
and we complete the proof.
\end{proof}

\begin{proof}[Proof of Lemma~\ref{lemma:ratio-reduction-in-G}]
By Lemma~\ref{lem:sub-zero-p}, we may restrict attention to solutions with
\begin{align}
    (u^{(s,j)}-p)_+=0. \label{eqn:sub-small-p-ratio-lemma}
\end{align}

We first prove the upper bound:
$G\jj(S, \bm{\beta})
\le
\max \set{
G\jj(\infty),
\max_{t\in I^{(j)}}
\left(
\geadj(t)+\gesubj(t)
\right)
}.$
For any feasible pair
\(\yadj,\ysubj \in \simplex \),
we need to show that
\begin{align}
G^{(j)}\bigl( \yadj, \ysubj\bigr)
\le \max \set{
G\jj(\infty),
\max_{t\in I^{(j)}}
\left(
\geadj(t)+\gesubj(t)
\right)
}.\label{eqn:lemma1-upper-pre}
\end{align}

Consider the following two cases.
\begin{enumerate}
    \item   \(u^{(a,j)}=0\):   
    Assumption~\ref{assump:MNL}(\ref{assm:alpha-implies-u-support}) 
    implies
\(x^{(a,j)}=0\), so the ad mode is inactive. 
Then  the objective
of $G^{(j)}\bigl( \yadj, \ysubj\bigr)$ in~\eqref{eqn:lemma1-upper-pre}
reduces to the subscription-only value generated by   $\ysubj$.
Since 
$G\jj(\infty)$ is the maximum value over all subscription-only feasible choices, the value from this particular  $\ysubj$
is at most $G\jj(\infty)$.
Thus,
\(G^{(j)}\bigl( \yadj, \ysubj\bigr) \le G\jj(\infty)\).
\item   \(u^{(a,j)}>0\): Define
\[
t:=\frac{x^{(a,j)}}{u^{(a,j)}}.
\]
By Lemma~\ref{lem:sub-zero-p}, we may assume that \(u^{(s,j)} \in \set{0, p }\).
By Lemma~\ref{lem:ratio-interval}, 
\(t\in I^{(j)}\), and the ad assortment distribution satisfies
\[
d\adj(t, \yadj)=0.
\]
Thus,
the ad-choice cutoff is \(h^{(j)}=\sigma t\), and
the type-\(j\) objective separates into an ad term and a subscription term. The
ad term is at most \(\geadj(t)\). On the subscription side, either
\(u^{(s,j)}=0\), in which case the subscription contribution is zero, or
\(u^{(s,j)}=p\), in which case the subscription contribution is at most
\(G^{(s,j)}(t)\). Therefore, for this finite ratio \(t\),  
\[
G^{(j)}\bigl( \yadj, \ysubj\bigr) \le 
\geadj(t)+\gesubj(t).
\]
\end{enumerate}
Combining the above two cases, we prove~\eqref{eqn:lemma1-upper-pre}.
Taking $\max_{\yadj,\ysubj \in \simplex }$ on both sides of~\eqref{eqn:lemma1-upper-pre}, we show 
\begin{align}
    G\jj(S, \bm{\beta})
\le
\max\left\{
G\jj(\infty),
\max_{t\in I^{(j)}}
\left(
\geadj(t)+\gesubj(t)
\right)
\right\}. \label{eqn:lemma1-upper}
\end{align}

We next prove the reverse inequality:
$G\jj(S, \bm{\beta})
\ge
\max \set{
G\jj(\infty),
\max_{t\in I^{(j)}}
\left(
\geadj(t)+\gesubj(t)
\right)
}.$

\begin{enumerate}
\item First, we see
\begin{align}
G\jj(S, \bm{\beta}) \ge 
\max_{\ysubj \in \simplex }
\set{G^{(j)}\bigl( \yadj = \ynullS, \ysubj\bigr):
d\subj(p, \ysubj)=0 \text{ or }
\ysubj = \ynullS
} = G\jj(\infty).
\label{eqn:lemma1-lower-case-1}
\end{align}
\item 

Fix any finite 
\begin{align}
t\in I^{(j)}. \label{eqn:t-finite-assumption-in-ratio-lemma}
\end{align}
Since \(\ynullS\) is feasible for the ads
LP~\eqref{eqn:ads-LP-def} and gives zero reward and zero residual, we have
\[
\geadj(t)\ge 0.
\]
\begin{enumerate}
\item If \(\geadj(t)>0\), let 
$\bm y_t^{(a,j)}$ be an optimizer of the ads LP~\eqref{eqn:ads-LP-def}.
This optimizer must have \(u^{(a,j)}>0\); otherwise,
Assumption~\ref{assump:MNL}(\ref{assm:alpha-implies-u-support})  would imply
\(x^{(a,j)}=0\) and the ads reward would be zero. Thus, 
feasibility of $\bm y_t^{(a,j)}$ implies that
\begin{align*}
 \frac{\xadj}{\uadj} = t.
\end{align*}
By~\eqref{eqn:t-finite-assumption-in-ratio-lemma},
we have
\begin{align*}
 \frac{\xadj}{\uadj} = t < \infty
\end{align*} and
by~\eqref{eqn:sub-small-p-ratio-lemma},
the cutoff in~\eqref{eqn:padj}
is \(h^{(j)}(\bm y_t^{(a,j)})=\sigma t\). If
\(G^{(s,j)}(t)>0\), choose an optimizer \(\bm y_t^{(s,j)} = \bm{z}_t\) of the  
subscription LP; otherwise,
set the subscription assortment distribution to \(\bm y_t^{(s,j)} = \ynullS\). 
Thus,
\begin{align}
    G\jj(S, \bm{\beta}) \ge 
 G^{(j)}\bigl( \bm y_t^{(a,j)}, \bm y_t^{(s,j)}\bigr) 
 \ge
\geadj(t)+\gesubj(t).
 \label{eqn:lemma1-lower-case-2-a}
\end{align}
\item It remains to consider the case \(\geadj(t)=0\). If
\(G^{(s,j)}(t) =  0\), then  
we have 
\begin{align}
    \geadj(t)+\gesubj(t) \le 0 \le  G\jj(\infty) \le G\jj(S, \bm{\beta}), \label{eqn:lemma1-lower-case-2-b-1}
\end{align} where the last inequality follows from~\eqref{eqn:lemma1-lower-case-1}.
If
\(G^{(s,j)}(t)>0\), let \(\bm{z}_t\) be an optimizer of the  
subscription LP so that
\begin{align}
    d\subj(p, \bm{z}_t)=0 \label{eqn:z-t-feasible-sub-LP}.
\end{align}
  Write
\[
M_t
:=
p-\sum_{\ell\notin S}\gamma_\ell f_\ell^{(s,j)}(\bm{z}_t),
\qquad
B_t
:=
\sum_{\ell\in S}\eta_\ell\beta_\ell^{(s,j)}f_\ell^{(s,j)}(\bm{z}_t)\ge 0.
\]
Then
\[
G^{(s,j)}(t,\bm{z}_t)
=
\lambda\jj F\jj(\sigma t)M_t-B_t>0.
\]
Thus, \(M_t>0\). 
Therefore,
\begin{align}
G\jj(\infty, \bm{z}_t)
=
\lambda\jj M_t-B_t
\ge
\lambda\jj F\jj(\sigma t)M_t-B_t
=
G^{(s,j)}(t,\bm{z}_t)
=
G^{(s,j)}(t),
\label{eqn:G-off-better-than-subj}
\end{align}
where the first inequality follows from $F\jj(\sigma t) \le $1.
Next,
\begin{align}
    G\jj(S, \bm{\beta}) \ge  
G\jj(\infty)
\ge
G\jj(\infty, \bm{z}_t)
\ge
G^{(s,j)}(t) = \geadj(t)+\gesubj(t),
\label{eqn:lemma1-lower-case-2-b-2}
\end{align}
where the first inequality follows from~\eqref{eqn:lemma1-lower-case-1},
the second inequality follows from $d\subj(p, \bm{z}_t)=0$~\eqref{eqn:z-t-feasible-sub-LP} and
\begin{align*}
        G\jj(\infty)
=
\max\Biggl\{
0, 
\max_{z\in\Delta(\mc{A}_\kappa):
d\subj(p, \bm{z})=0}
G\jj(\infty, \bm{z})
\Biggr\} \ge G\jj(\infty, \bm{z}_t),
\end{align*}
 the third inequality follows 
from~\eqref{eqn:G-off-better-than-subj},
and the   equality follows from  $\geadj(t)=0, G^{(s,j)}(t) > 0$.
\end{enumerate}
\end{enumerate}

Combining~\eqref{eqn:lemma1-lower-case-1},~\eqref{eqn:lemma1-lower-case-2-a},~\eqref{eqn:lemma1-lower-case-2-b-1}, and~\eqref{eqn:lemma1-lower-case-2-b-2}, we have   
\[
G\jj(S, \bm{\beta})
\ge
\max\left\{
G\jj(\infty),
\max_{t\in I^{(j)}}
\left(
\geadj(t)+\gesubj(t)
\right)
\right\}.
\]
Together with the upper bound~\eqref{eqn:lemma1-upper}, this proves~\eqref{eqn:G-can-be-solved-in-ratio}.
\end{proof}

\begin{proof}[Proof of Theorem~\ref{thm:mnl-inner-separated-gap}]

We first prove that
\begin{align}
\begin{split}
&  G(S,\bm{\beta})-\ghat(S,\bm{\beta},K_T) 
\le    
\frac{1}{2(K_T-1)}
\sum_{j=1}^J
\lipsig
\bigl(t_{\max}^{(j)}-t_{\min}^{(j)}\bigr)  + 
\frac{1}{K_T - 1}
\sum_{j=1}^J (\Gamma\adj +\Gamma\subj ),
\end{split}
\label{eqn:thm-relaxed-gap}
\end{align}
then derive~\eqref{eqn:thm-original-gap}.

Recall that
\[
G\jj( t)
=
\geadj(t)+ \gesubj(t),
\qquad t\in I^{(j)}.
\]

Also define 
\begin{align}
\ggrid
:=
\sum_{j=1}^J  
\max_{t\in T\jj_{K_T} \cup \set{\infty}} G\jj( t), \label{eqn:ggrid-def}
\end{align}
where $T\jj_{K_T}$ is defined in~\eqref{eqn:outer-grid-def}.
This quantity  $\ggrid$~\eqref{eqn:ggrid-def}
represents the best value obtained when the ratio \(t\) is
restricted to the grid \(T_{K_T}^{(j)}\), while also allowing the ad-inactive
branch \(t=\infty\). Thus, the difference between \(G(S,\bm{\beta})\) and
\(\ggrid\) is only the ratio-grid
discretization error.

In addition, from~\eqref{eqn:G-can-be-solved-in-ratio},
we have
\[
G(S,\bm{\beta})
=
\sum_{j=1}^J G\jj(S, \bm{\beta})
=
\sum_{j=1}^J 
\max\set{G\jj(\infty),\max_{t\in I^{(j)}} G\jj( t)}.
\]

We now separate the approximation error into two components.  The
quantity \(\ggrid \) is  what we would obtain if the fixed-\((j,t)\) ads and subscription
LPs were solved exactly, but only at the sampled branches
\(T^{(j)}_{K_T}\cup\{\infty\}\).  Thus, $G(S,\beta)-\ggrid$
is purely the outer ratio-grid discretization error.  
Conditional on each $t$, however, Algorithm~\ref{alg:mnl-solver-all}
computes
the fixed-\((j,t)\) LP values only approximately, using finite dual
bisection and postprocessing. 
Lemma~\ref{lem:G-gap-combined-ads-sub} bounds this inner fixed-branch
computational error through the terms \(\errork\). By the branch-selection
rule and the two-case comparison in the proof of
Lemma~\ref{lemma:ratio-reduction-in-G}, applied to the returned branch
solutions, \(\ghat(S,\bm{\beta},K_T)\) weakly dominates the left-hand side
of~\eqref{eq:grid-hat-error}. Hence,~\eqref{eq:grid-hat-error} yields
\[
\ghat(S,\bm{\beta},K_T)
\ge
\ggrid
-
\sum_{j=1}^J \max_{t\in T\jj_{K_T} \cup \offset}
\errork,
\]
where $\errork$
is defined in~\eqref{eqn:search-error-in-grid}.
Therefore,
\begin{align*}
G(S,\bm{\beta})-\ghat(S,\bm{\beta},K_T)
&\le
G(S,\bm{\beta})
-
\ggrid
+
\sum_{j=1}^J \max_{t\in T\jj_{K_T} \cup \offset}
\errork.
\end{align*}

For each $j$, choose $(\tau\jj)\opt \in \argmax_{\tau \in I^{(j)} \cup \{\infty\}} G^{(j)}(\tau)$. If $(\tau\jj)\opt = \infty$, then the discretization error for type $j$ is zero because $\infty \in T\jj_{K_T} \cup \offset$. Otherwise, let $\bar{t}\jj \in T_{K_T}^{(j)}$ be a nearest grid point to $(\tau\jj)\opt$. By~\eqref{eqn:Delta-t-K-def}, $\lvert (\tau\jj)\opt - \bar{t}\jj \rvert \leq \Delta_{t,K_T}^{(j)}$, and hence by~\eqref{eqn:lip-cond-holds},
\[
\max_{\tau \in I^{(j)} \cup \{\infty\}} G^{(j)}(\tau)
-
\max_{\tau \in T\jj_{K_T} \cup \set{\infty}} G^{(j)}(\tau)
\leq
G^{(j)}((\tau\jj)\opt) - G^{(j)}(\bar{t}\jj)
\leq
\lipsig 
\Delta_{t,K_T}^{(j)},
\] where $\lipsig$
is defined in~\eqref{eqn:lip-cond-holds}. 
Summing over $j$ gives
\begin{align}
G(S,\bm{\beta})-\ggrid
 \le
\sum_{j=1}^J
\lipsig\Delta_{t,K_T}^{(j)} = 
\frac{1}{2(K_T-1)}
\sum_{j=1}^J
\lipsig
\bigl(t_{\max}^{(j)}-t_{\min}^{(j)}\bigr),
\label{eqn:thm-gap-step-1}
\end{align}
where the last equality again uses~\eqref{eqn:Delta-t-K-def}.
 
It remains to bound the search-error term. For each type \(j\) and each sampled ratio
\(t=t_r^{(j)}\in T_{K_T}^{(j)}\), Lemma~\ref{lem:G-gap-combined-ads-sub} and the definition~\eqref{eqn:search-error-in-grid}
give
\[
\errork
=
\begin{cases}
\dfrac{D^{(a,j)}}{2} \bigl|\bar{\mc{J}}\adj(t) \bigr| 2^{-K^{(a,j)}(t)}, & t \in (t_{\min}^{(j)},t_{\max}^{(j)}),
\\[0.8em]
0, & t\in\{t_{\min}^{(j)},t_{\max}^{(j)},\infty\},
\end{cases}
\;+\;
\begin{cases}
\dfrac{D^{(s,j)}}{2} \bigl|\bar{\mc{J}}^{(s,j)}\bigr| 2^{-K^{(s,j)}}, & p\in(0,U_{\max}^{(j)}),
\\[0.8em]
0, & p\notin(0,U_{\max}^{(j)}).
\end{cases}
\]

For the ads branch, if \(r\in\{2,\dots,K_T-1\}\), then by 
\eqref{eqn:outer-grid-def} and
\eqref{eqn:ads-dual-interval-algo},
\[
\bigl|\bar{\mc{J}}\adj(t_r^{(j)})\bigr|
=
\bar\Phi^{(a,j)}\frac{K_T-1}{t\jj_{\max} - t\jj_{\min}}
\left(
\frac{\Theta_-^{(a,j)}}{r-1}
+
\frac{\Theta_+^{(a,j)}}{K_T-r}
\right).
\]
By    \eqref{eqn:inner-grid-size-ads},
\[
2^{-K^{(a,j)}(t_r^{(j)})}
\le
\frac{2(t\jj_{\max} - t\jj_{\min})}{(K_T-1)^2}
\left(
\frac{\Theta_-^{(a,j)}}{r-1}
+
\frac{\Theta_+^{(a,j)}}{K_T-r}
\right)^{-1}.
\]
Hence,
\[
\frac{D^{(a,j)}}{2} \bigl|\bar{\mc{J}}\adj(t_r^{(j)})\bigr| 2^{-K^{(a,j)}(t_r^{(j)})}
\le
\frac{D^{(a,j)}\bar\Phi^{(a,j)}}{K_T-1}
=
\frac{\Gamma^{(a,j)}}{K_T-1}.
\]
For \(r\in\{1,K_T\}\), the ads contribution to $\errork$
is \(0\).

For the subscription branch, if \(p\in(0,U_{\max}^{(j)})\), then by \eqref{eqn:sub-dual-interval-algo} and \eqref{eqn:inner-grid-size-sub},
\[
\bigl|\bar{\mc{J}}^{(s,j)}\bigr|=\bar\Phi^{(s,j)}\Theta^{(s,j)},
\qquad
2^{-K^{(s,j)}}\le \frac{2}{(K_T-1)\Theta^{(s,j)}},
\]
so
\[
\frac{D^{(s,j)}}{2} \bigl|\bar{\mc{J}}^{(s,j)}\bigr| 2^{-K^{(s,j)}}
\le
\frac{D^{(s,j)}\bar\Phi^{(s,j)}}{K_T-1}
=
\frac{\Gamma^{(s,j)}}{K_T-1},
\] where the last equality follows from~\eqref{eqn:Gamma-j-a-def}.
If \(p\notin(0,U_{\max}^{(j)})\),   
then the subscription contribution to
$\errork = 0$.

Therefore, for every \(j\),
\[
\max_{t\in T_{K_T}^{(j)}\cup\{\infty\}} \errork
\le
\frac{\Gamma^{(a,j)}+\Gamma^{(s,j)}}{K_T-1}.
\]
Summing the bound in  
Lemma~\ref{lem:G-gap-combined-ads-sub} 
over \(j\) yields
\[
\ggrid -   \ghat(S,\bm{\beta},K_T) \le  
\sum_{j=1}^J \max_{t\in T_{K_T}^{(j)}\cup\{\infty\}} \errork
\le
\frac{1}{K_T-1}\sum_{j=1}^J\bigl(\Gamma^{(a,j)}+\Gamma^{(s,j)}\bigr).
\]

Combining this bound with~\eqref{eqn:thm-gap-step-1},
we obtain
\begin{align*}
    G(S,\bm{\beta})- \ghat(S,\bm{\beta},K_T)
&= \Paran{G(S,\bm{\beta})-  \ggrid}
+
\Paran{\ggrid - \ghat(S,\bm{\beta},K_T)} \\
& \le
\frac{1}{2(K_T-1)}
\sum_{j=1}^J
\lipsig
\bigl(t_{\max}^{(j)}-t_{\min}^{(j)}\bigr)
+
\frac{1}{K_T-1}
\sum_{j=1}^J
\bigl(\Gamma^{(a,j)}+\Gamma^{(s,j)}\bigr),
\end{align*}
which is exactly \eqref{eqn:thm-relaxed-gap}.

We now prove~\eqref{eqn:thm-original-gap}.
By adding and subtracting $R(S)$ and using the relaxation inequality~\eqref{eqn:beta-relax-ineq},
\begin{align*}
R\opt-\rhat(S,\bm{\beta},K_T)
&=
\bigl(R\opt-R(S)\bigr)
+
\bigl(R(S)-\rhat(S,\bm{\beta},K_T)\bigr) \\
&\le
\bigl(R\opt-R(S)\bigr)
+
\bigl(G(S,\bm{\beta})-\ghat(S,\bm{\beta},K_T)\bigr) 
+
\bigl(\ghat(S,\bm{\beta},K_T)-\rhat(S,\bm{\beta},K_T)\bigr).
\end{align*}
By Lemma~\ref{lemma:buyset-gap-mnl},
\[
R\opt - R(S)
\le
\TOP_{2J}\{\bm w^S\},
\]
and by Lemma~\ref{lemma:relaxation-gap} applied to the returned assortment distribution
$\widehat{\bm y}(S,\bm{\beta},K_T)$,
\[
\ghat(S,\bm{\beta},K_T)-\rhat(S,\bm{\beta},K_T) 
\le
G(S,\bm{\beta}, \hat{\bm{y}})
-R(S, \hat{\bm{y}}) \le
\TOP_{2J}\set{\bm q^{S,\bm{\beta}}},
\] where we omit the dependence of $\hat{\bm y}$ on $(S,\bm{\beta},K_T)$ for notational clarity.
Combining these two bounds above 
with~\eqref{eqn:thm-relaxed-gap}, we obtain
\begin{align*}
R\opt-\rhat(S,\bm{\beta},K_T)
&\le
\TOP_{2J}\{\bm w^S\}
+
\TOP_{2J}\set{\bm q^{S,\bm{\beta}}}  \\ 
&+
\frac{1}{2(K_T-1)}
\sum_{j=1}^J
\lipsig
\bigl(t_{\max}^{(j)}-t_{\min}^{(j)}\bigr)  
+
\frac{1}{K_T-1}
\sum_{j=1}^J
(\Gamma\adj+\Gamma\subj)
\end{align*}
and we complete the proof.
\end{proof}

\begin{proof}[Proof of Theorem~\ref{thm:asymptotic-opt}]
First, we show that
\begin{align*}
   R\opt_N \ge N \RL
\qquad \forall N \ge 1.
\end{align*}
Indeed, for any fixed assortment distribution \(\bm{y}\), the revenue term scales
linearly in \(N\), while the procurement cost is sublinear in \(N\):
\[
C_{N\lambda}(\bm{y})
=
\sum_{\ell}
\min\{a_\ell(\bm{y}),N b_\ell(\bm{y})\}
\le
N\sum_{\ell}\min\{a_\ell(\bm{y}),b_\ell(\bm{y})\}
=
N C_{\lambda}(\bm{y}),
\]
where
\begin{align*}
a_\ell(\bm y) & := \eta_\ell \max_{j,m} f_\ell^{(m,j)}(\bm y), 
\\ b_\ell(\bm y) & := \gamma_\ell \sum_j \lambda^{(j)}
\left(
P^{(a,j)}(\bm y) f_\ell^{(a,j)}(\bm y)
+
P^{(s,j)}(\bm y) f_\ell^{(s,j)}(\bm y)
\right).
\end{align*}
Evaluating the size-\(N\) instance at a base-instance optimizer \(\bm{y}_1^\star\)
therefore gives  
\[
R_N\opt\ge N \RL
\qquad \forall N\ge 1.
\]

Fix \(N\ge 1\). Consider the normalized-cost instance with consumer masses
\(\bm{\lambda}\) and buy costs \((\eta_\ell/N)_{\ell\in[L]}\).
For this instance and an assortment distribution $\bm{y}$, let
\[
\widetilde R_{N}(S_{N}\opt, \bm{y}),\qquad
\widetilde G_N(S_N\opt, \bm{\beta}\opt_N, \bm{y})
\]
denote, respectively,    the fixed-buy-set objective with
buy set \(S_{N}\opt\)~\eqref{eqn:S-N-opt-def}  and the \(\bm{\beta}_N\opt\)~\eqref{eqn:uniform-beta}-relaxed objective with
buy set \(S_{N}\opt\). 
Let
\begin{align*}
   \widetilde R_N(\bm{y}) = \max_{S \subseteq [L]}\widetilde R_N(S,\bm{y}).
\end{align*}
Also define the corresponding optimal values
\[
\widetilde R_N\opt
:=
\max_{\bm{y} \in \simplexprod}\widetilde R_N(\bm{y}),
\qquad
\widetilde R_{N}(S_{N}\opt)
:=
\max_{\bm{y} \in \simplexprod}\widetilde R_{N}(S_{N}\opt, \bm{y}),
\]
and
\[
\widetilde G_N
:=
\max_{\bm{y} \in \simplexprod}\widetilde G_N(S_N\opt, \bm{\beta}\opt_N, \bm{y}).
\]
Couple the executions of Algorithm~\ref{alg:mnl-solver-all} on the two
instances by using the same fixed lexicographic tie-breaking rule for
every set-valued choice in the algorithm and its subroutines.
The ratio grids, residual functions, $U_{\max}^{(j)}$, and bisection
budgets coincide. Each branch reward is $N$ times its normalized
counterpart, and each dual-search interval is obtained by multiplying
the normalized interval by $N$. Inductively, corresponding dual
midpoints and reduced-revenue vectors are scaled by $N$, so the same
assortments and candidate sets are generated, and all residual-sign and
clipping decisions coincide. The exact boundary solvers are invariant
under the same positive scaling.
Postprocessing returns the same mixing weights because the residuals are
identical and the rewards are scaled by $N$. Finally, the outer branch
scores are scaled by $N$, so the same branch is selected and the two runs
return the same assortment solution.
Let \(\widehat{\bm{y}}_{N,K_T}\) be the solution returned by
Algorithm~\ref{alg:mnl-solver-all} 
  on this normalized-cost instance. 
We also let
\[
\rhat_{N,K_T}
:=
R_N(S_{N}\opt, \widehat{\bm{y}}_{N,K_T})
\]
denote the value of the same returned assortment distribution when it is evaluated in the original
size-\(N\) instance.

\medskip
\noindent
\paragraph{Step 1: pass to the normalized instance.}
By the definitions, we have
\begin{align*}
R_N(\bm{y})
&=
N \widetilde R_N(\bm{y}), \\
R_N(S_{N}\opt, \bm{y})
& =
N \widetilde R_{N}(S_{N}\opt, \bm{y}), \\
G_N(S_{N}\opt,\bm{\beta}_N\opt, \bm{y})
& =
N \widetilde G_N(S_N\opt, \bm{\beta}\opt_N, \bm{y}).
\end{align*}
Therefore,
\[
R_N\opt=N \widetilde R_N\opt,
\qquad
\rhat_{N,K_T}
=
N \widetilde R_{N}(S_{N}\opt, \widehat{\bm{y}}_{N,K_T}),
\]
so the approximation ratio is unchanged:
\begin{align}
    \frac{\rhat_{N,K_T}}
     {R_N\opt}
=
\frac{\widetilde R_{N}(S_{N}\opt, \widehat{\bm{y}}_{N,K_T})}
     {\widetilde R_N\opt}. \label{eqn:asymptotic-ratio-identity}
\end{align}

Moreover, the threshold buy set~\eqref{eqn:S-N-opt-def} is the same in both formulations, since the
  threshold rule compares buy cost with aggregate rental exposure,
and both sides scale in the same way:
\[
2\eta_\ell
\le
N\gamma_\ell \sum_{j=1}^J \lambda\jj
\iff
2\frac{\eta_\ell}{N}
\le
\gamma_\ell \sum_{j=1}^J \lambda\jj.
\]

\paragraph{Step 2: apply the finite-instance guarantee to the normalized-cost instance.}
For the normalized-cost instance, define
\[
A_{\sigma,p}^{(N)}
:=
\frac{1}{\RL}\Paran{
\frac{1}{2}\sum_{j=1}^J
\lipsig_N
\bigl(t_{\max}^{(j)}-t_{\min}^{(j)}\bigr)
+
\sum_{j=1}^J
\bigl(\Gamma_N^{(a,j)}+\Gamma_N^{(s,j)}\bigr)},
\]
where \(\lipsig_N\), \(\Gamma_N^{(a,j)}\), and \(\Gamma_N^{(s,j)}\) are the constants from Theorem~\ref{thm:mnl-inner-separated-gap} computed for the normalized-cost instance. 

 Let
\[
\widetilde q_{\ell}^{ S_{N}\opt,\bm{\beta}_N\opt}
=
\frac{\eta_\ell}{N}
\Bigl(1-\min_{j,m}(\bm{\beta}_N\opt)_{\ell}^{(m,j)}\Bigr)
\I{\ell\in S_{N}\opt}
\le
\frac{\eta_\ell}{N}.
\]
Therefore,
\begin{align}
    \TOP_{2J}\set{\widetilde{\bm q}^{ S_{N}\opt,\bm{\beta}_N\opt}}
\le
\frac{\TOP_{2J}\{(\eta_\ell)_{\ell\in[L]}\}}{N}. \label{eqn:top-q-bound}
\end{align}
 Define $\widetilde {\bm{w}}^{ S_{N}\opt}$ as
\[
\widetilde w_{\ell}^{ S_{N}\opt}
=
\begin{cases}
\frac{\eta_\ell}{N}, & \ell\in S_{N}\opt,\\[2mm]
\left(\gamma_\ell\sum_{j=1}^J\lambda\jj-\frac{\eta_\ell}{N}\right)_+, & \ell\notin S_{N}\opt.
\end{cases}
\]
By the threshold characterization of \(S_{N}\opt\), the second line is at
most \(\eta_\ell/N\). Thus,
\begin{align*}
    \widetilde w_{\ell}^{ S_{N}\opt}\le \frac{\eta_\ell}{N}
\qquad\forall \ell,
\end{align*}
and therefore,
\begin{align}
    \TOP_{2J}\set{\widetilde{\bm w}^{ S_{N}\opt}}
\le
\frac{\TOP_{2J}\{(\eta_\ell)_{\ell\in[L]}\}}{N}.  \label{eqn:top-w-bound}
\end{align}

Applying Theorem~\ref{thm:mnl-inner-separated-gap} to the normalized-cost instance with
\[
S=S_N\opt,\qquad \bm{\beta}=\bm{\beta}_N\opt,\qquad K_T\ge 2,
\]
yields
\begin{align}
\widetilde R_{N}(S_{N}\opt, \widehat{\bm{y}}_{N,K_T}) \ge
\widetilde R_N\opt
-  \TOP_{2J}\set{\widetilde{\bm q}^{ S_{N}\opt,\bm{\beta}_N\opt}}
 - \TOP_{2J}\set{\widetilde{\bm w}^{ S_{N}\opt}}
 - \frac{A_{\sigma,p}^{(N)} \RL}{K_T - 1}.
 \label{eqn:asymptotic-bound-step-1}
\end{align}

\paragraph{Step 3: compare \(A_{\sigma,p}^{(N)}\) with \(A_{\sigma,p}\).}
Relative to the base-mass instance used to define $A_{\sigma,p}$, the normalized-cost instance differs only through the replacement $\eta_\ell \mapsto \eta_\ell/N$. Thus,
$\eta_{\max}$ is replaced by $\eta_{\max}/N \le \eta_{\max}$. By the monotonicity assumption in~\eqref{eqn:lip-cond-holds},
\[
\lipsig_N \le \lipsig.
\]

By the explicit formulae in~\eqref{eqn:Gamma-j-a-def},
\[
\Gamma_N^{(a,j)} \le \Gamma^{(a,j)},
\qquad
\Gamma_N^{(s,j)} \le \Gamma^{(s,j)}.
\]

Therefore,
$A_{\sigma,p}^{(N)} \le A_{\sigma,p}$. 
 By~\eqref{eqn:asymptotic-bound-step-1},~\eqref{eqn:top-q-bound}
and~\eqref{eqn:top-w-bound},
\begin{align*}
\widetilde R_{N}(S_{N}\opt, \widehat{\bm{y}}_{N,K_T}) 
& \ge
\widetilde R_N\opt
 - \frac{A_{\sigma,p} \RL}{K_T-1}
-  \TOP_{2J}\set{\widetilde{\bm q}^{ S_{N}\opt,\bm{\beta}_N\opt}}
 - \TOP_{2J}\set{\widetilde{\bm w}^{ S_{N}\opt}}\\
  & \ge
\widetilde R_N\opt
- \frac{A_{\sigma,p} \RL}{K_T-1}
-
\frac{2\TOP_{2J}\{(\eta_\ell)_{\ell\in[L]}\}}{N}.
\end{align*}

\medskip
\noindent
\paragraph{Step 4: divide by the linear lower bound.}
Since
\[
\widetilde R_N\opt
=
\frac{R_N\opt}{N}
\ge
\RL,
\]
we obtain
\[
\frac{\widetilde R_{N}(S_{N}\opt, \widehat{\bm{y}}_{N,K_T})}
     {\widetilde R_N\opt}
\ge
1
-
\frac{A_{\sigma,p}}
     {K_T-1}
-
\frac{2\TOP_{2J}\{(\eta_\ell)_{\ell\in[L]}\}}
     {N \RL}.
\]
Using the ratio identity from~\eqref{eqn:asymptotic-ratio-identity},
\[
\frac{\rhat_{N,K_T}}
     {R_N\opt}
=
\frac{\widetilde R_{N}(S_{N}\opt, \widehat{\bm{y}}_{N,K_T})}
     {\widetilde R_N\opt},
\]
and therefore
\[
\frac{\rhat_{N,K_T}}
     {R_N\opt}
\ge
1
-
\frac{A_{\sigma,p}}
     {K_T-1}
-
\frac{2\TOP_{2J}\{(\eta_\ell)_{\ell\in[L]}\}}
     {N \RL}.
\]
Recalling the definition
\[
B_{\sigma,p}
:=
\frac{2\TOP_{2J}\{(\eta_\ell)_{\ell\in[L]}\}}{\RL},
\]
this is exactly the claimed bound.
\end{proof}
 
\section{Algorithm details}

\subsection{Algorithm~\ref{alg:mnl-sub-boundary}}
\label{subsec:sub-boundary}

First, we present 
Algorithm~\ref{alg:exact-card-mnl}
as a subroutine used in
Algorithm~\ref{alg:mnl-sub-boundary}.
Algorithm~\ref{alg:exact-card-mnl} is adapted from 
Algorithm~\ref{alg:mnl-oracle}.
 To avoid confusion with $L$, the number of content families, we denote the element index set in Algorithm~\ref{alg:exact-card-mnl} by $\mc{L}$.
For a vector $\bm v$,
we let
\(\topc(\bm v)\) denote the subset of its indices
obtained by sorting elements in
nonincreasing order of \(v_e\), breaking ties according to any fixed order, and
selecting the first \(c\) elements.
At a high level, it constructs the breakpoint set $\mathcal B$ of pairwise
intersections of the adjusted-score lines $g_e(\cdot)=a_e(r_e-\cdot)$ and
sweeps over the intervals induced by $\mathcal B$, within which the top-$c$
set is unchanged.
Note that line~\ref{line:exactcard-no-distinct-case}
of Algorithm~\ref{alg:exact-card-mnl} 
handles the special
case in which the breakpoint set $\mathcal B$ is empty. This occurs
when $a_e=a_f$ for all $e,f$. In this case, 
the ordering of the
adjusted scores $g_e(v)=a_e(r_e-v)$ does not change with $v$. Thus, it
suffices to evaluate the adjusted scores at $v=0$. 
It is easy to check that 
an optimal exact-cardinality assortment is obtained by
selecting the top $c$ elements of $\{a_e r_e\}$.
The displayed pseudocode is conceptual; the stated
$O(|\mathcal L|^2\log|\mathcal L|)$ 
complexity corresponds to the standard
line-sweep implementation, which maintains the ordering of the adjusted-score
lines and the current top-$c$ set incrementally across the sorted breakpoints,
rather than recomputing $\topc(\bm v)$ from scratch on each interval~\citep{RusmevichientongShSh10}.

\begin{algorithm}[H]
\renewcommand{\thealgorithm}{\texttt{ExactCardMNL}}
\caption{Exact-cardinality MNL solver}
\label{alg:exact-card-mnl}
\begin{algorithmic}[1]
\STATE \textbf{Input:} weights $(a_e)_{e\in \mc{L}}$, 
revenues $(r_e)_{e\in \mc{L}}$, exact cardinality $c\in\{0,\dots,|\mc{L}|\}$
\IF{$c=0$}
    \STATE \textbf{return} $\emptyset$
\ENDIF
\FOR{each $e\in \mc{L}$}
    \STATE    $g_e(\cdot)\gets a_e(r_e-\cdot)$
\ENDFOR
\STATE  
\[
\mathcal B \gets \left\{
\frac{a_e r_e-a_f r_f}{a_e-a_f}
:\ e,f\in \mc{L},\ e\neq f,\ a_e\neq a_f
\right\}.
\]
\STATE Let $B_1<\cdots<B_m$ be the distinct values in $\mathcal B$
\IF{$m=0$}
    \STATE   $E\opt \gets \topc(\set{a_e r_e}_{e\in \mc{L}})$ \label{line:exactcard-no-distinct-case}
    \STATE \textbf{return} $E\opt$
\ENDIF
\STATE Pick $v < B_1$, 
$\mc C \gets \{\topc(\{g_e(v)\}_{e\in \mc{L}}) \}$
 \STATE $B_{m+1} \gets \infty$
\FOR{$k=2,\dots,m + 1$}
    \STATE Pick any $v_k \in (B_{k-1}, B_k) , E_k \gets\topc(\{g_e(v_k)\}_{e\in \mc{L}})$ 
    \STATE Update $\mc C \gets \mc C\cup\{E_k\}$
\ENDFOR
\STATE \textbf{return}  
\[
E\opt\in
\argmax_{E\in \mc C}
\frac{\sum_{e\in E} a_e r_e}{1+\sum_{e\in E} a_e}.
\]
\end{algorithmic}
\end{algorithm}

Next, we describe Algorithm~\ref{alg:mnl-sub-boundary}, which builds on
Algorithm~\ref{alg:exact-card-mnl}. Recall that the notation
\(G\subj(t,\bm z)\) in~\eqref{eqn:sub-LP-def} suppresses its dependence on the
subscription price \(p\). In the boundary case considered here, this suppressed
price is evaluated at $U\opt := U_{\max}^{(j)}.$
Accordingly, we define the boundary subscription LP value by
\begin{align}
G\subj(U\opt,t)
:=
\max_{\bm z \in \simplex}
\set{
\left. G\subj(t,\bm z)\right|_{p=U\opt}
:
d\subj(U\opt, \bm z)=0
}.
\label{eqn:G-sub-def-U-opt}
\end{align}
For a deterministic assortment \(B\in\mathcal A_\kappa\), we also write
\begin{align}
G\subj(U\opt,t,B)
:=
\left. G\subj(t,\bm e_B)\right|_{p=U\opt},
\label{eqn:G-sub-boundary-det}
\end{align}
where \(\bm e_B\) denotes the unit vector on assortment \(B\).

\begin{algorithm}[H]
\renewcommand{\thealgorithm}{\texttt{SubscriptionBoundarySolver}}
\caption{Subscription boundary solver at $p=U_{\max}^{(j)}$}
\label{alg:mnl-sub-boundary}
\begin{algorithmic}[1]
\STATE \textbf{Input:} $j,t,S,\sigma,\bm\beta,U_{\max}^{(j)}$
\FOR{each $\ell\in[L]$}
    \STATE $s_\ell \gets \alpha_\ell^{(j)}\bigl(u_\ell^{(j)}-U_{\max}^{(j)}\bigr)$
\STATE $\rho_\ell \gets \lambda\jj F\jj(\sigma t)\gamma_\ell \I{\ell\notin S}
+ \eta_\ell \beta_\ell^{(s,j)} \I{\ell \in S}$
\ENDFOR
\STATE Sort $\{s_\ell\}_{\ell\in[L]}$ in weakly decreasing order
\STATE $q \gets |\{\ell\in[L]: s_\ell>0\}|$
\IF{$q\ge \kappa$}
\STATE $\tau \gets$ the $\kappa$-th largest value of $\bm s$
\STATE $H \gets \{\ell\in[L]: s_\ell>\tau\}$,
$T \gets \{\ell\in[L]: s_\ell=\tau\}$,
$c \gets \kappa-|H|$,
$D_H \gets 1+\sum_{h\in H}\alpha_h^{(j)}$,
$M_H \gets \dfrac{\sum_{h\in H}\alpha_h^{(j)}\rho_h}{D_H}$,
\FOR{each $e\in T$}
\STATE $a_e \gets \alpha_e^{(j)}/D_H, r_e \gets M_H-\rho_e$
\ENDFOR
\STATE Compute
$E\opt \gets$~\ref{alg:exact-card-mnl}$((a_e)_{e\in T},(r_e)_{e\in T},c)$
\STATE $B \gets H\cup E\opt$
\ELSE \STATE
$P \gets \{\ell\in[L]: s_\ell>0\}$,
$Z \gets \{\ell\in[L]: s_\ell=0\}$,
$h \gets \kappa-|P|$,
$D_P \gets 1 + \sum_{\ell \in P} \alpha_\ell^{(j)},$
$M_P \gets \dfrac{\sum_{\ell \in P}\alpha_\ell^{(j)}\rho_\ell}{D_P}$
\FOR{each $e\in Z$}
\STATE $a_e \gets \alpha_e^{(j)}/D_P$,
$r_e \gets M_P-\rho_e$
\ENDFOR
\STATE $(V\opt ,E\opt)\gets$~\ref{alg:mnl-oracle} $((a_e)_{e\in Z},(r_e)_{e\in Z},  h)$
\STATE $B\gets P\cup E\opt$
\ENDIF
\STATE  %
$\yhat(A)  \gets \I{A=B}, \forall A \in \mc{A}_\kappa$ 
\STATE \textbf{Return:} $G^{(s,j)}(U_{\max}^{(j)},t,B)$ (defined in~\eqref{eqn:G-sub-boundary-det}),
$\yhat$
\end{algorithmic}
\end{algorithm}

\begin{lemma}
\label{lem:exact-card-mnl}
Fix a finite set $\mc{L}$, 
weights $a_e \ge 0$,
revenues $r_e\in\mathbb R$, and
$c\in\{0,\dots,|\mc{L}|\}$. Algorithm~\ref{alg:exact-card-mnl} returns a set
$E\opt\subseteq \mc{L}$, $|E\opt|=c$, that solves
\[
\max_{E\subseteq \mc{L},\ |E|=c}
\frac{\sum_{e\in E} a_e r_e}{1+\sum_{e\in E} a_e}.
\]
Moreover, it can be implemented in $O(|\mc{L}|^2\log |\mc{L}|)$ time.
\end{lemma}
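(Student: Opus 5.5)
The plan is to reduce the fixed-cardinality fractional problem to a one-parameter family of linear top-$c$ selections, i.e.\ the parametric (Dinkelbach/Megiddo) argument behind Algorithm~\ref{alg:mnl-oracle} in \citet{RusmevichientongShSh10}, but with cardinality exactly $c$. The case $c=0$ is trivial, since $\emptyset$ is the only feasible set. For $c\ge1$, let $V^\star$ denote the optimal value over $\{E:|E|=c\}$; this set is finite and nonempty, so $V^\star$ is attained. The first step is the standard equivalence: for any $v$, we have $\frac{\sum_{E} a_e r_e}{1+\sum_E a_e}\ge v$ iff $\sum_{e\in E} a_e(r_e-v)\ge v$, because the denominator is positive. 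Hence an exact-$c$ set $E$ is optimal iff it maximizes $\sum_{e\in E} g_e(V^\star)$ over $|E|=c$ (the maximum of this sum equals $V^\star$). Since the objective is additive, the maximizers of $\sum_{e\in E}g_e(V^\star)$ over $|E|=c$ are exactly the top-$c$ sets under the scores $g_e(V^\star)$. So some $\topc(\{g_e(V^\star)\})$, with an appropriate tie-break, is optimal.

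The second step is to show that the candidate family $\mc C$ built by the algorithm contains an optimal set. The lines $g_e(v)=a_e(r_e-v)$ change relative order only at the pairwise crossings in $\mathcal B$. On each open interval between consecutive breakpoints, and on $(-\infty,B_1)$ and $(B_m,\infty)$, the strict order is constant, so $\topc$ is the same set throughout the interval, and each such set is placed in $\mc C$. If $V^\star$ lies in an open interval, we are done. If $V^\star=B_k$ is a breakpoint, I would argue as follows. The set $E_k$ from the left-adjacent interval is, by continuity, a top-$c$ set at $B_k$: its scores are weakly at least those of the others in the limit, so it maximizes $\sum g_e(B_k)$ and is therefore optimal by the first step. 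Ties among lines that coincide identically ($a_e=a_f$, $r_e=r_f$) never change order, and lines with $a_e=a_f$ but $r_e\ne r_f$ are parallel with a fixed order, so they cause no trouble. The case $m=0$ means all $a_e$ are equal, so the ordering by $g_e(v)$ is independent of $v$ and coincides with the ordering by $a_er_e$. Then $\topc(\{a_er_e\})$ is a top-$c$ set at $V^\star$ and hence optimal. Finally, the returned $E^\star$ maximizes the true ratio over $\mc C$, and $\mc C$ contains an optimal set, so $E^\star$ is optimal with $|E^\star|=c$.

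The complexity claim follows from the standard sweep. There are $O(|\mc L|^2)$ breakpoints, and sorting them costs $O(|\mc L|^2\log|\mc L|)$. Moving across a single crossing swaps two adjacent lines in the ordering, so the top-$c$ set changes by at most one exchange, and the running sums $\sum a_er_e$ and $\sum a_e$ update in $O(1)$. Each candidate is therefore evaluated in $O(1)$ amortized time, as in \citet{RusmevichientongShSh10}. I expect the main obstacle to be the breakpoint and tie bookkeeping: when several lines cross at one point, the order changes by a block reversal rather than a single swap. I would handle this by perturbing $v$ symbolically, processing the crossings at the same point as a sequence of adjacent swaps in a fixed order, and checking that the set obtained after the last swap at $B_k$ equals the $\topc$ of the next interval.
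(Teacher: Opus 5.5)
Your proposal is correct and follows essentially the same route as the paper's proof. Both arguments use the Dinkelbach-type identity, which reduces optimality to maximizing $\sum_{e\in E} g_e(V^\star)$ over $|E|=c$ (a top-$c$ selection), then argue that the top-$c$ set is constant on each component of $\mathbb{R}\setminus\mathcal{B}$, handle a breakpoint $V^\star$ by the left-limit continuity argument, and obtain the $O(|\mathcal{L}|^2\log|\mathcal{L}|)$ bound from the standard sweep; your explicit treatment of the $m=0$ branch and of simultaneous crossings only spells out details the paper leaves implicit.
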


\begin{proof}[Proof of Lemma~\ref{lem:exact-card-mnl}]
The proof follows the classical fractional-programming transformation, together with the standard adjusted-score
sweep used in cardinality-constrained MNL assortment optimization~\citep{RusmevichientongShSh10}. 
We give   an argument for completeness.

For any feasible set $E\subseteq \mc{L}$ with $|E|=c$, define
\[
A_E:=\sum_{e\in E} a_e r_e,\qquad
B_E:=\sum_{e\in E} a_e,\qquad
\frac{\sum_{e\in E} a_e r_e}{1 + \sum_{e\in E} a_e }:=\frac{A_E}{1+B_E}.
\]
For $\nu\in\mathbb R$, define the parametric score
\[
\Psi_\nu(E):= \sum_{e\in E} a_e r_e-\nu \Paran{1+\sum_{e\in E} a_e}.
\]
Using $g_e(\nu)=a_e(r_e-\nu)$, we obtain
\[
\Psi_\nu(E)=  -\nu+\sum_{e\in E} g_e(\nu).
\]
Thus, for fixed \(\nu\), maximizing \(\Psi_\nu(E)\) over all \(E\subseteq \mc{L}\)
with \(|E|=c\) is equivalent to choosing the \(c\) largest values among
\(\{g_e(\nu)\}_{e\in \mc{L}}\). The only values of \(\nu\) at which this ordering
can change are pairwise score-crossing points
\begin{align}
    \mathcal B
:=
\left\{
\frac{a_e r_e-a_f r_f}{a_e-a_f}
:
e,f\in \mc{L},\ e\neq f,\ a_e\neq a_f
\right\}. \label{eqn:breakpoint-def}
\end{align}
On each connected component of \(\mathbb R\setminus\mathcal B\), the ordering
of the affine scores \(\{g_e(\nu)\}_{e\in \mc{L}}\) is constant; hence, under the
fixed tie-breaking rule, the top-\(c\) set is constant on that interval and is
examined by the sweep.

Now let
\[
\nu\opt:=\max_{E\subseteq \mc{L},  |E|=c}\frac{\sum_{e\in E} a_e r_e}{1 + \sum_{e\in E} a_e }.
\]
Since $1+B_E>0$ for every feasible $E$,
\[
\Psi_\nu(E)=(1+B_E)\left(\frac{\sum_{e\in E} a_e r_e}{1 + \sum_{e\in E} a_e }-\nu\right).
\]
Therefore,
\[
\max_{E: |E|=c} \Psi_{\nu\opt}(E) =
\max_{E: |E|=c} \set{\sum_{e\in E} a_e r_e-\nu\opt \Paran{1+\sum_{e\in E} a_e}}=  
\max_{E: |E|=c} \set{
-\nu\opt 
+\sum_{e\in E} g_e(\nu\opt)} = 
0,
\]
and every maximizer of $\Psi_{\nu\opt}$ is optimal for the original ratio
problem.

If \(\nu\opt \notin \mc{B}\), where $\mc{B}$ is defined 
in~\eqref{eqn:breakpoint-def}, then it lies in one interval of
\(\mathbb R\setminus \mathcal B\).
On that interval, the ordering of the affine scores 
$g_e(\cdot)=a_e(r_e-\cdot)$
is constant. Thus, the top-\(c\) set at
\(\nu\opt\) is exactly the representative top-\(c\) set recorded by the sweep for that interval.

Now suppose \(\nu\opt\in\mathcal B\). Let \(E^-\) be a top-\(c\) set immediately to the left of
\(\nu\opt\), which the algorithm records. We claim that \(E^-\) is also a top-\(c\) set at
\(\nu\opt\). If not, then there must exist \(e\in E^-\) and \(f\notin E^-\) such that
\[
g_f(\nu\opt)>g_e(\nu\opt).
\]
By continuity of the affine functions \(g_e\) and \(g_f\), this strict inequality also holds for all
\(\nu\) sufficiently close to \(\nu\opt\) from the left. But then \(E^-\) could not have been a
top-\(c\) set immediately to the left of \(\nu\opt\), a contradiction. Therefore,
\(E^-\) is a
top-\(c\) set at \(\nu\opt\).

Thus, whether or not \(\nu\opt\) is a breakpoint, the sweep records a set that maximizes
\(\Psi_{\nu\opt}\). By the preceding paragraph, such a set is optimal for the original ratio
problem. Since the algorithm returns the recorded candidate with the largest ratio value, the
returned set is optimal.

There are $O(|\mc{L}|^2)$ pairwise breakpoints in~\eqref{eqn:breakpoint-def}. Sorting them costs
$O(|\mc{L}|^2\log |\mc{L}|)$. 
A standard sweep  maintains the line
order and the current top-$c$ set within the same bound. Thus, the total
running time is $O(|\mc{L}|^2\log |\mc{L}|)$.
\end{proof}

\begin{lemma} 
\label{lem:sub-boundary-poly-simple}
Fix $t\in I^{(j)}\cup\{\infty\}$ and set $p=U_{\max}^{(j)}$. 
Let
\begin{align*}
 G\subj(p,t,B)
&  =
\lambda\jj F\jj(\sigma t)
\Biggl[
p-
\sum_{\ell\in [L]\setminus S}\gamma_\ell p_\ell^{(j)}(B)
\Biggr]
-
\sum_{\ell\in S}\eta_\ell \beta_\ell\subj p_\ell^{(j)}(B).
\end{align*}
Under the
convention $F\jj(\sigma\infty)=1$,~\ref{alg:mnl-sub-boundary}($j,t,S,\sigma,\bm\beta,U_{\max}^{(j)}$)
returns an assortment
\[
\tilde{B} \in
\argmax\Bigl\{
G^{(s,j)}(\uumaxj, t,B):
B\in\mathcal A_\kappa,\ u^{(j)}(B)=U_{\max}^{(j)}
\Bigr\}.
\]
Moreover, after $U_{\max}^{(j)}$ is known, the algorithm runs in $O(L^2 \log L)$ time.
\end{lemma}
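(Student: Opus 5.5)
The plan is to first characterize the feasible family $\{B\in\mathcal A_\kappa: u^{(j)}(B)=U_{\max}^{(j)}\}$ combinatorially, and then show that optimizing $G^{(s,j)}(U_{\max}^{(j)},t,B)$ over this family reduces to the subproblem solved by \ref{alg:exact-card-mnl} (or \ref{alg:mnl-oracle}).

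Write $U^\star=U_{\max}^{(j)}$ and $s_\ell=\alpha_\ell^{(j)}(u_\ell^{(j)}-U^\star)$. For any $B$, clearing denominators gives
\[
u^{(j)}(B)-U^\star=\frac{\sum_{\ell\in B}s_\ell-U^\star}{1+\sum_{k\in B}\alpha_k^{(j)}} .
\]
Since $U^\star$ is the maximum, $\sum_{\ell\in B}s_\ell\le U^\star$ for every feasible $B$. Optimality of $B$ is therefore equivalent to $B$ maximizing $\sum_{\ell\in B}s_\ell$ over $|B|\le\kappa$, with the maximum value equal to $U^\star$. This is a linear top-$\kappa$ problem, and its maximizers are easy to describe.

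\textbf{Case $q\ge\kappa$.} Here $\tau>0$ is the $\kappa$-th largest score. The maximizers are exactly the sets $H\cup E$ with $E\subseteq T$ and $|E|=c$. Any set of size less than $\kappa$, or any set using an element with score below $\tau$, loses value.

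\textbf{Case $q<\kappa$.} The maximizers are exactly $P\cup E$ with $E\subseteq Z$ and $|E|\le h$. Including any negative-score element strictly decreases the sum.

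Next I will rewrite the objective on each such family. Since $F^{(j)}(\sigma t)$ and $p$ are fixed, maximizing $G^{(s,j)}$ is the same as minimizing the expected cost $\sum_{\ell\in B}\alpha_\ell\rho_\ell/(1+\sum_{B}\alpha)$, with $\rho_\ell$ defined as in the algorithm (the convention $F(\sigma\infty)=1$ covers $t=\infty$). Fix the mandatory core $H$ (resp.\ $P$) and write $B=H\cup E$. Then
\[
M_H-\frac{\sum_{B}\alpha_\ell\rho_\ell}{1+\sum_B\alpha_\ell}
=\frac{\sum_{e\in E}(\alpha_e/D_H)(M_H-\rho_e)}{1+\sum_{e\in E}\alpha_e/D_H},
\]
which one checks by multiplying numerator and denominator by $D_H$. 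Maximizing this is exactly the problem with weights $a_e$ and revenues $r_e$ built by the algorithm. In the first case the cardinality is exactly $c$, so Lemma~\ref{lem:exact-card-mnl} applies. In the second case the cardinality is at most $h$, so the \ref{alg:mnl-oracle} guarantee of \citet{RusmevichientongShSh10} applies. Hence $\tilde B=H\cup E^\star$ (resp.\ $P\cup E^\star$) attains the argmax.

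\textbf{Main obstacle.} The delicate step is the exact characterization of the maximizer family, specifically the tie structure at $\tau$ and at zero scores. I need to argue carefully that no set outside these families can reach $\sum s=U^\star$. I also need to confirm that feasibility of the LP forces support on such sets only, though the statement asks only about deterministic $B$. Elements with $\alpha_\ell=0$ have $s_\ell=0$ and zero weight; they may appear in $Z$ harmlessly.

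\textbf{Complexity.} Computing $s$ and $\rho$ takes $O(L)$ and sorting takes $O(L\log L)$. The single call to the subroutine on at most $L$ elements costs $O(L^2\log L)$ by Lemma~\ref{lem:exact-card-mnl} or by the oracle bound, which gives $O(L^2\log L)$ overall.
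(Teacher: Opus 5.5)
Your proposal is correct and follows the paper's proof essentially step for step. The shared steps are: the score transformation $s_\ell=\alpha_\ell^{(j)}(u_\ell^{(j)}-U_{\max}^{(j)})$, which identifies the boundary-feasible assortments with the maximizers of $\sum_{\ell\in B}s_\ell$; the two-case description of that maximizer family; rewriting the objective via $\rho_\ell$ with weights $a_e=\alpha_e^{(j)}/D_H$ and revenues $r_e=M_H-\rho_e$; and a single call to \ref{alg:exact-card-mnl} or \ref{alg:mnl-oracle}. In the case $q\ge\kappa$ you leave implicit that every element of $H$ must be included, but this follows from the same exchange argument the paper uses: a size-$\kappa$ set omitting some $h\in H$ must contain an element of score at most $\tau<s_h$, and swapping it for $h$ strictly increases the sum.
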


\begin{proof}[Proof of Lemma~\ref{lem:sub-boundary-poly-simple}]
Let $U\opt:=U_{\max}^{(j)}$, and define
\begin{align*}
s_\ell &:=\alpha_\ell^{(j)}(u_\ell^{(j)}-U\opt), \qquad \ell\in[L].
\end{align*}
Recalling that
$u\jj(B) = \frac{\sum_{\ell \in B} 
\alpha_\ell^{(j)}u_\ell^{(j)}
}{1 + \sum_{\ell \in B} 
\alpha_\ell^{(j)}}$, we have
\begin{align}
\sum_{\ell \in B} s_\ell
& =
\left(1 + \sum_{k \in B} \alpha_k^{(j)}\right)
\bigl(u^{(j)}(B) - U\opt\bigr) + U\opt. 
\label{eqn:sum-B-in-U}
\end{align}
Let
\begin{align*}
        M\jj := \{B \in \mathcal{A}_\kappa : u^{(j)}(B) = U\opt\}
\end{align*} denote  the boundary-feasible assortments.
Since $u^{(j)}(B) \le U\opt$ for every $B \in \mathcal{A}_\kappa$,
from~\eqref{eqn:sum-B-in-U}, we have 
$\sum_{\ell \in B} s_\ell \le U\opt$ for any $B$ and $\sum_{\ell \in B} s_\ell = U\opt$ if and only if $u^{(j)}(B) = U\opt$.
Therefore,
\begin{align}
M\jj = \argmax_{B \in \mathcal{A}_\kappa} \sum_{\ell \in B} s_\ell.
\label{eqn:M-j-argmax}
\end{align}
Moreover, recall that
$q = |\{\ell\in[L]: s_\ell>0\}|$ is
the number of content families with strictly positive transformed score.
We will show that
\begin{align}
    M\jj=
\begin{cases}
\{H\cup E:E\subseteq T,\ |E|=c\}, & q\ge \kappa,\\[3pt]
\{P\cup E:E\subseteq Z,\ |E|\le h\}, & q<\kappa,
\end{cases} \label{eqn:M-j-expression}
\end{align}
with $H =  \{\ell\in[L]: s_\ell>\tau\},T = \{\ell\in[L]: s_\ell=\tau\},c = \kappa-|H|$ and $P = \{\ell\in[L]: s_\ell>0\},Z = \{\ell\in[L]: s_\ell=0\},h$ exactly as constructed by Algorithm~\ref{alg:mnl-sub-boundary}.
Here  $\tau$ is 
the $\kappa$-th largest value among  $\bm{s}$.

We now justify~\eqref{eqn:M-j-expression}. 
If $q\ge \kappa$, then there are at least $\kappa$ content families with positive score $s$. 
From~\eqref{eqn:M-j-argmax}, any
$B\in M\jj$ must satisfy $|B|=\kappa$; otherwise, if $|B|<\kappa$, we could add a content family with
positive score and strictly increase $\sum_{\ell\in B}s_\ell$, contradicting the optimality of
$B\in \argmax_{B\in\mathcal A_\kappa}\sum_{\ell\in B}s_\ell$. Moreover, every content family in
$H=\{\ell\in[L]: s_\ell>\tau\}$ must belong to $B$. Indeed, if some $h\in H$ were excluded from
$B$, then since $\tau$ is the $\kappa$-th largest score, the set $B$ would have to contain some
content family $e\notin H$ with $s_e\le \tau < s_h$; replacing $e$ by $h$ would strictly increase
$\sum_{\ell\in B}s_\ell$, again a contradiction. Therefore,
every maximizer has the form
$B=H\cup E$, where $E\subseteq T$ and $|E|=c$, with
$T=\{\ell\in[L]: s_\ell=\tau\}$ and $c=\kappa-|H|$.
On the other hand, if $q<\kappa$, then there are fewer than $\kappa$ content families with positive score. In this case,
every
$B\in M\jj$ must contain $P=\{\ell\in[L]: s_\ell>0\}$, since omitting any $\ell \in P$ would leave room
to add $\ell$ and thereby strictly increase the objective. After including all content families in $P$, any
additional selected content family must come from $Z=\{\ell\in[L]: s_\ell=0\}$, because adding a zero-score
content family
does not change the objective, whereas adding a negative-score content family would decrease it. Thus,
we do not need to include all elements of $Z$; rather, a maximizer may include any subset
$E\subseteq Z$ with $|E|\le h$, where $h=\kappa-|P|$. Therefore every maximizer has the form
$B=P\cup E$, where $E\subseteq Z$ and $|E|\le h$.

Since every deterministic assortment satisfies $u^{(j)}(B)\le U\opt$, any
feasible solution of the active boundary LP must be supported on $M\jj$. Because
the LP objective is linear in the assortment distribution, it is enough to maximize
$G^{(s,j)}(U\opt,t,B)$ 
over $B\in M\jj$.

Define
\[
\rho_\ell
:=
\lambda\jj F\jj(\sigma t)\gamma_\ell \I{\ell\notin S }
+
\eta_\ell\beta_\ell^{(s,j)} \I{\ell \in S}.
\]
Then for every deterministic assortment $B$,
\begin{align}
    G^{(s,j)}(U\opt,t,B)
=
\lambda\jj F\jj(\sigma t)U\opt
-
\frac{\sum_{\ell\in B}\alpha_\ell^{(j)}\rho_\ell}
{1+\sum_{\ell\in B}\alpha_\ell^{(j)}}. \label{eqn:G-exp-in-B-boundary}
\end{align}

We now treat the two cases.
\begin{enumerate}
\item \emph{Case 1: $q\ge \kappa$.} 
From~\eqref{eqn:M-j-expression}, every feasible boundary assortment in~\eqref{eqn:M-j-argmax}
has the form $B=H\cup E$ with $E\subseteq T$
and $|E|=c$. Let
\[
D_H:=1+\sum_{h\in H}\alpha_h^{(j)}, \qquad
M_H:=\frac{\sum_{h\in H}\alpha_h^{(j)}\rho_h}{D_H}.
\]
From~\eqref{eqn:G-exp-in-B-boundary}, 
direct algebra shows that

\begin{align}
    G^{(s,j)}(U\opt,t,H\cup E)
=
\lambda\jj F\jj(\sigma t)U\opt - M_H
+
\frac{\sum_{e\in E} a_e r_e}{1+\sum_{e\in E} a_e},
 \label{eqn:G-exp-in-E-H-boundary}
\end{align}
where
\[
a_e:=\frac{\alpha_e^{(j)}}{D_H}, \qquad r_e:=M_H-\rho_e.
\]
Therefore, maximizing $G^{(s,j)}(U\opt,t,B)$ over
$B\in M\jj$ is equivalent to solving
\[
\max_{E\subseteq T,\ |E|=c}
\frac{\sum_{e\in E} a_e r_e}{1+\sum_{e\in E} a_e}.
\]
By correctness of Algorithm~\ref{alg:exact-card-mnl} (Lemma~\ref{lem:exact-card-mnl}), it returns an optimal
$E\opt$.
Thus, $\Tilde{B}=H\cup E\opt$ is optimal over $M\jj$.
\item \emph{Case 2: $q<\kappa$.}
Every feasible boundary assortment has the form $B=P\cup E$ with $E\subseteq Z$
and $|E|\le h$. Let
\[
D_P:=1+\sum_{\ell \in P}\alpha_\ell^{(j)}, \qquad
M_P:=\frac{\sum_{\ell \in P}\alpha_\ell^{(j)}\rho_\ell}{D_P}.
\]
Similar to~\eqref{eqn:G-exp-in-E-H-boundary},
\[
G^{(s,j)}(U\opt,t,P\cup E)
=
\lambda\jj F\jj(\sigma t)U\opt - M_P
+
\frac{\sum_{e\in E} a_e r_e}{1+\sum_{e\in E} a_e},
\]
where
\[
a_e:=\frac{\alpha_e^{(j)}}{D_P}, \qquad r_e:=M_P-\rho_e.
\]
Thus, maximizing over $M\jj$ is exactly a standard cardinality-constrained MNL
problem on the optional set $Z$ with capacity $h$. Thus, the
Algorithm~\ref{alg:mnl-oracle} call returns an optimal $E\opt$, and
$\Tilde{B}=P\cup E\opt$ is optimal over $M\jj$.
\end{enumerate}

In both cases, Algorithm~\ref{alg:mnl-sub-boundary}
returns an assortment maximizing
$G^{(s,j)}(U\opt,t,B)$ over all $B\in\mathcal A_\kappa$ satisfying
$u^{(j)}(B)=U\opt$. 
Therefore,
 the returned assortment distribution, which places unit mass on
$\Tilde{B}$, is an optimal feasible solution of the active fixed-$(j,t)$
subscription LP.

For the running time, computing the scores and sorting them takes $O(L\log L)$.
In the branch $q<\kappa$, there is one Algorithm~\ref{alg:mnl-oracle} call on $|Z|$ content families.
In the branch $q\ge\kappa$, there is one Algorithm~\ref{alg:exact-card-mnl} call on $|T|$
content families, costing $O(|T|^2\log |T|)$. Since $|T|,|Z|\le L$, the claimed
$O(L^2 \log L)$
  bound follows.
\end{proof}

\subsection{Algorithm~\ref{alg:lp-postprocessing}}
\label{subsec:andrew-monotone-chain}

We now discuss Algorithm~\ref{alg:lp-postprocessing}, which is the
postprocessing routine used to solve the restricted LP over the candidate set
\(C\). The algorithm is a specialization of Andrew's monotone-chain convex
hull algorithm~\citep{Andrew79}, which takes
\(O(|C|\log |C|)\) time.
Algorithm~\ref{alg:lp-postprocessing} first sorts the actions by their
\(d\)-values and removes duplicates, keeping only the action with the largest
\(\phi\)-value for each repeated \(d\)-value. It then makes a single
left-to-right pass through this sorted list while maintaining a list \(H\).
Whenever the last three actions \(b,c,e\) in \(H\) violate the required
strictly decreasing slope condition, the middle action \(c\) is removed.
Intuitively, such a point \(c\) lies on or below the line segment between
\(b\) and \(e\), so it is never useful for maximizing the objective and can
be safely discarded. The global first and last actions in the sorted list are
never removed, since the algorithm only deletes middle elements of triples.
Therefore, because the input contains actions with $d$-values
on both sides of \(0\), the
final list \(H\) still crosses \(d=0\), so a feasible solution can always be
extracted. The decreasing-slope property ensures that \(H\) is exactly the
upper hull of the points \((d(a),\phi(a))\).

\begin{algorithm}[h]
\renewcommand{\thealgorithm}{\texttt{Postprocessing}}
\caption{Restricted-LP postprocessing}
\label{alg:lp-postprocessing}
\begin{algorithmic}[1]
\STATE \textbf{Input:}  Candidate set \(C\subseteq A\) containing actions \(a^-,a^+\in C\) such that
$
d(a^-)\le 0\le d(a^+).
$
\STATE For actions \(x,y\) with \(d(x)\neq d(y)\), define
$
s(x,y)
:=
\frac{\phi(y)-\phi(x)}{d(y)-d(x)}.
$

\STATE Remove duplicate \(d\)-values from \(C\): for each value \(r\), keep one action
$
a(r)\in \arg\max\{\phi(a):a\in C,\ d(a)=r\}.
$
Let \(a_1,\dots,a_n\) be the remaining actions sorted so that
$
d(a_1)<d(a_2)<\cdots<d(a_n).
$

\STATE Set \(H\gets ()\) and \(y^\star_a\gets 0\) for all \(a\in A\).

\FOR{\(i=1,\dots,n\)}
  \STATE Append \(a_i\) to \(H\).
  \WHILE{\(H\) has last three elements \(b,c,e\) satisfying \(s(b,c)\le s(c,e)\)}
    \STATE Remove \(c\) from \(H\).
  \ENDWHILE
\ENDFOR

\STATE Let \(H=(h_1,\dots,h_m)\).

\IF{there exists \(k\in\{1,\dots,m\}\) such that \(d(h_k)=0\)}
  \STATE Set
  $
  y^\star_{h_k}\gets 1,
  \qquad
   F\opt \gets \phi(h_k).
  $
\ELSE
  \STATE Let \(k\) be the unique index such that
  $
  d(h_k)<0<d(h_{k+1}).
  $
  \STATE Set
  $
  a^- \gets h_k,
  \qquad
  a^+ \gets h_{k+1}.
  $
  \STATE Define
  $
  \lambda^- \gets
  \frac{d(a^+)}{d(a^+)-d(a^-)},
  \qquad
  \lambda^+ \gets
  \frac{-d(a^-)}{d(a^+)-d(a^-)}.
  $
  \STATE  
  $
  y^\star_{a^-}\gets \lambda^-,
  \qquad
  y^\star_{a^+}\gets \lambda^+,
  $
  and
  $
  F\opt\gets
  \lambda^- \phi(a^-)+\lambda^+ \phi(a^+).
  $
\ENDIF

\RETURN \(( F\opt, y^\star)\)
\end{algorithmic}
\end{algorithm}

\section{Complexity analysis}
\label{sec:complexity}
In this section, we quantify the running time of
Algorithm~\ref{alg:mnl-solver-all}. We first give a generic bound for arbitrary
outer ratio grids and arbitrary bisection budgets. We then specialize the bound
to the theorem-driven grids and budgets used in
Theorem~\ref{thm:mnl-inner-separated-gap}. The same runtime bound applies to
the scaled instances in Theorem~\ref{thm:asymptotic-opt}, since market scaling
changes objective coefficients but not the worst-case complexity.

We keep the dependence on $J$, $K_T$, and $L$ explicit through
$\toracle(L)$. Constants depending only on fixed primitive magnitudes and on the
cardinality cap $\kappa$ are suppressed in the complexity expressions.

\paragraph{Complexity of Algorithm~\ref{alg:mnl-solver-all}.}
The next lemma records a generic upper bound under arbitrary ratio grids
$\{T\jj\}_{j=1}^J$ and arbitrary bisection budgets for the fixed-$(j,t)$
subsolvers.
Throughout this section, we use the convention $0 \log 0 := 0.$

\begin{lemma}
\label{lem:complexity-generic}
For each user type $j$, let $T^{(j,+)} := T\jj \cup \{\infty\}$.
For each   $t\in T\jj\cap (t_{\min}^{(j)},t_{\max}^{(j)})$,
recall that $K\adj(t)$ denotes the number of  bisection
iterations used by Algorithm~\ref{alg:mnl-fixed-jt-ads}.
For each branch index $\tau\in T^{(j,+)}$,   $K\subj(\tau)$ is the number of
bisection iterations used by
Algorithm~\ref{alg:mnl-fixed-jt-sub}.

After precomputing
$B^{(j, +)}\in \argmax_{B\in\mathcal A_\kappa} u^{(j)}(B)$ once for each type
$j$ for which $p\in(0,U_{\max}^{(j)})$ using Algorithm~\ref{alg:mnl-oracle}, the overall running time of
Algorithm~\ref{alg:mnl-solver-all} is
\begin{align}
T_{\mathrm{inner}}
&=
O\Biggl(
\sum_{j=1}^J
\Bigl[
( 1 + 
|T\jj|) 
\toracle(L)
\notag\\
&\qquad
+
\sum_{t\in T\jj\cap (t_{\min}^{(j)},t_{\max}^{(j)})}
\Bigl(
K\adj(t) \toracle(L)
+
K\adj(t) \log K\adj(t)
\Bigr)
\notag\\
&\qquad
+
\sum_{\tau\in T^{(j,+)}}
\Bigl(
K\subj(\tau) \toracle(L)
+
K\subj(\tau) \log K\subj(\tau)
\Bigr)
\Bigr]
\Biggr),
\label{eqn:complexity-all}
\end{align}
up to lower-order $O\!\left(J + \sum_{j=1}^J |T\jj|\right)$ aggregation costs.
Here $\toracle(L)$ denotes the cost of one call to
Algorithm~\ref{alg:mnl-oracle} or~\ref{alg:exact-card-mnl}. 
The term $(1+|T\jj|)\toracle(L)$ absorbs the
one-time precomputation and boundary or non-bisection branch evaluations.
\end{lemma}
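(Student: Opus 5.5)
We prove the bound by tracing Algorithm~\ref{alg:mnl-solver-all} line by line and charging each step to one of the terms in~\eqref{eqn:complexity-all}. The outer structure is a loop over $j\in[J]$. For each $j$ the work has three parts: a one-time setup, the ads solver at every $t\in T\jj$, and the subscription solver at every $\tau\in T^{(j,+)}$.

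\emph{Setup.} For each $j$ there is one call to Algorithm~\ref{alg:mnl-oracle} in line~\ref{line:one-time-U-max-compute}, which returns $U_{\max}^{(j)}$ and $B^{(j,+)}$ at cost $\toracle(L)$. The call to Algorithm~\ref{alg:generate-grid-budgets} only evaluates closed-form expressions, namely~\eqref{eqn:outer-grid-def}, \eqref{eqn:ads-dual-interval-algo}, \eqref{eqn:inner-grid-size-ads}, \eqref{eqn:sub-dual-interval-algo} and~\eqref{eqn:inner-grid-size-sub}. Its cost is $O(|T\jj|)$ once $\lplus\jj,\lminus\jj$ are found in $O(L)$, and both are absorbed in the lower-order or $(1+|T\jj|)\toracle(L)$ terms.

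\emph{Ads solver.} At an endpoint $t\in\{t_{\min}^{(j)},t_{\max}^{(j)}\}$, Algorithm~\ref{alg:mnl-ads-boundary} builds $\mathcal M\jj(t)$ in $O(L)$ and makes a single oracle call. At $t=\infty$ the cost is $O(1)$. Together these give at most $O((1+|T\jj|)\toracle(L))$.

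At an interior $t$, Algorithm~\ref{alg:dual-bisection} performs $K\adj(t)$ iterations. Each iteration forms $\bm R\adj(t,m)$ in $O(L)\subseteq O(\toracle(L))$, makes one oracle call, and evaluates one residual. The candidate set then has size $|\mathcal C|\le K\adj(t)+3$. Algorithm~\ref{alg:lp-postprocessing} runs in $O(|\mathcal C|\log|\mathcal C|)$ time, consisting of the sort, a linear-time monotone-chain pass with amortized $O(1)$ removals per element, and the final crossing search. We need $\phi$ and $d$ for each candidate; these cost $O(\kappa)$ per assortment, and since $\kappa\le L$ this is dominated by the oracle call that generated the assortment. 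The total per interior $t$ is therefore $O(K\adj(t)\toracle(L)+K\adj(t)\log K\adj(t))$, plus an additive $O(\toracle(L))$ that is absorbed into the $(1+|T\jj|)\toracle(L)$ term.

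\emph{Subscription solver.} We do the same for Algorithm~\ref{alg:mnl-fixed-jt-sub} at each $\tau\in T^{(j,+)}$. When $p>U_{\max}^{(j)}$ or $p=0$ the cost is $O(1)$. In the boundary case $p=U_{\max}^{(j)}$, Lemma~\ref{lem:sub-boundary-poly-simple} gives $O(L^2\log L)=O(\toracle(L))$; this includes the single call to Algorithm~\ref{alg:exact-card-mnl} and is covered by the convention that $\toracle$ also bounds calls to Algorithm~\ref{alg:exact-card-mnl}. In the interior case, $B^{(j,+)}$ has been precomputed, so the bisection-plus-postprocessing accounting above gives $O(K\subj(\tau)\toracle(L)+K\subj(\tau)\log K\subj(\tau))$, with an $O(\toracle(L))$ additive term. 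This case arises once per $\tau$, for $|T\jj|+1$ values of $\tau$, which is again absorbed.

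\emph{Aggregation.} Computing $\ghat_t$, taking $\argmax$ over $T\jj\cup\{\infty\}$ and assembling $\yhat$ cost $O(J+\sum_j|T\jj|)$. The final evaluations $R(S,\yhat)$ and $G(S,\bm\beta,\yhat)$ involve supports of size $O(1)$ per branch (Algorithm~\ref{alg:lp-postprocessing} returns at most two atoms), so they add only lower-order work with constants depending on $\kappa$, which the statement suppresses. Summing over $j$ yields~\eqref{eqn:complexity-all}. The main point to handle carefully is the postprocessing cost. It must be charged as $K\log K$ separately from the oracle cost, and the reward and residual evaluations of the candidates must be absorbed into the oracle calls rather than counted as $O(L)$ extra work per candidate; the paper's suppressed constants depending on $\kappa$ allow this.
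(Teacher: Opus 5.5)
Your proposal is correct and follows essentially the same route as the paper's proof. Both arguments charge one oracle call per bisection iteration and bound postprocessing by $O(K\log K)$ because each candidate set has size $O(K)$. Both absorb the one-time $U_{\max}^{(j)}$ and $B^{(j,+)}$ precomputation, together with the boundary and non-bisection branches, into the $(1+|T^{(j)}|)$ oracle-cost term, and both treat branch selection and aggregation as lower order. Your version is more explicit than the paper's about three overheads, all of which you correctly absorb: the $O(L)$ per-iteration cost of forming the reduced-revenue vectors, the $O(\kappa)$ reward and residual evaluations, and the final evaluation of the returned solution.
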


\begin{proof}[Proof of Lemma~\ref{lem:complexity-generic}]
Fix a type $j$ and an interior  
$t\in T\jj\cap (t_{\min}^{(j)},t_{\max}^{(j)})$.
Each iteration of the  bisection routine in
Algorithm~\ref{alg:mnl-fixed-jt-ads} makes   one call to
Algorithm~\ref{alg:mnl-oracle}, so the ads branch contributes
$K\adj(t)$ oracle calls.
Its candidate set contains the initial certificate assortments, the empty
assortment, and at most one newly queried assortment per iteration. Thus, its
final size is $O(K\adj(t))$. Algorithm~\ref{alg:lp-postprocessing} costs
$O \left(K\adj(t) \log K\adj(t)\right)$ time.

For the interior subscription case $p\in(0,U_{\max}^{(j)})$, exactly the same
reasoning applies at any $\tau\in T^{(j,+)}$. Each  bisection
iteration in Algorithm~\ref{alg:mnl-fixed-jt-sub} contributes one oracle call,
so the subscription branch contributes $K\subj(\tau)$ oracle calls. The final
candidate set has size $O(K\subj(\tau)+1)$, and Algorithm~\ref{alg:lp-postprocessing} costs $O\left(K\subj(\tau) \log K\subj(\tau)\right)$ time. 
Non-bisection
subscription cases only reduce the postprocessing burden and are covered by the
same upper bound, while their oracle-scale boundary evaluations are absorbed by
$|T\jj|\toracle(L)$ in~\eqref{eqn:complexity-all}.

The one-time precomputation of $B^{(j, +)}$  in line~\ref{line:one-time-U-max-compute}
contributes at most one call to
Algorithm~\ref{alg:mnl-oracle} per type. Summing these contributions over all
$j\in[J]$ and all relevant branches proves~\eqref{eqn:complexity-all}. Finally,
selecting the best branch and aggregating over types costs only
$O\!\left(J+\sum_{j=1}^J |T\jj|\right)$ time.
\end{proof}

\medskip

We now specialize~\eqref{eqn:complexity-all} to the theorem-driven budgets of
Theorem~\ref{thm:mnl-inner-separated-gap}. For notational convenience, we write
$K^{(s,j)}$ for the subscription bisection budget defined by
\eqref{eqn:inner-grid-size-sub}; this quantity is independent of the ratio
index $r$.

\begin{lemma}
\label{lem:complexity-thm-callcount}
Fix $K_T\ge 2$ and run Algorithm~\ref{alg:mnl-solver-all} with the  outer ratio grids from
Theorem~\ref{thm:mnl-inner-separated-gap}, namely
$|T\jj| \le K_T$ for every $j$, ads bisection budgets $K\adj(t\jj_r)$ from
\eqref{eqn:inner-grid-size-ads}, and subscription bisection budgets
$K^{(s,j)}$ from \eqref{eqn:inner-grid-size-sub}.
Then the total number of calls to 
Algorithms~\ref{alg:mnl-oracle}
and~\ref{alg:exact-card-mnl} (which have the same complexity)
satisfies
\begin{align}
\noracle(K_T)
=
O(JK_T)
+
\sum_{j=1}^J
\left[
\sum_{r=2}^{K_T-1} K\adj(t\jj_r)
+
(K_T+1)K^{(s,j)}
\right],
\label{eqn:complexity-thm-callcount}
\end{align}
where $K\adj(t\jj_r)$ 
is defined in~\eqref{eqn:inner-grid-size-ads}
and $K^{(s,j)}$
is defined in~\eqref{eqn:inner-grid-size-sub}.
\end{lemma}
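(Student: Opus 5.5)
The plan is to count oracle calls branch by branch and read them off from Algorithm~\ref{alg:mnl-solver-all}, essentially specializing the accounting in the proof of Lemma~\ref{lem:complexity-generic}. For a fixed type $j$, I will first record the non-bisection calls. There is one call to Algorithm~\ref{alg:mnl-oracle} in line~\ref{line:one-time-U-max-compute} to obtain $(\uumaxj,B^{(j,+)})$. For each of the at most $K_T$ grid points $t\in T\jj_{K_T}$, the ads solver at the two endpoints $t_1^{(j)},t_{K_T}^{(j)}$ calls Algorithm~\ref{alg:mnl-ads-boundary}, which makes one oracle call. In the degenerate case $\tminj=\tmaxj$ there is a single such boundary call. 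The subscription solver, in the boundary case $p=U_{\max}^{(j)}$, calls Algorithm~\ref{alg:mnl-sub-boundary}, which by Lemma~\ref{lem:sub-boundary-poly-simple} makes exactly one call to either Algorithm~\ref{alg:mnl-oracle} or Algorithm~\ref{alg:exact-card-mnl}. This boundary call happens once for each $\tau\in T\jj_{K_T}\cup\{\infty\}$, so at most $K_T+1$ times. When $p>U_{\max}^{(j)}$ or $p=0$, the subscription solver makes no calls at all. Summing these gives at most $1+2+(K_T+1)=O(K_T)$ calls per type, hence the $O(JK_T)$ term.

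Next I will count the bisection calls. Algorithm~\ref{alg:dual-bisection} makes exactly one call to Algorithm~\ref{alg:mnl-oracle} per iteration and runs at most its budget $K$ iterations (it may stop early through the break). For the ads side, bisection is invoked only at interior indices $r=2,\dots,K_T-1$, with budget $K\adj(t_r^{(j)})$ from~\eqref{eqn:inner-grid-size-ads}, which contributes $\sum_{r=2}^{K_T-1}K\adj(t_r^{(j)})$. At the endpoints and at $t=\infty$, the ads solver does no bisection: $t=\infty$ returns $(0,\ynullS)$ directly, and the endpoints have budget $0$. For the subscription side, bisection runs only when $p\in(0,U_{\max}^{(j)})$. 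It then runs once at each of the at most $K_T+1$ branches $\tau\in T\jj_{K_T}\cup\{\infty\}$, each with the $t$-independent budget $K^{(s,j)}$ from~\eqref{eqn:inner-grid-size-sub}. Otherwise $K^{(s,j)}=0$ by definition, so in every case this side contributes at most $(K_T+1)K^{(s,j)}$.

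The postprocessing routine Algorithm~\ref{alg:lp-postprocessing} makes no oracle calls, and neither does the final evaluation of $R(S,\yhat)$ and $G(S,\bm\beta,\yhat)$. Adding the three contributions and summing over $j\in[J]$ gives~\eqref{eqn:complexity-thm-callcount}.

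The only delicate point is making sure that no hidden calls are double counted or missed. Specifically, the boundary subscription solver must make a single oracle-type call even though it chooses between Algorithm~\ref{alg:mnl-oracle} and Algorithm~\ref{alg:exact-card-mnl}, and the early break in bisection can only reduce the count. I will check both against the pseudocode. Since the statement is an upper bound, early termination is harmless, so writing the identity as equality in $O(\cdot)$ plus the explicit sums is justified.
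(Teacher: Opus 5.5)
Your proposal is correct and follows essentially the same branch-by-branch accounting as the paper. Bisection calls at the interior ads grid points and at the $K_T+1$ subscription branches give the explicit sums. The one-time $U_{\max}^{(j)}$ call, the endpoint ads boundary calls, and the subscription boundary calls (when $p=U_{\max}^{(j)}$) are absorbed into the $O(JK_T)$ term. Your extra checks are consistent with the pseudocode and only make explicit what the paper leaves implicit: early termination of bisection only lowers the count, the subscription boundary solver makes exactly one call to one of the two oracles, and postprocessing makes no oracle calls.
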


\begin{proof}[Proof of Lemma~\ref{lem:complexity-thm-callcount}]
For the ads branch, Algorithm~\ref{alg:mnl-fixed-jt-ads} makes one oracle call
per bisection iteration at each interior grid point
$t_r^{(j)}\in\{t_2^{(j)},\dots,t_{K_T-1}^{(j)}\}$, and no bisection iterations
at the two endpoints. Thus, the interior ads branches contribute at most
$\sum_{r=2}^{K_T-1} K\adj(t\jj_r)$ oracle calls for type $j$.
In addition, the two endpoint boundary branches contribute only $O(1)$ oracle
calls per type, see Algorithm~\ref{alg:mnl-ads-boundary}.
For a degenerate type, there are no interior ad grid points and no ad bisection.

For the subscription branch, Algorithm~\ref{alg:mnl-fixed-jt-sub} is evaluated
at each point
 of $T\jj$
and once more on the off branch
$\tau=\infty$. After the one-time precomputation of $B^{(j, +)}$, the
interior case $p\in(0,U_{\max}^{(j)})$ contributes at most $K^{(s,j)}$ oracle
calls from the bisection loop at each invocation. Thus, these subscription
branches contribute $(K_T+1)K^{(s,j)}$ oracle calls for type $j$. If instead
$p=U_{\max}^{(j)}$, the boundary solver is invoked at each branch; its total
additional oracle contribution is at most $O(K_T)$ for type $j$ and is absorbed
into the additive $O(JK_T)$ term.

Summing over $j=1,\dots,J$ proves~\eqref{eqn:complexity-thm-callcount}.
\end{proof}

\begin{lemma}
\label{lem:complexity-thm-callcount-order}
Under the   outer grids and bisection budgets used in
Theorem~\ref{thm:mnl-inner-separated-gap},
\begin{align}
\noracle(K_T)=O(JK_T\log K_T).
\label{eqn:complexity-thm-callcount-theta}
\end{align}
Moreover,
\begin{align}
T_{\mathrm{inner}}
=
O\!\left(
JK_T\log K_T\, \toracle(L)
+
JK_T(\log K_T \log  \log K_T) 
\right).
\label{eqn:complexity-thm-runtime}
\end{align}
With the direct implementation $\toracle(L)=O(L^2\log L)$, this gives
\[
T_{\mathrm{inner}}=\widetilde O(JK_TL^2).
\]
\end{lemma}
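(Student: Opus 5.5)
The plan is to start from Lemma~\ref{lem:complexity-thm-callcount} and bound each bisection budget by $O(\log K_T)$, treating instance constants as fixed. For the subscription branch, \eqref{eqn:inner-grid-size-sub} gives $K^{(s,j)}\le \lceil\log_2((K_T-1)\Theta\subj/2)\rceil = O(\log K_T)$, since $\Theta\subj$ depends only on $p$ and $\uumaxj$. Multiplying by the $K_T+1$ evaluations per type gives $O(K_T\log K_T)$ for each $j$.

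For the ads branch, \eqref{eqn:inner-grid-size-ads} gives
\[
K\adj(t_r\jj)\le 1+\log_2\!\Bigl(\tfrac{(K_T-1)^2}{2(\tmaxj-\tminj)}\bigl(\Theta_-\adj+\Theta_+\adj\bigr)\Bigr),
\]
using $1/(r-1)\le 1$ and $1/(K_T-r)\le 1$. This is $O(\log K_T)$, with a constant that depends on the fixed quantities $\Theta_\pm\adj$ and $\tmaxj-\tminj$; degenerate types contribute nothing. Summing over the $K_T-2$ interior points gives $O(K_T\log K_T)$ per type. Adding the $O(JK_T)$ term and summing over $j$ yields \eqref{eqn:complexity-thm-callcount-theta}.

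For \eqref{eqn:complexity-thm-runtime}, substitute into Lemma~\ref{lem:complexity-generic}. The oracle terms equal $\noracle(K_T)\toracle(L)=O(JK_T\log K_T\,\toracle(L))$. Each postprocessing term $K\log K$ has $K=O(\log K_T)$, so it is $O(\log K_T\log\log K_T)$. There are $O(K_T)$ such terms per type, giving $O(JK_T\log K_T\log\log K_T)$. Finally, plugging in $\toracle(L)=O(L^2\log L)$ from \citet{RusmevichientongShSh10}, together with Lemma~\ref{lem:exact-card-mnl}, gives $\widetilde O(JK_TL^2)$.

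The main obstacle is presentational rather than technical. We must make explicit that the constants inside the logarithms ($\Theta$'s, interval widths) are fixed instance data, so they enter only additively. We also need to note that $\log\log K_T$ is harmless under the convention $0\log 0=0$.
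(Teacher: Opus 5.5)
Your proposal is correct and follows the paper's proof essentially step by step. You bound each ads and subscription bisection budget by $O(\log K_T)$ uniformly over the grid index, using $r-1\ge 1$, $K_T-r\ge 1$ and fixed instance constants; you then sum over the $O(K_T)$ branches per type via Lemma~\ref{lem:complexity-thm-callcount}, charge $O(\log K_T\log\log K_T)$ per postprocessing call, and substitute into Lemma~\ref{lem:complexity-generic}. Your explicit bound $K^{(a,j)}(t_r^{(j)})\le 1+\log_2(\cdots)$ is only a slightly more concrete version of the paper's remark that the argument of the logarithm is $O(K_T^2)$.
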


\begin{proof}[Proof of Lemma~\ref{lem:complexity-thm-callcount-order}]
Fix a type $j$ and abbreviate $K:=K_T$. By
\eqref{eqn:inner-grid-size-ads}, for each interior index
$r=2,\dots,K-1$, all primitive quantities appearing in the ads bisection budget
are fixed constants, while $r-1\ge 1$ and $K-r\ge 1$. Thus, the argument of the
logarithm in \eqref{eqn:inner-grid-size-ads} is at most $O(K^2)$, uniformly over
$r$. Therefore
\[
K\adj(t\jj_r)=O(\log K)
\qquad\text{uniformly over }r=2,\dots,K-1.
\]
Summing over the $K-2$ interior grid points gives
\[
\sum_{r=2}^{K-1} K\adj(t\jj_r)=O(K\log K).
\]

For the subscription branch, \eqref{eqn:inner-grid-size-sub} gives
\[
K^{(s,j)} = O(\log K),
\]
with $K^{(s,j)}=0$ outside the interior subscription case
$p\in(0,U_{\max}^{(j)})$. 
Thus,
\[
(K+1)K^{(s,j)}=O(K\log K).
\]
Combining these two bounds with
Lemma~\ref{lem:complexity-thm-callcount} and summing over $j=1,\dots,J$ proves
\eqref{eqn:complexity-thm-callcount-theta}.

It remains to consider postprocessing. Under the theorem-driven budgets, every
candidate set generated by a bisection branch has size $O(\log K_T)$.  
Algorithm~\ref{alg:lp-postprocessing}  costs $O(\log K_T \log \log K_T)$. There are $O(JK_T)$ ads and
subscription branch evaluations, including the off branch, so the total
postprocessing contribution is
\[
O\!\left(JK_T\log K_T \log \log K_T\right).
\]
Substituting the oracle-call bound and this postprocessing bound into
Lemma~\ref{lem:complexity-generic} proves~\eqref{eqn:complexity-thm-runtime}.
The final displayed bound follows from $\toracle(L)=O(L^2\log L)$ and the
convention that $\widetilde O(\cdot)$ hides polylogarithmic factors.
\end{proof}

\section{Exact Explicit-Assortment Mixed-Integer Formulation under Uniform Ad Tolerance}
\label{sec:MIP}

We specialize to the case in which, for every user type $j \in [J]$, the ad-tolerance variable satisfies
$\chi\jj \sim \mathrm{Unif}[a\jj,b\jj]$ with $0<a\jj<b\jj$, and $F\jj$ denotes the corresponding cdf.
After explicit enumeration of the feasible assortment family,
\[
\mathcal{A}_\kappa := \{A \subseteq [L] : |A| \le \kappa\},
\]
the formulation below is an exact explicit-assortment mixed-integer nonconvex quadratic formulation of
problem~\eqref{eqn:problem-buy-or-rent}. The implications used below are compact notation for the corresponding
disjunctive constraints. Under the uniform specification, it is cleaner to eliminate the cutoff variable $h\jj$
and write the ad-choice block directly in the variables $\rho\jj := \sigma x^{(a,j)}$ and
$w_{+}\jj  := \bigl(u^{(a,j)}-(u^{(s,j)}-p)_+\bigr)_+$.  

For each user type $j \in [J]$, assortment $A \in \mathcal{A}_\kappa$, and content family $\ell \in [L]$, precompute
\[
\bar{x}_{jA} := x^{(j)}(A),
\qquad
\bar{u}_{jA} := u^{(j)}(A),
\qquad
\bar{p}_{jA\ell} := p^{(j)}_\ell(A).
\]
Then the induced quantities in~\eqref{eqn:x-u-def} and the content family flows in~\eqref{eqn:y-flow-def}
become linear in the assortment-mixture variables.

\subsection{Decision Variables}

\textbf{Assortment-mixture variables.}
For each $j \in [J]$ and $A \in \mathcal{A}_\kappa$,
\[
y_A^{(a,j)} \ge 0,
\qquad
y_A^{(s,j)} \ge 0.
\]

\textbf{Aggregate choice, utility, and flow variables.}
For each $j \in [J]$,
\[
x^{(a,j)} \ge 0,
\qquad
u^{(a,j)} \ge 0,
\qquad
u^{(s,j)} \ge 0,
\qquad
\rho\jj \ge 0,
\]
and for each $j \in [J]$ and $\ell \in [L]$,
\[
f_\ell^{(a,j)} \ge 0,
\qquad
f_\ell^{(s,j)} \ge 0.
\]

\textbf{Mode-choice variables.}
For each $j \in [J]$,
\[
0 \le P^{(a,j)} \le 1,
\qquad
0 \le P^{(s,j)} \le 1.
\]

\textbf{Positive-part and regime variables.}
For each $j \in [J]$,
\[
\iota\jj,\ s\jj,\ d^{(j, L)},\ d^{(j, I)},\ d^{(j, R)} \in \{0,1\},
\]
\[
w_{+,1}\jj ,\ w_{-,1}\jj ,\ w_{+}\jj ,\ w_{-}\jj  \ge 0,
\qquad
c\jj \ge 0,
\qquad
v\jj \ge 0.
\]
Here $\iota\jj$ indicates whether the subscription utility reaches the price threshold, $s\jj$ indicates whether the
ad branch is active, and $d^{(j, L)},d^{(j, I)},d^{(j, R)}$ select the left, interior, and right regions of the uniform
truncation when the ad branch is active.

\textbf{Auxiliary product variables.}
For each $j \in [J]$,
\[
0 \le q\jj \le \sigma,
\]
and for each $j \in [J]$ and $\ell \in [L]$,
\[
0 \le g_\ell^{(a,j)} \le 1,
\qquad
0 \le g_\ell^{(s,j)} \le 1.
\]
We use $q\jj$ for $\sigma x^{(a,j)} P^{(a,j)}$, and $g_\ell^{(a,j)}$, $g_\ell^{(s,j)}$ for
$P^{(a,j)} f_\ell^{(a,j)}$ and $P^{(s,j)} f_\ell^{(s,j)}$, respectively.

\textbf{Procurement variables.}
For each $\ell \in [L]$,
\[
0 \le z_\ell \le 1,
\qquad
b^{\text{buy}}_\ell \ge 0,
\qquad
b^{\text{rental}}_\ell \ge 0,
\qquad
e_\ell \ge 0,
\qquad
\psi_\ell \in \{0,1\}.
\]

\subsection{Objective Function}

The fixed-$(\sigma,p)$ objective in~\eqref{eqn:problem-buy-or-rent} becomes
\[
\max \; \sum_{j \in [J]} \lambda\jj \bigl(\rads 
q\jj + p P^{(s,j)}\bigr)
- \sum_{\ell \in [L]} e_\ell.
\]

\subsection{Valid Upper Bounds}

Let
\[
M\jj := U^{(j)}_{\max} + p,
\qquad
Q_\ell := \max\!\Bigl\{\eta_\ell,\ \gamma_\ell \sum_{j \in [J]} \lambda\jj\Bigr\}.
\]
The bound $M\jj$ is valid for all four positive-part auxiliaries because
$u^{(a,j)},u^{(s,j)} \in [0,U^{(j)}_{\max}]$. The bound $Q_\ell$ is valid for both $b^{\text{buy}}_\ell$ and $b^{\text{rental}}_\ell$ because
$z_\ell \le 1$ and, for every $j \in [J]$,
\[
\begin{aligned}
g_\ell^{(a,j)} + g_\ell^{(s,j)}
&= P^{(a,j)}f_\ell^{(a,j)} + P^{(s,j)}f_\ell^{(s,j)} \\
&\le P^{(a,j)} + P^{(s,j)} \le 1.
\end{aligned}
\]
Consequently,
\[
0 \le b^{\text{rental}}_\ell = \gamma_\ell \sum_{j \in [J]} \lambda\jj \bigl(g_\ell^{(a,j)} + g_\ell^{(s,j)}\bigr)
\le \gamma_\ell \sum_{j \in [J]} \lambda\jj.
\]

\subsection{Constraint Definitions}

\paragraph{Assortment-mixture simplex constraints.}
For every $j \in [J]$,
\begin{align*}
\sum_{A \in \mathcal{A}_\kappa} y_A^{(a,j)} &= 1, \\
\sum_{A \in \mathcal{A}_\kappa} y_A^{(s,j)} &= 1.
\end{align*}

\paragraph{Linear aggregation identities.}
The explicit-assortment versions of~\eqref{eqn:x-u-def} and~\eqref{eqn:y-flow-def} are
\begin{align*}
x^{(a,j)} &= \sum_{A \in \mathcal{A}_\kappa} \bar{x}_{jA} \, y_A^{(a,j)}, && \forall j \in [J], \\
\rho\jj &= \sigma x^{(a,j)}, && \forall j \in [J], \\
u^{(a,j)} &= \sum_{A \in \mathcal{A}_\kappa} \bar{u}_{jA} \, y_A^{(a,j)}, && \forall j \in [J], \\
u^{(s,j)} &= \sum_{A \in \mathcal{A}_\kappa} \bar{u}_{jA} \, y_A^{(s,j)}, && \forall j \in [J], \\
f_\ell^{(a,j)} &= \sum_{A \in \mathcal{A}_\kappa : \ell \in A} \bar{p}_{jA\ell} \, y_A^{(a,j)}, && \forall j \in [J],\ \forall \ell \in [L], \\
f_\ell^{(s,j)} &= \sum_{A \in \mathcal{A}_\kappa : \ell \in A} \bar{p}_{jA\ell} \, y_A^{(s,j)}, && \forall j \in [J],\ \forall \ell \in [L].
\end{align*}

\paragraph{Exact subscription-threshold block.}
We first encode
$w_{+,1}\jj  = (u^{(s,j)}-p)_+$
and use an auxiliary   variable to enforce the convention
$\iota\jj = 1$ if and only if $u^{(s,j)} \ge p$.
For every $j \in [J]$,
\begin{align*}
u^{(s,j)} - p &= w_{+,1}\jj  - w_{-,1}\jj , \\
0 \le w_{+,1}\jj  &\le M\jj \iota\jj, \\
0 \le w_{-,1}\jj  &\le M\jj (1-\iota\jj), \\
w_{-,1}\jj  \, c\jj &= 1-\iota\jj.
\end{align*}
If $\iota\jj=1$, then $w_{-,1}\jj =0$ and hence $u^{(s,j)}-p = w_{+,1}\jj  \ge 0$.
If $\iota\jj=0$, then $w_{-,1}\jj c\jj = 1$ forces $w_{-,1}\jj >0$, so
$u^{(s,j)}-p = -w_{-,1}\jj  < 0$.
Therefore,
the block is exact and already resolves the tie $u^{(s,j)}=p$ in favor of subscription.

\paragraph{Exact activation block for the ad branch.}
Next encode
$w_{+}\jj  = \bigl(u^{(a,j)}-w_{+,1}\jj \bigr)_+$
and use a second witness variable to distinguish exactly between the active and inactive ad branches.
For every $j \in [J]$,
\begin{align*}
u^{(a,j)} - w_{+,1}\jj  &= w_{+}\jj  - w_{-}\jj , \\
0 \le w_{+}\jj  &\le M\jj s\jj, \\
0 \le w_{-}\jj  &\le M\jj (1-s\jj), \\
w_{+}\jj  \, v\jj &= s\jj.
\end{align*}
If $s\jj=1$, then $w_{+}\jj v\jj=1$ forces $w_{+}\jj >0$, so the ad branch is active.
If $s\jj=0$, then $w_{+}\jj =0$ and
$u^{(a,j)}-w_{+,1}\jj  = -w_{-}\jj  \le 0$,
which is exactly the inactive branch.

\paragraph{Exact ad-choice block when $\chi\jj \sim \mathrm{Unif}[a\jj,b\jj]$.}
The binaries $d^{(j, L)}, d^{(j, I)}, d^{(j, R)}$ select the left, interior, and right regions of the uniform truncation.
For every $j \in [J]$,
\begin{align*}
d^{(j, L)} + d^{(j, I)} + d^{(j, R)} &= s\jj, \\
0 \le P^{(a,j)} &\le s\jj.
\end{align*}
Then impose the following branch constraints:
\begin{align*}
d^{(j, L)} = 1 &\Rightarrow \rho\jj \le a\jj w_{+}\jj , \qquad P^{(a,j)} = 1, \\
d^{(j, I)} = 1 &\Rightarrow a\jj w_{+}\jj  \le \rho\jj \le b\jj w_{+}\jj , \qquad
(b\jj-a\jj) w_{+}\jj  P^{(a,j)} + \rho\jj = b\jj w_{+}\jj , \\
d^{(j, R)} = 1 &\Rightarrow \rho\jj \ge b\jj w_{+}\jj , \qquad P^{(a,j)} = 0.
\end{align*}
This is the exact graph of the ad-probability map under the original model. In particular, the off branch
correctly leaves $\rho\jj$ unrestricted; indeed $w_{+}\jj =0$ does not force $x^{(a,j)}=0$ in the original model.
\begin{itemize}
  \item if $s\jj=0$, then $w_{+}\jj =0$ and $P^{(a,j)}=0$;
  \item if $s\jj=1$ and $\rho\jj \le a\jj w_{+}\jj $, then $P^{(a,j)}=1$;
  \item if $s\jj=1$ and $a\jj w_{+}\jj  \le \rho\jj \le b\jj w_{+}\jj $, then
  $P^{(a,j)} = \dfrac{b\jj - \rho\jj/w_{+}\jj }{b\jj-a\jj}$;
  \item if $s\jj=1$ and $\rho\jj \ge b\jj w_{+}\jj $, then $P^{(a,j)}=0$.
\end{itemize}
At the two boundaries $\rho\jj = a\jj w_{+}\jj $ and $\rho\jj = b\jj w_{+}\jj $, two adjacent regimes may both be
feasible, but they imply the same ad probability ($1$ at the left boundary and $0$ at the right boundary),
so exactness is unaffected. Hence no cutoff variable and no activation tolerance are needed in the uniform case.

\paragraph{Exact subscription probability and product definitions.}
For every $j \in [J]$ and $\ell \in [L]$,
\begin{align*}
P^{(s,j)} &= \iota\jj \bigl(1-P^{(a,j)}\bigr), \\
q\jj &= \rho\jj P^{(a,j)}, \\
g_\ell^{(a,j)} &= P^{(a,j)} f_\ell^{(a,j)}, \\
g_\ell^{(s,j)} &= P^{(s,j)} f_\ell^{(s,j)}.
\end{align*}
The equality for $P^{(s,j)}$ is exact because, under the uniform specification,
$F\jj(h\jj) = 1-P^{(a,j)}$ on the active branch and equals $1$ on the inactive branch.

\paragraph{Procurement block.}
For every $\ell \in [L]$,
\begin{align*}
z_\ell &\ge f_\ell^{(a,j)}, && \forall j \in [J], \\
z_\ell &\ge f_\ell^{(s,j)}, && \forall j \in [J], \\
b^{\text{buy}}_\ell &= \eta_\ell z_\ell, \\
b^{\text{rental}}_\ell &= \gamma_\ell \sum_{j \in [J]} \lambda\jj \bigl(g_\ell^{(a,j)} + g_\ell^{(s,j)}\bigr).
\end{align*}
Thus,
$z_\ell \ge \max_{j \in [J],\, m \in \{a,s\}} f_\ell^{(m,j)}$.
Since the objective weakly decreases in $z_\ell$, there exists an optimal solution
$(\bm{z}\opt, \bm{f}\opt)$ satisfying
\[
z\opt_\ell = \max_{j \in [J],\,m \in \{a,s\}} f_\ell^{\star,(m,j)},
\qquad \forall \ell \in [L].
\]
To impose $e_\ell = \min\{b^{\text{buy}}_\ell,b^{\text{rental}}_\ell\}$ exactly, use
\begin{align*}
e_\ell &\le b^{\text{buy}}_\ell, \\
e_\ell &\le b^{\text{rental}}_\ell, \\
e_\ell &\ge b^{\text{buy}}_\ell - Q_\ell (1-\psi_\ell), \\
e_\ell &\ge b^{\text{rental}}_\ell - Q_\ell \psi_\ell, \\
\psi_\ell &\in \{0,1\}.
\end{align*}
These constraints impose the procurement minimum exactly. If $\psi_\ell=1$, they imply
\[
e_\ell=b^{\text{buy}}_\ell \le b^{\text{rental}}_\ell,
\]
whereas if $\psi_\ell=0$, they imply
\[
e_\ell=b^{\text{rental}}_\ell \le b^{\text{buy}}_\ell.
\]
Hence,
\[
e_\ell=\min\{b^{\text{buy}}_\ell,b^{\text{rental}}_\ell\},
\]
which is exactly the family-level buy-versus-rent procurement term in~\eqref{eqn:cost-content}.

\subsection{Specialization to the Fixed-Buy-Set Problem}

If one wishes to solve the fixed-buy-set problem~\eqref{eqn:R-S-problem-def} for a given set
$S \subseteq [L]$, first remove the four inequalities used to impose
\[
e_\ell=\min\{b^{\text{buy}}_\ell,b^{\text{rental}}_\ell\}
\]
and drop the binary variable $\psi_\ell$

for every $\ell \in [L]$. One may then retain
$e_\ell$, $b^{\text{buy}}_\ell$, and $b^{\text{rental}}_\ell$, together with their defining equations, and impose
\begin{align*}
e_\ell &= b^{\text{buy}}_\ell, && \forall \ell \in S, \\
e_\ell &= b^{\text{rental}}_\ell, && \forall \ell \in [L] \setminus S.
\end{align*}
Alternatively, one may eliminate $e_\ell$, $b^{\text{buy}}_\ell$, and $b^{\text{rental}}_\ell$ and replace
$\sum_{\ell \in [L]}e_\ell$ in the objective directly by
\[
\sum_{\ell \in S} \eta_\ell z_\ell
+
\sum_{\ell \in [L]\setminus S} \gamma_\ell
\sum_{j \in [J]} \lambda\jj \bigl(g_\ell^{(a,j)} + g_\ell^{(s,j)}\bigr),
\]
which is exactly the fixed-$S$ objective in~\eqref{eqn:R-S-def}.

\section{Additional Experimental Details}
\label{app:additional-experimental-details}

This appendix reports additional details for the numerical experiments in
Section~\ref{sec:exp-convergence} and
Section~\ref{sec:preference-concentration}. 
We first give the   data details 
of
the baseline instance used in Section~\ref{sec:exp-convergence}. 
We then report the returned
assortment distributions for the market-scaling and preference-concentration
experiments. Each 
table entry lists the assortments in the support of the
returned randomized assortment and the corresponding mixing probabilities.

\subsection{Data details}

Table~\ref{tab:appendix-baseline-alpha} reports the baseline attraction matrix.
Table~\ref{tab:appendix-baseline-utility} repeats the same matrix as the utility
parameters because the numerical study sets $u_\ell^{(j)}=\alpha_\ell^{(j)}$.
Table~\ref{tab:appendix-baseline-costs} reports the rental and buy costs, and
Table~\ref{tab:appendix-preference-targets} reports the high-attraction targets used
for the moving shared families in the preference-concentration experiment.

\begin{table}[t]
\centering
\scriptsize
\resizebox{\textwidth}{!}{%
\begin{tabular}{c cccccccccc}
\hline
Type \(j\) & 1 & 2 & 3 & 4 & 5 & 6 & 7 & 8 & 9 & 10 \\
\hline
1 & 3.11 & 0.79 & 0.63 & 0.79 & 0.66 & 1.28 & 1.40 & 1.27 & 1.31 & 1.16 \\
2 & 0.75 & 3.12 & 0.67 & 0.76 & 0.66 & 1.29 & 1.19 & 1.27 & 1.21 & 1.23 \\
3 & 0.75 & 0.66 & 3.09 & 0.80 & 0.79 & 1.37 & 1.31 & 1.23 & 1.20 & 1.44 \\
4 & 0.70 & 0.62 & 0.72 & 3.27 & 0.72 & 1.43 & 1.16 & 1.31 & 1.29 & 1.17 \\
5 & 0.73 & 0.77 & 0.72 & 0.65 & 3.30 & 1.30 & 1.30 & 1.38 & 1.19 & 1.40 \\
\hline
\end{tabular}%
}
\caption{Baseline attraction parameters \(\alpha_{\ell}^{(j)}\).}
\label{tab:appendix-baseline-alpha}
\end{table}

\begin{table}[t]
\centering
\scriptsize
\resizebox{\textwidth}{!}{%
\begin{tabular}{c cccccccccc}
\hline
Type \(j\) & 1 & 2 & 3 & 4 & 5 & 6 & 7 & 8 & 9 & 10 \\
\hline
1 & 3.11 & 0.79 & 0.63 & 0.79 & 0.66 & 1.28 & 1.40 & 1.27 & 1.31 & 1.16 \\
2 & 0.75 & 3.12 & 0.67 & 0.76 & 0.66 & 1.29 & 1.19 & 1.27 & 1.21 & 1.23 \\
3 & 0.75 & 0.66 & 3.09 & 0.80 & 0.79 & 1.37 & 1.31 & 1.23 & 1.20 & 1.44 \\
4 & 0.70 & 0.62 & 0.72 & 3.27 & 0.72 & 1.43 & 1.16 & 1.31 & 1.29 & 1.17 \\
5 & 0.73 & 0.77 & 0.72 & 0.65 & 3.30 & 1.30 & 1.30 & 1.38 & 1.19 & 1.40 \\
\hline
\end{tabular}%
}
\caption{Baseline utility parameters \(u_{\ell}^{(j)}\). Under the numerical
specification \(u_{\ell}^{(j)}=\alpha_{\ell}^{(j)}\), this matrix is identical to
Table~\ref{tab:appendix-baseline-alpha}.}
\label{tab:appendix-baseline-utility}
\end{table}

\begin{table}[t]
\centering
\scriptsize
\resizebox{\textwidth}{!}{%
\begin{tabular}{c l c c c}
\hline
Content family \(\ell\) & Role & Rental cost \(\gamma_\ell\) & Buy multiplier \(\eta_\ell/\gamma_\ell\) & Buy cost \(\eta_\ell\) \\
\hline
1  & Niche for type 1       & 0.85 & 5.50 & 4.68 \\
2  & Niche for type 2       & 0.83 & 5.50 & 4.56 \\
3  & Niche for type 3       & 0.82 & 5.50 & 4.51 \\
4  & Niche for type 4       & 0.88 & 5.50 & 4.84 \\
5  & Niche for type 5       & 0.86 & 5.50 & 4.73 \\
6  & Reusable content family & 0.93 & 3.00 & 2.79 \\
7  & Reusable content family & 0.89 & 3.00 & 2.67 \\
8  & Reusable content family & 0.90 & 3.00 & 2.70 \\
9  & Reusable content family & 0.87 & 3.00 & 2.61 \\
10 & Reusable content family & 0.90 & 2.30 & 2.07 \\
\hline
\end{tabular}%
}
\caption{Baseline procurement costs.}
\label{tab:appendix-baseline-costs}
\end{table}

\begin{table}[t]
\centering 
\begin{tabular}{c c c}
\hline
Content family \(\ell\) & Shared portfolio & Target attraction \(h_\ell\) \\
\hline
6 & \(H_1\) & 3.21 \\
7 & \(H_1\) & 3.27 \\
8 & \(H_2\) & 2.91 \\
9 & \(H_2\) & 3.28 \\
\hline
\end{tabular}
\caption{High-attraction targets used in the preference-concentration experiment.}
\label{tab:appendix-preference-targets}
\end{table}

\subsection{Solution details}

We now report the returned assortment distributions. 
Table~\ref{tab:appendix-market-scaling-distributions} reports the market-scaling
experiment from Section~\ref{sec:exp-convergence}. 
Table~\ref{tab:appendix-preference-concentration-distributions}
reports the preference-concentration
experiment from Section~\ref{sec:preference-concentration}. 
Within each mode and type, support assortments are ordered by their probability
mass from largest to smallest.
In addition, 
Tables~\ref{tab:appendix-market-scaling-rental-flow} and~\ref{tab:appendix-preference-concentration-rental-flow} report the corresponding family-level realized rental flows.

\begin{table}[t]
\centering
\scriptsize
\resizebox{\textwidth}{!}{%
\begin{tabular}{c c c l l}
\hline
\(N\) & Type \(j\) & \(S_N\opt\) & Ad-mode assortment & Subscription-mode assortment \\
\hline
1 & 1 & \(\{10\}\) & \(\{1,7,10\}\): 0.875; \(\{1,10\}\): 0.125 & \(\{1\}\): 0.765; \(\emptyset\): 0.235 \\
1 & 2 & \(\{10\}\) & \(\{2,9,10\}\): 0.910; \(\{2,10\}\): 0.090 & \(\{2\}\): 0.762; \(\emptyset\): 0.238 \\
1 & 3 & \(\{10\}\) & \(\{3,7,10\}\): 0.951; \(\{3,10\}\): 0.049 & \(\{3\}\): 0.771; \(\emptyset\): 0.229 \\
1 & 4 & \(\{10\}\) & \(\{4,9,10\}\): 0.996; \(\{8,9,10\}\): 0.004 & \(\{4\}\): 0.719; \(\emptyset\): 0.281 \\
1 & 5 & \(\{10\}\) & \(\{5,8,10\}\): 0.942; \(\{5,10\}\): 0.058 & \(\{5\}\): 0.711; \(\emptyset\): 0.289 \\
\hline
2 & 1 & \(\{6,7,8,9,10\}\) & \(\{1,7,9\}\): 0.976; \(\{6,7,9\}\): 0.024 & \(\{1\}\): 0.765; \(\emptyset\): 0.235 \\
2 & 2 & \(\{6,7,8,9,10\}\) & \(\{2,6,10\}\): 0.973; \(\{6,8,10\}\): 0.027 & \(\{2\}\): 0.762; \(\emptyset\): 0.238 \\
2 & 3 & \(\{6,7,8,9,10\}\) & \(\{3,6,10\}\): 0.965; \(\{3,10\}\): 0.035 & \(\{3\}\): 0.771; \(\emptyset\): 0.229 \\
2 & 4 & \(\{6,7,8,9,10\}\) & \(\{4,6,8\}\): 0.986; \(\{6,8,9\}\): 0.014 & \(\{4\}\): 0.719; \(\emptyset\): 0.281 \\
2 & 5 & \(\{6,7,8,9,10\}\) & \(\{5,8,10\}\): 0.961; \(\{7,8,10\}\): 0.039 & \(\{5\}\): 0.711; \(\emptyset\): 0.289 \\
\hline
5 & 1 & \(\{1,\ldots,10\}\) & \(\{1,7,9\}\): 0.890; \(\{1,7\}\): 0.110 & \(\{1\}\): 0.765; \(\emptyset\): 0.235 \\
5 & 2 & \(\{1,\ldots,10\}\) & \(\{2,6,8\}\): 0.924; \(\{2,6\}\): 0.076 & \(\{2\}\): 0.762; \(\emptyset\): 0.238 \\
5 & 3 & \(\{1,\ldots,10\}\) & \(\{3,6,10\}\): 0.965; \(\{3,10\}\): 0.035 & \(\{3\}\): 0.771; \(\emptyset\): 0.229 \\
5 & 4 & \(\{1,\ldots,10\}\) & \(\{4,6,8\}\): 0.986; \(\{6,8,9\}\): 0.014 & \(\{4\}\): 0.719; \(\emptyset\): 0.281 \\
5 & 5 & \(\{1,\ldots,10\}\) & \(\{5,8,10\}\): 0.942; \(\{5,10\}\): 0.058 & \(\{5\}\): 0.711; \(\emptyset\): 0.289 \\
\hline
10 & 1 & \(\{1,\ldots,10\}\) & \(\{1,7,9\}\): 0.890; \(\{1,7\}\): 0.110 & \(\{1\}\): 0.765; \(\emptyset\): 0.235 \\
10 & 2 & \(\{1,\ldots,10\}\) & \(\{2,6,8\}\): 0.971; \(\{2\}\): 0.029 & \(\{2\}\): 0.762; \(\emptyset\): 0.238 \\
10 & 3 & \(\{1,\ldots,10\}\) & \(\{3,6,10\}\): 0.965; \(\{3,10\}\): 0.035 & \(\{3\}\): 0.771; \(\emptyset\): 0.229 \\
10 & 4 & \(\{1,\ldots,10\}\) & \(\{4,6,8\}\): 0.986; \(\{6,8,9\}\): 0.014 & \(\{4\}\): 0.719; \(\emptyset\): 0.281 \\
10 & 5 & \(\{1,\ldots,10\}\) & \(\{5,8,10\}\): 0.942; \(\{5,10\}\): 0.058 & \(\{5\}\): 0.711; \(\emptyset\): 0.289 \\
\hline
100 & 1 & \(\{1,\ldots,10\}\) & \(\{1,7,9\}\): 0.890; \(\{1,7\}\): 0.110 & \(\{1\}\): 0.765; \(\emptyset\): 0.235 \\
100 & 2 & \(\{1,\ldots,10\}\) & \(\{2,6,8\}\): 0.971; \(\{2\}\): 0.029 & \(\{2\}\): 0.762; \(\emptyset\): 0.238 \\
100 & 3 & \(\{1,\ldots,10\}\) & \(\{3,6,10\}\): 0.965; \(\{3,10\}\): 0.035 & \(\{3\}\): 0.771; \(\emptyset\): 0.229 \\
100 & 4 & \(\{1,\ldots,10\}\) & \(\{4,6,8\}\): 0.986; \(\{6,8,9\}\): 0.014 & \(\{4\}\): 0.719; \(\emptyset\): 0.281 \\
100 & 5 & \(\{1,\ldots,10\}\) & \(\{5,8,10\}\): 0.978; \(\{5\}\): 0.022 & \(\{5\}\): 0.711; \(\emptyset\): 0.289 \\
\hline
\end{tabular}}
\caption{Returned assortment distributions corresponding to
Table~\ref{tab:sec51-main} at \(K_T=129\) under the numerical specification
\(u_{\ell}^{(j)}=\alpha_{\ell}^{(j)}\). Within each mode and type, support
assortments are ordered from largest to smallest probability mass.}
\label{tab:appendix-market-scaling-distributions}
\end{table}

\begin{table}[t]
\centering
\scriptsize
\resizebox{\textwidth}{!}{%
\begin{tabular}{c c c c l l}
\hline
\(\rho\) & Type \(j\) & Cluster & \(S\opt(\rho)\) & Ad-mode assortment & Subscription-mode support \\
\hline
0.0 & 1 & \(C_1\) & \(\{10\}\) & \(\{1,7,10\}\): 0.875; \(\{1,10\}\): 0.125 & \(\{1\}\): 0.765; \(\emptyset\): 0.235 \\
0.0 & 2 & \(C_1\) & \(\{10\}\) & \(\{2,9,10\}\): 0.910; \(\{2,10\}\): 0.090 & \(\{2\}\): 0.762; \(\emptyset\): 0.238 \\
0.0 & 3 & \(C_1\) & \(\{10\}\) & \(\{3,7,10\}\): 0.951; \(\{3,10\}\): 0.049 & \(\{3\}\): 0.771; \(\emptyset\): 0.229 \\
0.0 & 4 & \(C_2\) & \(\{10\}\) & \(\{4,9,10\}\): 0.996; \(\{8,9,10\}\): 0.004 & \(\{4\}\): 0.719; \(\emptyset\): 0.281 \\
0.0 & 5 & \(C_2\) & \(\{10\}\) & \(\{5,8,10\}\): 0.942; \(\{5,10\}\): 0.058 & \(\{5\}\): 0.711; \(\emptyset\): 0.289 \\
\hline
0.1 & 1 & \(C_1\) & \(\{10\}\) & \(\{1,7,10\}\): 0.881; \(\{1,9,10\}\): 0.119 & \(\{1\}\): 0.765; \(\emptyset\): 0.235 \\
0.1 & 2 & \(C_1\) & \(\{10\}\) & \(\{2,7,10\}\): 0.992; \(\{2,10\}\): 0.008 & \(\{2\}\): 0.762; \(\emptyset\): 0.238 \\
0.1 & 3 & \(C_1\) & \(\{10\}\) & \(\{3,7,10\}\): 0.873; \(\{3,10\}\): 0.127 & \(\{3\}\): 0.771; \(\emptyset\): 0.229 \\
0.1 & 4 & \(C_2\) & \(\{10\}\) & \(\{4,9,10\}\): 0.854; \(\{4,10\}\): 0.146 & \(\{4\}\): 0.719; \(\emptyset\): 0.281 \\
0.1 & 5 & \(C_2\) & \(\{10\}\) & \(\{5,8,10\}\): 0.982; \(\{5,10\}\): 0.018 & \(\{5\}\): 0.711; \(\emptyset\): 0.289 \\
\hline
0.2 & 1 & \(C_1\) & \(\{6,7,10\}\) & \(\{1,6,7\}\): 0.957; \(\{6,7,10\}\): 0.043 & \(\{1\}\): 0.765; \(\emptyset\): 0.235 \\
0.2 & 2 & \(C_1\) & \(\{6,7,10\}\) & \(\{2,6,7\}\): 0.962; \(\{6,7,10\}\): 0.038 & \(\{2\}\): 0.762; \(\emptyset\): 0.238 \\
0.2 & 3 & \(C_1\) & \(\{6,7,10\}\) & \(\{3,6,7\}\): 0.997; \(\{3\}\): 0.003 & \(\{3\}\): 0.771; \(\emptyset\): 0.229 \\
0.2 & 4 & \(C_2\) & \(\{6,7,10\}\) & \(\{4,6,10\}\): 0.989; \(\{6,7,10\}\): 0.011 & \(\{4\}\): 0.719; \(\emptyset\): 0.281 \\
0.2 & 5 & \(C_2\) & \(\{6,7,10\}\) & \(\{5,7,10\}\): 0.965; \(\{6,7,10\}\): 0.035 & \(\{5\}\): 0.711; \(\emptyset\): 0.289 \\
\hline
0.3 & 1 & \(C_1\) & \(\{6,7,10\}\) & \(\{1,6,7\}\): 0.921; \(\{6,7,10\}\): 0.079 & \(\{1\}\): 0.765; \(\emptyset\): 0.235 \\
0.3 & 2 & \(C_1\) & \(\{6,7,10\}\) & \(\{2,6,7\}\): 0.911; \(\{6,7,10\}\): 0.089 & \(\{2\}\): 0.762; \(\emptyset\): 0.238 \\
0.3 & 3 & \(C_1\) & \(\{6,7,10\}\) & \(\{3,6,7\}\): 0.960; \(\{6,7,10\}\): 0.040 & \(\{3\}\): 0.771; \(\emptyset\): 0.229 \\
0.3 & 4 & \(C_2\) & \(\{6,7,10\}\) & \(\{4,6,10\}\): 0.989; \(\{6,7,10\}\): 0.011 & \(\{4\}\): 0.719; \(\emptyset\): 0.281 \\
0.3 & 5 & \(C_2\) & \(\{6,7,10\}\) & \(\{5,7,10\}\): 0.965; \(\{6,7,10\}\): 0.035 & \(\{5\}\): 0.711; \(\emptyset\): 0.289 \\
\hline
0.4 & 1 & \(C_1\) & \(\{6,7,8,9,10\}\) & \(\{1,6,7\}\): 0.844; \(\{6,7,9\}\): 0.156 & \(\{1\}\): 0.765; \(\emptyset\): 0.235 \\
0.4 & 2 & \(C_1\) & \(\{6,7,8,9,10\}\) & \(\{2,6,7\}\): 0.945; \(\{6,7,10\}\): 0.055 & \(\{2\}\): 0.762; \(\emptyset\): 0.238 \\
0.4 & 3 & \(C_1\) & \(\{6,7,8,9,10\}\) & \(\{3,6,7\}\): 0.721; \(\{6,7,10\}\): 0.279 & \(\{3\}\): 0.771; \(\emptyset\): 0.229 \\
0.4 & 4 & \(C_2\) & \(\{6,7,8,9,10\}\) & \(\{4,8,9\}\): 0.996; \(\{4,9\}\): 0.004 & \(\{4\}\): 0.719; \(\emptyset\): 0.281 \\
0.4 & 5 & \(C_2\) & \(\{6,7,8,9,10\}\) & \(\{5,8,9\}\): 0.963; \(\{8,9,10\}\): 0.037 & \(\{5\}\): 0.711; \(\emptyset\): 0.289 \\
\hline
0.5 & 1 & \(C_1\) & \(\{6,7,8,9,10\}\) & \(\{6,7,9\}\): 0.677; \(\{1,6,7\}\): 0.323 & \(\{1\}\): 0.765; \(\emptyset\): 0.235 \\
0.5 & 2 & \(C_1\) & \(\{6,7,8,9,10\}\) & \(\{6,7,10\}\): 0.566; \(\{2,6,7\}\): 0.434 & \(\{2\}\): 0.762; \(\emptyset\): 0.238 \\
0.5 & 3 & \(C_1\) & \(\{6,7,8,9,10\}\) & \(\{6,7,10\}\): 0.837; \(\{3,6,7\}\): 0.163 & \(\{3\}\): 0.771; \(\emptyset\): 0.229 \\
0.5 & 4 & \(C_2\) & \(\{6,7,8,9,10\}\) & \(\{4,8,9\}\): 0.989; \(\{6,8,9\}\): 0.011 & \(\{4\}\): 0.719; \(\emptyset\): 0.281 \\
0.5 & 5 & \(C_2\) & \(\{6,7,8,9,10\}\) & \(\{5,8,9\}\): 0.942; \(\{8,9,10\}\): 0.058 & \(\{5\}\): 0.711; \(\emptyset\): 0.289 \\
\hline
\end{tabular}}
\caption{Returned assortment distributions corresponding to
Table~\ref{tab:pref-concentration}, evaluated at \(K_T=129\) under the numerical
specification \(u_{\ell}^{(j)}(\rho)=\alpha_{\ell}^{(j)}(\rho)\). Within each mode
and type, support assortments are ordered from largest to smallest probability mass.}
\label{tab:appendix-preference-concentration-distributions}
\end{table}

\begin{table}[t]
\centering
\scriptsize
\resizebox{\textwidth}{!}{%
\begin{tabular}{c c c c c c c c c c c c c c}
\hline
\(N\) & \(S_N\opt\)
& \(Q^{\mathrm{rent}}_1\)
& \(Q^{\mathrm{rent}}_2\)
& \(Q^{\mathrm{rent}}_3\)
& \(Q^{\mathrm{rent}}_4\)
& \(Q^{\mathrm{rent}}_5\)
& \(Q^{\mathrm{rent}}_6\)
& \(Q^{\mathrm{rent}}_7\)
& \(Q^{\mathrm{rent}}_8\)
& \(Q^{\mathrm{rent}}_9\)
& \(Q^{\mathrm{rent}}_{10}\)
& \(Q^{\mathrm{rent}}\)
& \(\bar Q^{\mathrm{rent}}\) \\
\hline
1 & \(\{10\}\)
& 0.487 & 0.491 & 0.465 & 0.487 & 0.476
& 0 & 0.344 & 0.176 & 0.340 & 0
& 3.267 & 0.653 \\
2 & \(\{6,7,8,9,10\}\)
& 0.907 & 0.930 & 0.919 & 0.929 & 0.906
& 0 & 0 & 0 & 0 & 0
& 4.591 & 0.459 \\
5 & \(\{1,\ldots,10\}\)
& 0 & 0 & 0 & 0 & 0
& 0 & 0 & 0 & 0 & 0
& 0 & 0 \\
10 & \(\{1,\ldots,10\}\)
& 0 & 0 & 0 & 0 & 0
& 0 & 0 & 0 & 0 & 0
& 0 & 0 \\
100 & \(\{1,\ldots,10\}\)
& 0 & 0 & 0 & 0 & 0
& 0 & 0 & 0 & 0 & 0
& 0 & 0 \\
\hline
\end{tabular}}

\caption{Realized rental flow into each content family under market scaling, summed
over user types. The values are computed from the returned assortment distributions in
Table~\ref{tab:appendix-market-scaling-distributions} at \(K_T=129\). The final two
columns report total rental flow \(Q^{\mathrm{rent}}\) and rental flow per arrival
\(\bar Q^{\mathrm{rent}}\)~\eqref{eqn:total-rental-flow}, respectively.}
\label{tab:appendix-market-scaling-rental-flow}
\end{table}

\begin{table}[t]
\centering
\scriptsize
\resizebox{\textwidth}{!}{%
\begin{tabular}{c c c c c c c c c c c c c c}
\hline
\(\rho\) & \(S\opt(\rho)\)
& \(Q^{\mathrm{rent}}_1\)
& \(Q^{\mathrm{rent}}_2\)
& \(Q^{\mathrm{rent}}_3\)
& \(Q^{\mathrm{rent}}_4\)
& \(Q^{\mathrm{rent}}_5\)
& \(Q^{\mathrm{rent}}_6\)
& \(Q^{\mathrm{rent}}_7\)
& \(Q^{\mathrm{rent}}_8\)
& \(Q^{\mathrm{rent}}_9\)
& \(Q^{\mathrm{rent}}_{10}\)
& \(Q^{\mathrm{rent}}\)
& \(\bar Q^{\mathrm{rent}}\) \\
\hline
0.0 & \(\{10\}\)
& 0.487 & 0.491 & 0.465 & 0.487 & 0.476
& 0 & 0.344 & 0.176 & 0.340 & 0
& 3.267 & 0.653 \\
0.1 & \(\{10\}\)
& 0.457 & 0.462 & 0.456 & 0.493 & 0.462
& 0 & 0.596 & 0.199 & 0.198 & 0
& 3.323 & 0.665 \\
0.2 & \(\{6,7,10\}\)
& 0.381 & 0.390 & 0.397 & 0.475 & 0.459
& 0 & 0 & 0 & 0 & 0
& 2.102 & 0.420 \\
0.3 & \(\{6,7,10\}\)
& 0.341 & 0.342 & 0.353 & 0.475 & 0.459
& 0 & 0 & 0 & 0 & 0
& 1.971 & 0.394 \\
0.4 & \(\{6,7,8,9,10\}\)
& 0.291 & 0.327 & 0.252 & 0.399 & 0.388
& 0 & 0 & 0 & 0 & 0
& 1.657 & 0.331 \\
0.5 & \(\{6,7,8,9,10\}\)
& 0.113 & 0.147 & 0.066 & 0.378 & 0.363
& 0 & 0 & 0 & 0 & 0
& 1.067 & 0.213 \\
\hline
\end{tabular}}

\caption{Realized rental flow into each content family along the
preference-concentration path, summed over user types. The values are computed from the
returned assortment distributions in
Table~\ref{tab:appendix-preference-concentration-distributions} at \(K_T=129\). The
final two columns report total rental flow \(Q^{\mathrm{rent}}\) and rental flow per
arrival \(\bar Q^{\mathrm{rent}}\)~\eqref{eqn:total-rental-flow}, respectively.}
\label{tab:appendix-preference-concentration-rental-flow}
\end{table}

\end{appendices}

\bibliography{reference.bib}

\begin{thebibliography}{37}
\providecommand{\natexlab}[1]{#1}
\providecommand{\url}[1]{\texttt{#1}}
\expandafter\ifx\csname urlstyle\endcsname\relax
  \providecommand{\doi}[1]{doi: #1}\else
  \providecommand{\doi}{doi: \begingroup \urlstyle{rm}\Url}\fi

\bibitem[Amaldoss et~al.(2021)Amaldoss, Du, and Shin]{AmaldossDuSh21}
Wilfred Amaldoss, Jinzhao Du, and Woochoel Shin.
\newblock Media platforms’ content provision strategies and sources of
  profits.
\newblock \emph{Marketing Science}, 40\penalty0 (3):\penalty0 527--547, 2021.

\bibitem[Amaldoss et~al.(2024)Amaldoss, Du, and Shin]{AmaldossDuSh24}
Wilfred Amaldoss, Jinzhao Du, and Woochoel Shin.
\newblock Pricing strategy of competing media platforms.
\newblock \emph{Marketing Science}, 43\penalty0 (3):\penalty0 488--505, 2024.
\newblock \doi{10.1287/mksc.2021.0092}.

\bibitem[Andrew(1979)]{Andrew79}
A.M. Andrew.
\newblock Another efficient algorithm for convex hulls in two dimensions.
\newblock \emph{Information Processing Letters}, 9\penalty0 (5):\penalty0
  216--219, 1979.
\newblock ISSN 0020-0190.

\bibitem[Barnes-Schuster et~al.(2002)Barnes-Schuster, Bassok, and
  Anupindi]{BarnesSchusterBaAn02}
Dawn Barnes-Schuster, Yehuda Bassok, and Ravi Anupindi.
\newblock Coordination and flexibility in supply contracts with options.
\newblock \emph{Manufacturing \& Service Operations Management}, 4\penalty0
  (3):\penalty0 171--207, 2002.
\newblock \doi{10.1287/msom.4.3.171.7754}.

\bibitem[{Bilibili Inc.}(2025)]{Bilibili25}
{Bilibili Inc.}
\newblock {Annual Report on {F}orm 20-{F} for the Fiscal Year Ended December
  31, 2024}.
\newblock Investor relations / U.S. Securities and Exchange Commission filing,
  2025.
\newblock URL
  \url{https://ir.bilibili.com/media/reeb2oqf/2024-annual-report-on-form-20-f.pdf}.
\newblock Accessed 2026-02-18.

\bibitem[Cai and Spulber(2023)]{CaiSp23}
Gaoyang Cai and Daniel~F. Spulber.
\newblock The freemium pricing strategy and the opportunity cost of time.
\newblock Available at SSRN, 2023.
\newblock URL \url{https://ssrn.com/abstract=4516760}.

\bibitem[Carroni and Paolini(2020)]{CarroniPa20}
Elias Carroni and Dimitri Paolini.
\newblock Business models for streaming platforms: Content acquisition,
  advertising and users.
\newblock \emph{Information Economics and Policy}, 52:\penalty0 100877, 2020.

\bibitem[Dean and Morgenstern(2022)]{DeanMo22}
Sarah Dean and Jamie Morgenstern.
\newblock Preference dynamics under personalized recommendations.
\newblock In \emph{Proceedings of the 23rd ACM Conference on Economics and
  Computation}, page 795–816, 2022.

\bibitem[DeValve and Peke\v{c}(2022)]{DeValvePe22}
Levi DeValve and Sa\v{s}a Peke\v{c}.
\newblock Optimal price/advertising menus for two-sided media platforms.
\newblock \emph{Operations Research}, 70\penalty0 (3):\penalty0 1629--1645,
  2022.
\newblock \doi{10.1287/opre.2021.2230}.

\bibitem[Dinitz and Gupta(2013)]{DinitzGu13}
Michael Dinitz and Anupam Gupta.
\newblock Packing interdiction and partial covering problems.
\newblock In \emph{Proceedings of the 16th International Conference on Integer
  Programming and Combinatorial Optimization}, IPCO'13, page 157–168, 2013.
\newblock \doi{10.1007/978-3-642-36694-9_14}.
\newblock URL \url{https://doi.org/10.1007/978-3-642-36694-9_14}.

\bibitem[Fan et~al.(2007)Fan, Kumar, and Whinston]{FanKuWh07}
Ming Fan, Subodha Kumar, and Andrew Whinston.
\newblock Selling or advertising: Strategies for providing digital media
  online.
\newblock \emph{Journal of Management Information Systems}, 24\penalty0
  (3):\penalty0 143–166, 2007.
\newblock ISSN 0742-1222.

\bibitem[Feldman and Topaloglu(2015)]{FeldmanTo15}
Jacob~B. Feldman and Huseyin Topaloglu.
\newblock Capacity constraints across nests in assortment optimization under
  the nested logit model.
\newblock \emph{Operations Research}, 63\penalty0 (4):\penalty0 812--822, 2015.
\newblock \doi{10.1287/opre.2015.1383}.

\bibitem[Gallego et~al.(2015)Gallego, Ratliff, and Shebalov]{GallegoRaSh15}
Guillermo Gallego, Richard Ratliff, and Sergey Shebalov.
\newblock A general attraction model and sales-based linear program for network
  revenue management under customer choice.
\newblock \emph{Operations Research}, 63\penalty0 (1):\penalty0 212--232, 2015.
\newblock \doi{10.1287/opre.2014.1328}.

\bibitem[Goldfarb and Tucker(2011)]{GoldfarbTu11}
Avi Goldfarb and Catherine~E Tucker.
\newblock Privacy regulation and online advertising.
\newblock \emph{Management Science}, 57\penalty0 (1):\penalty0 57--71, 2011.

\bibitem[Halbheer et~al.(2014)Halbheer, Stahl, Koenigsberg, and
  Lehmann]{HalbheerStLe14}
Daniel Halbheer, Florian Stahl, Oded Koenigsberg, and Donald~R. Lehmann.
\newblock Choosing a digital content strategy: How much should be free?
\newblock \emph{International Journal of Research in Marketing}, 31\penalty0
  (2):\penalty0 192--206, 2014.

\bibitem[Iyengar et~al.(2025)Iyengar, Ma, and Sethuraman]{IyengarMaSe25}
Garud Iyengar, Yuanzhe Ma, and Jay Sethuraman.
\newblock User engagement maximization through preference switching.
\newblock Available at SSRN, 2025.
\newblock URL \url{https://ssrn.com/abstract=5230517}.

\bibitem[Jiang et~al.(2025)Jiang, Cui, He, Li, and Tian]{Jiangetal25}
Zhong‐Zhong Jiang, Shiliang Cui, Na~He, Kunyang Li, and Lin Tian.
\newblock Contract selection and piracy surveillance for video platforms in the
  age of social media.
\newblock \emph{Production and Operations Management}, 34\penalty0
  (12):\penalty0 3775--3792, 2025.
\newblock \doi{10.1111/poms.13990}.

\bibitem[Kawaguchi et~al.(2021)Kawaguchi, Uetake, and
  Watanabe]{KawaguchiUeWa21}
Kohei Kawaguchi, Kosuke Uetake, and Yasutora Watanabe.
\newblock Designing context-based marketing: Product recommendations under time
  pressure.
\newblock \emph{Management Science}, 67\penalty0 (9):\penalty0 5642--5659,
  2021.

\bibitem[Kumar and Sethi(2009)]{KumarSe09}
Subodha Kumar and Suresh~P. Sethi.
\newblock Dynamic pricing and advertising for web content providers.
\newblock \emph{European Journal of Operational Research}, 197\penalty0
  (3):\penalty0 924--944, 2009.

\bibitem[Lambrecht and Misra(2017)]{LambrechtMi17}
Anja Lambrecht and Kanishka Misra.
\newblock Fee or free: When should firms charge for online content?
\newblock \emph{Management Science}, 63\penalty0 (4):\penalty0 1150--1165,
  2017.
\newblock \doi{10.1287/mnsc.2015.2383}.

\bibitem[Li et~al.(2024)Li, Balasubramanian, Chen, and Pang]{LiBaChPa24}
Xin Li, Hari Balasubramanian, Yan Chen, and Chuan Pang.
\newblock Managing conflicting revenue streams from advertisers and subscribers
  for online platforms.
\newblock \emph{European Journal of Operational Research}, 314\penalty0
  (1):\penalty0 241--254, 2024.

\bibitem[Lin(2020)]{Lin20}
Song Lin.
\newblock Two-sided price discrimination by media platforms.
\newblock \emph{Marketing Science}, 39\penalty0 (2):\penalty0 317--338, 2020.

\bibitem[Manieri et~al.(2025)Manieri, Falsone, and Prandini]{ManieriFaPr25}
Lucrezia Manieri, Alessandro Falsone, and Maria Prandini.
\newblock {DualBi: A dual bisection algorithm for non-convex problems with a
  scalar complicating constraint}.
\newblock \emph{Automatica}, 175:\penalty0 112198, 2025.

\bibitem[Pei et~al.(2011)Pei, Simchi-Levi, and Tunca]{PeiSiTu11}
Pamela Pen-Erh Pei, David Simchi-Levi, and Tunay~I. Tunca.
\newblock Sourcing flexibility, spot trading, and procurement contract
  structure.
\newblock \emph{Operations Research}, 59\penalty0 (3):\penalty0 578--601, 2011.
\newblock \doi{10.1287/opre.1100.0905}.

\bibitem[Qiu et~al.(2014)Qiu, Ahmed, Dey, and Wolsey]{QiuAhDeWo14}
Feng Qiu, Shabbir Ahmed, Santanu~S. Dey, and Laurence~A. Wolsey.
\newblock Covering linear programming with violations.
\newblock \emph{INFORMS Journal on Computing}, 26\penalty0 (3):\penalty0
  531--546, 2014.
\newblock \doi{10.1287/ijoc.2013.0582}.

\bibitem[Rusmevichientong et~al.(2010)Rusmevichientong, Shen, and
  Shmoys]{RusmevichientongShSh10}
Paat Rusmevichientong, Zuo-Jun~Max Shen, and David~B. Shmoys.
\newblock Dynamic assortment optimization with a multinomial logit choice model
  and capacity constraint.
\newblock \emph{Operations Research}, 58\penalty0 (6):\penalty0 1666--1680,
  2010.
\newblock \doi{10.1287/opre.1100.0866}.

\bibitem[Salganik et~al.(2006)Salganik, Dodds, and Watts]{SalganikDoWa06}
Matthew~J Salganik, Peter~Sheridan Dodds, and Duncan~J Watts.
\newblock Experimental study of inequality and unpredictability in an
  artificial cultural market.
\newblock \emph{science}, 311\penalty0 (5762):\penalty0 854--856, 2006.

\bibitem[Sato(2019)]{Sato19}
Susumu Sato.
\newblock Freemium as optimal menu pricing.
\newblock \emph{International Journal of Industrial Organization}, 63:\penalty0
  480--510, 2019.

\bibitem[{Spotify Technology S.A.}(2025)]{Spotify25}
{Spotify Technology S.A.}
\newblock Annual report 2024.
\newblock Annual report (Form 20-F), 2025.
\newblock URL
  \url{https://s29.q4cdn.com/175625835/files/doc_financials/2024/ar/Annual-Report-2024.pdf}.
\newblock Accessed 2026-02-18.

\bibitem[Sumida et~al.(2021)Sumida, Gallego, Rusmevichientong, Topaloglu, and
  Davis]{SumidaGaRuToDa21}
Mika Sumida, Guillermo Gallego, Paat Rusmevichientong, Huseyin Topaloglu, and
  James Davis.
\newblock Revenue-utility tradeoff in assortment optimization under the
  multinomial logit model with totally unimodular constraints.
\newblock \emph{Management Science}, 67\penalty0 (5):\penalty0 2845--2869,
  2021.
\newblock \doi{10.1287/mnsc.2020.3657}.

\bibitem[T{\aa}g(2009)]{Tag09}
Joacim T{\aa}g.
\newblock Paying to remove advertisements.
\newblock \emph{Information Economics and Policy}, 21\penalty0 (4):\penalty0
  245--252, 2009.
\newblock ISSN 0167-6245.

\bibitem[Talluri and van Ryzin(2004)]{TalluriRy04}
Kalyan Talluri and Garrett van Ryzin.
\newblock Revenue management under a general discrete choice model of consumer
  behavior.
\newblock \emph{Management Science}, 50\penalty0 (1):\penalty0 15--33, 2004.
\newblock \doi{10.1287/mnsc.1030.0147}.

\bibitem[Terui et~al.(2011)Terui, Ban, and Allenby]{TeruiBaAl11}
Nobuhiko Terui, Masataka Ban, and Greg~M. Allenby.
\newblock The effect of media advertising on brand consideration and choice.
\newblock \emph{Marketing Science}, 30\penalty0 (1):\penalty0 74--91, 2011.

\bibitem[Tsay(1999)]{Tsay99}
Andy~A. Tsay.
\newblock The quantity flexibility contract and supplier-customer incentives.
\newblock \emph{Management Science}, 45\penalty0 (10):\penalty0 1339--1358,
  1999.
\newblock \doi{10.1287/mnsc.45.10.1339}.

\bibitem[Wang(2013)]{Wang13}
Ruxian Wang.
\newblock Assortment management under the generalized attraction model with a
  capacity constraint.
\newblock \emph{Journal of Revenue and Pricing Management}, 12\penalty0
  (3):\penalty0 254--270, 2013.

\bibitem[Wei and Nault(2014)]{WeiNa14}
Xueqi~(David) Wei and Barrie~R. Nault.
\newblock Monopoly versioning of information goods when consumers have group
  tastes.
\newblock \emph{Production and Operations Management}, 23\penalty0
  (6):\penalty0 1067--1081, 2014.

\bibitem[Weyl(2010)]{Weyl10}
E~Glen Weyl.
\newblock A price theory of multi-sided platforms.
\newblock \emph{American economic review}, 100\penalty0 (4):\penalty0
  1642--1672, 2010.

\end{thebibliography}
 
\end{document}